\theoremstyle{plain}
\newtheorem{theorem}{Theorem}
\newtheorem{lemma}[theorem]{Lemma}
\newtheorem{corollary}[theorem]{Corollary}
\newtheorem{proposition}[theorem]{Proposition}
\theoremstyle{definition}
\theoremstyle{remark}
\newtheorem{remark}[theorem]{Remark}
\numberwithin{equation}{section}
\def\bq{\begin{eqnarray}}
\def\eq{\end{eqnarray}}
\def\bqq{\begin{align*}}
\def\eqq{\end{align*}}
\def\ba{\begin{aligned}}
\def\ea{\end{aligned}}
\def\nn{\nonumber}
\renewcommand{\epsilon}{\varepsilon}
\def\cF {\mathcal{F}}
\def\R {\mathbb{R}}
\def\cD {\mathcal{D}}
\def\R {\mathbb{R}}
\title[The Bogoliubov free energy functional II.]{The Bogoliubov free energy functional II. The dilute limit}
\author[M. Napi\'orkowski]{Marcin Napi\'orkowski}
\address{Institute of Science and Technology Austria, Am Campus 1, 3400 Klosterneuburg, Austria \& \newline
Department of Mathematical Methods in Physics, Faculty of Physics, University of Warsaw, Pasteura 5, 02-093 Warsaw, Poland}
\email{marcin.napiorkowski@fuw.edu.pl}
\author[R. Reuvers]{Robin Reuvers}
\address{QMATH, Department of Mathematical Sciences, University of Copenhagen, Universitetsparken 5, DK-2100 Copenhagen \O, Denmark\newline
\tiny Present: DAMTP, Centre for Mathematical Sciences, University of Cambridge, Wilberforce Road, Cambridge CB3 0WA, United Kingdom} 
\email{r.reuvers@damtp.cam.ac.uk}
\author[J.~P. Solovej]{Jan Philip Solovej}
\address{QMATH, Department of Mathematical Sciences, University of Copenhagen, Universitetsparken 5, DK-2100 Copenhagen \O, Denmark} 
\email{solovej@math.ku.dk}
\begin{document}

\begin{abstract} 
We analyse the canonical Bogoliubov free energy functional in three dimensions at low temperatures in the dilute limit. We prove existence of a first-order phase transition and, in the limit $\int V\to 8\pi a$, we determine the critical temperature to be $T_{\rm{c}}=T_{\rm{fc}}(1+1.49\rho^{1/3}a)$ to leading order. Here, $T_{\rm{fc}}$ is the critical temperature of the free Bose gas, $\rho$ is the density of the gas and $a$ is the scattering length of the pair-interaction potential $V$. We also prove asymptotic expansions for the free energy. In particular, we recover the Lee--Huang--Yang formula in the limit $\int V\to 8\pi a$.

\end{abstract}
\maketitle
\tableofcontents

\section{Introduction}
\label{intro}
For a \textit{non-interacting}, or \textit{free}, Bose gas with density $\rho$, the textbook argument by Einstein shows that the phase transition to BEC happens at a critical temperature (in units $\hbar=2m=k_B=1$)
\begin{equation}
\label{Tfc}
T_{\rm fc}=4\pi\zeta(3/2)^{-2/3}\rho^{2/3}.
\end{equation}
How do interactions between the bosons affect this free critical temperature?
A system of particular interest is liquid helium, in which the nuclei interact rather strongly, and one can ask how Einstein's argument and the free critical temperature \eqref{Tfc} are altered by this interaction.
Feynman studied this problem with path integrals \cite{Feynman-53,Feynman-53.2}. Arguing that the potential resulted in an increased effective mass, he predicted that the critical temperature would decrease compared to the free case, which had indeed been observed for liquid helium. He did not make any quantitative predictions.

To make such quantitative predictions, various simplifications were considered. The first one is to replace the interaction potential for liquid helium by a hard-core potential with radius $a>0$
\begin{equation}
\label{hcp}
V(x)=
  \begin{cases}
    \begin{aligned}
       & \infty
    \end{aligned}           & |x|\leq a \\
    \ 0 & |x|>a
  \end{cases}.
\end{equation}
To simplify things further, it is common to study a weakly-interacting or dilute gas. For a hard-core potential, the natural length scale is given by the radius $a$. We could compare this length scale to the one defined by the density: $\rho^{-1/3}$, the average distance between the particles. Diluteness now means that the particles meet only rarely, that is, the average distance between the particles is much bigger than the length scale of the potential, or 
\begin{equation}
\label{dillim12}
\rho^{1/3}a\ll1.
\end{equation}

This assumption is not valid for liquid helium, but it is for experiments with trapped dilute cold gases such as \cite{Cornell-95, Ketterle-95}. In any case, one can repeat Feynman's question: how is the free critical temperature \eqref{Tfc} altered by the hard-core interaction? 

Lee and Yang were the first to study this \cite{LeeYan-58} in the translation-invariant case. They used pseudopotential methods developed in \cite{HuaYan-57,LeeHuaYan-57} to conclude that the shift in critical temperature should be proportional to $\rho^{1/3}a$. In the appendix of \cite{LeeYan-58}, they solve a simplified system, which gives
\begin{equation}
\label{exprTc}
T_{\rm{c}}=T_{\rm{fc}}(1+1.79(\rho^{1/3}a)+o(\rho^{1/3}a)).
\end{equation}

It is such an approximate expression that we will be looking for in this paper, but for a general class of potentials. To properly define the dilute limit \eqref{dillim12} without reference to a hard-core potential, we consider a characteristic length scale of the potential that is known as the \textit{scattering length} $a$ (see \cite{LieSeiSolYng-05} for a definition). It coincides with the core radius for the hard-core potential.

For general potentials, there has been a lot of debate about whether the linear dependence on $\rho^{1/3}a$ in \eqref{exprTc} is correct (\cite{GKW,Hua1,Hua2,Toyoda-82} predict exponents of 1/2, 3/2, 1/2 and 1/2, respectively, where the latter is the only one predicting a decrease in $T_{\rm c}$ compared to $T_{\rm fc}$). Nonetheless, \eqref{exprTc} is still expected to hold true, at least up to the value of the constant 1.79, which we discuss shortly.

It is good to remember that the search for \eqref{exprTc} for general potentials started from a desire to understand BEC in superfluid helium, but that particular problem remains intractable to this day. In its stead, the dilute setting has become a well-known and challenging object of study of its own. Indeed, the predicted critical temperature for a dilute gas \eqref{exprTc} is higher than $T_{\rm fc}$, whereas the critical temperature of liquid helium is lower, which shows that the systems are quite different. Nonetheless, we have little hope of understanding the strongly-interacting case if we cannot even treat this weakly-interacting set-up, justifying the attention this problem has received (see \cite{Andersen-04} for an overview).\\

We start from a Hamiltonian for a gas of $N$ bosons that interact via a (periodized) repulsive pair potential $V^l$ in a three-dimensional box $\left[-l/2,l/2\right]^3$ with periodic boundary conditions:
\[
H_N=\sum_{1\leq i\leq N}-\Delta^l_i+\sum_{1\leq i<j\leq N}V^l_{ij}.
\]
The particle density is $\rho=N/l^3$. Assuming the interaction only depends on the distance between the particles, $H_N$ is translation invariant, and we therefore write its second-quantized form in momentum space
\begin{equation}
\label{HN}
H=\sum_p p^2 a^\dagger_p a_p+\frac{1}{2l^3}\sum_{p,q,k} \widehat{V}^l(k) a_{p+k}^\dagger a_{q-k}^\dagger a_q a_p.
\end{equation}
Here, only particular $p$ are included in the sum, as determined by the size of the box $l$, but we will consider the thermodynamic limit $l\to\infty$.\\

To the best of our knowledge, the only rigorous fact known about the critical temperature for the Hamiltonian \eqref{HN} is the upper bound established by Seiringer and Ueltschi  using the Feynman--Kac formula \cite{SeiUel-09}. It is not surprising that such results are thin on the ground: it remains impossible to prove BEC in the dilute limit at positive temperature, let alone determine the critical point exactly. 

As for approximate models, we already mentioned Lee and Yang's expression \eqref{exprTc} for the hard-core gas \cite{LeeYan-58}. This expression can only be found in the appendix of their paper, perhaps because Lee and Yang considered their calculation to be physically inaccurate since it predicts a first---rather than the expected second---order phase transition. The fact that \eqref{exprTc} was hidden in the appendix has presumably led to the widespread misconception that Lee and Yang only predicted a shift linear in $\rho^{1/3}a$, without saying anything about the sign or size of the constant  \cite{Andersen-04,Baymetal-01,SeiUel-09,Smith}. Even if Lee and Yang themselves did not really trust their result, it fits reasonably well with numerics: Monte Carlo methods \cite{Arnold,Kash,NhoLan-04} suggest that the form \eqref{exprTc} is correct, but that the numerical value 1.79 should be closer to 1.3.

So how do Lee and Yang approach this problem? They replace the boundary conditions imposed by the hard-core potential by a pseudopotential that should give the right wave function in the physically relevant region where all the particles are at least distance $2a$ from one another \cite{HuaYan-57,LeeHuaYan-57}. They then assume that only s-wave scattering is important (i.e.\ the momentum of the particles is low), and show that replacing the potential by 
\[
8\pi a \delta(\boldsymbol{r})\partial_rr,
\]
should yield the correct wave function. For smooth functions, this is simply a multiplication by a delta function, but the derivative does play a role for physical wave functions. All this leads to an excitation spectrum of Bogoliubov form, which can now be used to calculate the shift in the critical temperature \eqref{exprTc}. 

Before we explain how this is done, let us point out that this claim
in itself has led to some confusion. In a number of articles in which
the dilute Bose gas is treated with field-theoretic
methods---e.g.\ Bijlsma and Stoof \cite{BijSto-96} and Baym et
al.\ \cite{Baymetal-01}, who find \eqref{exprTc} with constants of 4.7
and 2.9, respectively---it is claimed that mean-field theories such as
Bogoliubov's will simply give $T_{\rm c}=T_{\rm fc}$, or, in other
words, no shift. One argument \cite{Andersen-04} goes as follows: a
particle with momentum $p$ effectively has the energy
\begin{equation}
\label{wrongapprox}
\epsilon(p)\sim\sqrt{p^2(p^2+2\widehat{V}(p)\rho)}\approx p^2\sqrt{1+2\widehat{V}(0)\rho/p^2}\approx p^2+\widehat{V}(0)\rho,
\end{equation}
in which the reader can recognize an approximation to the Bogoliubov dispersion relation \cite{Bogoliubov-47b}.
Inserting this `mean-field' shift of the energy levels into the particle density of the free Bose gas gives
\begin{equation}
\label{freedens}
\frac{1}{e^{(p^2+\widehat{V}(0)\rho-\mu)/T}-1}
\end{equation}
so that the `critical'  $\mu$ is $\widehat{V}(0)\rho$. At this $\mu$, the relation between $T$ and $\rho$ is the same as for the free gas, and so the critical temperature does not change. However, one should be more careful in the comparison with the free gas, and the exact form of the dispersion relation one uses.

In Bogoliubov's analysis, the number of particles $N_0$ in the $p=0$ state enters via a c-number substitution and plays a crucial role. Dividing by the volume, we obtain a \textit{condensate density} $\rho_0=N_0/l^3$ that can now be regarded as a parameter. The dispersion relation Lee and Yang derive for the hard-core potential with radius $a$ is 
\begin{equation}
\label{Yangsp}
\epsilon(p)\sim\sqrt{p^2(p^2+16\pi a\rho_0)},
\end{equation}
so, unlike \eqref{wrongapprox}, this gives a $\rho_0$-dependence. Furthermore, we should not define $\mu$ using the free particle density \eqref{freedens}, which just happened to be the minimizer of the free energy in that case. Instead, for fixed $\rho$ and $\rho_0$, we should treat the remaining particles with density $\rho-\rho_0$ grand canonically, resulting in a grand canonical partition function that depends on $T$, $\rho$, $\rho_0$ and a chemical potential $\mu$. Recalling that there are only two independent parameters, one should now eliminate $\rho$ by calculating the value it takes at the minimum of the free energy for fixed $T$, $\rho_0$ and $\mu$, and then minimize over all $\rho_0$. The critical $\mu_{\rm c}$ for fixed temperature is the one where the minimizing $\rho_0$ changes from $\rho_0=0$ (no BEC) to $\rho_0>0$ (BEC). Note that this definition is far more complicated than the naive conclusion $\mu_{\rm c}=\widehat{V}(0)\rho$ above, but it is more correct. That was apparently clear to Lee and Yang, but it seems to have gone out of fashion, resulting in the false belief that the Bogoliubov spectrum does not give a change in the critical temperature.\\

In this work we consider a variational model 
introduced by Critchley and Solomon \cite{CriSol-76}. They evaluate
the expectation value of $H-TS-\mu\mathcal{N}$ in a quasi-free state,
where $S$ is the von Neumann entropy and $\mathcal{N}$ is the particle
number operator, and minimize over all quasi-free states, resulting in
an upper bound to the free energy at temperature $T$ and chemical
potential $\mu$.

This upper bound is well-motivated. The first supporting argument is that the usual treatment of the Hamiltonian \eqref{HN} with the Bogoliubov approximation \cite{Bogoliubov-47b} reduces it to an operator that is quadratic in the creation and annihilation operators, and that ground and Gibbs states of such operators are quasi-free states. A second is that quasi-free states have successfully served as trial states to establish correct bounds on the ground state energy of Bose gases \cite{ErdSchYau-08,GiuSei-09,Solovej-06}, which is of course the $T=0$ free energy.

Expressing the expectation value of $H-TS-\mu\mathcal{N}$ for a general quasi-free state does lead to a complicated non-linear functional. Simplifying it somewhat by throwing out certain terms, Critchley and Solomon conclude that the model will reproduce Bogoliubov's conclusions.

In this paper, we analyse their functional without the simplifications, and determine whether the minimizers display BEC ($\rho_0>0$) or not ($\rho_0=0$). This is a variational reformulation of Bogoliubov's and Lee and Yang's approach that is conceptually clear and more accurate, although it has in common with Lee and Yang's approach that the phase transition is of (presumably unphysical) first order because the density jumps at the critical temperature.

The analysis of the functional without simplifications leads to an approximation for the critical temperature given by
\begin{equation}
\label{predicted}
T_{\rm{c}}=T_{\rm{fc}}(1+1.49\rho^{1/3}a+o(\rho^{1/3}a)),
\end{equation}
in the limit $\int V=\widehat{V}(0)\to8\pi a$ (that is, for a sequence of potentials with $\widehat{V}(0)\to8\pi a$), and the constant 1.49 is indeed closer to the predicted 1.3  \cite{Arnold,Kash,NhoLan-04} than Lee and Yang's 1.79. The same analysis can also be carried out in 2 dimensions and we discuss this in \cite{NapReuSol-17}.\\

By its construction, this model also gives an upper bound to the free energy at positive temperature, which, for the full Hamiltonian \eqref{HN}, was so far only considered by Seiringer \cite{Sei-08} and Yin \cite{Yin-10}. At $T=0$, the free energy is simply the ground state energy, which we can compare with the prediction
\[
4\pi a \rho^2 +\frac{512}{15}\sqrt{\pi}(\rho a)^{5/2}+o((\rho a)^{5/2})
\]
by Lee, Huang and Yang \cite{LeeHuaYan-57}. Our model does reproduce the leading behaviour, but the second order only comes out correctly in the limit $\widehat{V}(0)\to8\pi a$. A similar result was obtained earlier by Erd\"os, Schlein and Yau \cite{ErdSchYau-08}, but the exact upper bound has in fact been proved by Yau and Yin \cite{YauYin-09}.\\

One could ask whether the predicted critical temperature shift \eqref{predicted} can actually be measured. For harmonic traps, a linear shift has indeed been measured \cite{Ensh,Gerb,Sm2}, but it cannot be compared with \eqref{predicted} since there is no translation invariance and the effect of the trap, expected to lower rather than raise the critical temperature, is simply too big. Recently, a BEC was also created in a uniform potential \cite{Gaunt}. The measurements are not precise enough, however, to measure the shift directly, but even if they were, in this set-up the finite size effects due to the boundedness of the trap are expected to be six times larger than the shift caused by the interaction. In the words of \cite{Smith}, `we are thus still lacking a direct measurement of the historically most debated [$T_{\rm c}$] shift'.\\

\textbf{Acknowledgements.} We thank Robert Seiringer and Daniel Ueltschi for bringing the issue of the change in critical temperature to our attention. We also thank the Erwin Schr\"odinger Institute (all authors) and the Department of Mathematics, University of Copenhagen (MN) for the hospitality during the period this work was carried out. We gratefully acknowledge the financial support by the European Union’s Seventh Framework Programme under the ERC Grant Agreement Nos. 321029 (JPS and RR) and 337603 (RR) as well as support by the VILLUM FONDEN via the QMATH Centre of Excellence (Grant No. 10059) (JPS and RR), by the National Science Center (NCN) under grant No. 2012/07/N/ST1/03185 and the Austrian Science Fund (FWF) through project Nr. P 27533-N27 (MN).

\section{The Bogoliubov free energy functional}
This article is the continuation of the previous work \cite{NapReuSol1-15}, in which we derive and analyse the \textit{Bogoliubov free energy functional} that was first introduced by Critchley and Solomon \cite{CriSol-76}. Let us briefly recall the set-up.

As motivated in the introduction, the functional is obtained from \eqref{HN} by substituting a c-number $\rho_0$ through $a_0\rightarrow a_0+\sqrt{l^3\rho_0}$ (justified in \cite{LieSeiYng-05}) and evaluating the expectation value of $H-TS-\mu\mathcal{N}$ of a quasi-free state. Assuming translation invariance and $\langle a_p a_{-p}\rangle=\langle a^\dagger_{-p} a^\dagger_p\rangle$, the two (real-valued) functions \mbox{$\gamma(p):=\langle a^\dagger_p a_p\rangle\geq0$} and \mbox{$\alpha(p):=\langle a_p a_{-p}\rangle$} fully determine this expectation value. Here, $\gamma(p)$ is the \textit{density of particles with momentum $p$}, and $\alpha$ describes the \textit{pairing in the system}. The c-number $\rho_0\geq0$ should be thought of as the \textit{density of the condensate}, so that there is a Bose--Einstein condensate (BEC) if $\rho_0>0$. The total particle density is
\bq
\rho=\rho_0+(2\pi)^{-3}\int_{\mathbb{R}^3}\gamma(p)dp=:\rho_0+\rho_\gamma. \nn
\eq

In the thermodynamic limit, this gives the (grand canonical) Bogoliubov free energy functional
\bq
\label{freeenergyfunctional}
\begin{aligned}
\mathcal{F}(\gamma,\alpha,\rho_{0})&= (2\pi)^{-3}\int_{\mathbb{R}^3} p^{2}\gamma(p)dp-\mu\rho-TS(\gamma,\alpha)+\frac{\widehat{V}(0)}{2}\rho^{2} \\
&\quad+\rho_{0}(2\pi)^{-3}\int_{\mathbb{R}^3}\widehat{V}(p)\left(\gamma(p)+\alpha(p)\right)dp.\\
&\quad+\frac{1}{2}(2\pi)^{-6}\iint_{\mathbb{R}^3\times \mathbb{R}^3}\widehat{V}(p-q)\left(\alpha(p)\alpha(q)+\gamma(p)\gamma(q)\right)dpdq,
\end{aligned}
\eq
with entropy
\[
\begin{aligned}
&S(\gamma,\alpha)= \quad(2\pi)^{-3}\int_{\mathbb{R}^3}s(\gamma(p),\alpha(p))dp\quad=\quad(2\pi)^{-3}\int_{\mathbb{R}^3}s(\beta(p))dp \\
&=(2\pi)^{-3}\int_{\mathbb{R}^3}\left[\left(\beta(p)+\frac{1}{2}\right)\ln\left(\beta(p)+\frac{1}{2}\right)-\left(\beta(p)-\frac{1}{2}\right)\ln\left(\beta(p)-\frac{1}{2}\right)\right]dp,
\end{aligned}
\]
where
\bq
\beta(p):=\sqrt{\left(\frac{1}{2}+\gamma(p)\right)^{2}-\alpha(p)^{2}}.\label{def:beta}
\eq
The functional is defined on the domain $\mathcal{D}$ given by
\bq
\mathcal{D}=\{(\gamma,\alpha,\rho_{0})\ |\ \gamma \in L^{1}((1+p^{2})dp),\ \gamma\geq0,\ \alpha(p)^{2}\leq\gamma(1+\gamma),\ \rho_{0}\geq 0\}.
\nn
\eq
To reiterate, this functional describes the grand canonical free energy of a homogeneous Bose gas at temperature $T\geq0$ and chemical potential $\mu\in \mathbb{R}$ in the thermodynamic limit.\\

The goal of the first paper \cite{NapReuSol1-15} is twofold: to establish the existence of minimizers for the minimization problem
\begin{align}
\label{gcmin}
F(T,\mu)=\inf_{(\gamma,\alpha,\rho_{0})\in \mathcal{D}}\mathcal{F}(\gamma,\alpha,\rho_{0}),
\end{align}
and to analyse their structure (in whether $\rho_0>0$ or not) for different temperatures and chemical potentials. Keeping in mind that the dilute limit $\rho^{1/3}a\ll1$ is defined in terms of the density, the canonical counterparts to \eqref{freeenergyfunctional} and \eqref{gcmin} are considered as well: the functional $\mathcal{F}^{\rm can}=\mathcal{F}+\mu\rho$ at density $\rho\geq0$ and temperature $T\geq0$ is given by 
\bq
\label{freeenergyfunctional2}
\begin{aligned}
\mathcal{F}^{\rm{can}}(\gamma,\alpha,\rho_{0})&= (2\pi)^{-3}\int_{\mathbb{R}^3} p^{2}\gamma(p)dp-TS(\gamma,\alpha)+\frac12\widehat{V}(0)\rho^{2}\\ &+(2\pi)^{-3}\rho_{0}\int_{\mathbb{R}^3}\widehat{V}(p)\left(\gamma(p)+\alpha(p)\right)dp \\
&+(2\pi)^{-6}\frac12\iint_{\mathbb{R}^3\times \mathbb{R}^3}\widehat{V}(p-q)\left(\alpha(p)\alpha(q)+\gamma(p)\gamma(q)\right)dpdq,
\end{aligned}
\eq
with $\rho_0=\rho-\rho_\gamma$. The canonical minimization problem is
\bq
 \label{cmin}
\begin{aligned}
F^{\rm{can}}(T,\rho)&=\inf_{\substack{({\gamma},{\alpha},\rho_{0}=\rho-\rho_\gamma)\in\cD\\
    }}\cF^{\rm{can}}(\gamma,\alpha,\rho_0)=\inf_{0\leq\rho_0\leq\rho}f(\rho-\rho_0,\rho_0),
\end{aligned}
\eq
where
\[
f(\lambda,\rho_0)=\inf_{\substack{({\gamma},{\alpha})\in\cD'\\
      \int\gamma=\lambda
    }}\cF^{\rm{can}}(\gamma,\alpha,\rho_0)
\]
and 
\[
\mathcal{D}'=\{(\gamma,\alpha)\ |\ \gamma\in L^1((1+p^2)dp),\ \gamma(p)\geq0,\ \alpha(p)^2\leq \gamma(p)(\gamma(p)+1)\}.
\]
Strictly
speaking, this is not really a canonical formulation: it is only the
expectation value of the number of particles that we fix. We will
nevertheless describe this energy as `canonical'. The function
$F(T,\mu)$ as a function of $\mu$ is the Legendre transform of the
function $F^{\rm{can}}(T,\rho)$ as a function of $\rho$. 

The main results of \cite{NapReuSol1-15}, which we recall in the next section, state that there exist minimizers for both \eqref{gcmin} and \eqref{cmin} and that both models exhibit a BEC phase transition.

\section{Existence of minimizers and phase transition}
\label{sec:previousresults}
The following results, proven in the accompanying paper \cite{NapReuSol1-15}, provide the basis for any further analysis of the Bogoliubov free energy functional.

Throughout this
article, we assume that the two-body interaction potential and
its Fourier transform 
\[
\widehat{V}(p)=\int_{\R^3}V(x)e^{-ipx}dx, \hspace{1cm} V(x)=(2\pi)^{-3}\int_{\R^3}\widehat{V}(p)e^{ipx}dp
\]
are radial functions that satisfy
\begin{eqnarray}\label{positiveinteraction}
  V\geq 0,\quad \widehat{V}\geq0,\quad V\not\equiv0.
\end{eqnarray}
Moreover, we assume that 
\bq\label{interactionassumptions}
\widehat{V}\in C^{1}(\mathbb{R}^3),\ \widehat{V}\in L^1(\mathbb{R}^3),\ \|\widehat{V}\|_{\infty} <\infty,\ \|\nabla\widehat{V}\|_2<\infty,\ \|\nabla\widehat{V}\|_{\infty}<\infty.
\eq

\begin{theorem}[Existence of grand canonical minimizers for $T>0$]\label{thm:existencepositiveT}
  Let $T>0$. Assume the interaction potential is a radial function that satisfies \eqref{positiveinteraction} and
  \eqref{interactionassumptions}. Then there exists a minimizer for
  the Bogoliubov free energy functional
  \eqref{freeenergyfunctional} defined on $\mathcal{D}$.
\end{theorem}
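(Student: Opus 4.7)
The plan is to apply the direct method of the calculus of variations: establish a lower bound on $\cF$, extract a subsequential limit of a minimizing sequence via uniform a priori bounds, and verify lower semicontinuity of every term. For the lower bound, the kinetic term, the two translation-invariant quadratic forms in $\gamma$ and in $\alpha$, and $\tfrac12\widehat V(0)\rho^2$ are all non-negative; the two quadratic forms are non-negative via the representation $\iint \widehat V(p-q) f(p)f(q)\,dp\,dq = (2\pi)^3\int V(x)|\check f(x)|^2\,dx\ge0$, which uses $V\ge 0$. The $-\mu\rho$ term is dominated by $\tfrac12\widehat V(0)\rho^2$ (with $\widehat V(0)>0$ since $V\ge0$, $V\not\equiv0$); the cross term $\rho_0\int \widehat V(p)(\gamma(p)+\alpha(p))\,dp$ is controlled by $|\alpha|\le\sqrt{\gamma(1+\gamma)}$ and Young's inequality, absorbing contributions into the $\rho^2$-term and into the $\alpha$-quadratic form; and the entropy is controlled by the kinetic term through the pointwise bound $Ts(\gamma(p)+\tfrac12)\le \tfrac12 p^2\gamma(p)+g_T(p)$ with $g_T(p):=-T\log(1-e^{-p^2/(2T)})\in L^1(\R^3)$, obtained by Legendre-dualising the concave function $s$ and recognising the free Bose free-energy density on the right.

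Now let $(\gamma_n,\alpha_n,\rho_{0,n})\in\cD$ be a minimizing sequence. The estimates above yield uniform bounds
\begin{equation*}
\int_{\R^3}(1+p^2)\gamma_n(p)\,dp + \rho_{0,n} + \iint_{\R^3\times\R^3}\widehat V(p-q)\alpha_n(p)\alpha_n(q)\,dp\,dq \le C.
\end{equation*}
Passing to subsequences, the Dunford--Pettis theorem (equi-integrability from the $p^2$-weight) gives $\gamma_n\wto\gamma$ in $L^1((1+p^2)\,dp)$; the quadratic-form bound gives weak convergence $\alpha_n\wto\alpha$ in the Hilbert space associated with the non-negative kernel $\widehat V(p-q)$; and $\rho_{0,n}\to\rho_0\ge0$. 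A further diagonal subsequence yields pointwise a.e.\ convergence $\gamma_n\to\gamma$, $\alpha_n\to\alpha$, which preserves $\gamma\ge0$ and the pointwise constraint $\alpha^2\le\gamma(1+\gamma)$, so $(\gamma,\alpha,\rho_0)\in\cD$.

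For the lower-semicontinuity pass, Fatou handles $\int p^2\gamma$ and $\int\gamma$, and combined with $\rho_{0,n}\to\rho_0$ and convexity of $\rho\mapsto -\mu\rho+\tfrac12\widehat V(0)\rho^2$ this controls the $-\mu\rho+\tfrac12\widehat V(0)\rho^2$ contribution. The two quadratic forms in $\gamma$ and in $\alpha$ are non-negative and hence weakly lower semicontinuous. The cross term $\rho_{0,n}\int\widehat V(p)(\gamma_n+\alpha_n)\,dp$ converges by Vitali's theorem, using $\widehat V\in C^0\cap L^\infty$, the a priori bound on $\int(\gamma_n+|\alpha_n|)\,dp$, and pointwise a.e.\ convergence. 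Finally, for the entropy, joint convexity of $(\gamma,\alpha)\mapsto -s(\beta(\gamma,\alpha))$---the classical concavity of the Bose entropy of a quasi-free state---gives weak lower semicontinuity of $-TS$; alternatively, Fatou applies pointwise to $s(\beta_n)$ once a uniformly integrable majorant of free-Bose type is supplied. Assembling these inequalities gives $\cF(\gamma,\alpha,\rho_0)\le\liminf\cF(\gamma_n,\alpha_n,\rho_{0,n})=F(T,\mu)$, so the limit is a minimizer.

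The hardest part will be handling the entropy. It is neither bounded nor quadratic, its growth is $\gamma\log\gamma$ at large $\gamma$, and both the coercivity bound and the lower-semicontinuity pass require a careful comparison with a free Bose gas of appropriate effective temperature. A secondary delicacy is ruling out loss of mass at infinity on the non-compact domain $\R^3$: the $p^2$-weighted $L^1$-bound on $\gamma_n$ rules this out for $\gamma$, and combined with $\alpha^2\le\gamma(1+\gamma)$ it controls the tails of $\alpha_n$ as well, preventing any escape of mass that could lower the functional in the limit.
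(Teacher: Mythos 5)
The theorem you are proving is stated in this paper only as a recalled result from the companion paper [NapReuSol1-15], where the actual proof lives; so there is no internal proof here to compare against. Your direct-method outline is the natural strategy and is very plausibly the same general route, but several steps are stated in a way that does not actually work.

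The central gap is the compactness/limit-passage step. Weak convergence in $L^1$ does \emph{not} yield a pointwise a.e.\ convergent subsequence (think of oscillating sequences such as $\sin(n p_1)\mathbf 1_{|p|\le 1}$, which converge weakly to $0$ but nowhere pointwise), so the ``further diagonal subsequence yields pointwise a.e.\ convergence'' step is false as written. This matters twice: you invoke a.e.\ convergence to preserve the nonlinear constraint $\alpha^2\le\gamma(\gamma+1)$ in the limit, and again via Vitali for the cross term $\rho_0\int\widehat V(\gamma+\alpha)$. The correct fix is to note that the constraint set is \emph{pointwise convex} (since $\gamma\mapsto\sqrt{\gamma(\gamma+1)}$ is concave) and strongly closed, hence weakly closed, so the constraint survives under weak limits without any a.e.\ convergence; and to handle the cross term by weak continuity (it is linear in $(\gamma,\alpha)$ once $\rho_0$ is fixed along the subsequence) rather than by Vitali. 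Relatedly, your appeal to Dunford--Pettis is incomplete: a bound on $\int(1+p^2)\gamma_n$ gives tightness at infinity but not equi-integrability (it does not rule out concentration onto a point), so one needs a superlinear bound in the de la Vall\'ee Poussin sense, or a separate argument, to extract the weak $L^1$ limit of $\gamma_n$.

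Two smaller issues. First, the claimed a priori bound on $\iint\widehat V(p-q)\alpha_n(p)\alpha_n(q)\,dp\,dq$ does not follow for free: completing the square, $\tfrac12\iint\alpha\widehat V\alpha+\rho_0\int\widehat V\alpha=\tfrac12\iint(\alpha+\rho_0\delta_0)\widehat V(\alpha+\rho_0\delta_0)-\tfrac12\rho_0^2\widehat V(0)$, so the absorption ``uses up'' part of $\tfrac12\widehat V(0)\rho^2$, and since $\rho_0$ can be as large as $\rho$ one has to split the quadratic form (keep, say, a quarter of it) to be left with a genuine a priori bound; you should make that explicit. Second, for the entropy you correctly flag this as the hardest step, but ``joint convexity implies weak lsc'' is only sound once you also prove strong lower semicontinuity of $-TS$, and here Fatou does not apply directly because the integrand $-s(\beta)$ is bounded above (by $0$) but not below; the comparison with the free Bose gas you use for coercivity is also what gives a suitable non-negative integrand after subtracting the free minimizer, and spelling that decomposition out is what actually closes the argument.
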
 

It turns out that we need to assume some additional
regularity on the interaction potential to prove a similar statement for $T=0$.

\begin{theorem}[Existence of grand canonical minimizers for $T=0$]\label{thm:existencezeroT}
  Assume the interaction potential fulfils the assumptions of Theorem
  \ref{thm:existencepositiveT}. If we assume in addition that
  $\widehat{V}\in C^{3}(\mathbb{R}^3)$ and that all derivatives of
  $\widehat{V}$ up to third order are bounded, then there exists a
  minimizer for the Bogoliubov free energy functional
  \eqref{freeenergyfunctional} defined on $\mathcal{D}$
  for $T=0$.
\end{theorem}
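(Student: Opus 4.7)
The plan is to adapt the direct-method argument used for Theorem~\ref{thm:existencepositiveT} to $T=0$, where the absence of the entropy term $-TS(\gamma,\alpha)$ removes the convenient coercivity and pointwise $L^{\infty}$ control on $\gamma$ that it provided at positive temperature. Fix $\mu\in\mathbb{R}$. The first step is to check that $\inf_{\mathcal{D}}\mathcal{F}$ is finite: the terms $\int p^{2}\gamma$, $\tfrac{1}{2}\widehat{V}(0)\rho^{2}$, and the two double integrals are non-negative (the latter because $\widehat{V}(p-q)$ defines a positive-semidefinite kernel, as $V\geq 0$), while $-\mu\rho$ can be absorbed into $\tfrac{1}{2}\widehat{V}(0)\rho^{2}$ by Young's inequality (available since $\widehat{V}(0)>0$), and $\rho_{0}\int\widehat{V}(\gamma+\alpha)$ can be estimated by Cauchy--Schwarz against the positive quadratic forms together with the pointwise constraint $\alpha^{2}\leq\gamma(1+\gamma)$.

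Take a minimizing sequence $(\gamma_{n},\alpha_{n},\rho_{0,n})\in\mathcal{D}$. From $\sup_{n}\mathcal{F}(\gamma_{n},\alpha_{n},\rho_{0,n})<\infty$ and the bounds above, one extracts uniform bounds on $\rho_{0,n}$, on $\|\gamma_{n}\|_{L^{1}((1+p^{2})dp)}$ and on $\|\alpha_{n}\|_{L^{2}}$ (via $\alpha_{n}^{2}\leq\gamma_{n}+\gamma_{n}^{2}$ combined with the $L^{1}$ and kinetic bounds). Passing to a subsequence, $\rho_{0,n}\to\rho_{0}$, $\gamma_{n}\rightharpoonup\gamma$ in $L^{1}((1+p^{2})dp)$, and $\alpha_{n}\rightharpoonup\alpha$ in $L^{2}$. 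The kinetic term and the two double integrals are weakly lower semicontinuous (the double integrals as non-negative quadratic forms), the linear term $\rho_{0}\int\widehat{V}(\gamma+\alpha)$ pairs the weak limits against $\widehat{V}\in L^{\infty}\cap L^{1}$, and the $\widehat{V}(0)\rho^{2}/2$ and $-\mu\rho$ terms are continuous provided one can rule out loss of mass in $\rho_{n}=\rho_{0,n}+\int\gamma_{n}$, which follows from the kinetic bound giving tightness of $\gamma_{n}$ at infinity. What remains non-trivial is to verify that the weak limit $(\gamma,\alpha,\rho_{0})$ still lies in $\mathcal{D}$, i.e.\ that the pointwise inequality $\alpha(p)^{2}\leq\gamma(p)(1+\gamma(p))$ is preserved in the limit.

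The main obstacle, and the likely reason for the extra $C^{3}$ assumption, lies here: at $T>0$ the entropy term forced $\gamma_{n}\in L^{\infty}$ with a temperature-dependent constant, so that weak compactness could be upgraded to pointwise convergence rather cheaply. At $T=0$, by contrast, the expected Bogoliubov profile $\gamma(p)\sim c|p|^{-1}$ near the origin destroys uniform $L^{\infty}$ control and the constraint can only be recovered after stronger compactness is extracted. I would exploit the Euler--Lagrange equations satisfied by (approximate) minimizers, which express $\gamma_{n}$ and $\alpha_{n}$ as explicit smooth functions of the convolutions $\widehat{V}\ast\gamma_{n}$ and $\widehat{V}\ast\alpha_{n}$, and use the bounded derivatives of $\widehat{V}$ up to third order to show these convolutions are uniformly $C^{3}$ with quantitative control on their derivatives. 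An Arzel\`a--Ascoli argument on bounded sets, combined with the kinetic tightness at infinity and the explicit structure near the origin, then yields a.e.\ convergence of $\gamma_{n}$ and $\alpha_{n}$; the pointwise constraint passes to the limit, every term of $\mathcal{F}$ converges appropriately, and $(\gamma,\alpha,\rho_{0})$ is the sought minimizer.
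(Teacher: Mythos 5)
The paper in front of you does not actually contain a proof of this theorem: Section~3 introduces it among the results ``proven in the accompanying paper \cite{NapReuSol1-15}'', so there is nothing here to compare your argument against line by line. I can still assess the sketch on its own terms.

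Your diagnosis of the difficulty is correct: once $T=0$ the entropy is gone, the cheap $L^\infty$ control on $\gamma$ disappears (the expected minimizer behaves like $|p|^{-1}$ near the origin), the constraint $\alpha(p)^2\le\gamma(p)(1+\gamma(p))$ is not preserved under mere weak convergence, and the role of the extra $C^3$ hypothesis is to regularize the convolutions $\widehat V*\gamma$, $\widehat V*\alpha$. That much is on target.

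The gap is in the mechanism you propose for upgrading weak to pointwise convergence. You want to read regularity of $\gamma_n,\alpha_n$ off of Euler--Lagrange equations ``satisfied by (approximate) minimizers,'' but an arbitrary minimizing sequence in $\mathcal D$ has no reason to satisfy any Euler--Lagrange equation, and until you have produced an actual $T=0$ minimizer you cannot invoke its equations — as written this is circular. The natural repair, and presumably what the companion paper does given that Theorem~\ref{thm:existencepositiveT} is proved first, is not to take an abstract minimizing sequence for the $T=0$ functional at all, but to take the \emph{exact} minimizers $(\gamma_{T_n},\alpha_{T_n},\rho_{0,T_n})$ at a sequence of positive temperatures $T_n\downarrow 0$, furnished by Theorem~\ref{thm:existencepositiveT}. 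These do solve the $T_n$-Euler--Lagrange system; one can then use the $C^3$ bounds on $\widehat V$ to get uniform local regularity of the convolutions, pass to a pointwise limit along a subsequence, verify that the constraint survives, and check the limit minimizes $\mathcal F$ at $T=0$. You should also be more careful with the stated compactness: $L^1((1+p^2)\,dp)$ is not reflexive, so a uniform $L^1((1+p^2)\,dp)$ bound does not by itself produce a weakly convergent subsequence in $L^1$; you either need uniform integrability (again best obtained from the explicit pointwise structure of the positive-temperature Euler--Lagrange solutions) or you must pass to weak-$\ast$ limits of measures and separately rule out formation of an atom at $p=0$.
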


We would like to stress that the minimizers need not be unique. In fact, we will see (cf.\ Remark \ref{nonuniqueminimizer}) that there exist combinations of $\mu$ and $T$ for which the problem \eqref{gcmin} has two minimizers with two different densities. 

We have analogous results in the canonical setting.

\begin{theorem}[Existence of canonical minimizers for
  $T>0$]\label{thm:existencecanonicalepositiveT}
  Let $T>0$. Assume the interaction potential is a radial function that satisfies \eqref{positiveinteraction} and
  \eqref{interactionassumptions}. Then the variational problem
  \eqref{cmin} admits a minimizer.
\end{theorem}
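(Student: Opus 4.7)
The plan is to apply the direct method of the calculus of variations, adapting the proof of Theorem \ref{thm:existencepositiveT} to enforce the additional density constraint $\rho_0+\int\gamma = \rho$. Let $(\gamma_n, \alpha_n, \rho_{0,n})\in\mathcal{D}$ be a minimizing sequence with $\int\gamma_n = \rho-\rho_{0,n}$. Because $\rho_{0,n}\in[0,\rho]$ one may pass to a subsequence with $\rho_{0,n}\to\rho_0^*\in[0,\rho]$. Bounding the functional from below through $\mathcal{F}^{\rm can} = \mathcal{F} + \mu\rho \geq F(T,\mu)+\mu\rho$ for an arbitrary $\mu\in\mathbb{R}$ (using the grand canonical existence in Theorem \ref{thm:existencepositiveT}), every individually non-negative piece of $\mathcal{F}^{\rm can}$ is uniformly controlled: the kinetic energy $\int p^2\gamma_n$, the quadratic self-interactions in $\gamma_n$ and $\alpha_n$, and (since $T>0$) the entropy contribution. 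From these uniform bounds we extract weak limits $\gamma_n\wto\gamma$ in $L^1((1+p^2)dp)$ and $\alpha_n\wto\alpha$ in a suitable space. The pointwise constraint $\alpha^2\leq\gamma(1+\gamma)$, being equivalent to positive semidefiniteness of the $2\times 2$ matrix built from $\gamma$ and $\alpha$, is convex and hence preserved under weak limits.

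Each term of $\mathcal{F}^{\rm can}$ is then weakly lower semi-continuous: the kinetic term by Fatou; the quadratic interactions in $\gamma$ and $\alpha$ by convexity, since $\widehat V\geq 0$ makes them non-negative quadratic forms (by Plancherel, $\iint\widehat V(p-q)f(p)f(q)\,dp\,dq=(2\pi)^3\int V|\widetilde f|^2$); the $-TS$ contribution by concavity of $s$ in $(\gamma,\alpha)$; and the bilinear coupling $\rho_0\int\widehat V(\gamma+\alpha)$ by combining $\rho_{0,n}\to\rho_0^*$ with $\widehat V\in L^\infty\cap L^1$. This yields $\mathcal F^{\rm can}(\gamma,\alpha,\rho_0^*)\leq F^{\rm can}(T,\rho)$, and the infimum is attained provided that the admissibility constraint survives the limiting procedure.

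The distinguishing difficulty---the main obstacle---is preserving the density constraint, i.e.\ showing $\int\gamma_n\to\int\gamma$, so that the limiting triple satisfies $\rho_0^*=\rho-\int\gamma$ and lies in $\mathcal{D}$. Tightness at infinity is free from the kinetic bound via $\int_{|p|>R}\gamma_n\leq R^{-2}\int p^2\gamma_n$. The delicate step is excluding concentration into a singular part: one must upgrade weak-$*$ (measure) convergence to weak $L^1$ convergence, which requires an equi-integrability estimate exploiting the specific structure of $\mathcal F^{\rm can}$ (through the quadratic interaction terms, or by first symmetrizing the minimizing sequence to radial decreasing profiles to control the shape of $\gamma_n$ near its peaks). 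Once this is established, the Dunford--Pettis theorem yields $\int\gamma_n\to\int\gamma$, and $(\gamma,\alpha,\rho_0^*)$ with $\rho_0^*=\rho-\int\gamma$ is the desired minimizer. The abstract LSC and compactness pieces parallel the grand canonical proof; it is the mass-conservation step that is genuinely new and is the heart of the argument.
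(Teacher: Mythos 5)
The theorem you are proving is only \emph{quoted} in this paper; Section~\ref{sec:previousresults} states explicitly that Theorems~\ref{thm:existencepositiveT}--\ref{thm:existencecanonicalzeroT} are ``proven in the accompanying paper \cite{NapReuSol1-15},'' so there is no in-paper proof to compare against. Concerning the proposal itself: the framework (direct method, a priori bounds, lower semicontinuity) is the right one, but the step you yourself single out as ``the heart of the argument''---excluding concentration so that Dunford--Pettis gives $\int\gamma_n\to\int\gamma$---is both unproven and, in the form you state it, likely false. Nothing in the available a priori bounds prevents a minimizing sequence from piling a fixed amount of mass onto a shrinking neighbourhood of $p=0$: a profile $\gamma_n=c\,n\,\mathbf{1}_{\{|p|<n^{-1/3}\}}$ has constant $\int\gamma_n$, vanishing kinetic energy, vanishing entropy, and keeps $\iint\gamma_n\widehat V\gamma_n$ bounded because $\widehat V\in L^\infty$. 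Physically this is the condensate trying to form inside $\gamma$ rather than in $\rho_0$, and one should not expect to rule it out. Radial-decreasing rearrangement does not help: the pointwise bound $\gamma(p)\le C|p|^{-3}$ it gives from $\int\gamma\le\rho$ is not integrable at the origin and does not imply equi-integrability, so Dunford--Pettis is not available.

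What is needed is not to forbid concentration but to identify it and absorb it into $\rho_0$: pass to a weak-$*$ limit of $\gamma_n$ as a measure; use the kinetic bound to show that any singular part is supported at $p=0$ (concentration at $p_0\neq 0$ costs $p_0^2$ per unit mass and is therefore excluded); write the singular mass as $m$; and prove $\mathcal F^{\rm can}(\gamma_{\rm ac},\alpha_{\rm ac},\rho_0^*+m)\le\liminf_n\mathcal F^{\rm can}(\gamma_n,\alpha_n,\rho_{0,n})$. The candidate $(\gamma_{\rm ac},\alpha_{\rm ac},\rho_0^*+m)$ automatically satisfies the canonical constraint $\rho_0+\int\gamma=\rho$. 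Carrying this out requires tracking the \emph{joint} concentration of $\gamma_n$ and $\alpha_n$ (the constraint $\alpha^2\le\gamma(\gamma+1)$ together with the entropy term forces $\alpha_n\approx-\gamma_n$ near the singular point if the energy is to stay bounded) and verifying term-by-term lower semicontinuity under this joint limit. Your sketch contains no version of this argument; it is precisely where the substance of the existence proof lies. The remaining loose ends you gloss over---uniform control of each piece despite $-TS$ and $\rho_0\int\widehat V\alpha$ not being sign-definite, and the sense in which $\alpha_n$ converges given only $\alpha_n^2\le\gamma_n(\gamma_n+1)$---are secondary but also need to be made precise.
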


\begin{theorem}[Existence of canonical minimizers for $T=0$]\label{thm:existencecanonicalzeroT}
  Assume the interaction potential fulfils the assumptions of Theorem
  \ref{thm:existencecanonicalepositiveT}. If we assume in addition
  that $\widehat{V}\in C^{3}(\mathbb{R}^3)$ and that all derivatives of
  $\widehat{V}$ up to third order are bounded, then there exists a
  minimizer for the canonical minimization problem
  \eqref{cmin} at $T=0$.
\end{theorem}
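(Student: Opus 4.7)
The strategy is to reuse the compactness framework that proves Theorem~\ref{thm:existencezeroT} (grand canonical $T=0$), substituting the fixed density constraint $\rho_0+\rho_\gamma=\rho$ for the role of the chemical potential. The natural entry point is the decomposition
\[
F^{\rm can}(0,\rho)=\inf_{0\le\rho_0\le\rho}f(\rho-\rho_0,\rho_0),
\]
which splits the problem into an outer minimization over the scalar $\rho_0$ in the compact interval $[0,\rho]$, and an inner minimization over $(\gamma,\alpha)\in\mathcal{D}'$ at fixed $\rho_0$ and $\int\gamma=\lambda:=\rho-\rho_0$.

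\textbf{Inner problem.} Fix $(\lambda,\rho_0)$ and take a minimizing sequence $(\gamma_n,\alpha_n)$. The trial state $(0,0)$ (all-condensate when $\rho_0=\rho$) yields a uniform upper bound on the energy. Every term of $\mathcal{F}^{\rm can}$ at $T=0$ is nonnegative except $\rho_0\int\widehat V\alpha$, which is controlled via $|\alpha|\le\sqrt{\gamma(1+\gamma)}\le\gamma+\sqrt\gamma$, Cauchy--Schwarz, and $\widehat V\in L^1\cap L^\infty$. This gives uniform bounds on the kinetic energy $\int p^2\gamma_n$, hence on $\gamma_n$ in $L^1((1+p^2)dp)$. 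Extract subsequences $\gamma_n\rightharpoonup\gamma$ and $\alpha_n\rightharpoonup\alpha$ in suitable weak topologies. The kinetic term is l.s.c.\ by Fatou; the quadratic forms $\iint\widehat V(p-q)\gamma(p)\gamma(q)$ and $\iint\widehat V(p-q)\alpha(p)\alpha(q)$ are nonnegative because $V,\widehat V\ge 0$, hence weakly l.s.c.; and the linear cross term $\rho_0\int\widehat V(\gamma+\alpha)$ is continuous under weak convergence since $\widehat V\in C^1\cap L^1$. Tightness from the kinetic bound rules out mass escaping to infinity in $p$, so $\int\gamma=\lambda$ in the limit.

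\textbf{Main obstacle.} The delicate point is verifying the pointwise admissibility $\alpha(p)^2\le\gamma(p)(1+\gamma(p))$ in the limit: this is a nonconvex constraint that is in general not preserved by weak convergence. This is exactly where the stronger assumption $\widehat V\in C^3$ with bounded derivatives enters, precisely as in the proof of Theorem~\ref{thm:existencezeroT}. Using the Euler--Lagrange equations for (approximate) minimizers, one transfers regularity from $\widehat V$ to $(\gamma_n,\alpha_n)$ and obtains uniform pointwise decay together with equicontinuity estimates. An Arzel\`a--Ascoli argument on compact momentum sets then upgrades weak convergence to a.e.\ pointwise convergence along a subsequence, after which the constraint passes to the limit by Fatou. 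The absence of entropy at $T=0$ is what prevents the softer $T>0$ argument of Theorem~\ref{thm:existencecanonicalepositiveT} from working here and forces the $C^3$ hypothesis.

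\textbf{Outer problem.} Having established existence for each fixed $(\lambda,\rho_0)$, minimize in $\rho_0\in[0,\rho]$. Lower semicontinuity of $\rho_0\mapsto f(\rho-\rho_0,\rho_0)$ follows from comparing admissible trial states obtained by a small rescaling in $\rho_0$, and compactness of $[0,\rho]$ then produces a minimizing $\rho_0^\star$. Combining $\rho_0^\star$ with the corresponding inner minimizer yields the desired canonical minimizer of \eqref{cmin} at $T=0$.
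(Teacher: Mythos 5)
The theorem you were asked to prove is one of the results that this paper imports from its companion paper: the text just before Theorem~\ref{thm:existencepositiveT} states that all of the Section~3 results are ``proven in the accompanying paper [NapReuSol1-15],'' so there is no in-paper proof to compare your argument against. I will therefore comment on the internal soundness of your proposal rather than on its fidelity to a source.

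You have correctly located the principal obstruction (the nonconvex pointwise constraint $\alpha^2\le\gamma(1+\gamma)$, which at $T=0$ has no compensating entropy) and the role of the $C^3$ hypothesis, but the compactness step contains a genuine gap. The assertion ``tightness from the kinetic bound rules out mass escaping to infinity in $p$, so $\int\gamma=\lambda$ in the limit'' handles only loss of mass at $|p|\to\infty$; the kinetic bound $\int p^2\gamma_n\le C$ does \emph{not} prevent concentration at the origin. A sequence of the form $\gamma_n\propto n^3\mathbf{1}_{|p|<1/n}$ has $\int\gamma_n=\lambda$ fixed while $\int p^2\gamma_n\to0$, yet its weak-$*$ limit as a measure is a point mass at $p=0$ and not an $L^1$ function — this is precisely the Bose--Einstein condensation mechanism that $\rho_0$ is designed to absorb. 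Consequently, the inner problem $\inf\{\mathcal{F}^{\rm can}(\gamma,\alpha,\rho_0):\int\gamma=\lambda\}$ at \emph{fixed} $\rho_0$ need not possess a minimizer, and your two-step (inner-then-outer) scheme is not automatically available. A correct argument must instead work with a joint minimizing sequence $(\gamma_n,\alpha_n,\rho_0^n)$ and, upon extracting weak limits, reassign any mass of $\gamma_n$ that condenses at $p=0$ into the scalar $\rho_0$ before verifying admissibility and lower semicontinuity. Two further points are asserted without proof: the lower semicontinuity of $\rho_0\mapsto f(\rho-\rho_0,\rho_0)$ (which, as written, already presupposes that $f$ is attained), and the passage from $\widehat V\in C^3$ to equicontinuity and a.e.\ convergence of approximate minimizers via Euler--Lagrange; the latter is plausible in outline but is the technical core, and it cannot be taken for granted.
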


Let us now recall the results concerning the existence of phase transitions in our model.  Our first result
shows that Bose--Einstein Condensation and pairing are connected in these models.

\begin{theorem}
\label{thm:BECvsSF}
  Let
  $(\gamma,\alpha,\rho_0)$ be a minimizing triple for
 either \eqref{freeenergyfunctional} or \eqref{freeenergyfunctional2}. Then \bq \rho_0 =0
  \Longleftrightarrow \alpha \equiv 0. \nn \eq
\end{theorem}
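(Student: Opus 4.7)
I would handle the two implications separately by short variational comparisons with the purported minimizer.

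\emph{Direction A ($\rho_0 = 0 \Rightarrow \alpha \equiv 0$).} When $\rho_0 = 0$, the only term in $\mathcal{F}$ that is linear in $\alpha$---the condensate--pairing cross-term---vanishes, and the remaining $\alpha$-dependent contributions are the quadratic interaction $\tfrac12(2\pi)^{-6}\iint\widehat{V}(p-q)\alpha(p)\alpha(q)\,dp\,dq$ and the entropy $-TS(\gamma,\alpha)$. The quadratic form is non-negative because $\widehat{V}$ is a positive-definite function, being the Fourier transform of $V\geq 0$. The entropy $s(\beta)$ is strictly increasing in $\beta$, and at fixed $\gamma$ the function $\beta(p) = \sqrt{(1/2+\gamma)^2 - \alpha^2}$ is maximised by $\alpha\equiv 0$. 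Therefore $\mathcal{F}(\gamma,0,0) \leq \mathcal{F}(\gamma,\alpha,0)$, with strict inequality whenever $T>0$ and $\alpha\not\equiv 0$, contradicting minimality unless $\alpha\equiv 0$.

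\emph{Direction B ($\alpha \equiv 0 \Rightarrow \rho_0 = 0$).} The key observation is that at $\alpha=0$ both the $\alpha$-interaction and the entropy depend on $\alpha$ only quadratically (via $\beta^2 = (1/2+\gamma)^2 - \alpha^2$), whereas the cross-term $(2\pi)^{-3}\rho_0\int\widehat{V}(p)\alpha(p)\,dp$ is linear in $\alpha$. Hence if $\rho_0>0$, a suitable first-order perturbation of $\alpha$ must strictly decrease $\mathcal{F}$. I would pick a compact set $E$ on which $\widehat{V}\geq\delta>0$ (available from continuity of $\widehat{V}$ and $\widehat{V}(0)=\int V>0$) and on which $\gamma\geq\epsilon>0$ (available for $T>0$ since the Euler--Lagrange equation forces $\gamma$ to be a Bose-type distribution, strictly positive everywhere), and take $\alpha_t(p) := -t\,\mathbf{1}_E(p)\widehat{V}(p)$ for small $t>0$. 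Then $\alpha_t$ is admissible and
\begin{equation*}
\mathcal{F}(\gamma,\alpha_t,\rho_0) - \mathcal{F}(\gamma,0,\rho_0) = -t(2\pi)^{-3}\rho_0\int_E \widehat{V}(p)^2\,dp + O(t^2) < 0,
\end{equation*}
contradicting minimality of $(\gamma,0,\rho_0)$.

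\emph{Main obstacle.} The arguments above degenerate at $T=0$: in Direction A the entropy contribution disappears, and in Direction B $\gamma$ may vanish on sets of positive measure. For Direction A at $T=0$ I would replace the entropy step by the scaling $\alpha\mapsto(1-\tau)\alpha$, which is always admissible for $\tau\in(0,1)$ and strictly decreases $\mathcal{F}$ unless $\iint\widehat{V}(p-q)\alpha(p)\alpha(q)\,dp\,dq = 0$; excluding this degenerate case for $\alpha\not\equiv 0$ uses $V\not\equiv 0$ together with regularity of the (inverse) Fourier transform of $\alpha$ coming from the admissibility $\alpha^2\leq\gamma(1+\gamma)$ and from $\gamma\in L^1((1+p^2)dp)$. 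For Direction B at $T=0$ one must verify positivity of $\gamma$ near the origin using the Bogoliubov-type structure of the $T=0$ minimizer, which is where the inputs from the companion paper \cite{NapReuSol1-15} become essential. Handling these $T=0$ subtleties is the main technical obstacle.
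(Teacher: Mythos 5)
This theorem is one of the results Section~3 quotes from the companion paper \cite{NapReuSol1-15}; the present paper contains no proof of it, so there is no in-text argument to compare your proposal against, and I can only evaluate it on its own merits. For $T>0$ your plan is sound. In direction~A, at $\rho_0=0$ the quadratic form satisfies $\iint\widehat V(p-q)\alpha(p)\alpha(q)\,dp\,dq=(2\pi)^3\int V(x)|\check\alpha(x)|^2\,dx\ge0$ and $s(\beta)$ is strictly maximized (for fixed $\gamma$) at $\alpha\equiv 0$, so $-TS$ gives strict decrease unless $\alpha\equiv0$. In direction~B the first variation of $\mathcal F$ in $\alpha$ at $\alpha=0$ is $\rho_0\widehat V(p)$ (the entropy term has vanishing $\alpha$-derivative there since $\partial_\alpha\beta\vert_{\alpha=0}=0$), and your perturbation $\alpha_t=-t\mathbf 1_E\widehat V$ decreases $\mathcal F$ to first order provided $\gamma\ge\epsilon>0$ on $E$, which the $T>0$ Euler--Lagrange equation does supply.

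The $T=0$ gaps you flag are genuine and the proposed fixes are not enough. For direction~A, the scaling only forces the minimizer to satisfy $Q[\alpha]:=\iint\widehat V(p-q)\alpha(p)\alpha(q)=0$, i.e.\ $\check\alpha=0$ on the open set $\{V>0\}$; to conclude $\alpha\equiv0$ you need $\check\alpha$ to be rigid (e.g.\ real-analytic), but the admissibility bounds $\alpha^2\le\gamma(\gamma+1)$ and $\gamma\in L^1((1+p^2)dp)$ give at best $\check\alpha\in H^1(\mathbb R^3)$, and $\{V>0\}$ need not be all of $\mathbb R^3$. This step requires additional structure of the $T=0$ minimizer that the sketch does not supply. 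For direction~B at $T=0$ there is a cleaner route that avoids proving positivity of $\gamma$: take $g\ge0$ compactly supported inside $\{\widehat V>0\}$; if $\gamma>0$ on a positive-measure subset of $\supp g$, your perturbation applies; otherwise set $\gamma_\epsilon=\gamma+\epsilon g$ and $\alpha_\epsilon=-\sqrt{\gamma_\epsilon(\gamma_\epsilon+1)}$ on $\supp g$. Then $\beta_\epsilon\equiv 1/2$ on $\supp g$, so $s(\beta_\epsilon)=0$ and there is no entropy cost at any $T$, while the cross-term changes by $-\rho_0(2\pi)^{-3}\sqrt{\epsilon}\int\widehat V\sqrt{g}+O(\epsilon)$ and all other terms by $O(\epsilon)$, yielding a contradiction. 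This removes the $T=0$ obstruction you flagged in direction~B; direction~A at $T=0$ remains the harder point.
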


Thus, there can only be one kind of phase transition, and the next results show that it indeed exists.

\begin{theorem}[Existence of grand canonical phase transition]\label{thm:phasetrangrandcan}
  Given $\mu>0$.  Then there exist temperatures $0<T_1<T_2$ such that
  a minimizing triple $(\gamma,\alpha,\rho_0)$ of
  \eqref{gcmin} satisfies
\begin{enumerate}
\item $\rho_0=0$ for $T\geq T_2$;
\item $\rho_0>0$ for $0\leq T \leq T_1$.
\end{enumerate}
\end{theorem}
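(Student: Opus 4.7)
My plan is to establish the two halves of the claim separately, using Theorem~\ref{thm:BECvsSF} throughout to replace $\rho_0=0$ by $\alpha\equiv 0$. For the low-temperature half, I first handle $T=0$. The condensed trial state $(\gamma,\alpha,\rho_0)=(0,0,\mu/\widehat{V}(0))$ has $\beta\equiv 1/2$, hence zero entropy, giving the universal upper bound
\[F(T,\mu)\leq\mathcal{F}(0,0,\mu/\widehat{V}(0))=-\mu^{2}/(2\widehat{V}(0))\qquad\text{for every }T\geq 0.\]
If a minimizer at $T=0$ had $\rho_0=0$, then $\alpha\equiv 0$, and completing the square in $\rho_\gamma$ inside
\[\mathcal{F}(\gamma,0,0)|_{T=0}=(2\pi)^{-3}\!\int p^{2}\gamma\,dp-\mu\rho_\gamma+\tfrac12\widehat{V}(0)\rho_\gamma^{2}+\tfrac12(2\pi)^{-6}\!\iint\widehat{V}(p-q)\gamma(p)\gamma(q)\,dp\,dq\]
yields $\mathcal{F}(\gamma,0,0)|_{T=0}\geq -\mu^{2}/(2\widehat{V}(0))+\tfrac12(2\pi)^{-6}\iint\widehat{V}(p-q)\gamma(p)\gamma(q)\,dp\,dq$. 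Since $\widehat{V}\geq 0$ is continuous with $\widehat{V}(0)>0$, the exchange integral is strictly positive whenever $\gamma\not\equiv 0$, contradicting the upper bound. To pass to $T\in(0,T_1]$, I combine the same bound with the Euler--Lagrange equation $\gamma(p)=[\exp((p^{2}+V_{\mathrm{eff}}(p)-\mu)/T)-1]^{-1}$, where $V_{\mathrm{eff}}(p)=\widehat{V}(0)\rho_\gamma+(2\pi)^{-3}(\widehat{V}\ast\gamma)(p)$: integrability of $\gamma$ forces $V_{\mathrm{eff}}(0)\geq\mu$, but the Bose--Einstein weight yields $\rho_\gamma=O(T^{3/2})$ as $T\to 0^{+}$, so $V_{\mathrm{eff}}(0)\to 0<\mu$, a contradiction locating an explicit $T_1>0$.

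For the high-temperature half I exhibit a non-condensed trial state whose free energy diverges to $-\infty$ faster than any condensed competitor can. Take $\tilde\gamma_T(p)=[\exp((p^{2}+\nu_T)/T)-1]^{-1}$ with $\nu_T>0$ chosen asymptotically from the self-consistency $\nu_T=\widehat{V}(0)\rho_{\tilde\gamma_T}+(2\pi)^{-3}(\widehat{V}\ast\tilde\gamma_T)(0)-\mu$; the large-$T$ regime has $\nu_T\sim\tfrac12T\log T$ and $\rho_{\tilde\gamma_T}\sim T\log T/(2\widehat{V}(0))$, leading to $\mathcal{F}(\tilde\gamma_T,0,0)\sim -c(\widehat{V}(0))\,T^{2}(\log T)^{2}\to-\infty$. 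For any $(\gamma,\alpha,\rho_0)$ with $\rho_0>0$, the EL equation for $\rho_0$,
\[-\mu+\widehat{V}(0)\rho+(2\pi)^{-3}\!\int\widehat{V}(p)(\gamma(p)+\alpha(p))\,dp=0,\]
combined with $|\alpha|\leq\sqrt{\gamma(1+\gamma)}\leq 1/2+\gamma$, gives $\rho\leq(\mu+V(0)/2)/\widehat{V}(0)$ uniformly in $T$. Diagonalizing via the Bogoliubov transform into Bose--Einstein weights of $\epsilon(p)=\sqrt{p^{2}(p^{2}+2\rho_0\widehat{V}(p))}$ then bounds $S(\gamma,\alpha)$ linearly in $\rho_\gamma$ (plus a kinetic piece absorbed by the kinetic term in $\mathcal{F}$), yielding $\mathcal{F}(\gamma,\alpha,\rho_0)\geq -C(\mu,V)\,T$. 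Since $T^{2}(\log T)^{2}\gg T$, the non-condensed trial strictly beats every condensed state once $T\geq T_2$.

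The main obstacle is the high-$T$ lower bound: converting the $T$-independent bound on $\rho$ into a usable bound on the entropy $S(\gamma,\alpha)$ in the condensed phase, since crude pointwise estimates on $s(\beta)$ alone are too weak. The remedy is to minimize $\mathcal{F}(\cdot,\cdot,\rho_0)$ over $(\gamma,\alpha)$ at fixed $\rho_0>0$ via the Bogoliubov diagonalization, which re-expresses $\gamma$ and $\alpha$ as explicit Bose--Einstein weights of the Bogoliubov dispersion and makes the entropy integral straightforward to control. The low-$T$ extension to $T\in(0,T_1]$ is comparatively routine once the self-consistency of the non-condensed EL equation is exploited.
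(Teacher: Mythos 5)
Note first that the paper itself does not prove Theorem~\ref{thm:phasetrangrandcan}: Section~\ref{sec:previousresults} states that the results listed there are proven in the accompanying paper~\cite{NapReuSol1-15}, so there is no in-paper proof to compare against, and your proposal has to be judged on its own merits. Your $T=0$ step is correct: the trial state $(0,0,\mu/\widehat{V}(0))$ has $\beta\equiv\tfrac12$ and hence zero entropy, and completing the square in $\rho_\gamma$ rules out $\rho_0=0$ at $T=0$. The high-temperature strategy is also sound in outline: if a minimizer had $\rho_0>0$, the $\rho_0$--Euler--Lagrange equation combined with $\gamma+\alpha\ge-\tfrac12$ and $\widehat{V}\ge 0$ does give $\rho\le(\mu+\tfrac12 V(0))/\widehat{V}(0)$ uniformly in $T$, and then $S(\gamma,\alpha)\le S(\gamma,0)$ together with $\cF_0(\gamma)\ge F_0(T,\rho_\gamma)$ yields a lower bound on the condensed branch; this bound is $\gtrsim -C\,T\log T$ rather than $-CT$ (for bounded $\rho$ and $T\to\infty$ one has $F_0(T,\rho)\sim -\tfrac32\rho\,T\log T$), but it is still beaten by the $-c\,T^2(\log T)^2$ of your self-consistent non-condensed trial state, so the conclusion stands. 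Note also that one can avoid the Bogoliubov diagonalization of the actual minimizer entirely; the inequality $S(\gamma,\alpha)\le S(\gamma,0)$ already suffices.

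The genuine gap is the passage from $T=0$ to $T\in(0,T_1]$. You assert that ``the Bose--Einstein weight yields $\rho_\gamma=O(T^{3/2})$'', but that is the \emph{free}-gas critical-density estimate, and it requires the effective dispersion $\epsilon(p)=p^2-\mu+V_{\mathrm{eff}}(p)$ to be bounded below by $c\,p^2$; all the Euler--Lagrange equation gives you is $\epsilon\ge 0$, and since $V_{\mathrm{eff}}$ depends self-consistently on $\gamma$, nothing prevents $\epsilon$ from being nearly flat on a neighbourhood of $p=0$, allowing $\rho_\gamma$ of order one. In fact your own estimates pin $\rho_\gamma$ \emph{away} from zero: $V_{\mathrm{eff}}(0)\ge\mu$ together with $V_{\mathrm{eff}}(0)\le 2\widehat{V}(0)\rho_\gamma$ gives $\rho_\gamma\ge\mu/(2\widehat{V}(0))$, while the completed square, $\cF_0(\gamma)\ge T^{5/2}f_{\rm min}$, the positivity of the exchange term, and $F(T,\mu)\le-\mu^2/(2\widehat{V}(0))$ together force $|\rho_\gamma-\mu/\widehat{V}(0)|\le CT^{5/4}$. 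So $\rho_\gamma\to\mu/\widehat{V}(0)$, not $0$, and the advertised contradiction does not occur. The step can be repaired without the $O(T^{3/2})$ claim by exploiting the exchange term you discarded. Since $\rho_\gamma\ge\mu/(2\widehat{V}(0))$, fix $b>0$ so small that $\widehat{V}(k)\ge\widehat{V}(0)/2$ for $|k|\le 2b$. If $(2\pi)^{-3}\int_{|p|\le b}\gamma\ge\tfrac12\rho_\gamma$, then
\[
\tfrac12(2\pi)^{-6}\iint\widehat{V}(p-q)\gamma(p)\gamma(q)\,dp\,dq\ \ge\ \tfrac{\widehat{V}(0)}{16}\rho_\gamma^2\ \ge\ \tfrac{\mu^2}{64\widehat{V}(0)},
\]
an $O(1)$ positive contribution; otherwise $(2\pi)^{-3}\int p^2\gamma\ge\tfrac12 b^2\rho_\gamma\ge b^2\mu/(4\widehat{V}(0))$, and splitting off half the kinetic energy gives $\cF_0(\gamma)\ge b^2\mu/(8\widehat{V}(0))+2^{3/2}T^{5/2}f_{\rm min}$, again $O(1)$ and positive for small $T$. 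In both cases $\cF(\gamma,0,0)>-\mu^2/(2\widehat{V}(0))\ge F(T,\mu)$ once $T$ is small enough, contradicting minimality and furnishing an explicit $T_1>0$.
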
 

\begin{theorem}[Existence of canonical phase transition]\label{thm:phasetrancan}
For fixed  $\rho>0$ there exist temperatures $0<T_3<T_4$ such that a minimizing triple
$(\gamma,\alpha,\rho_0)$ of \eqref{cmin} satisfies 
\begin{enumerate}
\item $\rho_0=0$ for $T\geq T_4$;
\item $\rho_0>0$ for $0\leq T \leq T_3$.
\end{enumerate}
\end{theorem}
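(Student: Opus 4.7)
My plan is to establish the two halves of the statement separately, by arguments that parallel the proof of the grand-canonical phase transition in Theorem~\ref{thm:phasetrangrandcan}. It is convenient to introduce
\[
G(T,\rho):=\inf\bigl\{\cF^{\rm can}(\gamma,\alpha,0):(\gamma,\alpha)\in\cD',\ \rho_\gamma=\rho\bigr\},
\]
the infimum of the functional when condensation is forbidden. The two conclusions reduce to showing $F^{\rm can}(T,\rho)<G(T,\rho)$ for $T\le T_3$ and $F^{\rm can}(T,\rho)=G(T,\rho)$, attained at some state with $\rho_0=0$, for $T\ge T_4$.

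For the low-temperature half I would start at $T=0$. The pure condensate trial state $(\gamma,\alpha,\rho_0)=(0,0,\rho)$ gives $F^{\rm can}(0,\rho)\le\tfrac12\widehat V(0)\rho^2$. To lower bound $G(0,\rho)$, note that when $\rho_0=0$ the $\rho_0$-linear coupling to $\alpha$ disappears, so one may set $\alpha=0$ (consistent with Theorem~\ref{thm:BECvsSF}); the functional then reduces to $(2\pi)^{-3}\int p^2\gamma+\tfrac12\widehat V(0)\rho^2+\tfrac12(2\pi)^{-6}\int V(x)|\tilde\gamma(x)|^2\,dx$, where $\tilde\gamma(x):=\int\gamma(p)e^{-ipx}\,dp$. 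The kinetic and $\gamma$-exchange terms cannot both vanish: small kinetic forces $\gamma$ to concentrate near $p=0$, which via continuity of $\tilde\gamma$ forces $|\tilde\gamma|^2$ to stay close to $|\tilde\gamma(0)|^2=(2\pi)^6\rho^2$ on a ball whose radius diverges as the kinetic shrinks, driving the $V$-integral toward $\tfrac12\widehat V(0)\rho^2$. A dichotomy based on whether $(2\pi)^{-3}\int p^2\gamma$ exceeds a suitable threshold $K_0(\rho)$ then yields $G(0,\rho)\ge\tfrac12\widehat V(0)\rho^2+c_\rho$ for some $c_\rho>0$, a strict gap over the condensate trial state. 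Both $F^{\rm can}(\cdot,\rho)$ and $G(\cdot,\rho)$ are concave (hence continuous) in $T$, so the gap persists on some interval $[0,T_3]$.

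For the high-temperature half the idea is that entropy dominates and the system behaves essentially like a non-condensed free gas. The free-gas trial state $\gamma_{\rm free}(p)=(e^{(p^2-\mu_0)/T}-1)^{-1}$ with $\mu_0<0$ chosen so that $(2\pi)^{-3}\int\gamma_{\rm free}=\rho$ (valid for $T>T_{\rm fc}(\rho)$), together with $\alpha=0$ and $\rho_0=0$, yields an upper bound on $G(T,\rho)$ whose leading behaviour matches the non-interacting gas. To forbid $\rho_0>0$ at a minimizer I would estimate the slope $\partial_{\rho_0}f(\rho-\rho_0,\rho_0)|_{\rho_0=0^+}$ from below: by Theorem~\ref{thm:BECvsSF} any minimizer with $\rho_0>0$ must carry $\alpha\not\equiv0$, and the entropic cost $T[S(\gamma,0)-S(\gamma,\alpha)]$ scales linearly in $T$ and eventually dominates the $T$-independent interaction gain, which is controlled by $\widehat V(0)\rho^2$.

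The main obstacle is twofold. In the low-temperature regime the delicate point is that the naive lower bound on $G(0,\rho)$ already gives exactly $\tfrac12\widehat V(0)\rho^2$, matching the condensate trial state; producing a strict gap requires the kinetic/exchange dichotomy described above. In the high-temperature regime the subtle step is the uniform-in-$\gamma$ lower bound on $T[S(\gamma,0)-S(\gamma,\alpha)]$ in terms of a quantity dominating the interaction gain, which rests on a careful analysis of the entropy density $s$ via its natural variable $\beta(p)$ from \eqref{def:beta}. A cleaner but more indirect alternative in both regimes would be to invoke Theorem~\ref{thm:phasetrangrandcan} by Legendre duality, tracking the dual chemical potential $\mu^*(T,\rho)$, which should tend to $\widehat V(0)\rho>0$ as $T\to0$ and to $-\infty$ as $T\to\infty$; in this non-convex setting this requires some care to transport minimizers rather than merely critical points.
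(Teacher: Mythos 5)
This paper does not contain a proof of Theorem~\ref{thm:phasetrancan}. As stated at the beginning of Section~\ref{sec:previousresults}, all of the results in that section---including this one---are quoted from the companion paper \cite{NapReuSol1-15} and are taken as given here, so that the present article can concentrate on the quantitative dilute-limit analysis (Theorems~\ref{thm:cancrittemp}--\ref{thm:expint} and Corollary~\ref{corollary112}). There is therefore no proof in this paper against which your attempt can be compared; the relevant argument lives in Paper~I.

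A few remarks on the sketch itself. For the low-temperature half, the claimed inequality $G(0,\rho)\ge\frac12\widehat V(0)\rho^2+c_\rho$ is stronger than what you actually need, since the trivial bound $G(0,\rho)\ge\frac12\widehat V(0)\rho^2\ge F^{\rm can}(0,\rho)$ already holds (kinetic and exchange are nonnegative, while $(\gamma,\alpha,\rho_0)=(0,0,\rho)$ gives $\frac12\widehat V(0)\rho^2$ on the other side); what is required is strictness, and this is most directly obtained by exhibiting a trial state with $\rho_0>0$ and $\alpha<0$ making $F^{\rm can}(0,\rho)<\frac12\widehat V(0)\rho^2$, rather than by the kinetic/exchange dichotomy on the $G$ side. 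Your dichotomy can also be made to work, but as written it is not quite complete: you would still need to quantify how small kinetic energy forces $\tilde\gamma$ to be close to $\tilde\gamma(0)$ over a region large enough to carry most of $\int V$. The high-temperature half is the genuinely delicate part: the assertion that the entropic penalty for $\alpha\neq0$ grows linearly in $T$ and dominates the interaction gain is not immediate, because the competing $\alpha$ itself depends on $T$, and this competition must be controlled uniformly in $\gamma$; your sketch does not indicate how. Finally, the Legendre-duality alternative is not a shortcut either: precisely because $F^{\rm can}(T,\cdot)$ fails to be convex (the very nonconvexity exhibited in Subsection~\ref{sec:crittemp} and Remark~\ref{nonuniqueminimizer}), densities in the linear segment of the convex hull do not correspond to grand-canonical minimizers, so Theorem~\ref{thm:phasetrangrandcan} cannot be transported to the canonical setting without an additional argument of the type used in Step~2 of the proof of Proposition~\ref{prop:criticalregion}.
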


\section{Main results and sketch of proof}
\label{mr}
We assume that
\begin{eqnarray}\label{dillim}
\rho^{1/3}a\ll  1,
\end{eqnarray}
where $a$, the scattering length of the potential, is defined by
\[
4\pi a:=\int \Delta w=\frac12 \int Vw, 
\]
and $w$ satisfies
\bq
\label{def:scatteq}
-\Delta w+\frac12 Vw=0
\eq
in the sense of distributions with $w(x)\to1$ as $|x|\to\infty$. The quantity $8\pi a$ is often replaced by $\int V=\widehat{V}(0)$, which is its first-order Born approximation. In fact, $\widehat{V}(0)> 8\pi a$ (see \cite[Appendix C]{LieSeiSolYng-05} for more details). We quantify this discrepancy with the parameter $\nu=\widehat{V}(0)/a$, so that $\nu>8\pi$. The limit $\nu\to8\pi$, that is, a sequence of potentials such that $\widehat{V}(0)$ tends to $8\pi a$, is of special interest.\\ 

For the proofs, it will sometimes be useful to consider the region $T\leq D\rho^{2/3}$ with $D>1$ fixed separately,  in which case we can rewrite the second condition in \eqref{dillim} as 
\bq
\label{eq:thermalwavelengthcondition}
\sqrt{T}a\leq \sqrt{D} \rho^{1/3}a\ll  1.
\eq
In particular, since the thermal wavelength $\Lambda\sim \sqrt{T}^{-1}$, the condition \eqref{eq:thermalwavelengthcondition} implies that $a/\Lambda \ll  1$. Furthermore, we define a constant $C$ by
\begin{equation}
\label{assumptionsV}
\int\widehat{V}\leq Ca^{-2}\hspace{1cm}\text{and}\hspace{1cm}\|\partial^n\widehat{V}\|_{\infty}\leq C a^{n+1}\text{\ for\ } 0\leq n\leq3,
\end{equation}
where $\partial^n$ is shorthand for all $n$-th order partial derivatives. With this definition, our estimates depend only on $C$ and not on $a$. Throughout the paper, we will also use $C$ to denote any unspecified positive constant.

\subsection{The critical temperature} The following theorems contain information about the critical temperature in the dilute limit.

We derive expressions in terms of both $\widehat{V}(0)\geq 8\pi a$ and the scattering length $a$, but these simplify if we consider a sequence of potentials with $\widehat{V}(0)\to8\pi a$, or $\nu\to8\pi$, so that we can determine the value of the constants in that limit numerically.

Note that $T_{\rm{fc}}=c_0\rho^{2/3}$  is the critical temperature of the free Bose gas, and $\rho_{\rm fc}=(T/c_0)^{3/2}$ its corresponding critical density.

\begin{theorem}[Canonical critical temperature]
\label{thm:cancrittemp}
There is a monotone increasing function $h_1:(8\pi,\infty)\to\mathbb{R}$ such that for any minimizing triple $(\gamma,\alpha,\rho_0)$ of \eqref{cmin} at temperature $T$ and density $\rho$
\begin{enumerate}
\item $\rho_0\neq 0$ if $T<T_{\rm fc}\left(1+h_1(\nu)\rho^{1/3}a+o(\rho^{1/3}a)\right)$
\item $\rho_0= 0$ if $T>T_{\rm fc}\left(1+h_1(\nu)\rho^{1/3}a+o(\rho^{1/3}a)\right)$.
\end{enumerate}
The numerical value of $\lim_{\nu\to8\pi}h_1(\nu)$ is $1.49$.
\end{theorem}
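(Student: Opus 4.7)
The canonical decomposition $F^{\rm{can}}(T,\rho)=\inf_{0\le\rho_0\le\rho} f(\rho-\rho_0,\rho_0)$ reduces the question to locating the temperature at which the minimizer of $\rho_0 \mapsto f(\rho-\rho_0,\rho_0)$ on $[0,\rho]$ jumps between the boundary $\rho_0=0$ and an interior point. Theorem~\ref{thm:BECvsSF} guarantees $\alpha\equiv 0$ on the $\rho_0=0$ branch, so $\cF^{\rm{can}}$ collapses to a Hartree-type functional in $\gamma$ alone. I would characterize the minimum along each branch asymptotically in $\rho^{1/3}a$ and then identify $T_c$ as the value at which the two branches cross.

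\textbf{Normal branch.} For $\rho_0=0$, I would solve the Euler--Lagrange equation, which produces a perturbed Bose--Einstein profile
\[
\gamma_N(p) = \bigl(\exp\{(p^2+(\widehat V\ast\gamma_N)(p)+\widehat V(0)\rho-\mu_N)/T\}-1\bigr)^{-1},
\]
with Lagrange multiplier $\mu_N$ fixed by $\int\gamma_N=\rho$. The non-BEC branch exists up to the saturation $\mu_N=\widehat V(0)\rho+(\widehat V\ast\gamma_N)(0)$; beyond that value of $\mu_N$ the constraint cannot be met and one is forced to switch on $\rho_0$. Linearizing around the free gas at $T=T_{\rm{fc}}$ would produce the leading $\rho^{1/3}a$ correction to the free critical temperature coming from the Hartree and exchange terms.

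\textbf{Condensed branch.} For $\rho_0>0$, I would minimize over $(\gamma,\alpha)$ at fixed $\rho_0$ and $\int\gamma=\lambda$, which yields a Bogoliubov-type dispersion $\epsilon(p)=\sqrt{A(p)^2-B(p)^2}$ together with
\[
\gamma(p)+\tfrac12 = \frac{A(p)}{2\epsilon(p)}\coth\!\Bigl(\tfrac{\epsilon(p)}{2T}\Bigr), \qquad \alpha(p) = -\frac{B(p)}{2\epsilon(p)}\coth\!\Bigl(\tfrac{\epsilon(p)}{2T}\Bigr),
\]
making $f(\rho-\rho_0,\rho_0)$ an explicit function of $\rho_0$. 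Expanding in small $\rho_0$ exposes infrared-sensitive Bogoliubov integrals that encode the $\rho^{1/3}a$ scale. Matching the normal and condensed branches at the transition temperature would give $T_c=T_{\rm{fc}}(1+h_1(\nu)\rho^{1/3}a+o(\rho^{1/3}a))$, with $h_1(\nu)$ arising as a ratio of explicit dimensionless integrals of $\widehat V/a$. Monotonicity in $\nu=\widehat V(0)/a$ would follow from the monotonicity of those integrals, and in the limit $\nu\to 8\pi$ all remaining constants should collapse to universal expressions involving $\zeta(3/2)$, yielding $h_1(8\pi)=1.49$.

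\textbf{Main obstacle.} The hardest step will be producing uniform asymptotic control of both branches in a shrinking neighborhood of $T_c$. One has to replace convolutions $\widehat V\ast\gamma$ by $\widehat V(0)\int\gamma$ with explicit remainders, handle the infrared singularity of $\epsilon(p)^{-1}$ at $p=0$ on the condensed branch, and track the Born-to-scattering-length correction $\widehat V(0)\to 8\pi a$ via the scattering solution of \eqref{def:scatteq} in order to justify the limit $\nu\to 8\pi$. Because the transition is first order, one must also rule out spurious branch crossings on the $o(\rho^{1/3}a)$ scale, so that the asymptotic matching really pinpoints the claimed $T_c$ rather than a nearby metastable value.
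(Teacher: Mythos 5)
Your plan captures the broad strategy--reduce the canonical problem to a minimization over $\rho_0$ after an explicit minimization in $(\gamma,\alpha)$, and locate the temperature at which the minimizing $\rho_0$ jumps away from zero--but it stops short of the technical devices that actually make the argument close, and the branch-matching picture you sketch would not produce the constant $1.49$ without them. The paper's mechanism for the ``Born-to-scattering-length'' correction (which you flag as an obstacle but do not resolve) is to build a trial pairing function $\alpha_0$ out of the Fourier transform of the scattering solution, $\alpha_0 = (\rho_0+t_0)\widehat{w} - (2\pi)^3\rho_0\delta_0$, and to subtract a simplified functional $\cF^{\rm sim}$ (built from $\widehat{Vw}$ rather than $\widehat V$) from $\cF^{\rm can}$, obtaining Lemma~\ref{prop:simplandfullfreeenergycomp} with explicit error terms $E_1,\dots,E_5$. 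This is what turns $\widehat V(0)\to 8\pi a$ into a controlled replacement rather than an ansatz. A second device you omit is the free parameter $t_0=\theta\rho_0$ in $\alpha_0$, fixed by the self-consistent equation $\int(\alpha^{\rho_0,\delta}-\alpha_0)=0$; without it the quadratic error $E_1$ is not small enough near $T_c$ and the constant cannot be computed. Third, you gesture at ``ruling out spurious branch crossings'' but the paper first establishes a priori estimates (Proposition~\ref{prop:criticalregion} via the kinetic-energy bound Lemma~\ref{lm:apriorikinetic}) showing the transition can only occur in the narrow window $|\rho-\rho_{\rm fc}|<C\rho\,\rho^{1/3}a$; inside that window it introduces the scaled variables $k$ and $\sigma$, with $\rho=\rho_{\rm fc}+\tfrac{k}{8\pi}T^2a$ and $\rho_0=\tfrac{\sigma}{8\pi}T^2a$, so the simplified energy becomes an explicit $O(1)$ function of $(k,\sigma)$ whose minimizer one can analyze numerically. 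Expanding ``in small $\rho_0$'' as you propose would miss that the relevant $\sigma$ is of order one at the transition.

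There is also a subtler gap. When the minimizer of the approximate energy has $\sigma=0$, the approximation errors only show that the true minimizer satisfies $\rho_0 = o(T^2a)$, not $\rho_0=0$; the paper needs a separate convexity argument (Step 3b of the proof) using the strict convexity of the $\rho_0=0$ functional and the free-gas bound \eqref{eq:e0nn2} to upgrade this. Likewise, the simplified minimizers $(\gamma^{\rho_0,\delta},\alpha^{\rho_0,\delta})$ only exist for $\delta\geq 0$, i.e.\ $\sigma$ in a constrained interval $I(k)$, and the case $\rho>\rho_{\rm fc}$ requires a direct lower bound on the $\rho_0=0$ energy to show it is beaten. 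Your proposal does not anticipate either of these points, and without them the ``matching'' argument would leave the behavior of the true minimizer undetermined at exactly the scale that distinguishes $\rho_0=0$ from $\rho_0>0$.
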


\begin{theorem}[Grand-canonical critical temperature]
\label{thm:grandcancrittemp}
There is a function $h_2:(8\pi,\infty)\to\mathbb{R}$ such that for any minimizing triple $(\gamma,\alpha,\rho_0)$ of \eqref{gcmin} at temperature $T$ and chemical potential $\mu>0$
\begin{enumerate}
\item $\rho_0\neq 0$ if $T<\left(\frac{\sqrt{\pi}}{2\zeta(3/2)}\frac{8\pi}{\nu}\right)^{2/3}\left(\frac{\mu}{a}\right)^{2/3}+h_2(\nu)\mu+o(\mu)$
\item $\rho_0= 0$ if $T>\left(\frac{\sqrt{\pi}}{2\zeta(3/2)}\frac{8\pi}{\nu}\right)^{2/3}\left(\frac{\mu}{a}\right)^{2/3}+h_2(\nu)\mu+o(\mu)$.
\end{enumerate}
The numerical value of $\lim_{\nu\to8\pi}h_2(\nu)$ is $0.44$.
\end{theorem}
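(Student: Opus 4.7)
\medskip
\noindent\textbf{Proof proposal.}
The plan is to derive the grand-canonical result from the canonical one (Theorem~\ref{thm:cancrittemp}) via the Euler--Lagrange characterization of minimizers on either side of the transition. Consider first the non-BEC branch: when $\rho_0=0$, Theorem~\ref{thm:BECvsSF} forces $\alpha\equiv 0$, and variation of $\mathcal F$ in $\gamma$ yields the self-consistent Hartree--Fock Bose distribution
\[
\gamma(p)=\bigl[\,e^{(p^{2}+\widehat V(0)\rho+V_{\mathrm{HF}}(p)-\mu)/T}-1\bigr]^{-1},\qquad V_{\mathrm{HF}}(p)=(2\pi)^{-3}\!\!\int\widehat V(p-q)\gamma(q)\,dq,
\]
whose existence requires $\mu\le \widehat V(0)\rho+V_{\mathrm{HF}}(0)$. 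The marginal condition $\mu=\widehat V(0)\rho+V_{\mathrm{HF}}(0)$ pins down a critical density $\rho^*(T,\mu)$ on the non-BEC side.

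\medskip
\noindent In the dilute regime, $V_{\mathrm{HF}}(0)=\widehat V(0)\rho+O(\rho^{5/3}a^{3})$, so the leading-order relation is $\mu=2\widehat V(0)\rho=2\nu a\rho$, giving $\rho^*(\mu)\approx \mu/(2\nu a)$. Inserting this into $T_{\mathrm{fc}}(\rho)=4\pi\zeta(3/2)^{-2/3}\rho^{2/3}$ and using $\widehat V(0)=\nu a$ reproduces exactly the leading term $\bigl(\tfrac{\sqrt{\pi}}{2\zeta(3/2)}\tfrac{8\pi}{\nu}\bigr)^{2/3}(\mu/a)^{2/3}$. For the subleading correction one applies Theorem~\ref{thm:cancrittemp} at $\rho=\rho^*(\mu)$: the canonical shift $h_1(\nu)(\rho^*)^{1/3}a$ gives a term of order $T_{\mathrm{fc}}(\rho^*)(\rho^*)^{1/3}a\sim \mu/\nu$, which combines with (a)~next-to-leading corrections to the relation $\mu\leftrightarrow\rho$ arising from the momentum-dependence of $\widehat V$ in $V_{\mathrm{HF}}$ and (b)~Maxwell-construction corrections due to the first-order character of the transition, to yield $h_2(\nu)$. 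Explicit numerical evaluation in the limit $\nu\to 8\pi$ produces the constant $0.44$.

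\medskip
\noindent To turn the formal matching into rigorous two-sided bounds one proves: (i)~for $T$ strictly above the claimed critical temperature, a non-BEC trial state compared with every BEC candidate via the canonical upper bound forces any minimizer to have $\rho_0=0$; (ii)~for $T$ strictly below, a BEC trial state built from a near-minimizer of the canonical problem at density $\rho^*(\mu)$, with the appropriate Bogoliubov dispersion and $\rho_0\approx \rho^*-\rho_{\mathrm{fc}}$, beats any non-BEC candidate and thus forces $\rho_0>0$.

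\medskip
\noindent The main obstacle is the first-order character of the grand-canonical transition: the minimizing density jumps across $T_c^{\mathrm{gc}}(\mu)$, so both BEC and non-BEC branches coexist as local minimizers in a neighbourhood of the critical line, and the true $T_c^{\mathrm{gc}}(\mu)$ is determined by a Maxwell equal-area condition rather than by the boundary of the non-BEC region alone. Tracking this contribution quantitatively, so that it combines with the canonical shift of Theorem~\ref{thm:cancrittemp} to give the stated $h_2(\nu)$ with limit $0.44$, requires expanding both branches of the free energy to order $\mu$ and maintaining uniform control on the integrals defining $V_{\mathrm{HF}}$ and the Bogoliubov dispersion in the regime $\sqrt T\, a\ll 1$.
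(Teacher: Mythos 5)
Your leading-order analysis is correct: the marginal condition $\mu\approx 2\widehat V(0)\rho=2\nu a\rho$ combined with $T_{\rm fc}(\rho)=c_0\rho^{2/3}$ reproduces the term $\bigl(\tfrac{\sqrt\pi}{2\zeta(3/2)}\tfrac{8\pi}{\nu}\bigr)^{2/3}(\mu/a)^{2/3}$, and you correctly identify the Maxwell construction as the essential device for locating the grand-canonical transition. However, the strategy you propose for the subleading constant $h_2(\nu)$ — ``applying Theorem~\ref{thm:cancrittemp} at $\rho=\rho^*(\mu)$, combined with HF corrections and Maxwell corrections'' — does not match how the constant actually arises and, as stated, would not produce $0.44$. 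In the paper's argument the canonical critical value $k_c=-1.28$ (which defines $h_1$) plays no quantitative role in $h_2$: the grand-canonical transition is pinned by an \emph{equal-slope/convex-hull} condition on the explicit reduced free-energy curve $g(k)$ from \eqref{somefunc1}, whose two local minima sit at $k_-\approx-2.23$ and $k_+\approx 3.04$, well away from $k_c$. The number that matters is the slope $0.226$ of the tangent line joining them, and $0.44=\tfrac23\cdot0.226\cdot 8\pi\bigl(\tfrac{\sqrt\pi}{2\zeta(3/2)}\bigr)^2$ comes out of inverting $\mu_c=2\nu a\rho_{\rm fc}-0.226\cdot 8\pi\,T^2a^2+o(T^2a^2)$; it is not obtained by correcting the canonical shift.

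Your item (a) — ``next-to-leading corrections to the $\mu\leftrightarrow\rho$ relation from the momentum dependence of $\widehat V$'' — is also a red herring at this order: the $\Delta\widehat V(0)$ correction isolated in $E_5$ (see Proposition~\ref{prop:impconvgamma}) enters the free energy as an additive term $\tfrac{\zeta(3/2)\zeta(5/2)}{256\pi^3}\Delta\widehat V(0)T^4$ that is independent of $\rho_0$ and (to the relevant order) of $k$, so it cancels out of the convex-hull condition and contributes nothing to $h_2$. Finally, you sketch two-sided bounds ``(i)/(ii)'' but skip the concrete obstruction that the paper deals with in its Step~1: the explicit simplified energy expansion \eqref{eq:simplfreeen3} is only derived for $\sigma\in I(k)$ (i.e.\ $\rho_0$ no larger than $\rho-\rho_{\gamma^{\rho_0,\delta=0}}$), so one must separately lower-bound the energy for $\sigma\notin I(k)$ and show the minimum over that complement is attained on the boundary $\sigma=k+\sqrt{2k}$ before the one-variable Maxwell analysis on $g(k)$ becomes legitimate. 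Without that step, the convex-hull picture on which you (rightly) rely is not available. In short: the roadmap is partially right, but the proposed route through Theorem~\ref{thm:cancrittemp} at $\rho^*(\mu)$ is a dead end; the actual argument requires computing the explicit $g(k)$ and extracting $\mu_c$ from its convex hull.
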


\subsection{Free energy expansion}
The second main result of this paper provides an expansion of the free energy \eqref{cmin} in the dilute limit.
We first define the integrals that play a central role in our analysis:
\begin{equation}
\label{eq:integrals}
\begin{aligned}
I_1(d,\sigma,\theta)&=(2\pi)^{-3}\int\Big[\sqrt{(p^2+d)^2+2(p^2+d)(1+\theta)\sigma} \\
&\hspace{3cm} -(p^2+d+(1+\theta)\sigma)+\frac{((1+\theta)\sigma)^2}{2p^2}\Big]dp\\
I_2(d,\sigma,\theta,s)&=(2\pi)^{-3}\int\ln\left(1-e^{-\sqrt{(p^2+ds^2)^2+2(p^2+ds^2)(1+\theta)\sigma s^2}}\right)dp\\
I_3(d,\sigma,\theta)&=(2\pi)^{-3}\int\left(\frac{p^2+d+(1+\theta)\sigma}{\sqrt{(p^2+d)^2+2(p^2+d)(1+\theta)\sigma}}-1\right)dp\\
I_4(d,\sigma,\theta,s)&=(2\pi)^{-3}\int\left(e^{\sqrt{(p^2+ds^2)^2+2(p^2+ds^2)(1+\theta)\sigma s^2}}-1\right)^{-1} \\
 &\qquad \qquad \qquad \times \frac{p^2+ds^2+(1+\theta)\sigma s^2}{\sqrt{(p^2+ds^2)^2+2(p^2+ds^2)(1+\theta)\sigma s^2}}dp.
\end{aligned}
\end{equation}
We will consider $d,\sigma,s\geq0$, and $-1\leq\theta\leq0$. For the following theorems, it suffices to set $\theta=0$ and $\sigma=8\pi$. 
The general form will, however, be needed to study the critical temperature.

\begin{theorem}[Canonical free energy expansion]
\label{thm:canfreeenexp}
Assume that $T$ and $\rho$ satisfy the conditions $\eqref{dillim}$ and \eqref{eq:thermalwavelengthcondition}. We then have the following expressions for the canonical free energy  \eqref{cmin}.
\begin{enumerate}
\item For $T>T_{\rm fc}\left(1+h_1(\nu)\rho^{1/3}a+o(\rho^{1/3}a)\right)$, the free energy is 
$$F^{\rm{can}}(T,\rho)=F_0(T,\rho)+\widehat{V}(0)\rho^2+O((\rho a)^{5/2}),$$
and we have $\rho_\gamma=\rho$, $\rho_0=0$ for the minimizer. Here $F_0(T,\rho)$ is the free energy of the non-interacting gas (cf. \eqref{eq:freefreeenergy}).
\item For $T<T_{\rm fc}\left(1+h_1(\nu)\rho^{1/3}a+o(\rho^{1/3}a)\right)$, there exists a universal constant $d_0>0$ such that the free energy is 
\begin{equation*}
\begin{aligned}
F^{\rm{can}}(T,\rho)=\inf_{0\leq d\leq d_0}&[\frac12(\rho a)^{5/2}I_1(d,8\pi,0)+T^{5/2}I_2(d,8\pi,0,\sqrt{\rho_0(d) a/T})\\
&-d\rho_0(d)a(\rho-\rho_0(d))\\
&+\widehat{V}(0)\rho^2-8\pi a\rho_0(d)\rho+\rho_0(d)^2(12\pi a-\widehat{V}(0))]\\
&+o\left(T(\rho a)^{3/2}+(\rho a)^{5/2}\right),
\end{aligned}
\end{equation*}
where
\[
\rho_0(d):=\rho-\frac12(\rho a)^{3/2}I_3(d,8\pi,0)-T^{3/2}I_4(d,8\pi,0,\sqrt{(\rho-\rho_{\rm fc}) a/T}).
\]
\end{enumerate}
\end{theorem}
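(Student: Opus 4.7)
The proof should proceed by matching upper and lower bounds on $F^{\rm can}(T,\rho)$ separately in the two temperature regimes, using the phase transition results of Section \ref{sec:previousresults} to control the condensate. In regime (1) the key fact is that $T$ lies above the critical temperature of Theorem \ref{thm:cancrittemp}, so any minimizer has $\rho_0=0$ and hence $\alpha\equiv 0$ by Theorem \ref{thm:BECvsSF}; the functional then collapses to a Hartree--Fock-type expression in $\gamma$ alone, namely $(2\pi)^{-3}\!\int p^2\gamma - TS(\gamma,0) + \tfrac12\widehat{V}(0)\rho^2 + \tfrac12(2\pi)^{-6}\iint\widehat{V}(p-q)\gamma(p)\gamma(q)\,dpdq$. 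I would argue that $\gamma(p)$ concentrates on momenta $|p|\lesssim\sqrt{T}$, so the exchange double integral can be replaced by $\tfrac12\widehat{V}(0)\rho^2$ with an error controlled by $\|\nabla\widehat V\|_\infty$ times the range of momenta, producing the stated $(\rho a)^{5/2}$ remainder. The leading part then becomes $F_0(T,\rho)+\widehat{V}(0)\rho^2$, after confirming that $-TS(\gamma,0)+(2\pi)^{-3}\!\int p^2\gamma$ is minimized (up to admissible errors) by the Planck distribution giving the free free-energy $F_0(T,\rho)$.

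\textbf{Regime (2): upper bound.} For temperatures below the threshold, we know $\rho_0>0$, so I would construct a family of trial states indexed by a gap parameter $d\in[0,d_0]$ and compute $\mathcal F^{\rm can}$ explicitly. The natural ansatz is of Bogoliubov form, i.e.\ $\gamma$ and $\alpha$ parametrized by $\beta(p)=\coth(\epsilon(p)/(2T))/2$ and a Bogoliubov angle diagonalizing a quadratic symbol $\sqrt{(p^2+d)^2+2(p^2+d)(1+\theta)\sigma}$, with $\theta=0$ and $\sigma=8\pi$. Substituting into \eqref{freeenergyfunctional2} produces exactly the four integrals $I_1,I_2,I_3,I_4$ of \eqref{eq:integrals} after separating the $T=0$ Bogoliubov contribution from the thermal contribution. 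The cancellation of the $\widehat V(0)$-terms against the Born series (leaving $8\pi a$ and the correction $12\pi a-\widehat V(0)$) is produced by adding and subtracting the solution $w$ of the scattering equation \eqref{def:scatteq}, replacing $\widehat V$ by its scattering-length value inside the Bogoliubov integrals, and collecting the remainder into the quadratic polynomial in $\rho_0$ visible in the statement. Optimizing over $d$ then yields the upper bound up to $o(T(\rho a)^{3/2}+(\rho a)^{5/2})$.

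\textbf{Regime (2): lower bound.} Here I would work directly with the minimizer $(\gamma,\alpha,\rho_0)$. The Euler--Lagrange equations obtained by varying $\mathcal F^{\rm can}$ in $\gamma$ and $\alpha$ force $\gamma,\alpha$ to have Bogoliubov form driven by effective symbols
\begin{equation*}
A(p)=p^2+\widehat V * (\gamma+\rho_0\delta)(p),\qquad B(p)=\widehat V*(\alpha+\rho_0\delta)(p),
\end{equation*}
with dispersion $\epsilon(p)=\sqrt{A^2-B^2}$ and $\beta(p)=\tfrac12\coth(\epsilon(p)/(2T))$. In the dilute limit one expects the convolutions to act essentially as $\widehat V(0)$ on the low-momentum regime contributing to thermal fluctuations and to produce the scattering-length renormalization on the high-momentum Bogoliubov tail; quantifying this by a second-order Born argument (as in the earlier paper \cite{NapReuSol1-15} and standard treatments \`a la \cite{ErdSchYau-08,LieSeiSolYng-05}) identifies $\sigma=8\pi$ and the correction $12\pi a-\widehat V(0)$. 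Setting $d$ equal to the self-consistent value $\widehat V*(\gamma+\alpha+\rho_0\delta)(0)-8\pi a\rho_0$, one matches every term of the upper bound up to the same error.

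\textbf{Main obstacle.} The principal difficulty will be the scattering-length renormalization inside the Bogoliubov integrals: the integrals $I_1$ and $I_3$ are UV-convergent only thanks to the subtraction of $\widehat V(0)^2/(2p^2)$ and $-1$ respectively, and one must simultaneously (i) replace the convolution symbols by their value at $p=0$ on low momenta without introducing $\rho^{1/3}a$-level errors, (ii) resum the Born series on high momenta to turn $\widehat V(0)$ into $8\pi a$ with a quadratic-in-$\rho_0$ residual, and (iii) control the splitting across the intermediate momentum scale so that the two approximations glue without losing the prefactor. Carrying this out while keeping the final errors at the advertised level $o(T(\rho a)^{3/2}+(\rho a)^{5/2})$, uniformly over $d\in[0,d_0]$, is the delicate core of the argument.
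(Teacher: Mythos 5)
Your overall match-the-bounds strategy and the Regime (1) argument (use Theorem \ref{thm:cancrittemp} to get $\rho_0=0$, then concentration of $\gamma$ on $|p|\lesssim\sqrt T$ to replace the exchange term by $\tfrac12\widehat V(0)\rho^2$) are the right ideas and are essentially what the paper does there, modulo the fact that the concentration estimate needs the a priori kinetic bound of Lemma \ref{lm:apriorikinetic} to be made quantitative. Your upper bound in Regime (2) — Bogoliubov trial states indexed by a gap parameter $d$, with the $w$-renormalization producing $8\pi a$ and $12\pi a - \widehat V(0)$ — also matches the paper's trial states $(\gamma^{\rho_0,\delta},\alpha^{\rho_0,\delta})$ from Lemma \ref{prop:simplfunctsol}. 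But the gap is in the Regime (2) lower bound, which you correctly flag as the ``delicate core'' but then propose to attack via the Euler--Lagrange equations of $\mathcal F^{\rm can}$ itself. Those equations involve $\widehat V\ast\gamma$ and $\widehat V\ast\alpha$ inside the dispersion relation, so one would need a priori control on $\gamma,\alpha$ in order to linearize the convolutions, but that control is exactly what one is trying to derive — a circularity the paper explicitly states it must avoid. The paper's way out is different in kind: introduce a \emph{local} simplified functional $\cF^{\rm sim}$ and a fixed reference pairing $\alpha_0=(\rho_0+t_0)\widehat w - (2\pi)^3\rho_0\delta_0$, prove the functional inequality $\cF^{\rm can}(\gamma,\alpha,\rho_0)\geq \cF^{\rm sim}(\gamma,\alpha,\rho_0) - (E_2+E_3+E_4)$ valid for \emph{every} admissible $(\gamma,\alpha,\rho_0)$ (Lemma \ref{prop:simplandfullfreeenergycomp}), bound the errors $E_2,E_3,E_4$ using only the a priori concentration of the true minimizer's $\gamma$ (Proposition \ref{lm:gvg1}, relying on \eqref{eq:1stenergy} and Lemma \ref{lm:apriorikinetic}), and then minimize the convex $\cF^{\rm sim}$ \emph{explicitly} since its Euler--Lagrange equations are algebraic. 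This decoupling — comparison is pointwise, solvability is in the simplified model — is what replaces your self-consistency step, and it is the missing ingredient in your proposal.

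Two further technical points your proposal does not address but which are necessary to justify the statement as written. First, the explicit minimizers $(\gamma^{\rho_0,\delta},\alpha^{\rho_0,\delta})$ only exist for the Lagrange multiplier $\delta\geq 0$, which restricts the accessible range of $\rho_0$ to $[\rho_0(d=0),\rho]$; the paper has separate monotonicity arguments (Steps 3c, 4b, based on the lower bound \eqref{lowerbound} and $\nu\geq 8\pi$) to rule out minimizers with smaller $\rho_0$. Second, the theorem asserts $\inf_{0\leq d\leq d_0}$ for a universal $d_0$; showing the infimum can be restricted to bounded $d$ requires the coercivity estimate of Lemma \ref{fullintegralbehaviour} (Steps 3a, 4a). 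Both steps are needed to turn your sketch into a proof and neither is implied by the Born-resummation heuristic you describe.
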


In fact, we will obtain a more precise energy expansion in the region around the critical temperature. 

The expression for the free energy above involves integrals and a minimization problem in the parameter $d$. If we also assume that $\rho a/T\ll1$, we can  simplify the result, as the following theorem shows.

\begin{theorem}[The canonical free energy for $\rho a/T\ll1$]
\label{thm:expint} 
Let $\Delta\rho=\rho-\rho_{\rm fc}$. For $\rho a/T\ll 1$ and $T<T_{\rm fc}\left(1+h_1(\nu)\rho^{1/3}a+o(\rho^{1/3}a)\right)$, the canonical free energy is given by
\[
\begin{aligned}
F^{\rm{can}}(T,\rho)&=T^{5/2}f_{\rm{min}}+4\pi a \rho^2+(\nu-4\pi)a\rho_{\rm fc}(2\rho-\rho_{\rm fc})\\
&+\left(\frac{\Delta\rho a}{T}\right)^{3/2}\left(-\frac1{3\sqrt2\pi}\right)\left(\nu^{3/2}+(\nu-8\pi)^{3/2}\right)T^{5/2}\\
&+o(T(\rho a)^{3/2}).
\end{aligned}
\]
\end{theorem}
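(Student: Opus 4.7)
The plan is to start from the integral representation of $F^{\rm can}(T,\rho)$ given in Theorem \ref{thm:canfreeenexp}(2) and carry out a further asymptotic analysis in the extra small parameter $s=\sqrt{\rho_0(d)\,a/T}\leq\sqrt{\rho a/T}\ll 1$. Five items need to be simplified in this new regime: the four integrals $I_1,I_2,I_3,I_4$ and the outer minimization in $d$.

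First I would treat the self-consistent equation for $\rho_0(d)$. The $I_3$ term is of order $(\rho a)^{3/2}$, which is negligible relative to $T^{3/2}\sim\rho_{\rm fc}$ in our regime. A rescaling $p\mapsto\sqrt{T}\,p$ in $I_4$ gives $T^{3/2}I_4(d,8\pi,0,0)=\rho_{\rm fc}$, and a Taylor expansion of $I_4$ in $s$ together with a fixed-point iteration yields $\rho_0(d)=\Delta\rho+o(\Delta\rho)$ and $s=\sqrt{\Delta\rho\,a/T}+o(s)$ uniformly in $d\in[0,d_0]$. This eliminates the implicit dependence of $\rho_0$ on $d$ at leading order.

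Second I would handle the main thermal term $T^{5/2}I_2(d,8\pi,0,s)$. After rescaling $p\mapsto\sqrt{T}\,p$ it equals the free energy of a Bogoliubov-type dispersion $E(p)=\sqrt{(p^2+d\rho_0 a)^2+16\pi\rho_0 a(p^2+d\rho_0 a)}$ at temperature $T$. Subtracting the free-gas contribution $T^{5/2}f_{\min}$, the remainder is evaluated by a second rescaling $p\mapsto\sqrt{\rho_0 a}\,q$ localising on the Bogoliubov length scale. Since $\rho_0 a/T\ll 1$, the linear phonon regime dominates, and the correction takes the form $(\rho_0 a/T)^{3/2}T^{5/2}$ multiplied by an explicit integral in $q$ that depends only on $d$. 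The contributions $\tfrac12(\rho a)^{5/2}I_1$ and $\tfrac12(\rho a)^{3/2}I_3\cdot(\ldots)$ are of size $o(T(\rho a)^{3/2})$ in this regime and are absorbed into the remainder.

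Third I would handle the algebraic $\rho_0$-polynomial. Substituting $\widehat{V}(0)=\nu a$ and $\rho_0\approx\Delta\rho=\rho-\rho_{\rm fc}$ from the first step,
\[
\widehat{V}(0)\rho^2-8\pi a\rho_0\rho+\rho_0^2(12\pi a-\widehat{V}(0))
\]
simplifies algebraically to $4\pi a\rho^2+(\nu-4\pi)a\,\rho_{\rm fc}(2\rho-\rho_{\rm fc})$ modulo lower-order corrections proportional to $(\rho_0-\Delta\rho)\cdot\rho a$. Finally I would perform the outer minimization in $d\in[0,d_0]$: the $d$-dependence enters through the linear coupling $-d\,\rho_0(d)\,a(\rho-\rho_0(d))$ and through the dispersion in $I_2$. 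The stationarity condition $\partial_d=0$ fixes the optimal $d_{\ast}$ explicitly as a function of $\nu$, and substituting back produces the combination $\nu^{3/2}+(\nu-8\pi)^{3/2}$ with prefactor $-1/(3\sqrt{2}\pi)$ multiplying $(\Delta\rho\,a/T)^{3/2}T^{5/2}$: heuristically, the $\nu^{3/2}$ piece comes from the unshifted Bogoliubov dispersion, and the $(\nu-8\pi)^{3/2}$ piece from the scattering-length-renormalised contribution at the optimal $d_{\ast}$.

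The main obstacle is the second step: uniformly in $d$, one must isolate the $(\rho_0 a/T)^{3/2}$ correction from $I_2$. The integrand $\ln(1-e^{-E/T})$ is neither uniformly small nor naively Taylor-expandable, because the infrared region $p\lesssim\sqrt{\rho_0 a}$ must be treated separately via the phonon scaling, while the ultraviolet region only renormalises a constant. Tracking the correct $d$-dependence through the outer minimization is precisely what singles out the specific $\nu^{3/2}+(\nu-8\pi)^{3/2}$ combination and its negative sign.
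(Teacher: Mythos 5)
Your high-level plan is sound and is essentially the paper's: start from the expression in Theorem~\ref{thm:canfreeenexp}(2), expand in the small parameter $s=\sqrt{\rho_0(d)a/T}$ (the paper does this via Lemmas~\ref{lem:rho_gamma1moderate} and \ref{lem:leadingorderenergytermmoderate}), identify $\rho_0\approx\Delta\rho$, and carry out the $d$-minimization explicitly, finding $d_\ast=2(\nu-8\pi)$. Your identification of $d_\ast$ as the source of $\nu^{3/2}+(\nu-8\pi)^{3/2}$ is also correct.

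However, there is a concrete gap in your order-counting of the thermal term that, as written, would give the wrong answer. You claim the correction to $T^{5/2}I_2(d,8\pi,0,s)$ beyond $T^{5/2}f_{\min}$ is of size $(\rho_0 a/T)^{3/2}T^{5/2}$. In fact, $I_2$ has an $O(s^2)$ Taylor correction coming from $(e^{p^2}-1)^{-1}\cdot\partial_{s^2}E\big|_{s=0}$, which after integration equals $\rho_0 a\,\rho_{\rm fc}(d+8\pi)$. This term is of order $a\rho^2$ when $\Delta\rho\sim\rho$, i.e.\ comparable to the leading interaction energy, not subleading. Your phonon-scale analysis correctly produces the $s^3$ term but drops this $s^2$ one. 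The omission propagates to your Step 3: with $\rho_0=\Delta\rho$ exactly, a direct calculation gives
\[
\widehat{V}(0)\rho^2-8\pi a\Delta\rho\,\rho+\Delta\rho^2(12\pi a-\widehat{V}(0))
=4\pi a\rho^2+(\nu-4\pi)a\rho_{\rm fc}(2\rho-\rho_{\rm fc})-8\pi a\rho_{\rm fc}\Delta\rho,
\]
which misses the target by $-8\pi a\rho_{\rm fc}\Delta\rho$. That deficit is precisely cancelled by the $8\pi a\rho_0\rho_{\rm fc}$ piece of the $s^2$ correction to $I_2$, so the algebraic polynomial does \emph{not} simplify to the stated answer on its own. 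The $d$-minimization is affected as well: the $d\rho_0 a\rho_{\rm fc}$ piece of the same $s^2$ correction must cancel against $-d\rho_0(d)a(\rho-\rho_0(d))\approx -d\rho_0 a\rho_{\rm fc}$, and only the residual---which requires the $O(s)$ term of the self-consistency equation $\rho_\gamma=\rho_{\rm fc}-\tfrac{s}{8\pi}T^{3/2}(\sqrt{d+16\pi}+\sqrt{d})+o(\cdot)$---survives at order $T(\rho_0 a)^{3/2}$ and feeds the $d$-optimization. Without it, the energy as a function of $d$ would be linearly decreasing and the minimizer would run off to $d\to\infty$. This cancellation structure is the content of the paper's step producing the intermediate expansion before the rewriting in $\Delta\rho$, and you would also still need to rule out candidate minimizers with $\rho_0<\rho_0(d{=}0)$ (the paper's Step 3c).
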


In the case $\rho a/T\gg1$, we can also simplify the expression in the second point of Theorem \ref{thm:canfreeenexp}: the contribution from the integrals $I_2$ and $I_4$ can be neglected in the minimization problem.

\begin{corollary}[The canonical energy for $\rho a/T\gg 1$ ]
\label{corollary112}
For $\rho a/T\gg 1$ and $T<T_{\rm fc}\left(1+h_1(\nu)\rho^{1/3}a+o(\rho^{1/3}a)\right)$, the canonical free energy can be described in terms of a function $g:(8\pi,\infty)\to\mathbb{R}$ as
\[
F^{\rm{can}}(T,\rho)=4\pi a \rho^2+g(\nu)(\rho a)^{5/2}+o\left((\rho a\right)^{5/2}),
\]
with $g(\nu)\to\frac{512}{15}\sqrt{\pi}$ as $\nu\to8\pi$. The latter result is known as the Lee--Huang--Yang formula.
\end{corollary}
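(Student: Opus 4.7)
The plan is to apply Theorem \ref{thm:canfreeenexp}(2) and show that, in the regime $\rho a/T\gg 1$, the temperature-dependent integrals $I_2$ and $I_4$ are small enough that the bracket reduces to a purely algebraic function of $d$, whose minimum we can analyse and take the limit $\nu\to 8\pi$. \emph{Step 1 (negligibility of the thermal pieces).} In $I_2(d,8\pi,0,s)$ and $I_4(d,8\pi,0,s)$ one has $s=\sqrt{\rho_0(d)a/T}\gg 1$, since $\rho_0(d)\sim\rho$ on the relevant range and $\rho a/T\gg 1$. Rescaling $p=sq$ turns the exponent into $s^2\sqrt{(q^2+d)^2+16\pi(q^2+d)}$, which is $\geq cs^2$ for $d$ bounded away from $0$ (giving exponential suppression) and reduces to $4\sqrt\pi\, s^2q$ near $q=0$ when $d=0$; a further rescaling $u=s^2 q$ in this phonon region yields $I_2(0,8\pi,0,s)=O(s^{-3})$, and therefore $T^{5/2}I_2=O\bigl(T^4(\rho a)^{-3/2}\bigr)=o((\rho a)^{5/2})$, with the analogous estimate $T^{3/2}I_4=o((\rho a)^{3/2})$. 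Both hold uniformly in $d\in[0,d_0]$. Hence we may replace $\rho_0(d)$ by $\rho_0^{(0)}(d):=\rho-\tfrac12(\rho a)^{3/2}I_3(d,8\pi,0)$.

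\emph{Step 2 (algebraic expansion).} Substituting $\rho_0^{(0)}(d)$ into the bracket of Theorem \ref{thm:canfreeenexp}(2) and using $\widehat{V}(0)=\nu a$ together with $\rho-\rho_0^{(0)}=\tfrac12(\rho a)^{3/2}I_3$, the quadratic polynomial in $\rho_0$ collapses at leading order to $4\pi a\rho^2$ (the three coefficients $\widehat V(0), -8\pi a, 12\pi a-\widehat V(0)$ conspire), contributes $(\nu-8\pi)(\rho a)^{5/2}I_3$ at next order, and leaves a residual $O((\rho a)^3)=o((\rho a)^{5/2})$. The coupling term produces $-d\rho_0^{(0)}a(\rho-\rho_0^{(0)})=-\tfrac{d}{2}(\rho a)^{5/2}I_3+O((\rho a)^3)$. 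Combining this with $\tfrac12(\rho a)^{5/2}I_1$ and the error $o(T(\rho a)^{3/2}+(\rho a)^{5/2})=o((\rho a)^{5/2})$ from Theorem \ref{thm:canfreeenexp}(2) (using $T\ll\rho a$) yields
\[
F^{\rm can}(T,\rho)=4\pi a\rho^2+(\rho a)^{5/2}\inf_{0\leq d\leq d_0}\Bigl[\tfrac12 I_1(d,8\pi,0)+\bigl(\nu-8\pi-\tfrac{d}{2}\bigr)I_3(d,8\pi,0)\Bigr]+o((\rho a)^{5/2}),
\]
and we take the displayed infimum as the definition of $g(\nu)$.

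\emph{Step 3 (Lee--Huang--Yang limit).} Differentiating under the integral sign one checks the key identity $\partial_d I_1(d,8\pi,0)=I_3(d,8\pi,0)$. The stationarity condition for $g(\nu)$ therefore reduces to $(\nu-8\pi-\tfrac{d}{2})\,\partial_d I_3(d,8\pi,0)=0$, whose natural branch is $d^\ast(\nu)=2(\nu-8\pi)$, tending to $0$ as $\nu\to 8\pi$. Hence $g(\nu)\to\tfrac12 I_1(0,8\pi,0)$, and the rescaling $p=\sqrt{16\pi}\,q$ together with the antiderivative $\tfrac{8}{5}(q^2+1)^{5/2}-\tfrac{8}{3}(q^2+1)^{3/2}$ evaluates this constant to $\tfrac{512}{15}\sqrt\pi$, recovering the Lee--Huang--Yang formula.

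\emph{Main obstacle.} The most delicate point is the uniform-in-$d$ bound in Step 1 at $d=0$: there the Bogoliubov branch becomes gapless and the factor $e^{-s^2 E}$ weakens to $e^{-cs^2 q}$ near the origin, so a crude bound misses the $o((\rho a)^{5/2})$ error budget. One must extract the precise $s^{-3}$ decay of $I_2(0,8\pi,0,s)$ through the second rescaling $u=s^2 q$ and only then use $T\ll\rho a$ to close the estimate. A secondary subtlety is that Step 2 requires the error terms to be controlled uniformly in $d\in[0,d_0]$ so that the infimum and the expansion commute.
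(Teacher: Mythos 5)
Your proposal is correct and reaches the Lee--Huang--Yang constant by a route that is genuinely more explicit than the paper's. The paper disposes of the minimization in $d$ purely by monotonicity: Remark~\ref{remarkpropint} records that $I_1(d,8\pi,0)-dI_3(d,8\pi,0)$ increases in $d$ (and that the $I_2-ds^2 I_4$ block increases to $0$), so once the $(\nu-8\pi)I_3$ correction is sent to zero in the limit $\nu\to 8\pi$, every $d$-dependent piece is monotone increasing and the infimum migrates to $d=0$, where only $\tfrac12 I_1(0,8\pi,0)$ survives. You instead exploit the structural identity $\partial_d I_1(d,8\pi,0)=I_3(d,8\pi,0)$ (indeed the integrand of $I_3$ is exactly the $d$-derivative of the integrand of $I_1$), which collapses the Euler--Lagrange equation to $(\nu-8\pi-\tfrac d2)\,\partial_d I_3=0$ and produces a closed-form critical point $d^\ast(\nu)=2(\nu-8\pi)$ and the neat expression $g(\nu)=\tfrac12 I_1(2(\nu-8\pi))$. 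This is stronger than what the paper states (an explicit $g$, not just its limit), and in fact this very choice $d=2(\nu-8\pi)$ is the one the paper itself uses in step 3b of the proof of Theorem~\ref{thm:expint}.

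Two small points worth tightening. First, you should confirm that $d^\ast$ is a global minimum on $[0,d_0]$, not merely a critical point: this follows because the integrand of $I_3$ is decreasing in $p^2+d$, so $\partial_d I_3<0$, making the derivative $(\nu-8\pi-\tfrac d2)\partial_d I_3$ negative for $d<d^\ast$ and positive for $d>d^\ast$; one should also note $d^\ast\leq d_0$ for $\nu$ close to $8\pi$, which is all that is needed. Second, the ``main obstacle'' you flag at the end is not actually an obstacle within the error budget $o((\rho a)^{5/2})$: Remark~\ref{remarkpropint} already records that $I_2$ is bounded, so $T^{5/2}I_2=O(T^{5/2})=o((\rho a)^{5/2})$ directly from $T\ll\rho a$, uniformly in $d$, without extracting the $s^{-3}$ decay. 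Your sharper estimate $I_2(0,8\pi,0,s)=O(s^{-3})$ is correct and would be needed for a finer error, but it is not required to close this corollary.
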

Before we proceed to the proof of these theorems, let us sketch the main ideas used in the paper. 

\subsection{Set-up of the paper}
Since the Euler--Lagrange equations of the free energy functional involve the convolutions $\widehat{V}\ast\gamma$ and $\widehat{V}\ast\alpha$, it is very hard to analyse them quantitatively. Even with a Fourier transform, they cannot be solved. The main idea is to replace the non-local terms in the functional by local ones, such that we end up with a simplified functional that can be minimized explicitly, that is, 
\[
\inf_{\substack{\text{$(\gamma,\alpha,\rho_0)$}\\\text{$\rho_0+\rho_\gamma=\rho$}}}
\mathcal{F}^{\rm can}\approx\inf_{\substack{\text{$(\gamma,\alpha,\rho_0)$}\\\text{$\rho_0+\rho_\gamma=\rho$}}}
\mathcal{F}^{\rm sim}=\inf_{0\leq\rho_0\leq\rho}\Big[\inf_{\substack{\text{$(\gamma,\alpha)$}\\\text{$\rho_\gamma=\rho-\rho_0$}}}\mathcal{F}^{\rm sim}\Big]
\]
where the final minimizations can be done explicitly.
 
The approximation involves several steps. First, we replace the convolution term involving $\gamma$ with $\widehat{V}(0)\rho_\gamma^2$. We expect that the particles interact weakly in the dilute limit and it seems reasonable to assume that the system will behave like a free Bose gas to leading order. We therefore expect that the minimizing $\gamma$ is concentrated on a ball of radius $\sqrt{T}$. By our assumptions \eqref{assumptionsV}, $\widehat{V}(p)$ is approximately $\widehat{V}(0)$ on a ball of radius $a^{-1}\gg\sqrt{T}$ (in the region around the critical temperature), justifying the replacement. 

Second, by introducing a trial function $\alpha_0$, we rewrite the convolution terms involving $\alpha$. This trial function will be expressed in terms of $\widehat{Vw}$, where $w$ is the solution to the scattering equation. Finally, we will also substitute $\widehat{V}$ by $\widehat{Vw}$ in the terms that are linear in $\gamma$ at the cost of a small error. All this will be done in Subsection \ref{dersimpfunc}, with Lemma \ref{prop:simplandfullfreeenergycomp} specifying the error terms exactly.

We then minimize the simplified functional. We split the minimization in two steps: first one over $\gamma$ and $\alpha$ with the constraint that $\rho_0+\rho_\gamma=\rho$, followed by a minimization over $0\leq\rho_0\leq\rho$. The first step will be carried out in Subsection \ref{constrminga}, and it will lead to a useful class of minimizers $(\gamma^{\rho_0,\delta},\alpha^{\rho_0,\delta})$. To prepare for the final minimization over $\rho_0$, we will establish further properties of these functions in Subsection \ref{sec:prelapprox}.

In order to prove that this provides a good approximation, we will need to know that the error terms are small for both the minimizer of the full functional and the minimizer of the simplified functional. For the full functional, this is shown in Subsections \ref{apriorifree} and \ref{aprioridilute} along with several other useful a priori estimates. 
 
In Subsection \ref{sec:crittemp}, we will analyse the energy in the region $|\rho-\rho_{\rm{fc}}|\leq C\rho(\rho^{1/3}a)$, since the a priori result of Subsection \ref{furtherapriori} shows that this is where the phase transition occurs. This leads to the calculation of the critical temperature and the proof of Theorems \ref{thm:cancrittemp} and \ref{thm:grandcancrittemp}. 
 
Subsection \ref{subsectionenexp} contains the proof of Theorems \ref{thm:canfreeenexp} and \ref{thm:expint}.

\section{Proof of the main results}
\subsection{Derivation of the simplified functional}
\label{dersimpfunc}
The following simplified functional will serve as an approximation to the canonical free energy functional \eqref{freeenergyfunctional2}:
\begin{eqnarray}
\begin{aligned}
\label{eq:simplifiedfunctional}
  \mathcal{F}^{\rm{sim}}(\gamma,\alpha,\rho_0)&=(2\pi)^{-3}\int \left(p^2+(\rho_0+t_0) \widehat{Vw}(p)\right)\gamma(p)dp
  \\ \qquad &+(2\pi)^{-3}\int(\rho_0+t_0)\widehat{Vw}(p)\alpha(p)dp-TS(\gamma,\alpha)
  \\ \qquad &+\frac14 (2\pi)^{-3}(\rho_0+t_0)^2 \int \frac{\widehat{Vw}(p)^2}{p^2}dp
  \\ \qquad &+\widehat{V}(0)\rho^2+(12\pi a-\widehat{V}(0))\rho_0^2-8\pi a\rho\rho_0
  \\ \qquad &-4\pi a t_0^2-8\pi a t_0(\rho-\rho_0).
\end{aligned}
\end{eqnarray}
Here, $w$ satisfies the scattering equation \eqref{def:scatteq}, and $t_0$ is a parameter that could in principle be chosen to depend on $\rho$ and $\rho_0$. This will turn out to be necessary for the proof of Theorems \ref{thm:cancrittemp} and \ref{thm:grandcancrittemp} in Subsection \ref{sec:crittemp}, and we will state a specific choice for $t_0$ at the start of this subsection. For the proof of Theorems \ref{thm:canfreeenexp} and \ref{thm:expint} in Subsection \ref{subsectionenexp} it will, however, suffice to set $t_0=0$. Before we make a choice for $t_0$, we will work with the general assumption
\begin{equation}
\label{assumpt0}
-\rho_0\leq t_0\leq 0.
\end{equation}

Note that $\cF^{\rm{sim}}$ consist of terms that are both linear and local in $\gamma$ and $\alpha$ (aside from the entropy), and it will therefore be much easier to handle than the full $\mathcal{F}^{\rm{can}}$. 

As shown in Lemma \ref{prop:simplandfullfreeenergycomp}, the difference between $\mathcal{F}^{\rm{sim}}$ and $\mathcal{F}^{\rm can}$ can be expressed in terms of
\bq
\begin{aligned}
E_1(\gamma,\alpha,\rho_0)&:=(2\pi)^{-6}\frac12\iint(\alpha-\alpha_0)(p)\widehat{V}(p-q)(\alpha-\alpha_0)(q)dpdq\\
E_2(\gamma,\alpha,\rho_0)&:=\left|(2\pi)^{-3}\rho_0\int\gamma(p)\widehat{V}(p)dp-\widehat{V}(0)\rho_0\rho_\gamma\right|\\
E_3(\gamma,\alpha,\rho_0)&:=\left|(2\pi)^{-3}\rho_0\int\gamma(p)\widehat{Vw}(p)dp-\widehat{Vw}(0)\rho_0\rho_\gamma\right|\\
E_4(\gamma,\alpha,\rho_0)&:=\left|(2\pi)^{-6}\frac12\iint\gamma(p)\widehat{V}(p-q)\gamma(q)dpdq- \frac12\widehat{V}(0)\rho_\gamma^2\right|.
\end{aligned} \label{def:errorterms}
\eq
Here, the function $\alpha_0$ is chosen to be 
\bq
\alpha_0:=(\rho_0+t_0)\widehat{w}-(2\pi)^3 \rho_0\delta_0=(2\pi)^3 t_0\delta_0-\frac{\rho_0+t_0}{2}\frac{\widehat{Vw}(p)}{p^2}, \label{def:alpha_0}
\eq 
where we have used the Fourier transform of the scattering equation \eqref{def:scatteq}, taking into account the boundary condition:
\bq
\widehat{w}=(2\pi)^3 \delta_0-\frac12\frac{\widehat{Vw}(p)}{p^2}. \label{eq:distrscattsol}
\eq
When a more precise error is required, we will consider
\begin{equation}
\begin{aligned}
E_5(\gamma,\alpha,\rho_0)&:=\Big|(2\pi)^{-6}\frac12\iint\gamma(p)\widehat{V}(p-q)\gamma(q)dpdq \\
&\hspace{2cm}-\frac12\widehat{V}(0)\rho_\gamma^2-\frac{\zeta(3/2)\zeta(5/2)}{256\pi^3}\Delta\widehat{V}(0)T^4\Big|.
\end{aligned} \label{def:E_5}
\end{equation}
Note that the additional term in $E_5$ compared to $E_4$ is independent of $(\gamma,\alpha,\rho_0)$ and including it in the simplified functional will therefore not affect the minimizer (see Corollary \ref{lem:differencecriticalregion}).

The function $\widehat{Vw}$ appears in our definition of $\alpha_0$. It will turn out to be convenient to gather some of its properties before we prove the main result of this section. First of all, $w\geq0$, which implies that $Vw\geq0$, and so
\[
|\widehat{Vw}(p)|\leq\widehat{Vw}(0)=8\pi a.
\]
From \eqref{eq:distrscattsol}, we obtain
\begin{equation}
\label{someeq924}
  \int Vw^2=8\pi a -\frac12(2\pi)^{-3}\int \widehat{Vw}(p)^2|p|^{-2}dp,
\end{equation}
and hence the integral on the left-hand side is bounded by $Ca$. This implies
\[
\int|\widehat{Vw}|^2=(2\pi)^3\int|Vw|^2\leq C\|V\|_{\infty}\int Vw^2\leq \frac{C}{a},
\]
where we have used our assumptions \eqref{assumptionsV}. 
Using the above conclusions, we now estimate 
\[
\begin{aligned}
\left\|\frac{\widehat{Vw}}{p^2}\right\|_1&\leq\int_{|p|\leq a^{-1}}\frac{|\widehat{Vw}|}{p^2}dp+\int_{|p|> a^{-1}}\frac{|\widehat{Vw}|}{p^2}dp\\
&\leq C+\left(\int_{|p|> a^{-1}}|\widehat{Vw}|^2dp\right)^{1/2}\left(\int_{|p|> a^{-1}}\frac{1}{p^4}dp\right)^{1/2}\leq C,
\end{aligned}
\]
where it is important that the estimate is independent of $a$.
Applying \eqref{eq:distrscattsol} again, we have
\[
\widehat{Vw}=\widehat{V}-(2\pi)^{-3}\frac{\widehat{Vw}}{2p^2}\ast\widehat{V}.
\] 
By our assumptions \eqref{assumptionsV} we have for $0\leq n\leq3$:
\begin{equation}
\label{sameest}
\|\partial^n\widehat{Vw}\|_\infty\leq\|\partial^n\widehat{V}\|_\infty\left(1+C\left\|\frac{\widehat{Vw}}{2p^2}\right\|_1\right)\leq Ca^{n+1}.
\end{equation}
We can therefore estimate derivatives of $\widehat{Vw}$ in the same way as those of $\widehat{V}$, and we will use this in the subsections below.\\

The main result of this subsection is the following lemma, which compares the simplified and canonical free energy functionals.
Its message is that, given that the error terms are small for the minimizers of both the simplified and the full functional, it suffices to analyse the simplified functional.

\begin{lemma}
\label{prop:simplandfullfreeenergycomp}
For any triple $(\gamma,\alpha,\rho_0)$ we have
\begin{equation*}
\begin{aligned}
-\Big(E_2+E_3+ E_4\Big)(\gamma,\alpha,\rho_0)\leq\mathcal{F}^{\rm{can}}&(\gamma,\alpha,\rho_{0})-\mathcal{F}^{\rm{sim}}(\gamma,\alpha,\rho_0)\\
&\leq \Big( E_1+E_2+E_3+E_4\Big)(\gamma,\alpha,\rho_0).
\end{aligned}
\end{equation*}
\end{lemma}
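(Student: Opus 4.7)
The plan is to establish the inequalities by showing that $\mathcal{F}^{\rm can}-\mathcal{F}^{\rm sim}$ equals $E_1$ plus signed contributions whose absolute values are bounded by $E_2,E_3,E_4$. Since $E_1\geq 0$ and the $E_i$-contributions may in principle have either sign, the two-sided bound will follow by taking worst-case signs. The kinetic and entropy terms in the two functionals agree, so only the interaction and constant terms need to be tracked.

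For the $\alpha$-sector I would write $\alpha=\alpha_0+(\alpha-\alpha_0)$ and expand the convolution quadratic form $\tfrac12(2\pi)^{-6}\iint\widehat{V}(p-q)\alpha(p)\alpha(q)\,dp\,dq$. The diagonal piece in $\alpha-\alpha_0$ is exactly $E_1$, while the cross term cancels the $\alpha$-linear difference coming from $\mathcal{F}^{\rm can}-\mathcal{F}^{\rm sim}$ via the key identity $(2\pi)^{-3}(\widehat{V}\ast\alpha_0)(p)=K(p):=(\rho_0+t_0)\widehat{Vw}(p)-\rho_0\widehat{V}(p)$, which follows from \eqref{def:alpha_0} and the convolution theorem $\widehat{V}\ast\widehat{w}=(2\pi)^3\widehat{Vw}$. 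The residual of this step is the self-interaction $-\tfrac12(2\pi)^{-3}\int K\alpha_0\,dp$, which can be evaluated explicitly using the second representation of $\alpha_0$ in \eqref{def:alpha_0}, the identity $(2\pi)^{-3}\int\widehat{Vw}^2/p^2\,dp=2(8\pi a-\int Vw^2)$ from \eqref{someeq924}, and the companion identity $(2\pi)^{-3}\int\widehat{V}\,\widehat{Vw}/p^2\,dp=2(\widehat{V}(0)-8\pi a)$, obtained from Parseval with $\widehat u=\widehat{Vw}/(2p^2)$ for $u=1-w$ together with $\int Vu=\widehat{V}(0)-8\pi a$.

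For the $\gamma$-sector I would split the $\gamma$-linear difference as $-(2\pi)^{-3}\int K\gamma\,dp$ and insert $\pm\widehat{V}(0)\rho_0\rho_\gamma$ and $\pm 8\pi a\rho_0\rho_\gamma=\pm\widehat{Vw}(0)\rho_0\rho_\gamma$ to produce errors $\pm E_2$ and $\pm E_3$; the assumption \eqref{assumpt0} gives $(\rho_0+t_0)/\rho_0\in[0,1]$, which absorbs the rescaling of the $\widehat{Vw}$ error. The $\gamma$-convolution term of $\mathcal{F}^{\rm can}$ contributes $\tfrac12\widehat{V}(0)\rho_\gamma^2\pm E_4$ by definition of $E_4$. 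Collecting all non-$E_i$ contributions and sorting by coefficient type, the $\widehat{V}(0)$-bucket reduces to $\widehat{V}(0)[\tfrac12(\rho_0+\rho_\gamma)^2-\tfrac12\rho^2]$; the $a$-bucket factors through $(\rho_0+t_0)[\rho-\rho_0-\rho_\gamma]$; and the $\int Vw^2$-bucket cancels between the self-interaction and the $\tfrac14(2\pi)^{-3}(\rho_0+t_0)^2\int\widehat{Vw}^2/p^2$ term of $\mathcal{F}^{\rm sim}$. All three vanish, the first two by the canonical constraint $\rho=\rho_0+\rho_\gamma$. This gives $\mathcal{F}^{\rm can}-\mathcal{F}^{\rm sim}=E_1\pm E_2\pm E_3\pm E_4$ exactly, whence both bounds of the lemma.

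The main obstacle is purely bookkeeping: roughly a dozen quadratic monomials in $\rho,\rho_0,\rho_\gamma,t_0,a,\widehat{V}(0),\int Vw^2$ are scattered between $\mathcal{F}^{\rm can}$, $\mathcal{F}^{\rm sim}$, and the explicit evaluation of $\tfrac12(2\pi)^{-3}\int K\alpha_0\,dp$, and the vanishing of the three buckets relies on the coefficients $12\pi a-\widehat{V}(0)$, $-8\pi a\rho\rho_0$, $-4\pi a t_0^2$, $-8\pi a t_0(\rho-\rho_0)$ and $\tfrac14(2\pi)^{-3}(\rho_0+t_0)^2\int\widehat{Vw}^2/p^2$ in $\mathcal{F}^{\rm sim}$ having been chosen precisely to produce these cancellations. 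The degenerate case $\rho_0=0$ (which forces $t_0=0$ by \eqref{assumpt0}) is trivial: $K\equiv 0$, $\alpha_0\equiv 0$, and only the $E_4$-error in the $\gamma$-convolution survives.
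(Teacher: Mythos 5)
Your proposal is correct and follows essentially the same route as the paper's proof: isolate the quadratic form in $\alpha-\alpha_0$ as $E_1$, use $(2\pi)^{-3}\widehat{V}\ast\alpha_0=(\rho_0+t_0)\widehat{Vw}-\rho_0\widehat{V}$ to cancel the $\alpha$-linear difference, evaluate the $\alpha_0$ self-interaction via the scattering-equation identities, and then absorb the remaining discrepancies into $E_2,E_3,E_4$ with the factor $(\rho_0+t_0)/\rho_0\in[0,1]$ from \eqref{assumpt0} giving the $E_3$ bound. One small imprecision: the final identity is not literally $E_1\pm E_2\pm E_3\pm E_4$ (the $E_3$-type term carries the factor $(\rho_0+t_0)/\rho_0$ rather than a sign), and in the $\rho_0=0$ case both $E_1$ and $E_4$ survive (not $E_4$ alone, since $\alpha_0\equiv 0$ makes $E_1$ the full $\alpha$-quadratic form); neither affects the two-sided bound.
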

\begin{proof}
We have
\bq
\ba \label{eq:differnceFsimFcan1}
\mathcal{F}^{\rm{can}}(\gamma,\alpha,\rho_0)&-\mathcal{F}^{\rm{sim}}(\gamma,\alpha,\rho_0)=(2\pi)^{-3}\rho_0\int\left(\widehat{V}(p)-\widehat{Vw}(p)\right)(\gamma(p)+\alpha(p))dp \\&-(2\pi)^{-3}t_0\int\widehat{Vw}(p)(\gamma(p)+\alpha(p))dp +\frac12\widehat{V}(0)\rho_\gamma^2-\frac12\widehat{V}(0)\rho^2\\
&+\frac12(2\pi)^{-6}\iint\gamma(p)\widehat{V}(p-q)\gamma(q)dpdq-\frac12\widehat{V}(0)\rho_\gamma^2 
\\&-\frac14 (2\pi)^{-3}(\rho_0+t_0)^2 \int \frac{\widehat{Vw}(p)^2}{p^2}dp -(12\pi a-\widehat{V}(0))\rho_0^2
  \\ &+8\pi a\rho\rho_0 +4\pi a t_0^2+8\pi a t_0(\rho-\rho_0)+E_1(\gamma,\alpha,\rho_0)
  \\&+(2\pi)^{-6} \int\alpha(p)(\widehat{V}*\alpha_0)(p)dp-\frac12(2\pi)^{-6}\int\alpha_0(p)(\widehat{V}*\alpha_0)(p)dp.
\ea
\eq
We start by dealing with the last two terms in \eqref{eq:differnceFsimFcan1}. First we have 
\[
(2\pi)^{-3}\widehat{V}*\alpha_0(p)=(\rho_0+t_0)\widehat{Vw}(p)-\rho_0\widehat{V}(p),
\]
which follows immediately from the definition \eqref{def:alpha_0}. This means that the first term in the last line of \eqref{eq:differnceFsimFcan1} cancels the $\alpha$-terms in the first two lines of  \eqref{eq:differnceFsimFcan1}. We thus have
\bq
\ba \label{eq:differnceFsimFcan2}
\mathcal{F}^{\rm{can}}(\gamma,\alpha,\rho_0)&-\mathcal{F}^{\rm{sim}}(\gamma,\alpha,\rho_0)=(2\pi)^{-3}\rho_0\int\left(\widehat{V}(p)-\widehat{Vw}(p)\right)\gamma(p)dp \\&-(2\pi)^{-3}t_0\int\widehat{Vw}(p)\gamma(p)dp +\frac12\widehat{V}(0)\rho_\gamma^2-\frac12\widehat{V}(0)\rho^2\\
&+\frac12(2\pi)^{-6}\iint\gamma(p)\widehat{V}(p-q)\gamma(q)dpdq-\frac12\widehat{V}(0)\rho_\gamma^2 
\\&-\frac14 (2\pi)^{-3}(\rho_0+t_0)^2 \int \frac{\widehat{Vw}(p)^2}{p^2}dp -(12\pi a-\widehat{V}(0))\rho_0^2
\\&+8\pi a\rho\rho_0 +4\pi a t_0^2+8\pi a t_0(\rho-\rho_0)+ E_1(\gamma,\alpha,\rho_0)
  \\&-\frac12(2\pi)^{-6}\int\alpha_0(p)(\widehat{V}*\alpha_0)(p)dp.
\ea
\eq
We now deal with the last term in the above equation.  Using \eqref{def:alpha_0}, we have
\bq
\begin{aligned}
\int\alpha_0(\widehat{V}*\alpha_0)&=\iint\frac{(\rho_0+t_0)\widehat{Vw}(p)}{2p^2}\widehat{V}(p-q)\frac{(\rho_0+t_0)\widehat{Vw}(q)}{2q^2}dpdq \\  \label{eq:alpha0Vaplhpa0}
&\quad +(2\pi)^6 t_0^2\widehat{V}(0)-2t_0(2\pi)^3\int\frac{(\rho_0+t_0)\widehat{Vw}(p)\widehat{V}(p)}{2p^2}dp.
\end{aligned}
\eq
Note that
\[
\frac12\int Vw^2=\int Vw-\frac12\int V+\frac12\int V(1-w)^2=8\pi a-\frac12\int V+\frac12\int V(1-w)^2,
\]
so that  
\bq
\begin{aligned}
&(2\pi)^{-6}\frac12\iint \widehat{(1-w)}(p)\widehat{V}(p-q)\widehat{(1-w)}(q)dpdq\\
&\hspace{3cm}=\frac12\int V-4\pi a-\frac14(2\pi)^{-3}\int \widehat{Vw}(p)^2|p|^{-2}dp. \label{eq:V_1_w_1_w}
\end{aligned}
\eq
These identities together with \eqref{someeq924} allow us to compute the terms in \eqref{eq:alpha0Vaplhpa0}. By \eqref{eq:distrscattsol}, we have
$$\frac{\widehat{Vw}(p)}{2p^2}=\widehat{(1-w)}(p),$$
so that it follows from \eqref{eq:V_1_w_1_w} that
\bq
\begin{aligned}
  &\frac12(2\pi)^{-6}\iint\frac{(\rho_0+t_0)\widehat{Vw}(p)}{2p^2}\widehat{V}(p-q)\frac{(\rho_0+t_0)\widehat{Vw}(q)}{2q^2}dpdq= \\
 &\,=\frac12(\rho_0+t_0)^2\widehat{V}(0)-4\pi a(\rho_0+t_0)^2 -\frac14(2\pi)^{-3}(\rho_0+t_0)^2\int \widehat{Vw}(p)^2|p|^{-2}dp.\nn
\end{aligned}
\eq
Furthermore,
\bq
\int\frac{\widehat{Vw}(p)\widehat{V}(p)}{2p^2}=\int(\widehat{1-w})\widehat{V}=(2\pi)^3\int V(1-w)=(2\pi)^3(\widehat{V}(0)-8\pi a). \nn
\eq
Collecting all terms we obtain
\bq
\begin{aligned}
\frac12(2\pi)^{-6}&\int\alpha_0(\widehat{V}*\alpha_0)=\frac12(\rho_0+t_0)^2\widehat{V}(0)-\frac{(\rho_0+t_0)^2}{4(2\pi)^{3}}\int \frac{\widehat{Vw}(p)^2}{p^2}dp \\&-4\pi a(\rho_0+t_0)^2 - t_0(t_0+\rho_0)(\widehat{V}(0)-8\pi a)+\frac12 t_0^2\widehat{V}(0) \\
&= \frac12 (\widehat{V}(0)-8\pi a)\rho_0^2+4\pi a t_0^2-\frac{(\rho_0+t_0)^2}{4(2\pi)^{3}}\int \frac{\widehat{Vw}(p)^2}{p^2}dp,
\end{aligned}\label{eq:alpha0alpha0result}
\eq
and inserting \eqref{eq:alpha0alpha0result} into \eqref{eq:differnceFsimFcan2} gives
\bq
\label{someequ}
\ba 
\mathcal{F}^{\rm{can}}(\gamma,\alpha,\rho_0)&-\mathcal{F}^{\rm{sim}}(\gamma,\alpha,\rho_0)=(2\pi)^{-3}\rho_0\int\widehat{V}(p)\gamma(p)dp -\widehat{V}(0)\rho_0\rho_\gamma
\\&-(2\pi)^{-3}(\rho_0+t_0)\int\widehat{Vw}(p)\gamma(p)dp+(\rho_0+t_0)8\pi a \rho_\gamma
\\&+\frac12(2\pi)^{-6}\iint\gamma(p)\widehat{V}(p-q)\gamma(q)dpdq-\frac12\widehat{V}(0)\rho_\gamma^2 
\\&+E_1(\gamma,\alpha,\rho_0).
\ea
\eq
Here, we added and subtracted $\widehat{V}(0)\rho_0\rho_\gamma$ and $8\pi a \rho_\gamma(\rho_0+t_0)$ and used that $\widehat{Vw}(0)=8\pi a$. Using the definitions \eqref{def:errorterms}, our assumption \eqref{assumpt0}, and the fact that $E_1\geq 0$ we arrive at the desired result.
\end{proof}

\begin{corollary}
\label{lem:differencecriticalregion}
For any triple $(\gamma,\alpha,\rho_0)$ we have
\begin{equation*}
\begin{aligned}
-\Big(E_2+E_3+E_5\Big)&(\gamma,\alpha,\rho_0)\\
&\leq\left(\mathcal{F}^{\rm{can}}-\mathcal{F}^{\rm{sim}}\right)(\gamma,\alpha,\rho_0)-\frac{\zeta(3/2)\zeta(5/2)}{256\pi^3}\Delta\widehat{V}(0)T^4\\
&\hspace{3cm}\leq \Big(E_1+E_2+E_3+E_5\Big)(\gamma,\alpha,\rho_0).
\end{aligned}
\end{equation*}
\end{corollary}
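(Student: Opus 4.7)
The plan is to piggyback directly on the proof of Lemma \ref{prop:simplandfullfreeenergycomp} rather than redo the algebra from scratch. Inspecting the definitions \eqref{def:errorterms} and \eqref{def:E_5}, the quantities $E_4$ and $E_5$ measure exactly the same object, namely the deviation of $\frac12(2\pi)^{-6}\iint\gamma\widehat{V}\gamma$ from a reference value; the only difference is that the reference in $E_5$ has been shifted by the constant $\frac{\zeta(3/2)\zeta(5/2)}{256\pi^3}\Delta\widehat{V}(0)T^4$. Since this shift does not depend on $(\gamma,\alpha,\rho_0)$, the corollary should follow from nothing more than the sandwich
\[
\tfrac12(2\pi)^{-6}\iint\gamma(p)\widehat{V}(p-q)\gamma(q)\,dpdq-\tfrac12\widehat{V}(0)\rho_\gamma^2\in\left[\tfrac{\zeta(3/2)\zeta(5/2)}{256\pi^3}\Delta\widehat{V}(0)T^4-E_5,\ \tfrac{\zeta(3/2)\zeta(5/2)}{256\pi^3}\Delta\widehat{V}(0)T^4+E_5\right],
\]
which is just a rearrangement of the definition of $E_5$.

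Concretely, I would revisit equation \eqref{someequ} in the proof of Lemma \ref{prop:simplandfullfreeenergycomp}, which provides an exact identity for $\mathcal{F}^{\rm can}-\mathcal{F}^{\rm sim}$ as a sum of four pieces: a $\gamma$--$\widehat{V}$ term (controlled by $E_2$), a $\gamma$--$\widehat{Vw}$ term (controlled by $E_3$), the convolution discrepancy $\frac12(2\pi)^{-6}\iint\gamma\widehat{V}\gamma-\frac12\widehat{V}(0)\rho_\gamma^2$, and the nonnegative term $E_1$. I would then subtract $\frac{\zeta(3/2)\zeta(5/2)}{256\pi^3}\Delta\widehat{V}(0)T^4$ from both sides of this identity and apply the displayed bracketing above to the convolution discrepancy. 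Combining with the bounds already derived for the $E_2,E_3$ terms and the nonnegativity $E_1\geq 0$ gives the two-sided estimate claimed in the corollary.

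There is no genuine obstacle here beyond careful bookkeeping: the only thing to verify is that the additive constant commutes cleanly with the inequalities derived in Lemma \ref{prop:simplandfullfreeenergycomp}, which it does precisely because it is independent of $(\gamma,\alpha,\rho_0)$. In effect, Corollary \ref{lem:differencecriticalregion} is a purely cosmetic strengthening of Lemma \ref{prop:simplandfullfreeenergycomp}, designed to absorb the specific $T^4$ contribution that will later be identified (presumably via a semiclassical/free-gas expansion near the critical region) as the dominant piece of the $\gamma$--$\widehat{V}$--$\gamma$ integral.
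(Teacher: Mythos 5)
Your argument is correct and is precisely the (unwritten) proof the paper intends: since the constant $\frac{\zeta(3/2)\zeta(5/2)}{256\pi^3}\Delta\widehat{V}(0)T^4$ is independent of $(\gamma,\alpha,\rho_0)$, subtracting it from both sides of the exact identity \eqref{someequ} and using the triangle inequality on the shifted convolution term replaces $E_4$ by $E_5$ while the bounds for the remaining pieces by $E_2$, $E_3$ and the sign $E_1\ge 0$ are unchanged.
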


\subsection{Minimization of the simplified functional in $\gamma$ and $\alpha$}
\label{constrminga}
We will now find the minimizers of the simplified functional \eqref{eq:simplifiedfunctional}. We note that the minimization problem can be rewritten as 
\[
\begin{aligned}
&\inf_{(\gamma,\alpha,\rho_0),\ \rho_\gamma+\rho_0=\rho}\mathcal{F}^{\rm sim}(\gamma,\alpha,\rho_0)
=\inf_{0\leq\rho_0\leq\rho}\Big[\inf_{(\gamma,\alpha),\ \rho_\gamma=\rho-\rho_0}\mathcal{F}^{\rm s}(\gamma,\alpha,\rho_0)\\
&+\widehat{V}(0)\rho^2+(12\pi a-\widehat{V}(0))\rho_0^2-8\pi a\rho\rho_0-4\pi a t_0^2-8\pi a t_0(\rho-\rho_0)\Big],
\end{aligned}
\]
with
\begin{equation}
\label{simplfunc}
\begin{aligned}
\cF^{\rm s}(\gamma,\alpha,\rho_0)&=
(2\pi)^{-3}\int (p^2+(\rho_0+t_0) \widehat{Vw}(p))\gamma(p)dp \\
& \quad+(2\pi)^{-3}(\rho_0+t_0)\int \widehat{Vw}(p)\alpha(p)dp -TS(\gamma,\alpha)
\\&\quad+\frac14 (2\pi)^{-3}(\rho_0+t_0)^2 \int \frac{\widehat{Vw}(p)^2}{p^2}dp.
\end{aligned}
\end{equation}
This suggests that we first focus on the minimization problem 
\[
\inf_{(\gamma,\alpha),\ \rho_\gamma=\rho-\rho_0}\mathcal{F}^{\rm s}(\gamma,\alpha,\rho_0).
\]

Since $\cF^{\rm s}$ is convex in $\gamma$ and $\alpha$, we can enforce the constraint $\rho_\gamma=\rho-\rho_0$ using a Lagrange multiplier $\delta$. Recall that
\bq
\beta(p)=\sqrt{\left(\frac{1}{2}+\gamma(p)\right)^{2}-\alpha(p)^{2}}, \nn
\eq
and define
\begin{eqnarray*}
G(p)&=&T^{-1}\sqrt{(p^2+\delta+(\rho_0+t_0)
  \widehat{Vw}(p))^2-((\rho_0+t_0) \widehat{Vw}(p))^2}\\&=&
T^{-1}\sqrt{(p^2+\delta)^2+
  2(p^2+\delta)(\rho_0+t_0)\widehat{Vw}(p)}.
\end{eqnarray*}
The following result states the minimizers of the minimization problem for $\delta\geq0$.

\begin{lemma}[Simplified functional solution]
\label{prop:simplfunctsol}
Let $\delta\geq0$, $\rho_0\geq0$ and $-\rho_0\leq t_0 \leq 0$. The minimizer of 
\[
\inf_{(\gamma,\alpha)}\left[\mathcal{F}^{\rm s}(\gamma,\alpha,\rho_0)+\delta\int\gamma\right]
\]
is given by
\bq
\ba
\gamma^{\rho_0,\delta}&=\frac{\beta}{TG}(p^2+\delta+(\rho_0+t_0) \widehat{Vw}(p))-\frac12 \\
\alpha^{\rho_0,\delta}&=-\frac{\beta}{TG}(\rho_0+t_0)\widehat{Vw}(p),
\ea \nn
\eq
with $\beta$ and $G$ as above, and the minimum is
\[
\ba
&\cF^{\rm s}(\gamma^{\rho_0,\delta},\alpha^{\rho_0,\delta},\rho_0)+\delta\int\gamma^{\rho_0,\delta}\\
&=(2\pi)^{-3} T\int\ln(1-e^{-G(p)})dp\\
&\quad+(2\pi)^{-3} \frac12\int\Big[\sqrt{(p^2+\delta)^2+2(p^2+\delta)(\rho_0+t_0) \widehat{Vw}(p)}\\
&\hspace{2cm}-(p^2+\delta+(\rho_0+t_0) \widehat{Vw}(p))+\frac12(\rho_0+t_0)^2  \frac{\widehat{Vw}(p)^2}{p^2}\Big]dp.
\ea 
\]
\end{lemma}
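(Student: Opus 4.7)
The plan is to exploit the fact that, up to the additive constant $\tfrac14(2\pi)^{-3}(\rho_0+t_0)^2\int \widehat{Vw}^2/p^2\,dp$, the penalised functional $\mathcal F^{\rm s}+\delta\int\gamma$ is an integral of a purely local (scalar) expression in $(\gamma(p),\alpha(p))$. So the minimisation decouples pointwise in $p$: setting
\[
A(p):=p^2+\delta+(\rho_0+t_0)\widehat{Vw}(p),\qquad B(p):=(\rho_0+t_0)\widehat{Vw}(p),
\]
it suffices, for almost every $p$, to minimise the scalar functional $\mathcal L(\gamma,\alpha)=A\gamma+B\alpha-Ts(\gamma,\alpha)$ on $\{\gamma\ge 0,\ \alpha^2\le\gamma(\gamma+1)\}$. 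Since the quasi-free entropy is concave in $(\gamma,\alpha)$ — a fact established in the companion paper \cite{NapReuSol1-15} — the functional $\mathcal L$ is strictly convex, so any stationary point is the unique minimiser.

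Writing $\beta=\sqrt{(\gamma+1/2)^2-\alpha^2}$ and $s'(\beta)=\ln\tfrac{\beta+1/2}{\beta-1/2}$, the Euler--Lagrange equations $\partial_\gamma\mathcal L=\partial_\alpha\mathcal L=0$ become
\[
A=T\,s'(\beta)\,\frac{\gamma+\tfrac12}{\beta},\qquad B=-T\,s'(\beta)\,\frac{\alpha}{\beta}.
\]
Squaring and subtracting, the identity $(\gamma+1/2)^2-\alpha^2=\beta^2$ collapses these to $TG=\sqrt{A^2-B^2}=T\,s'(\beta)$, whence
\[
\beta=\tfrac12\coth(G/2)=\frac{1}{e^G-1}+\tfrac12,
\]
and back-substitution yields exactly the stated $\gamma^{\rho_0,\delta}$ and $\alpha^{\rho_0,\delta}$. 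To obtain the minimum value I would use the EL relations to simplify $A\gamma+B\alpha=(A^2-B^2)\beta/(TG)-A/2=TG\beta-A/2$, together with the identity $s(\beta)=G(\beta-\tfrac12)-\ln(1-e^{-G})$ (a direct consequence of $\ln(\beta+\tfrac12)=G+\ln(\beta-\tfrac12)$ and $\beta-\tfrac12=(e^G-1)^{-1}$), so that the pointwise minimum of $\mathcal L$ equals $\tfrac12(TG-A)+T\ln(1-e^{-G})$. Integrating against $(2\pi)^{-3}dp$ and restoring the additive constant — which exactly produces the $\tfrac14(\rho_0+t_0)^2\widehat{Vw}^2/p^2$ piece inside the square bracket — gives the claimed expression.

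The calculus here is routine; the real subtleties are admissibility and integrability. I must verify that $A^2-B^2=(p^2+\delta)^2+2(p^2+\delta)(\rho_0+t_0)\widehat{Vw}(p)\ge 0$ so that $G$ is real; since $\widehat{Vw}$ need not be pointwise nonnegative (only $\widehat{Vw}(0)=8\pi a$ is), this requires the bound $|\widehat{Vw}|\le 8\pi a$ together with the hypothesis $-\rho_0\le t_0\le 0$ to control the sign of $(p^2+\delta)+2(\rho_0+t_0)\widehat{Vw}(p)$. The constraint $\beta\ge\tfrac12$ is then automatic from $\beta=(e^G-1)^{-1}+\tfrac12$ whenever $G>0$, and $G=0$ can occur only on the null set $\{p^2+\delta=0\}$. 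Finally, $\gamma^{\rho_0,\delta}\in L^1((1+p^2)\,dp)$ and convergence of the $\ln(1-e^{-G})$ integral follow from the asymptotics $G(p)\sim p^2/T$ as $|p|\to\infty$, so the integrals defining the minimum are finite and strict convexity promotes the pointwise critical point to the unique global minimiser.
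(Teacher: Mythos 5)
Your approach is essentially the same as the paper's: pointwise Euler--Lagrange equations, square-and-subtract to get $T\,s'(\beta)=TG$, back-substitution, and simplification of the minimum via the identity $s(\beta)=G(\beta-\tfrac12)-\ln(1-e^{-G})$ (the paper carries out the same algebra). Your invocation of strict concavity of $s$ to promote the critical point to the unique minimiser is a correct and slightly more explicit justification than the paper gives.

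However, the admissibility step you flag is not actually resolved by the ingredients you list. The bound $|\widehat{Vw}|\le 8\pi a$ together with $0\le\rho_0+t_0\le\rho_0$ only gives
\[
(p^2+\delta)+2(\rho_0+t_0)\widehat{Vw}(p)\ \ge\ (p^2+\delta)-16\pi a\,\rho_0,
\]
which can be negative when $p^2+\delta<16\pi a\,\rho_0$. What saves the day is not the sup-norm bound on $\widehat{Vw}$ but the Taylor control near the origin: from $\widehat{Vw}(0)=8\pi a$, $\widehat{Vw}'(0)=0$ and $\|\widehat{Vw}''\|_\infty\le Ca^3$ one gets $\widehat{Vw}(p)\ge 8\pi a-Ca^3p^2$, so $\widehat{Vw}(p)\ge 0$ on the ball $|p|\lesssim a^{-1}$, and hence $(p^2+\delta)+2(\rho_0+t_0)\widehat{Vw}(p)\ge p^2+\delta$ there; outside that ball one has $p^2\gtrsim a^{-2}\gg a\rho_0$ in the dilute regime, and the sup-norm bound finishes the job. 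This is precisely the argument the paper uses, and without the derivative information on $\widehat{Vw}$ the quantity under the square root defining $G$ could go negative, so the step as you wrote it is a genuine gap.
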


\begin{proof}
Since
$$
s'(\beta)=\ln\left(\frac{\beta+\frac12}{\beta-\frac12}\right),
$$
we find the 
the Euler--Lagrange equations to be 
\begin{eqnarray}
\begin{aligned}\label{eq:simplifiedEL}
  p^2+\delta+(\rho_0+t_0) \widehat{Vw}(p)&=
  T\ln\left(\frac{\beta+\frac12}{\beta-\frac12}\right)\frac{\gamma(p)+\frac12}{\beta(p)}\\
    (\rho_0+t_0) \widehat{Vw}(p)&=-T\ln\left(\frac{\beta+\frac12}{\beta-\frac12}\right)\frac{\alpha(p)}{\beta(p)}.
    \end{aligned}
\end{eqnarray}
Squaring and subtracting both equations and using \eqref{def:beta} we obtain
\bq
\ln\left(\frac{\beta+\frac12}{\beta-\frac12}\right)=G(p),\quad 
\beta(p)=\left(e^{G(p)}-1\right)^{-1}+\frac12. \label{eq:betaGrelations}
\eq
One may be concerned about the square root in the definition of $G$. However, using $\widehat{Vw}(0)=8\pi a$, $\widehat{Vw}'(0)=0$ and $\|\widehat{Vw}''\|_\infty\leq Ca^3$, we note that 
\[
\widehat{Vw}(p)\geq 8\pi a- Ca^3p^2. 
\]
We find that $p^2+2(\rho_0+t_0)\widehat{Vw}(p)\geq Cp^2$ for all $p$. Together with $\delta\geq0$, this implies
\[
\ba
&(p^2+\delta)^2+2(p^2+\delta)(\rho_0+t_0)\widehat{Vw}(p)\\
&\hspace{3cm}= (p^2+\delta)\left(p^2+\delta+2(\rho_0+t_0)\widehat{Vw}(p)\right)\geq Cp^4.
\ea
\]
In particular, this means there are no problems with the square root.

Using \eqref{eq:betaGrelations} in \eqref{eq:simplifiedEL} we find for the minimizers
\begin{eqnarray}
\begin{aligned}
  \gamma(p)&=\frac{\beta}{TG}(p^2+\delta+(\rho_0+t_0) \widehat{Vw}(p))-\frac12\\
&=(e^{G(p)}-1)^{-1}\frac{p^2+\delta+(\rho_0+t_0) \widehat{Vw}(p)}{\sqrt{(p^2+\delta)^2+
      2(p^2+\delta)(\rho_0+t_0)\widehat{Vw}(p)}}\\
&\hspace{1cm}+
  \frac12\left(\frac{p^2+\delta+(\rho_0+t_0) \widehat{Vw}(p)}{\sqrt{(p^2+\delta)^2+
        2(p^2+\delta)(\rho_0+t_0)\widehat{Vw}(p)}}-1\right)\\
  \alpha(p)&=-\frac{\beta}{TG}(\rho_0+t_0)\widehat{Vw}(p)\\
&=-
  \left((e^{G(p)}-1)^{-1}+\frac12\right)\frac{(\rho_0+t_0)\widehat{Vw}(p)}{\sqrt{(p^2+\delta)^2+
      2(p^2+\delta)(\rho_0+t_0)\widehat{Vw}(p)}}.
\nn
\end{aligned}
\end{eqnarray}
These indeed satisfy $\alpha^2\leq\gamma(\gamma+1)$. Inserting them into the functional we obtain 
\[
\ba
&(p^2+\delta+(\rho_0+t_0) \widehat{Vw}(p))\gamma(p)+(\rho_0+t_0)\widehat{Vw}(p)\alpha(p)-Ts(\beta(p))\\
&=\frac{\beta(p)}{TG(p)}(TG(p))^2- \frac12(p^2+\delta+(\rho_0+t_0) \widehat{Vw}(p))\\
&\hspace{4cm}+T\beta(p)\ln\left(\frac{\beta(p)-\frac12}{\beta(p)+\frac12}\right)-\frac12T\ln\left(\beta(p)^2-\frac14\right)\\
&=-\frac12(p^2+\delta+(\rho_0+t_0) \widehat{Vw}(p))+\frac12TG(p)+T\ln(1-e^{-G(p)})\\
&=T\ln(1-e^{-G(p)})\\
&+\frac12\left(\sqrt{(p^2+\delta)^2+2(p^2+\delta)(\rho_0+t_0) \widehat{Vw}(p)}-(p^2+\delta+(\rho_0+t_0) \widehat{Vw}(p))\right),
\ea
\]
which gives the right expression. 
\end{proof}

We summarize and rewrite the relevant quantities in the following corollary. The expressions may seem a bit involved, but it will turn out to be useful to write them in this way.

\begin{corollary}
\label{energyexpressions}
Let $-1\leq\theta\leq0$, $d\geq0$, $\sigma\geq0$, $\phi>0$ and $0\leq\rho_0\leq\rho$ be fixed. Assume $\rho_0a/\phi^2=\sigma/8\pi$ and let $\delta=d\phi^2$, and $t_0=\theta\rho_0$. We then have
\[
\ba
&\mathcal{F}^{\rm s}(\gamma^{\rho_0,\delta},\alpha^{\rho_0,\delta},\rho_0)\\
&\quad\quad=(2\pi)^{-3}\phi^5\frac12 \int\Big[\sqrt{(p^2+d)^2+2(p^2+d)(1+\theta)\sigma \frac{\widehat{Vw}(\phi p)}{8\pi a}} \\
&\hspace{3cm} -(p^2+d+(1+\theta)\sigma \frac{\widehat{Vw}(\phi p)}{8\pi a})+\frac{((1+\theta)\sigma \frac{\widehat{Vw}(\phi p)}{8\pi a})^2}{2p^2}\Big]dp\\
&\quad\quad\quad+(2\pi)^{-3}T\phi^3\int\ln\left(1-e^{-\frac{\phi^2}{T}\sqrt{(p^2+d)^2+2(p^2+d)(1+\theta)\sigma \frac{\widehat{Vw}(\phi p)}{8\pi a}}}\right)dp-d\phi^2\rho_{\gamma^{\rho_0,\delta}}\\
&\quad\quad=:F^{(1)}+F^{(2)}-d\phi^2\rho_{\gamma^{\rho_0,\delta}},
\ea
\]
where
\begin{equation}
\begin{aligned}
\label{eq:rhogammasimplified}
\rho_{\gamma^{\rho_0,\delta}}&=(2\pi)^{-3}\phi^3\frac12\int\left(\frac{p^2+d+(1+\theta)\sigma \frac{\widehat{Vw}(\phi p)}{8\pi a}}{\sqrt{(p^2+d)^2+2(p^2+d)(1+\theta)\sigma \frac{\widehat{Vw}(\phi p)}{8\pi a}}}-1\right)dp\\
&\quad+(2\pi)^{-3} \phi^{3}\int\left(e^{\frac{\phi^2}{T}\sqrt{(p^2+d)^2+2(p^2+d)(1+\theta)\sigma \frac{\widehat{Vw}(\phi p)}{8\pi a}}}-1\right)^{-1} \\
&\hspace{4cm}\times \frac{p^2+d+(1+\theta)\sigma \frac{\widehat{Vw}(\phi p)}{8\pi a}}{\sqrt{(p^2+d)^2+2(p^2+d)(1+\theta)\sigma\frac{\widehat{Vw}(\phi p)}{8\pi a}}}dp\\
&=: \rho_{\gamma}^{(1)}+\rho_{\gamma}^{(2)}.
\end{aligned}
\end{equation}
\end{corollary}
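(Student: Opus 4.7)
The proof is essentially a bookkeeping exercise: the corollary is a reformulated version of Lemma \ref{prop:simplfunctsol} in rescaled variables, so the plan is to start from that lemma's explicit formulas and perform a substitution of parameters together with a change of integration variable.

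First I would collect the parameter substitutions. Writing $\delta = d\phi^2$, $t_0 = \theta\rho_0$ and using the assumption $\rho_0 a/\phi^2 = \sigma/(8\pi)$, one gets the key identity
\[
(\rho_0+t_0)\widehat{Vw}(p) \;=\; (1+\theta)\rho_0 \widehat{Vw}(p) \;=\; (1+\theta)\sigma\phi^2\,\frac{\widehat{Vw}(p)}{8\pi a}.
\]
This is the key to expressing every occurrence of $(\rho_0+t_0)\widehat{Vw}(p)$ in the lemma's formulas in terms of the dimensionless quantity $(1+\theta)\sigma\,\widehat{Vw}(p)/(8\pi a)$ multiplied by $\phi^2$. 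Substituting into the expressions from Lemma \ref{prop:simplfunctsol}, inside the square root one gets $(p^2+d\phi^2)^2 + 2(p^2+d\phi^2)(1+\theta)\sigma\phi^2\widehat{Vw}(p)/(8\pi a)$, and analogously for the other terms.

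Next I would perform the change of variable $p \mapsto \phi p$ in every integral, which contributes a factor $\phi^3$ from the Jacobian. Inside the square-bracketed integrand of the energy in the Lemma (the one containing the square root, the linear term, and $\widehat{Vw}(p)^2/p^2$), a factor $\phi^2$ can be pulled out of each summand, giving a total prefactor $\phi^5$; this produces $F^{(1)}$ of the corollary. In the logarithmic integral, $G(p)$ scales to $(\phi^2/T)\sqrt{(p^2+d)^2 + 2(p^2+d)(1+\theta)\sigma\widehat{Vw}(\phi p)/(8\pi a)}$, and the outer $T$ together with the Jacobian yield the $T\phi^3$ prefactor of $F^{(2)}$. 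An identical change of variable applied to the explicit formula for $\gamma^{\rho_0,\delta}$ in Lemma \ref{prop:simplfunctsol}, then integrated, produces the two summands $\rho_\gamma^{(1)}$ (from the zero-temperature $-\frac12+\cdots$ part) and $\rho_\gamma^{(2)}$ (from the Bose factor part) in \eqref{eq:rhogammasimplified}.

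Finally, to get the claimed identity $\mathcal{F}^{\rm s}(\gamma^{\rho_0,\delta},\alpha^{\rho_0,\delta},\rho_0) = F^{(1)}+F^{(2)} - d\phi^2\rho_{\gamma^{\rho_0,\delta}}$, I subtract the Lagrange multiplier term: Lemma \ref{prop:simplfunctsol} gives
\[
\mathcal{F}^{\rm s}(\gamma^{\rho_0,\delta},\alpha^{\rho_0,\delta},\rho_0) + \delta\!\int\!\gamma^{\rho_0,\delta} \;=\; F^{(1)} + F^{(2)},
\]
and since $\int\gamma^{\rho_0,\delta} = \rho_{\gamma^{\rho_0,\delta}}$ and $\delta = d\phi^2$, the claim follows. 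There is essentially no obstacle here — the only points requiring care are (i) keeping track of the exponents of $\phi$ when pulling out prefactors from the square-root expression, and (ii) verifying that the positivity condition $(p^2+\delta)^2 + 2(p^2+\delta)(\rho_0+t_0)\widehat{Vw}(p) \geq Cp^4$ needed to make the square root real, already established in the proof of Lemma \ref{prop:simplfunctsol} under the hypothesis $-\rho_0\leq t_0\leq 0$ (equivalently $-1\leq\theta\leq 0$), is preserved by the rescaling, which is immediate.
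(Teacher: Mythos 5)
Your proposal is correct and takes exactly the approach the paper intends: the Corollary is obtained from Lemma \ref{prop:simplfunctsol} by the substitutions $\delta=d\phi^2$, $t_0=\theta\rho_0$, $\rho_0=\sigma\phi^2/(8\pi a)$, followed by the change of variable $p\mapsto\phi p$ (Jacobian $\phi^3$), which pulls an extra $\phi^2$ out of the square root and linear terms to give the $\phi^5$, $T\phi^3$, and $\phi^3$ prefactors, and the multiplier term $\delta\int\gamma^{\rho_0,\delta}=\delta\rho_{\gamma^{\rho_0,\delta}}=d\phi^2\rho_{\gamma^{\rho_0,\delta}}$ is moved to the right-hand side. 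The paper states the Corollary without proof, treating it as a direct rewriting of Lemma \ref{prop:simplfunctsol}, and your bookkeeping matches that intent in all details.
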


In the above, $\phi$ may seem superfluous, but we will use it later to allow for different scalings: we either choose $\phi=Ta$ or $\phi=\sqrt{\rho_0 a}$. This allows us to choose the parameters $\sigma,d$ and $\theta$ to be of order 1 in the different regimes.

\subsection{A priori estimates on the free Bose gas}
\label{apriorifree}
To establish that the error terms in Lemma \ref{prop:simplandfullfreeenergycomp} are small for the minimizer of the full functional, we need a priori estimates, which we will prove in the next subsection. To prepare for this, we prove some facts about the free Bose gas first.

Let $\gamma_{\mu(\rho)}$ denote the minimizer with density $\rho$ for the 
free gas functional
$$
\cF_{0}(\gamma)=(2\pi)^{-3}\int p^2\gamma(p)-Ts(\gamma(p),0) dp.
$$ 
More precisely, $\mu(\rho)\leq 0$ represents the chemical potential
such that $\gamma_{\mu(\rho)}$ actually minimizes
$\cF_{0}(\gamma)-\mu(\rho)(2\pi)^{-3} \int\gamma$. If $\rho>\rho_{\rm fc}$ there is no minimizer with $(2\pi)^{-3} \int\gamma=\rho$ and $\mu(\rho)=0$,
i.e.\ we have the global free minimizer $\gamma_0$ with
$(2\pi)^{-3} \int\gamma_0=\rho_{\rm fc}$.  We denote the minimizing energy
$F_0(T,\rho)=\cF_0(\gamma_{\mu(\rho)})$. 
The minimizer $\gamma_{\mu}$ is given by 
\begin{equation}
\label{gammazero}
\gamma_{\mu}(p)=\frac1{e^{(p^2-\mu)/T}-1},
\end{equation}
hence 
\begin{equation*}
\rho=(2\pi)^{-3}T^{3/2}\int\left[e^{(p^2-T^{-1}\mu(\rho))}-1\right]^{-1}dp,
\end{equation*}
and the energy is 
\bq
\begin{aligned}
F_0(T,\rho)=&\ (2\pi)^{-3}T\int\ln\left(1-e^{-(p^2-\mu(\rho))/T}\right)dp 
+\mu(\rho)\rho \\
=&\ (2\pi)^{-3}T^{5/2}\int\ln\left(1-e^{-(p^2-T^{-1}\mu(\rho))}\right)dp 
+\mu(\rho)\rho.\label{eq:freefreeenergy}
\end{aligned}
\eq
We see that we have the following scalings for $F_0$ and $\mu$:
\begin{equation*}
F_0(T,\rho)=T^{5/2}f_0\left(\rho/T^{3/2}\right),\qquad
\mu(\rho)=Tm\left(\rho/T^{3/2}\right),
\end{equation*}
where $f_0$ and $m$ are the functions independent of $T$ given by
\begin{equation*}
\begin{aligned}
f_0(n)&=(2\pi)^{-3}\int\ln\left(1-e^{-(p^2-m(n))}\right)dp 
+m(n)n,\\
n&=(2\pi)^{-3}\int\left[e^{p^2-m(n)}-1\right]^{-1}dp.
\end{aligned}
\end{equation*}
The critical density is $\rho_{\rm fc}=T^{3/2}n_{\rm fc}$, where 
\begin{equation}
\label{constantfcdensity}
n_{\rm fc}=(2\pi)^{-3}\int\left[e^{p^2}-1\right]^{-1}dp=\left(8\pi^{3/2}\right)^{-1}\zeta( 3/2 ).
\end{equation}
The minimal free energy is $\min_\rho F_0(T,\rho)=T^{5/2}f_{\rm min}$, where 
\begin{align*}
f_{\rm min}&=(2\pi)^{-3}\int\ln\left(1-e^{-p^2}\right)dp\\
&=-\frac23(2\pi)^{-3}\int p^2\left[e^{p^2}-1\right]^{-1}dp=-\left(8\pi^{3/2}\right)^{-1}\zeta( 5/2 ).
\end{align*}
The second identity can for example be seen by putting back in the $T$ dependence, differentiating
$\int\ln(1-e^{-p^2/T})dp$ with respect to $T$ directly under the integral sign, and also noticing that 
it is $\frac32T^{-1}$ times the integral. \\

We now prove two estimates that we will use in the next section. 
\begin{lemma}
There exist constants $c_1,C_1>0$ such that for all $n$
we have 
\begin{equation}\label{eq:e0n}
f_0(n)\leq f_{\rm min}+C_1[n_{\rm fc}-n]_+^3,
\end{equation}
and for all $n_1\leq n_2\leq n_{\rm fc}$ 
\begin{equation}\label{eq:e0nn}
f_0(n_1)\geq f_0(n_2)+c_1(n_2-n_1)^3.
\end{equation}
Also, given $n_0<n_{\rm fc}$, there exists $c_0>0$ such that for all $n_0\leq n\leq n_{\rm fc}$
\begin{equation}\label{eq:e0nn2}
f_0(n)\leq f_0(n_0)-c_0(n-n_0)(n_{\rm fc}-n_0)^2.
\end{equation}
\end{lemma}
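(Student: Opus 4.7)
My plan is to use the Legendre-duality relation $f_0'(n)=m(n)$ valid on $(0,n_{\rm fc})$, together with $f_0\equiv f_{\min}$ on $[n_{\rm fc},\infty)$, and to convert all three inequalities into quantitative two-sided control of $|m(n)|$ by $(n_{\rm fc}-n)^2$.

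The key input is the asymptotic $|m(n)|=16\pi^2(n_{\rm fc}-n)^2(1+o(1))$ as $n\to n_{\rm fc}^-$. I would establish this by isolating the infrared-singular part of the integrand of $n_{\rm fc}-n=(2\pi)^{-3}\int\bigl[(e^{p^2}-1)^{-1}-(e^{p^2+|m|}-1)^{-1}\bigr]dp$: near $p=0$ the leading piece is $(2\pi)^{-3}\int|m|\,p^{-2}(p^2+|m|)^{-1}dp=\sqrt{|m|}/(4\pi)$, while the remainder is easily seen to be $O(|m|)$ by Taylor expansion of $(e^x-1)^{-1}$ in the infrared region and a crude pointwise bound in the ultraviolet region. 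Combining this asymptotic with the fact that the ratio $|m(n)|/(n_{\rm fc}-n)^2$ is continuous on $[0,n_{\rm fc})$, extends continuously to $n_{\rm fc}$ with value $16\pi^2$, and blows up as $n\to0^+$, I obtain a uniform lower bound $|m(n)|\geq c(n_{\rm fc}-n)^2$ on $[0,n_{\rm fc}]$ and a local upper bound $|m(n)|\leq C(n_{\rm fc}-n)^2$ on $[n_{\rm fc}-\delta,n_{\rm fc}]$ for some $\delta>0$.

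The three statements then drop out. For \eqref{eq:e0n}, integrating the upper bound gives $f_0(n)-f_{\min}=\int_n^{n_{\rm fc}}|m(n')|dn'\leq (C/3)(n_{\rm fc}-n)^3$ for $n\in[n_{\rm fc}-\delta,n_{\rm fc}]$, and one extends to all $n\in[0,n_{\rm fc}]$ by compactness, enlarging $C_1$ as needed. For \eqref{eq:e0nn}, integrating the lower bound yields $f_0(n_1)-f_0(n_2)\geq (c/3)\bigl[(n_{\rm fc}-n_1)^3-(n_{\rm fc}-n_2)^3\bigr]$; setting $a=n_{\rm fc}-n_1\geq b=n_{\rm fc}-n_2\geq0$, the elementary identity $a^3-b^3=(a-b)(a^2+ab+b^2)\geq(a-b)^3$ delivers the required $(n_2-n_1)^3$. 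For \eqref{eq:e0nn2}, I use that $f_0$ is convex (because $m$ is increasing in $n$) so that the chord inequality on $[n_0,n_{\rm fc}]$ gives $f_0(n_0)-f_0(n)\geq\tfrac{n-n_0}{n_{\rm fc}-n_0}(f_0(n_0)-f_{\min})$, and I then apply \eqref{eq:e0nn} with $n_1=n_0$, $n_2=n_{\rm fc}$ to bound the last factor from below by $c_1(n_{\rm fc}-n_0)^3$, yielding the claim with $c_0=c_1$.

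The main technical step is the leading-order asymptotic for $n_{\rm fc}-n$ as $|m|\to0^+$; once that is in hand, the rest is a routine combination of integration and convexity. A small subtlety worth flagging is that the upper bound on $|m|$ by a multiple of $(n_{\rm fc}-n)^2$ is only local near $n_{\rm fc}$, so for \eqref{eq:e0n} one must separately handle the range where $n$ is bounded away from $n_{\rm fc}$ using continuity and boundedness of both sides on $[0,n_{\rm fc}-\delta]$.
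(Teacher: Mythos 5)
Your proposal is correct and shares the paper's key technical input, namely the asymptotic $n_{\rm fc}-n=(4\pi)^{-1}\sqrt{|m(n)|}+o(\sqrt{|m(n)|})$ as $m(n)\to 0^-$, obtained by isolating the infrared-singular part of the density integral. The organisational difference is worth noting: the paper then also computes directly the free-energy asymptotic $f_0(n)-f_{\rm min}=(12\pi)^{-1}|m|^{3/2}+o(|m|^{3/2})$ and combines the two to conclude $f_0(n)=f_{\rm min}+\tfrac{16\pi^2}{3}[n_{\rm fc}-n]_+^3+o([n_{\rm fc}-n]_+^3)$, deriving all three inequalities from this single local expansion plus convexity and monotonicity of $f_0$; you instead exploit the Legendre identity $f_0'(n)=m(n)$ and convert the density asymptotic into two-sided control $c(n_{\rm fc}-n)^2\le|m(n)|\le C(n_{\rm fc}-n)^2$ (the lower bound uniform via the blow-up at $n\to 0^+$, the upper bound local), then integrate. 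Your route saves the second integral computation, and your treatment of the passage from the local asymptotic to the global claims (compactness for \eqref{eq:e0n}, the elementary inequality $a^3-b^3\ge(a-b)^3$ for \eqref{eq:e0nn}, the chord inequality plus \eqref{eq:e0nn} for \eqref{eq:e0nn2}) is more explicit than the paper's terse ``this proves the statement.'' Both approaches are sound; yours is marginally more economical in computation and more transparent in the extension step.
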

\begin{proof}
Let us analyse how the energy $f_0(n)$ goes up if $n=n_{\rm fc}-\delta n$ for $\delta n>0$.
For simplicity we set $\lambda=-m(n)\geq0$. We then have
\begin{equation}\label{eq:densitychange}
\begin{aligned}
  \delta n=&\ (2\pi)^{-3}\left(\int[e^{p^2}-1]^{-1}
    -[e^{p^2+\lambda}-1]^{-1}dp\right)\\
  =&\ (2\pi)^{-3}\lambda^{3/2}\left(\int [e^{\lambda p^2}-1]^{-1}-[e^{\lambda(p^2+1)}-1]^{-1}dp\right)\\
  =&\ (2\pi)^{-3}\lambda^{1/2}\left(\int(|p|^{-2}-(|p|^2+1)^{-1} )dp+o(1)\right)=
  (4\pi)^{-1}\lambda^{1/2}+o(\lambda^{1/2})
\end{aligned}
\end{equation}
as $\lambda\to0$. 
We then find for the energy 
\begin{eqnarray*}
  \lefteqn{(2\pi)^{-3}\int\ln(1-e^{-(p^2+\lambda)})dp 
    -\lambda (2\pi)^{-3}\int[e^{p^2+\lambda}-1]^{-1}dp}&&\\&=&
  (2\pi)^{-3}\int\ln(1-e^{-p^2})dp\\&&+
  (2\pi)^{-3}\lambda^{3/2}\int\ln(1-e^{-\lambda(p^2+1)})-\ln(1-e^{-\lambda p^2})
  -\lambda [e^{\lambda(p^2+1)}-1]^{-1}dp  \\
  &=&f_{\rm min}+
  (2\pi)^{-3}\lambda^{3/2}\left(\int \ln(1+|p|^{-2})-(p^2+1)^{-1}dp+o(1)\right)
  \\&=&f_{\rm min}+(12\pi)^{-1}\lambda^{3/2}+o(\lambda^{3/2})
\end{eqnarray*}
as $\lambda\to0$.
We thus conclude that 
\begin{equation}
\label{eq:e0nnnn}
f_0(n)=f_{\rm min}+\frac{16\pi^2}3[n_{\rm fc}-n]_+^3+o([n_{\rm fc}-n]_+^3)
\end{equation}
as $[n_{\rm fc}-n]_+\to0 $. This proves the statement. We also see that the free Bose gas has a third-order phase transition between the condensed and non-condensed phase.

The final statement is found by combining \eqref{eq:e0nnnn} with the fact that $f_0(n)$ is convex and strictly decreasing in $0\leq n\leq n_{\rm fc}$. 
\end{proof}

\subsection{A priori estimates}
\label{aprioridilute}
In this section, we always assume that $T\leq D\rho^{2/3}$ for some fixed constant $D$. The estimates below will depend on $D$.

Our goal will be to acquire some tools to approximate the free energy functional \eqref{freeenergyfunctional} in the dilute limit $\rho^{1/3}a\ll  1$. Propositions \ref{lm:gvg1}, \ref{lm:gvg2} and \ref{prop:impconvgamma} provide a priori bounds for the terms involving $\gamma$ and $\widehat{V}$. The first estimate holds in general for $T\leq D\rho^{2/3}$. The two other estimates are sharper and provide bounds at densities very close to the free critical density where, according to Subsection \ref{furtherapriori}, the phase transition has to occur. This means that we can zoom in on this region and analyse the nature of the minimizers there. This will be done in Subsection \ref{sec:crittemp}.

Let $\left(\gamma,\alpha,\rho_0=\rho-\rho_\gamma\right)$ be a minimizing triple for \eqref{cmin} at a temperature $T$.

Using the bound $\widehat{V}(p)\leq \widehat{V}(0)$ we find the following upper bound in terms of the free gas energy $\cF_0$
\begin{align}\label{eq:upperbound}
\cF^{\rm{can}}(\gamma,\alpha,\rho_0)\leq\ &\cF^{\rm{can}}(\gamma_{\mu(\rho)},0,[\rho-\rho_{\rm fc}]_+)\nonumber\\\leq\ &
\cF_0(\gamma_{\mu(\rho)})+\rho^2\widehat{V}(0)-\frac12[\rho-\rho_{\rm fc}]_+^2\widehat{V}(0).
\end{align}
We also have
\begin{equation}\label{eq:lowerbound}
\begin{aligned}
  \cF^{\rm{can}}(\gamma,\alpha,\rho_0)\geq\ &
  \cF_0(\gamma)+\frac12\widehat{V}(0)\rho^2+
  \rho_0(2\pi)^{-3}\int\widehat{V}(p)\gamma(p)dp \\&+
  \frac12(2\pi)^{-6}\iint\gamma(p)\widehat{V}(p-q)\gamma(q)dpdq
  -\frac12 \rho_0^2\widehat{V}(0),
\end{aligned}
\end{equation}
where we have first used that the entropy decreases if we replace $\alpha$
by 0 and then minimized over $\alpha$, finding the minimizer
$\alpha=-(2\pi)^{3}\rho_0\delta_0$. We conclude  
\begin{equation}\label{eq:1stenergy}
\cF_0(\gamma_{\mu(\rho_\gamma)})\leq \cF_0(\gamma)\leq \cF_0(\gamma_{\mu(\rho)})+\rho^2\widehat{V}(0).
\end{equation}
We will use this to give an estimate on the integral of $\gamma$ in a
region $|p|>b$, where $b$ is to be chosen below.  We shall use the following result.

\begin{lemma}[A priori kinetic energy bound]\label{lm:apriorikinetic}
If for some $Y>0$ the function $\gamma$ satisfies
$\cF_0(\gamma)\leq \cF_0(\gamma_{\mu(\rho_\gamma)})+Y$, then for all $b$ with $b^2>8T$ we have 
$$
\frac12  (2\pi)^{-3}\int_{|p|>b} p^2\gamma(p)dp \leq Y+CT^{5/2}e^{-b^2/4T}.
$$
\end{lemma}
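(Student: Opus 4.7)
The plan is to first show that
\[
(2\pi)^{-3}\int_{|p|>b}\bigl[p^2\gamma(p) - T\,s(\gamma(p),0)\bigr]\,dp \leq Y,
\]
and then use a pointwise variational inequality to separate the kinetic contribution from the entropy, paying only a Gaussian tail.

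For the first step, I would truncate by setting $\gamma^*(p) := \gamma(p)\1_{|p|\leq b}$, which has density $\rho^*\leq \rho_\gamma$. Since $\cF_0$ is local, the decomposition
\[
\cF_0(\gamma) = \cF_0(\gamma^*) + (2\pi)^{-3}\int_{|p|>b}\bigl[p^2\gamma - T\,s(\gamma,0)\bigr]\,dp
\]
holds, and the variational characterisation of the free minimiser gives $\cF_0(\gamma^*)\geq F_0(T,\rho^*)$. Now $F_0(T,\cdot)$ is non-increasing in the density because $\partial_\rho F_0(T,\rho)=\mu(\rho)\leq 0$, so $\rho^*\leq \rho_\gamma$ implies $F_0(T,\rho^*)\geq F_0(T,\rho_\gamma)=\cF_0(\gamma_{\mu(\rho_\gamma)})$. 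Combined with the hypothesis $\cF_0(\gamma)\leq \cF_0(\gamma_{\mu(\rho_\gamma)})+Y$, this yields
\[
(2\pi)^{-3}\int_{|p|>b}\bigl[p^2\gamma - T\,s(\gamma,0)\bigr]\,dp \leq Y.
\]

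For the second step, I would use the pointwise bound
\[
p^2\gamma - T\,s(\gamma,0) \geq \tfrac12 p^2\gamma + T\ln\bigl(1-e^{-p^2/(2T)}\bigr),
\]
which follows from the elementary identity $\min_{\gamma\geq 0}[c\gamma-T\,s(\gamma,0)]=T\ln(1-e^{-c/T})$ applied with $c=p^2/2$ (this is just the free-gas grand-canonical potential at energy level $c$). Since $b^2>8T$, on $\{|p|>b\}$ we have $e^{-p^2/(2T)}<e^{-4}<\tfrac12$, so $\ln(1-x)\geq -2x$ on $[0,\tfrac12]$ gives $-T\ln(1-e^{-p^2/(2T)})\leq 2T e^{-p^2/(2T)}$. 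Splitting $e^{-p^2/(2T)}\leq e^{-p^2/(4T)}e^{-b^2/(4T)}$ for $|p|>b$ then yields the Gaussian tail estimate $\int_{|p|>b}e^{-p^2/(2T)}dp\leq CT^{3/2}e^{-b^2/(4T)}$, producing the claimed remainder $CT^{5/2}e^{-b^2/(4T)}$.

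The proof is essentially bookkeeping; the only conceptual point is the monotonicity of $F_0(T,\cdot)$ used in step~1, which is exactly what allows one to truncate $\gamma$ to $\{|p|\leq b\}$ without losing the assumed energy bound. The pointwise entropy inequality in step~2 is a clean consequence of the free-gas variational formula and is what converts the mixed ``$p^2\gamma-T s$'' control into pure kinetic-energy control at the price of a harmless Gaussian error.
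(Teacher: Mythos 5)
Your proof is correct, and the core mechanism is the same as the paper's: split $\int_{|p|>b}(p^2\gamma - Ts)$ into $\frac12\int_{|p|>b}p^2\gamma$ plus $\int_{|p|>b}(\frac12 p^2\gamma - Ts)$, bound the mixed term pointwise below by $T\ln(1-e^{-p^2/2T})$ via the free-gas variational formula $\min_{\gamma\geq0}[c\gamma - Ts(\gamma,0)]=T\ln(1-e^{-c/T})$, and estimate the resulting tail. The only real difference is in the first step. The paper works with the grand-canonical quantity $\cF_0(\gamma)-\mu(\rho_\gamma)\rho_\gamma$: it adds and drops $-\mu(\rho_\gamma)\gamma(p)$ (which is nonnegative) on $\{|p|>b\}$, and on $\{|p|<b\}$ bounds the integrand pointwise from below by replacing $\gamma$ with the pointwise minimizer $\gamma_{\mu(\rho_\gamma)}$, extending the integral to all of $\mathbb{R}^3$ since the pointwise minimum is $\leq 0$. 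You instead truncate $\gamma$ to $\gamma^*=\gamma\1_{|p|\leq b}$ and invoke $\cF_0(\gamma^*)\geq F_0(T,\rho^*)\geq F_0(T,\rho_\gamma)$ by monotonicity of $F_0(T,\cdot)$. These are equivalent observations (monotonicity of $F_0$ \emph{is} the fact that $\mu\leq 0$), but your version avoids manipulating $\mu(\rho_\gamma)$ explicitly and cleanly isolates the intermediate bound $\int_{|p|>b}(p^2\gamma-Ts)\leq Y$, which the paper never states separately. Both routes are equally elementary; yours is perhaps a touch more transparent as a two-step argument.
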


\begin{proof} 
  Using the fact that $\mu(\rho_\gamma)\leq0$ and
  $\mu(\rho_\gamma)(2\pi)^{-3}\int\gamma_{\mu(\rho_\gamma)}=\mu(\rho_\gamma)\rho_\gamma$,
  the result follows from
\begin{align*}
  \cF_0(\gamma)-\mu&(\rho_\gamma)\rho_\gamma\geq \ 
  (2\pi)^{-3}\int_{|p|<b} \left(p^2\gamma(p)-Ts(\gamma(p),0)-\mu(\rho_\gamma)\gamma(p)\right) dp
  \\+& \ 
  \frac12 (2\pi)^{-3}\int\limits_{|p|>b} p^2\gamma(p)dp +\frac12 (2\pi)^{-3}\int\limits_{|p|>b} \left(p^2\gamma(p)-2Ts(\gamma(p),0)\right) dp \\
  \geq & \ (2\pi)^{-3}\int \left(p^2\gamma_{\mu(\rho_\gamma)}(p)-Ts(\gamma_{\mu(\rho_\gamma)}(p),0)-\mu(\rho_\gamma)\gamma_{\mu(\rho_\gamma)}(p) \right)dp\\
  &\ +\frac12 (2\pi)^{-3}\int_{|p|>b} p^2\gamma(p)dp +(2\pi)^{-3}T\int_{|p|>b} \ln(1-e^{-p^2/2T})dp\\
  \geq &\
  \cF_0(\gamma_{\mu(\rho_\gamma)})-\mu(\rho_\gamma)\rho_\gamma+\frac12
  (2\pi)^{-3}\int_{|p|>b} p^2\gamma(p) -CT^{5/2}e^{-b^2/4T},
\end{align*}
which holds for $b^2>8T$, since then 
\bq
\begin{aligned}
\int_{|p|>b} \ln(1-e^{-p^2/2T})dp&\geq -C\int_{|p|>b}e^{-p^2/2T}dp\geq -Ce^{-b^2/4T}\int e^{-p^2/4T}dp \\
&=-CT^{3/2}e^{-b^2/4T}.\nn
\end{aligned}
\eq
\end{proof}

Since $\cF_0(\gamma_{\mu(\rho_\gamma)})\geq \cF_0(\gamma_{\mu(\rho)})$, we can use this lemma with 
$Y=\rho^2\widehat{V}(0)$ to conclude from (\ref{eq:1stenergy}) that 
\bq
\begin{aligned}
\iint_{|p-q|>2b} \gamma(p)\widehat{V}(p-q)\gamma(q)dpdq
\leq\ & C \widehat{V}(0)\rho\int_{|p|>b}\gamma(p)dp \\
\leq \ &C \widehat{V}(0)\rho(\rho^2\widehat{V}(0)+T^{5/2}e^{-b^2/4T})b^{-2}.\label{eq:gvg1}
\end{aligned}
\eq
We choose $b=a^{-1}(\rho^{1/3} a)^{3/4}$. Then $b^2/T\geq D^{-1}(\rho^{1/3} a)^{-1/2}\gg1$ and we find 
\begin{equation}
\label{estm1}
\iint_{|p-q|>2b}
\gamma(p)\widehat{V}(p-q)\gamma(q)dpdq\leq
C\rho^3\widehat{V}(0)^2b^{-2}\leq C\rho^2 a(\rho^{1/3} a)^{3/2}.
\end{equation}
Of course, the same bound holds if $\widehat{V}(p-q)$ is replaced by $\widehat{V}(0)$.
On the other hand we also have
\begin{equation}
\label{estm2}
\iint_{|p-q|<2b} \gamma(p)|\widehat{V}(p-q)-\widehat{V}(0)|\gamma(q)dpdq
\leq 
Cb^2\|\partial^2\widehat{V}\|_{\infty}\rho^2\leq C\rho^2 a(\rho^{1/3} a)^{3/2}.
\end{equation}
For the same choice of $b$:
\begin{equation}
\begin{aligned}
\left|\int\gamma(p)\widehat{V}(p)dp \right.&\left. 
-\widehat{V}(0)\int\gamma(p)dp\right| \leq \left|\left(\int\limits_{|p|\leq b}+\int\limits_{|p|> b}\right)\gamma(p)\left(\widehat{V}(p)
- \widehat{V}(0)\right)dp\right| \\ &\leq Cb^2 \|\partial^2\widehat{V}\|_{\infty} \int_{|p|\leq b}\gamma(p)dp+C\widehat{V}(0)b^{-2}\int_{|p|> b}p^2 \gamma(p)dp\\
&\leq C\rho a^3 b^2+Ca b^{-2}(\rho^2 a+T^{5/2}e^{-b^2/4T})\leq C\rho a(\rho^{1/3} a)^{3/2}, \label{estimongammV}
\end{aligned}  
\end{equation}
The same bounds hold for $\widehat{Vw}$ by \eqref{sameest}. We have thus shown the following result.

\begin{proposition}[A priori estimates on $E_2$ and $E_4$]
\label{lm:gvg1}
Any minimizing triple $(\gamma,\alpha,\rho_0)$ with density $\rho=\rho_\gamma+\rho_0$ and temperature $T$
satisfying $T<D\rho^{2/3}$ obeys the estimates
$$
\left|(2\pi)^{-6}\iint\gamma(p)\widehat{V}(p-q)\gamma(q)dpdq
- \widehat{V}(0)\rho_\gamma^2\right|\leq C\rho^2 a(\rho^{1/3} a)^{3/2},
$$
$$
\left|(2\pi)^{-3}\int\gamma(p)\widehat{V}(p)dp
- \widehat{V}(0)\rho_\gamma\right|\leq C\rho a(\rho^{1/3} a)^{3/2},
$$
where the constant $C$ depends on $D$ and the potential $V$. This also holds with $\widehat{V}$ replaced by $\widehat{Vw}$.
\end{proposition}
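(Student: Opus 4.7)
The strategy is to exploit two facts: $\widehat V$ is nearly constant on a ball of radius $\lesssim a^{-1}$ (by the smoothness assumption \eqref{assumptionsV}), while the minimizer $\gamma$ should concentrate on momenta of order $\sqrt{T} \lesssim \rho^{1/3} \ll a^{-1}$ in the dilute regime. Thus the double convolution $\iint \gamma\,\widehat V(p-q)\,\gamma$ should differ from $\widehat V(0)\rho_\gamma^{2}$ only by an error controlled by the tail of $\gamma$ and the modulus of continuity of $\widehat V$.

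First I would establish a crude comparison with the free gas. Using the trial state $(\gamma_{\mu(\rho)},0,[\rho-\rho_{\rm fc}]_+)$ together with $\widehat V \le \widehat V(0)$ yields an upper bound on $\cF^{\rm can}(\gamma,\alpha,\rho_0)$. For a matching lower bound I would use entropy monotonicity in $\alpha$ (replacing $\alpha$ by $0$), then minimize out the surviving $\alpha$-free expression, which generates the term $-\tfrac12\rho_0^{2}\widehat V(0)$. Subtracting yields $\cF_0(\gamma)\le \cF_0(\gamma_{\mu(\rho_\gamma)})+\rho^{2}\widehat V(0)$. Lemma \ref{lm:apriorikinetic} applied with $Y=\rho^{2}\widehat V(0)$ then gives the tail estimate $\int_{|p|>b} p^{2}\gamma(p)\,dp \le C(\rho^{2}\widehat V(0)+T^{5/2}e^{-b^{2}/(4T)})$ for any $b$ with $b^{2}>8T$.

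With this tail control in hand, I would split the double integral at $|p-q|=2b$. On the low-momentum region $|p-q|\le 2b$, a Taylor expansion together with $\|\partial^{2}\widehat V\|_\infty\le Ca^{3}$ gives $|\widehat V(p-q)-\widehat V(0)|\le Ca^{3}b^{2}$, contributing at most $Ca^{3}b^{2}\rho^{2}$ to the error. On $|p-q|>2b$, at least one of $|p|,|q|$ exceeds $b$, so invoking the tail bound and $\widehat V\le \widehat V(0)\le Ca$ yields a contribution of at most $C\widehat V(0)\rho\, b^{-2}(\rho^{2}\widehat V(0)+T^{5/2}e^{-b^{2}/(4T)})$. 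Choosing $b=a^{-1}(\rho^{1/3}a)^{3/4}$ makes $b^{2}/T\gtrsim (\rho^{1/3}a)^{-1/2}\gg 1$, killing the exponential, and equates both polynomial contributions to the target bound $C\rho^{2}a(\rho^{1/3}a)^{3/2}$. A nearly identical low/high-momentum split using $|\widehat V(p)-\widehat V(0)|\le Ca^{3}|p|^{2}$ handles the single integral $\int\gamma\widehat V$.

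The main obstacle is the balancing of $b$: the low-momentum Taylor error grows like $b^{2}$ while the high-momentum kinetic-energy tail decays like $b^{-2}$, and the specific choice $b=a^{-1}(\rho^{1/3}a)^{3/4}$ is forced by making these two contributions equal. The extension from $\widehat V$ to $\widehat{Vw}$ is automatic, since \eqref{sameest} shows $\widehat{Vw}$ obeys the same derivative bounds as $\widehat V$ and $\widehat{Vw}(0)=8\pi a \le \widehat V(0)$, so the identical argument applies verbatim.
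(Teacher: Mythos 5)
Your proposal reproduces the paper's argument essentially step for step: the upper/lower free-energy bounds \eqref{eq:upperbound}--\eqref{eq:lowerbound} leading to \eqref{eq:1stenergy}, the invocation of Lemma~\ref{lm:apriorikinetic} with $Y=\rho^{2}\widehat V(0)$, the split of the integral at $|p-q|=2b$, the Taylor expansion of $\widehat V$ near the origin, the choice $b=a^{-1}(\rho^{1/3}a)^{3/4}$ that balances the two error contributions, and the observation that \eqref{sameest} carries the argument over to $\widehat{Vw}$. This is the same approach and is correct.
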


From (\ref{eq:upperbound}), (\ref{eq:lowerbound}), and Proposition \ref{lm:gvg1} we find that 
\begin{equation}\label{eq:2ndenergy}
\cF_0(\gamma_{\mu{(\rho)}})\geq \cF_0(\gamma)+\frac12[\rho-\rho_{\rm fc}]_+^2\widehat{V}(0)
-\rho_0^2\widehat{V}(0)
-C\rho^2a(\rho^{1/3} a)^{3/2},
\end{equation}
which implies 
\begin{equation}
\label{apriorestrho0}
\rho_0^2 \widehat{V}(0)\geq \frac12[\rho-\rho_{\rm fc}]_+^2\widehat{V}(0)
-C\rho^2a(\rho^{1/3} a)^{3/2}.
\end{equation}
We thus get the following result.

\begin{lemma}
\label{lemma189}
If $(\gamma,\alpha,\rho_0)$ is a minimizing triple with $\rho=\rho_\gamma+\rho_0$ satisfying 
\[
\rho>\rho_{\rm{fc}}+C\rho(\rho^{1/3} a)^{3/4},
\]
then $\rho_0\neq 0$.
\end{lemma}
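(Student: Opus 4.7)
The claim follows essentially immediately from the a priori bound \eqref{apriorestrho0} established just above the statement, and the plan is really one of bookkeeping: choose the implicit constant $C$ in the hypothesis large enough to dominate the error term on the right-hand side of \eqref{apriorestrho0}.

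The plan is as follows. I would start from the inequality
\[
\rho_0^2 \widehat{V}(0)\geq \tfrac12[\rho-\rho_{\rm fc}]_+^2\widehat{V}(0)-C_0\rho^2 a(\rho^{1/3}a)^{3/2},
\]
where $C_0$ is the constant produced by Proposition \ref{lm:gvg1} combined with the two-sided energy sandwich \eqref{eq:upperbound}--\eqref{eq:lowerbound}. Under the hypothesis $\rho-\rho_{\rm fc}>C\rho(\rho^{1/3}a)^{3/4}$ one has
\[
[\rho-\rho_{\rm fc}]_+^2> C^2\rho^2 (\rho^{1/3}a)^{3/2},
\]
so substituting this in yields
\[
\rho_0^2\,\widehat{V}(0)\geq \bigl(\tfrac12 C^2\widehat{V}(0)-C_0 a\bigr)\rho^2(\rho^{1/3}a)^{3/2}.
\]

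Next I would invoke the inequality $\widehat{V}(0)\geq 8\pi a$ recalled in the introduction to Section \ref{mr} (equivalently $\nu\geq 8\pi$), which gives $a\leq \widehat{V}(0)/(8\pi)$. This lets one factor out $\widehat{V}(0)$ from both terms and obtain
\[
\rho_0^2\geq \Bigl(\tfrac12 C^2-\tfrac{C_0}{8\pi}\Bigr)\rho^2(\rho^{1/3}a)^{3/2}.
\]
Now simply choose the constant $C$ appearing in the hypothesis of the lemma strictly larger than $\sqrt{C_0/(4\pi)}$; the bracket becomes strictly positive and forces $\rho_0^2>0$, hence $\rho_0\neq 0$.

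There is no real obstacle here since the substantive work — controlling the two nonlocal $\gamma$-terms $(2\pi)^{-6}\iint\gamma\widehat V\gamma$ and $(2\pi)^{-3}\int\gamma\widehat V$ by their local approximations $\widehat V(0)\rho_\gamma^2$ and $\widehat V(0)\rho_\gamma$ — has already been done in Proposition \ref{lm:gvg1}, and the upper bound via the trial state $(\gamma_{\mu(\rho)},0,[\rho-\rho_{\rm fc}]_+)$ in \eqref{eq:upperbound} has already been written down. The only thing to keep track of is that the error in \eqref{apriorestrho0} is of order $\rho^2 a(\rho^{1/3}a)^{3/2}$, whose square-root scale $\rho(\rho^{1/3}a)^{3/4}$ is precisely the threshold appearing in the hypothesis, so the constants balance cleanly.
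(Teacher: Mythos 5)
Your proof is correct and takes essentially the same approach as the paper: the paper derives \eqref{apriorestrho0} and then states the lemma directly as a consequence, leaving the algebra implicit, while you simply carry it out, using $\widehat{V}(0)\geq 8\pi a$ to absorb the error term and identifying the explicit threshold $C>\sqrt{C_0/(4\pi)}$.
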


It follows that phase transition can only take place for
\[
\rho\leq\rho_{\rm fc}+C\rho(\rho^{1/3} a)^{3/4}\leq \rho_{\rm fc}+C'\rho_{\rm fc}(\rho_{\rm fc}^{1/3} a)^{3/4}.
\]
Hence from now on we consider only  
\begin{equation}\label{eq:firstboundrho}
\rho\leq \rho_{\rm fc}+C'\rho_{\rm fc}(\rho_{\rm fc}^{1/3}a)^{3/4}.
\end{equation}
Under this condition we shall give an upper bound on $\rho_0$.

If $\rho_0>2C'\rho_{\rm fc}(\rho_{\rm fc}^{1/3}a)^{3/4}$, then
$\rho_\gamma=\rho-\rho_0\leq \rho_{\rm fc}-\frac12\rho_0$ and thus 
\begin{equation}\label{eq:rho0firststep}
\begin{aligned}
\cF_0(\gamma_{\mu(\rho)})
\geq\ & \cF_0(\gamma)-\widehat{V}(0)\rho_0^2
-C\rho_{\rm fc}^2a(\rho_{\rm fc}^{1/3}a)^{3/2}\\
\geq\ &\cF_0(\gamma_{\mu(\rho)})+cT^{-2}\rho_0^3-\widehat{V}(0)\rho_0^2
-C\rho_{\rm fc}^2a(\rho_{\rm fc}^{1/3}a)^{3/2},
\end{aligned}
\end{equation}
where we have used the lower bound in \eqref{eq:e0nn} with $n_1=T^{-3/2}\rho_\gamma$ and \mbox{$n_2=T^{-3/2}\min\{\rho,\rho_{\rm{fc}}\}$}.
We conclude that $\rho_0<C\rho_{\rm fc}(\rho_{\rm fc}^{1/3}a)^{5/6}$, which, in the dilute limit, contradicts the assumption \mbox{$\rho_0>2C'\rho_{\rm fc}(\rho_{\rm fc}^{1/3}a)^{3/4}$}. We conclude that \eqref{eq:firstboundrho} implies
\[
\rho_0\leq 2C'\rho_{\rm fc}(\rho_{\rm fc}^{1/3}a)^{3/4}.
\]
If we insert this bound into \eqref{eq:2ndenergy}, we obtain
\begin{equation}
\label{eq:3energy}
\cF_0(\gamma_{\mu(\rho)})\geq\cF_0(\gamma)-C\rho_{\rm{fc}}^2 a(\rho_{\rm{fc}}^{1/3}a)^{3/2}.
\end{equation}
Since $\cF_0(\gamma_{\mu(\rho)})\leq \cF_0(\gamma_{\mu(\rho_\gamma)})$, we use Lemma \ref{lm:apriorikinetic} with \mbox{$Y=Ca\rho_{\rm fc}^2(\rho_{\rm fc}^{1/3}a)^{3/2}$}, and, as in (\ref{eq:gvg1}), arrive at
\begin{equation}
\label{eq248}
\ba
\iint_{|p-q|>2b} \gamma(p)\widehat{V}(p-q)&\gamma(q)dpdq\\
&\leq C \widehat{V}(0)\rho(a\rho_{\rm fc}^2(\rho_{\rm fc}^{1/3}a)^{3/2}+T^{5/2}e^{-b^2/4T})b^{-2}.
\ea
\end{equation}
We choose $b=a^{-1}(\rho_{\rm fc}^{1/3}a)^{3/4}$, such that 
$b^2/T\geq c(\rho_{\rm fc}^{1/3}a)^{-1/2}\gg 1$. The error above is 
then $C\rho_{\rm fc}^2 a (\rho_{\rm fc}^{1/3}a)^{3}$.
This time we can expand $\widehat{V}$ to second order
\bq
\begin{aligned}
  \iint\limits_{|p-q|<2b}
  \gamma(p)|\widehat{V}(p-q)-\widehat{V}(0)-&\frac16\Delta\widehat{V}(0)
(p-q)^2|\gamma(q)dpdq\leq C b^3\sup|\partial^3\widehat{V}|\rho^2 \\
&=Cb^3a^4\rho^2 \leq C\rho_{\rm fc}^2 a (\rho_{\rm fc}^{1/3}a)^{2+1/4}.\nn
\end{aligned}
\eq
Note that the integrals of the terms involving $\widehat{V}(0)$ and $\Delta\widehat{V}(0)$ over $\{|p-q|>2b\}$ can be estimated with Lemma \ref{lm:apriorikinetic} like \eqref{eq248}, that all these bounds can also be derived for $\int\widehat{V}(p)\gamma(p)dp$, and that we can derive similar bounds for $\widehat{Vw}$ using \eqref{sameest}, so that we arrive at the following improvement of Proposition \ref{lm:gvg1}.

\begin{proposition}
\label{lm:gvg2}
Any minimizing triple $(\gamma,\alpha,\rho_0)$ with density $\rho=\rho_\gamma+\rho_0$
 and temperature $T$ satisfying $D^{-3/2}T^{3/2}<\rho<\rho_{\rm fc}+C'\rho_{\rm fc}(\rho_{\rm fc}^{1/3}a)^{3/4}$
obeys the estimates
\begin{equation}\label{eq:2ndorderpq}
\begin{aligned}
\left|(2\pi)^{-6}\iint\gamma(p)\widehat{V}(p-q)\gamma(q)dpdq -
  \widehat{V}(0)\rho_\gamma^2-\right.&\left.\frac{1}{3(2\pi)^{3}}\Delta\widehat{V}(0)\rho_\gamma\int
  p^2\gamma(p)dp\right|\\
  & \leq C\rho_{\rm fc}^2 a (\rho_{\rm fc}^{1/3}a)^{2+1/4}
\end{aligned}
\end{equation}
and 
\[
\begin{aligned}
\left|(2\pi)^{-3}\int\widehat{V}(p)\gamma(p)dp -
  \widehat{V}(0)\rho_\gamma-\frac{1}{6(2\pi)^{3}}\Delta\widehat{V}(0)\right.&\left.\int p^2\gamma(p)dp\right| \\
  &\leq C\rho_{\rm fc} a (\rho_{\rm fc}^{1/3}a)^{2+1/4},
  \end{aligned}
\]
where the constants $C$ depend on $D$ and the potential $V$ This also holds with $\widehat{V}$ replaced by $\widehat{Vw}$.
\end{proposition}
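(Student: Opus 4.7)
The plan is to refine the argument used for Proposition \ref{lm:gvg1} by exploiting the sharper a priori bounds that are available in the near-critical density range \eqref{eq:firstboundrho}. First I would invoke the bound $\rho_0 \leq 2C'\rho_{\rm fc}(\rho_{\rm fc}^{1/3}a)^{3/4}$ and the improved energy estimate \eqref{eq:3energy}, both just established. This lets me apply Lemma \ref{lm:apriorikinetic} with $Y = C\rho_{\rm fc}^2 a(\rho_{\rm fc}^{1/3}a)^{3/2}$ and the smaller cutoff $b = a^{-1}(\rho_{\rm fc}^{1/3}a)^{3/4}$. Since $b^2/T \gtrsim (\rho_{\rm fc}^{1/3}a)^{-1/2} \gg 1$, the exponential tail in the lemma is negligible and yields
\[
(2\pi)^{-3}\int_{|p|>b} p^2\gamma(p)\,dp \leq CY, \qquad (2\pi)^{-3}\int_{|p|>b}\gamma(p)\,dp \leq CYb^{-2}.
\]

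Next I would split the double integral in the first estimate of \eqref{eq:2ndorderpq} into the bulk $\{|p-q|<2b\}$ and the tail $\{|p-q|>2b\}$. On the tail, $|p-q|>2b$ forces $|p|>b$ or $|q|>b$, so arguing as in \eqref{eq:gvg1} and \eqref{eq248},
\[
\iint_{|p-q|>2b}\gamma(p)\widehat{V}(p-q)\gamma(q)\,dpdq \leq C\widehat V(0)\rho\int_{|p|>b}\gamma \leq C\rho_{\rm fc}^2 a(\rho_{\rm fc}^{1/3}a)^{3},
\]
which is well below the target error $(\rho_{\rm fc}^{1/3}a)^{9/4}$. The analogous tail contributions of the counterterms $\widehat V(0)\rho_\gamma^2$ and $\tfrac{1}{3(2\pi)^3}\Delta\widehat V(0)\rho_\gamma\int p^2\gamma$ are of the same or smaller order, using $(p-q)^2 \leq 2(p^2+q^2)$ combined with the kinetic and number tail bounds above.

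On the bulk, I would Taylor-expand $\widehat V$ to second order at the origin. Radiality of $\widehat V$ gives $\nabla\widehat V(0) = 0$ and $\mathrm{Hess}\,\widehat V(0) = \tfrac{1}{3}\Delta\widehat V(0)\,I$, so
\[
\widehat V(p-q) = \widehat V(0) + \tfrac{1}{6}\Delta\widehat V(0)(p-q)^2 + R(p-q),
\]
with $|R(k)| \leq C|k|^3\|\partial^3\widehat V\|_\infty \leq C|k|^3 a^4$. Integrating the remainder against $\gamma(p)\gamma(q)$ over the bulk contributes $Cb^3 a^4\rho^2 = C\rho_{\rm fc}^2 a(\rho_{\rm fc}^{1/3}a)^{9/4}$, which is exactly the advertised error. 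Expanding $(p-q)^2 = p^2 - 2p\cdot q + q^2$ and using that the minimizer may be taken radial (by rotational invariance of $\cF^{\rm can}$ and joint convexity in $(\gamma,\alpha)$, so that $\int p\gamma\,dp = 0$ after symmetrization), the cross term vanishes and the second-order contribution becomes $\tfrac{1}{3(2\pi)^3}\Delta\widehat V(0)\rho_\gamma\int p^2\gamma\,dp$. This settles the first estimate.

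The estimate for $\int\widehat V(p)\gamma(p)\,dp$ is handled identically: split at $|p|=b$, Taylor-expand on $|p|<b$ (producing a remainder of order $Cb^3 a^4\rho = C\rho_{\rm fc} a(\rho_{\rm fc}^{1/3}a)^{9/4}$), and control the tail using $\int_{|p|>b}\gamma$ and $\int_{|p|>b}p^2\gamma$ from the first step. The same bounds with $\widehat V$ replaced by $\widehat{Vw}$ follow immediately, since \eqref{sameest} gives identical control on the derivatives of $\widehat{Vw}$. The main obstacle is the calibration of the cutoff $b$: it must be small enough for the third-order Taylor remainder on the bulk to be negligible, yet large enough to suppress the kinetic tail. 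The choice $b = a^{-1}(\rho_{\rm fc}^{1/3}a)^{3/4}$ equalizes these two competing errors at the advertised order $(\rho_{\rm fc}^{1/3}a)^{2+1/4}$.
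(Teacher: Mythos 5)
Your proposal reproduces the paper's argument almost verbatim: use the improved energy estimate and the bound on $\rho_0$ to feed $Y = C\rho_{\rm fc}^2 a(\rho_{\rm fc}^{1/3}a)^{3/2}$ into Lemma \ref{lm:apriorikinetic}, pick the cutoff $b = a^{-1}(\rho_{\rm fc}^{1/3}a)^{3/4}$, split into bulk and tail, and Taylor-expand $\widehat{V}$ to second order with a cubic remainder bounded by $\|\partial^3\widehat{V}\|_\infty\leq Ca^4$. The calibration of $b$ and the resulting error $C\rho_{\rm fc}^2 a(\rho_{\rm fc}^{1/3}a)^{9/4}$ are exactly as in the paper, as is the treatment of the tail contributions from the $\widehat{V}(0)$ and $\Delta\widehat{V}(0)$ counterterms. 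The one point where you add something is the explicit remark that the cross term $-2\,(\int p\gamma)\cdot(\int q\gamma)$ in $\iint(p-q)^2\gamma(p)\gamma(q)$ must vanish for the second-order term to take the advertised form $\tfrac{1}{3(2\pi)^3}\Delta\widehat{V}(0)\rho_\gamma\int p^2\gamma$; the paper passes over this silently. Your justification via ``joint convexity plus rotational invariance $\Rightarrow$ symmetrize'' is the right idea, though a cleaner route for ``any minimizing triple'' is to note that strict convexity of the entropy in $\gamma$ forces uniqueness of the minimizer for fixed $\rho_0$, so the minimizer inherits the radial (or at least $p\mapsto -p$) symmetry of the functional and $\int p\gamma\,dp = 0$ holds outright.
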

We are now ready to prove two more results. First, we provide an upper bound on $\rho_0$ and one on densities where a phase transition can occur (`critical densities'), which will be matched with a lower bound in the next section to show that there is no phase transition outside the region $|\rho-\rho_{\rm{fc}}|<C\rho(\rho^{1/3} a)$. The second is an a priori estimate on the error $E_5$.
\begin{lemma}[Upper bound on critical densities and $\rho_0$]
\label{lemub}
Assume that the density $\rho=\rho_0+\rho_\gamma$ and temperature $T$ satisfy $D^{-3/2}T^{3/2}<\rho<\rho_{\rm fc}+C'\rho_{\rm fc}(\rho_{\rm fc}^{1/3}a)^{3/4}$. Then,
\begin{itemize}
\item $\rho_0< C \rho (\rho^{1/3}a)$.
\item there exists a constant $C$ such that any minimizing triple with $\rho>\rho_{\rm{fc}}+C\rho(\rho^{1/3} a)$ has $\rho_0\neq 0$.
\end{itemize}
\end{lemma}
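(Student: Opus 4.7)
The strategy is to sharpen the argument that yielded the preliminary bound $\rho_0\leq 2C'\rho_{\rm fc}(\rho_{\rm fc}^{1/3}a)^{3/4}$ established near \eqref{eq:rho0firststep}, by using the second-order expansion of Proposition \ref{lm:gvg2} in place of Proposition \ref{lm:gvg1}. The extra correction $\Delta\widehat{V}(0)\rho_\gamma\int p^2\gamma$ appearing in \ref{lm:gvg2} is controlled via the kinetic estimate of Lemma \ref{lm:apriorikinetic} and is of size $O(\rho^2 a(\rho^{1/3}a)^2)$, so the overall error in the lower bound for $\cF^{\rm can}$ improves from $\rho^2 a(\rho^{1/3}a)^{3/2}$ to $\rho^2 a(\rho^{1/3}a)^{9/4}$. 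Subtracting the upper bound \eqref{eq:upperbound} from this refined lower bound then produces the key inequality
\bqq
\cF_0(\gamma)-\cF_0(\gamma_{\mu(\rho)})\leq \widehat{V}(0)\rho_0^2-\tfrac12[\rho-\rho_{\rm fc}]_+^2\widehat{V}(0)+C\rho^2 a(\rho^{1/3}a)^{9/4}.
\eqq

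The second bullet is then immediate: since $F_0(T,\cdot)$ attains its minimum at $\rho_{\rm fc}$ the left-hand side is nonnegative, giving $\rho_0^2\widehat{V}(0)\geq \tfrac12[\rho-\rho_{\rm fc}]_+^2\widehat{V}(0)-C\rho^2 a(\rho^{1/3}a)^{9/4}$; choosing $C$ large enough, the first term dominates whenever $\rho-\rho_{\rm fc}>C\rho(\rho^{1/3}a)$, forcing $\rho_0>0$. For the first bullet, in the main regime $\rho_\gamma\leq \min(\rho,\rho_{\rm fc})$ the cubic lower bound \eqref{eq:e0nn} applied with $n_1=T^{-3/2}\rho_\gamma$, $n_2=T^{-3/2}\min(\rho,\rho_{\rm fc})$ yields $\cF_0(\gamma)-\cF_0(\gamma_{\mu(\rho)})\geq c_1 T^{-2}(\rho_0-[\rho-\rho_{\rm fc}]_+)^3$, which combined with the key inequality gives
\bqq
c_1T^{-2}(\rho_0-[\rho-\rho_{\rm fc}]_+)^3\leq \widehat{V}(0)\rho_0^2+C\rho^2 a(\rho^{1/3}a)^{9/4}.
\eqq
When additionally $\rho_0\geq 2[\rho-\rho_{\rm fc}]_+$, the left side is at least $c_1T^{-2}\rho_0^3/8$; using $T\leq D\rho^{2/3}$ so that the scale $T^2 a\sim\rho(\rho^{1/3}a)$ is precisely where the cubic overtakes the quadratic $\widehat{V}(0)\rho_0^2\sim a\rho_0^2$, we obtain $\rho_0^3\leq CT^2\rho^2 a(\rho^{1/3}a)^{9/4}$, hence $\rho_0\leq C\rho(\rho^{1/3}a)^{13/12}$, comfortably inside the target bound $C\rho(\rho^{1/3}a)$.

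The main obstacle is the complementary regime, where either $\rho_\gamma>\rho_{\rm fc}$ (so that \eqref{eq:e0nn} is unavailable) or $\rho_0\leq 2[\rho-\rho_{\rm fc}]_+$: there one only has the a priori bound $\rho_0\leq 2(\rho-\rho_{\rm fc})\leq 2C'\rho(\rho^{1/3}a)^{3/4}$ inherited from the assumption on $\rho$, which is too weak for the first bullet. Closing this case is the main technical subtlety; it requires a more refined analysis, presumably by exploiting the explicit structure of the minimizer of the simplified functional derived in Subsection \ref{constrminga}, or through a perturbative comparison argument that moves mass between $\gamma$ near $p=0$ and $\rho_0$, in order to reduce $\rho_0$ in this narrow regime to the desired order $\rho(\rho^{1/3}a)$.
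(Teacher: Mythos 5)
Your overall strategy is the same as the paper's: after establishing $\int p^2\gamma\leq C\rho^{5/3}$ via the scaling bound \eqref{eq:scalingfree} and \eqref{eq:3energy}, apply Proposition~\ref{lm:gvg2}, combine with \eqref{eq:upperbound} and \eqref{eq:lowerbound} to reach the inequality $\cF_0(\gamma)-\cF_0(\gamma_{\mu(\rho)})\leq \widehat{V}(0)\rho_0^2-\tfrac12[\rho-\rho_{\rm fc}]_+^2\widehat{V}(0)+\text{error}$, and then read off the second bullet from $\cF_0(\gamma)\geq\cF_0(\gamma_{\mu(\rho)})$ and the first from the cubic lower bound \eqref{eq:e0nn}. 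Two points deserve correction.

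First, the error exponent $\rho^2 a(\rho^{1/3}a)^{9/4}$ is not achievable. The Laplacian correction in Proposition~\ref{lm:gvg2}, namely $\tfrac{1}{3(2\pi)^3}\Delta\widehat{V}(0)\,\rho_\gamma\int p^2\gamma$, is of size $\lesssim a^3\rho\cdot\rho^{5/3}=a\rho^2(\rho^{1/3}a)^2$ after inserting \eqref{eq:psquaredgamma}, and since you discard this term into the error rather than track it, the total error in the key inequality is $O(\rho^2 a(\rho^{1/3}a)^2)$, exactly as in \eqref{eq:boundcriticalregion}, not $O(\rho^2 a(\rho^{1/3}a)^{9/4})$. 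This is harmless: the cubic inequality then reads $cT^{-2}\rho_0^3\leq \widehat{V}(0)\rho_0^2 + Ca\rho^2(\rho^{1/3}a)^2$, and with $T\leq D\rho^{2/3}$ one still obtains $\rho_0\leq C\rho(\rho^{1/3}a)$, with no improvement lost.

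Second, the ``complementary regime'' you flag ($\rho_0\leq 2[\rho-\rho_{\rm fc}]_+$, which subsumes $\rho_\gamma>\rho_{\rm fc}$) does not call for exploiting the structure of the simplified minimizer or for any perturbative comparison argument. The paper handles it at the level of hypotheses: the first bullet is only ever invoked either under $\rho\leq\rho_{\rm fc}$ (Step 1 of Proposition~\ref{prop:criticalregion}, where $[\rho-\rho_{\rm fc}]_+=0$ so the complementary regime forces $\rho_0=0$) or under the additional restriction $|\rho-\rho_{\rm fc}|<C\rho(\rho^{1/3}a)$ (Proposition~\ref{prop:impconvgamma} and the proof of Theorem~\ref{thm:cancrittemp}), in which case $\rho_0\leq 2[\rho-\rho_{\rm fc}]_+\leq C\rho(\rho^{1/3}a)$ is immediate. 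In other words, the correct conclusion one can extract without extra hypotheses is $\rho_0\leq C\max\{\rho(\rho^{1/3}a),[\rho-\rho_{\rm fc}]_+\}$, and the lemma's stated hypothesis $\rho<\rho_{\rm fc}+C'\rho_{\rm fc}(\rho_{\rm fc}^{1/3}a)^{3/4}$ is indeed too wide to collapse the second term in general; every application of the lemma supplies the sharper control on $[\rho-\rho_{\rm fc}]_+$ needed to do so. Your instinct that something must be added here is sound, but the resolution is bookkeeping of hypotheses, not further analysis of minimizers.
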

\begin{proof}
 For $|\delta|<1$ (both positive and negative) we find using the scaling of the free gas energy that 
 \begin{equation}
\label{eq:scalingfree}
\cF_0(\gamma)\geq (2\pi)^{-3} \delta\int p^2\gamma(p)dp+(1-\delta)^{-3/2}\cF_0(\gamma_0). 
\end{equation}
Since $\cF_0(\gamma)\leq C\rho_{\rm{fc}}^2 a(\rho_{\rm{fc}}^{1/3}a)^{3/2}$ by \eqref{eq:3energy} and $\mathcal{F}(\gamma_0)\leq0$, it follows that 
\begin{equation}\label{eq:psquaredgamma}
\int p^2\gamma(p)dp \leq C \rho^{5/3}.
\end{equation}
Together with Proposition \ref{lm:gvg2}, this implies that 
$$(2\pi)^{-6}\iint \gamma(p)\widehat{V}(p-q)\gamma(q)dpdq=\widehat{V}(0)\rho^2_{\gamma}+O(\rho^2a(\rho^{1/3}a)^2)$$
and 
$$(2\pi)^{-3}\int \widehat{V}(p)\gamma(p)dp=\widehat{V}(0)\rho_{\gamma}+O(\rho a(\rho^{1/3}a)^2).$$
These two bounds together with \eqref{eq:upperbound} and \eqref{eq:lowerbound} yield
\begin{equation}\label{eq:boundcriticalregion}
\begin{aligned}
\cF_0(\gamma_{\mu(\rho)})+&\rho^2\frac{\widehat{V}(0)}{2}-\frac12[\rho-\rho_{\rm fc}]_+^2\widehat{V}(0)\geq \\ &\cF_0(\gamma)+\rho_0\rho_\gamma \widehat{V}(0)+
  \frac{\widehat{V}(0)}{2}\rho_\gamma^2  -\frac{\widehat{V}(0)}{2} \rho_0^2+O(\rho^2(\rho^{1/3}a)^2),
\end{aligned}
\end{equation}
and so
$$\rho_0^2\geq \frac12[\rho-\rho_{\rm fc}]_+^2-C\rho^2(\rho^{1/3}a)^2,$$
which implies the second statement. 
We also notice that \eqref{eq:boundcriticalregion} and \eqref{eq:e0nn} (used as in \eqref{eq:rho0firststep}) imply
\begin{equation*}
CT^{-2}\rho_0^3-\widehat{V}(0)\rho_0^2-Ca \rho^2(\rho^{1/3}a)^2\leq0,
\end{equation*}
which proves the first statement.
\end{proof}

\begin{proposition}[A priori estimate on $E_5$]
\label{prop:impconvgamma}
Let $(\gamma,\alpha,\rho_0)$ be a minimizing triple with density $\rho=\rho_\gamma+\rho_0$ such that $|\rho-\rho_{\rm{fc}}|<C\rho(\rho^{1/3} a)$. Also assume $T<D\rho^{2/3}$. Then
\begin{equation*}
(2\pi)^{-6}\iint \gamma(p)\widehat{V}(p-q)\gamma(q)dpdq=\widehat{V}(0)\rho_\gamma^2+{\frac {\zeta  \left( 3/2 \right) \zeta  \left( 5/2 \right) }{128
{\pi }^{3}}}\Delta\widehat{V}(0)T^4
+o(T^4a^3).
\end{equation*}
\end{proposition}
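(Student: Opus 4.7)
The strategy is to upgrade Proposition \ref{lm:gvg2}, which already provides
$$(2\pi)^{-6}\iint\gamma(p)\widehat{V}(p-q)\gamma(q)\,dp\,dq = \widehat{V}(0)\rho_\gamma^2 + \tfrac{1}{3(2\pi)^3}\Delta\widehat{V}(0)\,\rho_\gamma\!\int p^2\gamma\,dp + O\!\bigl(\rho^2 a(\rho^{1/3}a)^{9/4}\bigr).$$
The critical-region hypothesis $|\rho-\rho_{\rm fc}|<C\rho(\rho^{1/3}a)$ combined with $\rho_{\rm fc}\sim T^{3/2}$ (from \eqref{constantfcdensity}) pins $T\sim\rho^{2/3}$, so the error above is $T^4 a^3(\rho^{1/3}a)^{1/4}=o(T^4 a^3)$. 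Since also $|\Delta\widehat{V}(0)|\leq Ca^3$ by \eqref{assumptionsV}, and a direct computation (using \eqref{constantfcdensity} together with $(2\pi)^{-3}\!\int p^2\gamma_0\,dp=\tfrac{3\zeta(5/2)}{16\pi^{3/2}}T^{5/2}$) gives
$$\tfrac{1}{3}\rho_{\rm fc}\,(2\pi)^{-3}\!\int p^2\gamma_0\,dp = \tfrac{\zeta(3/2)\zeta(5/2)}{128\pi^3}T^4,$$
the proposition reduces to showing $\rho_\gamma\!\int p^2\gamma\,dp - \rho_{\rm fc}\!\int p^2\gamma_0\,dp = o(T^4)$.

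I split this as $(\rho_\gamma-\rho_{\rm fc})\int p^2\gamma_0 + \rho_\gamma\int p^2(\gamma-\gamma_0)$. By Lemma \ref{lemub} we have $\rho_0\leq C\rho(\rho^{1/3}a)$, so the hypothesis forces $|\rho_\gamma-\rho_{\rm fc}|\leq C\rho(\rho^{1/3}a)$, making the first piece $O(\rho\cdot\rho^{1/3}a\cdot T^{5/2})=O(T^{9/2}a)=o(T^4)$ because $\sqrt{T}a\leq\sqrt{D}\rho^{1/3}a\ll 1$ by \eqref{eq:thermalwavelengthcondition}. For the second piece, since $\rho_\gamma=O(T^{3/2})$, it suffices to prove the sharper kinetic comparison $(2\pi)^{-3}\!\int p^2(\gamma-\gamma_0)\,dp=o(T^{5/2})$.

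To establish this I first improve the free-energy comparison \eqref{eq:3energy} in the critical window. Combining \eqref{eq:upperbound} with \eqref{eq:lowerbound}, replacing $(2\pi)^{-6}\iint\gamma\widehat{V}\gamma$ and $(2\pi)^{-3}\int\widehat{V}\gamma$ by their leading forms via Proposition \ref{lm:gvg2}, and invoking the bounds $\rho_0\leq C\rho(\rho^{1/3}a)$ and $[\rho-\rho_{\rm fc}]_+\leq C\rho(\rho^{1/3}a)$ (as in the derivation of \eqref{eq:boundcriticalregion}), every interaction contribution collapses to $O(\rho^2 a(\rho^{1/3}a)^2)=O(T^4 a^3)$, while $\cF_0(\gamma_{\mu(\rho)})-\cF_0(\gamma_0)\leq CT^4 a^3$ follows from \eqref{eq:e0n} applied to $[\rho_{\rm fc}-\rho]_+\leq C\rho(\rho^{1/3}a)$. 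Altogether,
$$0\leq \cF_0(\gamma)-\cF_0(\gamma_0)\leq CT^4 a^3.$$
I now apply the scaling inequality \eqref{eq:scalingfree}. Using $\cF_0(\gamma_0)=-\tfrac{2}{3}(2\pi)^{-3}\!\int p^2\gamma_0\,dp$ and $(1-\delta)^{-3/2}=1+\tfrac{3}{2}\delta+O(\delta^2)$ for $|\delta|\ll 1$, it rearranges to
$$\delta\,\Bigl[(2\pi)^{-3}\!\int p^2(\gamma-\gamma_0)\,dp\Bigr]\leq CT^4 a^3 + C\delta^2 T^{5/2},\qquad |\delta|<1.$$
Choosing $\delta=\pm T^{3/4}a^{3/2}=\pm(\sqrt{T}a)^{3/2}\ll 1$ and optimizing over the sign yields
$$\Bigl|(2\pi)^{-3}\!\int p^2(\gamma-\gamma_0)\,dp\Bigr|\leq CT^{13/4}a^{3/2}=(\sqrt{T}a)^{3/2}\cdot CT^{5/2}=o(T^{5/2}),$$
which closes the argument.

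The main obstacle is the improved free-energy comparison $\cF_0(\gamma)\leq\cF_0(\gamma_0)+O(T^4 a^3)$: the a priori bound \eqref{eq:3energy} only gives $O(\rho^2 a(\rho^{1/3}a)^{3/2})$, too weak to produce the $o(T^4)$ kinetic deficit we need. Getting down to the $(\rho^{1/3}a)^2$ level requires both the second-order expansion supplied by Proposition \ref{lm:gvg2} and the refined a priori information from Lemma \ref{lemub} that $\rho_0$ and $[\rho-\rho_{\rm fc}]_+$ are of order $\rho(\rho^{1/3}a)$; once in hand, the scaling inequality routinely converts it into the desired kinetic-energy bound.
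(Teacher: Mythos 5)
Your proof is correct and follows the same outline as the paper's: reduce the claim via Proposition \ref{lm:gvg2} and the identity \eqref{propertyofV} to showing $\rho_\gamma\int p^2\gamma\,dp-\rho_{\rm fc}\int p^2\gamma_0\,dp=o(T^4)$, control the density mismatch using $|\rho_\gamma-\rho_{\rm fc}|\leq C\rho(\rho^{1/3}a)$ (the paper pairs it with $\int p^2\gamma$ via \eqref{eq:psquaredgamma}, you with $\int p^2\gamma_0$ --- equally easy), and control the kinetic mismatch by feeding a free-energy comparison into the scaling inequality \eqref{eq:scalingfree} and optimizing over $\delta$. The one substantive difference is in the free-energy input: the paper simply reuses the cruder a priori bound \eqref{eq:3energy}, obtaining $\cF_0(\gamma)-\cF_0(\gamma_0)\leq CT^4a^3+C\rho_{\rm fc}^2a(\rho_{\rm fc}^{1/3}a)^{3/2}$ and compensating with the choice $|\delta|=(\rho_{\rm fc}^{1/3}a)^{5/4}$, whereas you rederive the comparison from \eqref{eq:upperbound}--\eqref{eq:lowerbound} using Proposition \ref{lm:gvg2} together with the critical-window information $\rho_0,[\rho-\rho_{\rm fc}]_+=O(\rho(\rho^{1/3}a))$ from Lemma \ref{lemub} and the hypothesis, getting the sharper $\cF_0(\gamma)-\cF_0(\gamma_0)\leq CT^4a^3$ and hence $|\delta|\sim(\sqrt{T}a)^{3/2}$. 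Your refinement yields a marginally tighter kinetic estimate, $O(T^{5/2}(\sqrt{T}a)^{3/2})$ versus the paper's $O(T^{5/2}(\rho_{\rm fc}^{1/3}a)^{5/4})$, though either suffices since both are $o(T^{5/2})$; and your Taylor expansion of $(1-\delta)^{-3/2}$ with both signs of $\delta$ achieves the same $O(\delta^2)$ cancellation as the paper's double application of \eqref{eq:scalingfree} to $\gamma$ and $\gamma_0$.
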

\begin{proof}
First notice that
\begin{equation}
\label{propertyofV}
{\frac {\zeta  \left( 3/2 \right) \zeta  \left( 5/2 \right) }{128
{\pi }^{3}}}\Delta\widehat{V}(0)T^4=\frac{\Delta\widehat{V}(0)}{3(2\pi)^{6}}\int p^2(e^{p^2/T} -1)^{-1}dp
\int (e^{p^2/T} -1)^{-1}dp, 
\end{equation}
so that according to \eqref{eq:2ndorderpq} it is enough to show that 
\begin{equation}
\label{tobebdd}
\frac13\Delta\widehat{V}(0)\rho_\gamma\int
  p^2\gamma(p)dp=\frac{1}{3(2\pi)^{3}}\Delta\widehat{V}(0)\int\gamma_0(p)dp\int p^2\gamma_0(p)dp+
o(T^4a^3).
\end{equation}
We have
\begin{equation*}
\begin{aligned}
\left|\rho_\gamma\int p^2\gamma(p)dp -\rho_{\rm{fc}} \right.&\left. \int p^2\gamma_0(p)dp\right|\leq \\
& |\rho_\gamma-\rho_{\rm{fc}}|\int p^2\gamma(p)dp+\rho_{\rm{fc}}\left|\int p^2(\gamma(p)-\gamma_0 (p))dp\right|.
\end{aligned}
\end{equation*}
The first statement in Lemma \ref{lemub}, combined with the assumptions, implies
\begin{equation*}
|\rho_\gamma-\rho_{\rm{fc}}|\leq \rho_0 +|\rho -\rho_{\rm{fc}}|\leq C \rho_{\rm{fc}}(\rho_{\rm{fc}}^{1/3}a).
\end{equation*}
This and \eqref{eq:psquaredgamma} allow us to bound the first contribution to the difference in \eqref{tobebdd}:
$$\Delta\widehat{V}(0)|\rho_\gamma-\rho_{\rm{fc}}|\int p^2\gamma(p)dp\leq C \rho_{\rm{fc}}^3a^4=o(T^4a^3). $$
To bound the other contribution, we use \eqref{eq:scalingfree}. We do the same for $\gamma_0$, but with $-\delta$. Putting these two bounds together yields
\begin{equation*}
(2\pi)^{-3}\delta\int p^2(\gamma(p)-\gamma_0(p))dp\leq \cF_0(\gamma)+\cF_0(\gamma_0)-
((1-\delta)^{-3/2}+(1+\delta)^{-3/2})\cF_0(\gamma_0).
\end{equation*}
Writing \eqref{eq:e0n} and \eqref{eq:3energy} in succession gives
$$\cF_0(\gamma_0)+CT^{-2}[\rho_{\rm{fc}}-\rho]_+^{3} \geq \cF_0(\gamma_{\mu(\rho)})\geq\cF_0(\gamma)-C\rho_{\rm{fc}}^2 a(\rho_{\rm{fc}}^{1/3}a)^{3/2},$$
which implies
$$CT^{5/2}\rho_{\rm{fc}}a^3  +C\rho_{\rm{fc}}^2 a(\rho_{\rm{fc}}^{1/3}a)^{3/2}\geq0.$$
Thus
\begin{equation*}
\begin{aligned}
\frac{\delta}{(2\pi)^3}\int p^2(\gamma(p)-\gamma_0(p))dp& \leq -((1-\delta)^{-\frac32}+(1+\delta)^{-\frac32}-2)\cF_0(\gamma_0)\\ &\quad+CT^{5/2}\rho_{\rm fc}a^3+C\rho_{\rm{fc}}^2 a(\rho_{\rm{fc}}^{1/3}a)^{3/2} \\
& \leq C\delta^2 T^{5/2}+CT^{5/2}\rho_{\rm fc}a^3+CT^{5/2}(\rho_{\rm{fc}}^{1/3}a)^{5/2}.
\end{aligned}
\end{equation*}
By choosing $|\delta|=(\rho_{\rm{fc}}^{1/3}a)^{5/4}$, we finally obtain
$$\int p^2(\gamma(p)-\gamma_0(p))dp\leq CT^{5/2}(\rho_{\rm{fc}}^{1/3}a)^{5/4},$$
which implies
$$\Delta\widehat{V}(0) \rho_{\rm{fc}}\left|\int p^2(\gamma(p)-\gamma_0 (p))dp\right|\leq C a^3 T^4 (\rho_{\rm{fc}}^{1/3}a)^{5/4} =o(T^4a^3 ).$$
This completes the proof.
\end{proof}

\subsection{Estimate on critical densities}
\label{furtherapriori}
In this section, we provide a lower bound on densities where a phase transition can occur. Together with the upper bound from Lemmas \ref{lemma189} and \ref{lemub}, we obtain the following a priori estimate.
\begin{proposition}[Estimate on critical densities]\label{prop:criticalregion}
There exists a constant $C_0$ such that for any minimizing triple:
\begin{enumerate}
\item $\rho_0\neq 0$ if $\rho>\rho_{\rm{fc}}+C_0\rho_{\rm{fc}}(\rho_{\rm{fc}}^{1/3} a)$;
\item $\rho_0= 0$ if $\rho<\rho_{\rm{fc}}-C_0\rho_{\rm{fc}}(\rho_{\rm{fc}}^{1/3} a)$.
\end{enumerate}
\end{proposition}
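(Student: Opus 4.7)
The first statement is already the content of the second bullet of Lemma \ref{lemub}. For the second, I plan to argue by contradiction: assume a minimizing triple $(\gamma,\alpha,\rho_0)$ with $\rho<\rho_{\rm fc}-C_0\rho_{\rm fc}(\rho_{\rm fc}^{1/3}a)$ has $\rho_0>0$, and compare its energy against the trial state $(\gamma_{\mu(\rho)},0,0)$, which already achieves $\rho_0=0$. Combining the lower bound \eqref{eq:lowerbound} sharpened by Proposition \ref{lm:gvg1} (exactly as in the derivation of \eqref{eq:boundcriticalregion} inside the proof of Lemma \ref{lemub}) with the upper bound coming from that trial state, and collecting interaction terms via the algebraic identity
\[
\tfrac12\rho^2-\rho_0\rho_\gamma-\tfrac12\rho_\gamma^2+\tfrac12\rho_0^2=\rho_0^2
\]
valid for $\rho=\rho_\gamma+\rho_0$, should yield the clean one-sided bound
\[
\cF_0(\gamma)-\cF_0(\gamma_{\mu(\rho)})\leq \widehat{V}(0)\rho_0^2+C\rho^2 a(\rho^{1/3}a)^2.
\]

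Next I would bound the same difference from below using that $\rho_\gamma=\rho-\rho_0<\rho\leq\rho_{\rm fc}$. From $\cF_0(\gamma)\geq\cF_0(\gamma_{\mu(\rho_\gamma)})$ and the strict-monotonicity estimate \eqref{eq:e0nn2} applied with $n_0=n_\gamma$,
\[
\cF_0(\gamma)-\cF_0(\gamma_{\mu(\rho)})\geq c_0\rho_0(\rho_{\rm fc}-\rho)^2/T^2,
\]
and chaining with the previous bound produces the key scalar inequality
\[
c_0\rho_0(\rho_{\rm fc}-\rho)^2/T^2\leq\widehat{V}(0)\rho_0^2+C\rho^2 a(\rho^{1/3}a)^2.
\]

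I would then plug in the hypothesis $(\rho_{\rm fc}-\rho)\geq C_0\rho_{\rm fc}^{4/3}a$, the temperature bound $T\leq D\rho^{2/3}$, $\widehat{V}(0)\leq Ca$, and the a priori bound $\rho_0\leq C_1\rho_{\rm fc}(\rho_{\rm fc}^{1/3}a)$ from Lemma \ref{lemub}. The left-hand side is then of order $C_0^2\rho_0\rho_{\rm fc}^{4/3}a^2$, while the quadratic term on the right is controlled by $\widehat{V}(0)\rho_0^2\leq CC_1 a\rho_0\rho_{\rm fc}(\rho_{\rm fc}^{1/3}a)$, and dividing through by $\rho_{\rm fc}^{4/3}a^2$ rearranges the inequality into $c_0 C_0^2\rho_0\leq (CC_1+o(1))\rho_{\rm fc}(\rho_{\rm fc}^{1/3}a)$. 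For $C_0$ large this is a strictly improved bound of the form $\rho_0\leq C\rho_{\rm fc}(\rho_{\rm fc}^{1/3}a)/C_0^2$, much smaller than the one in Lemma \ref{lemub}.

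The main obstacle is upgrading this ``$\rho_0$ small'' conclusion to the strict conclusion $\rho_0=0$ needed by the proposition. My plan is to bootstrap the argument by feeding the improved bound back into the estimate of $\widehat{V}(0)\rho_0^2$ (and, if required, by replacing the error $O((\rho^{1/3}a)^2)$ above with the sharper $O((\rho^{1/3}a)^{2+1/4})$ of Proposition \ref{lm:gvg2}), producing the quadratic contraction $K_{n+1}\leq CK_n^2/(c_0 C_0^2)$ on the coefficients in $\rho_0\leq K_n\rho_{\rm fc}(\rho_{\rm fc}^{1/3}a)$. Once $C_0$ is chosen large enough that this contraction is strict, the sequence $K_n$ converges to $0$; passing to the limit forces $\rho_0\leq 0$ and hence $\rho_0=0$, contradicting the assumption $\rho_0>0$.
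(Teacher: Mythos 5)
The first item is indeed covered by the existing a priori lemmas (note you also need Lemma \ref{lemma189} for densities above the range $\rho<\rho_{\rm fc}+C'\rho_{\rm fc}(\rho_{\rm fc}^{1/3}a)^{3/4}$; Lemma \ref{lemub} alone has this restriction). For the second item your chain of estimates is correct up to the scalar inequality
\[
c_0\,\rho_0(\rho_{\rm fc}-\rho_\gamma)^2/T^2\ \leq\ \widehat{V}(0)\rho_0^2\ +\ E,
\]
but the bootstrap cannot close, and this is a genuine gap. The additive error $E$ is \emph{not} of the form $O(\rho_0\cdot(\text{small}))$: after the cancellations, the dominant contribution to $E$ is the Laplacian term $\tfrac{1}{3(2\pi)^{3}}|\Delta\widehat V(0)|\,\rho_\gamma\int p^2\gamma\sim \rho_{\rm fc}^2a(\rho_{\rm fc}^{1/3}a)^2$, which is completely independent of $\rho_0$. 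Your proposed contraction $K_{n+1}\leq CK_n^2/(c_0C_0^2)$ silently drops this term; the true iteration is $K_{n+1}\leq CK_n^2/(c_0C_0^2)+E'/(c_0C_0^2)$ with $E'>0$ of order one, whose fixed point is a strictly positive multiple of $\rho_{\rm fc}(\rho_{\rm fc}^{1/3}a)$, not zero. Moreover, for the densities under consideration ($\rho_{\rm fc}-\rho>C_0\rho_{\rm fc}(\rho_{\rm fc}^{1/3}a)$ with $C_0$ large) Proposition~\ref{prop:impconvgamma} does not apply, so you cannot even cancel the Laplacian contribution between the trial state and the minimizer. No refinement of the energy comparison can turn ``$\rho_0$ small'' into ``$\rho_0=0$'' because every step of it carries a $\rho_0$-independent floor.

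The paper's proof therefore takes a structurally different route: it passes to the grand-canonical Euler--Lagrange equations. If $\rho_0>0$, the stationarity condition in $\rho_0$ forces $(2\pi)^{-3}\!\int\!\widehat V\gamma+\rho_\gamma\widehat V(0)-\mu\le0$; combined with the $\gamma$-equation this gives the \emph{pointwise} lower bound $\gamma(p)\geq[\exp((p^2+C a\rho(\rho^{1/3}a))/T)-1]^{-1}$, hence $\rho_\gamma\geq\rho_{\rm fc}(1-C(\rho^{1/3}a))$, which directly contradicts $\rho<\rho_{\rm fc}-C_0\rho_{\rm fc}(\rho_{\rm fc}^{1/3}a)$ for $C_0$ large. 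Because the canonical energy may fail to be convex in $\rho$, one cannot assume a matching chemical potential exists, and the paper spends Steps~2--4 of the proof on the convex-hull reduction and an explicit estimate of the grand-canonical density for the relevant $\mu$, plus Step~5 for $T>D\rho^{2/3}$. That use of a hard stationarity constraint, rather than an energy comparison, is exactly what is needed to pin $\rho_0$ down to zero.
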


\begin{proof}[Proof of the second statement. (The first follows from Lemmas \ref{lemma189} and \ref{lemub}.)]
\textit{Step 1.}
We will first consider temperatures $T\leq D\rho^{2/3}$, so that we can use the a priori estimates proved in the previous section, and comment on higher temperatures in the final step. 

We are interested in the canonical minimization problem \eqref{cmin}, but our strategy will be to use the grand canonical formulation of the problem.
This is not straightforward since the canonical energy is not necessarily convex in $\rho$ (it will indeed turn out not to be as we prove in Subsection \ref{sec:crittemp}).

As a first step, we simply assume the correspondence between canonical and grand canonical is obvious. That is, given $\rho$, there is a $\mu$ such that the canonical minimizing triple $(\gamma,\alpha,\rho_0)$ with $\rho_0+\rho_\gamma=\rho$ is a minimizer of the grand canonical functional \eqref{freeenergyfunctional} with that $\mu$ (which will not be the case in general.) In \cite{NapReuSol1-15}, it was shown that $\gamma$ satisfies the Euler--Lagrange equation
\begin{equation*}
p^2-\mu+\rho\widehat{V}(0)+\rho_0\widehat{V}(p)+(2\pi)^{-3}\widehat{V}\ast\gamma(p)-T\frac{\gamma+\frac12}{\beta}\ln \frac{\beta+\frac12}{\beta-\frac12}=0.
\end{equation*}
Since $\beta=\sqrt{(\gamma+\frac12)^2-\alpha^2}$, it follows that
\begin{equation*}
p^2-\mu+\rho\widehat{V}(0)+\rho_0\widehat{V}(p)+(2\pi)^{-3}\widehat{V}\ast\gamma(p)-T\ln \frac{\gamma+1}{\gamma}\geq 0,
\end{equation*}
which implies
\begin{equation}\label{eq:ELgammabound}
\gamma(p)\geq\left[ \exp\left(\frac{p^2-\mu+\rho\widehat{V}(0)+\rho_0\widehat{V}(p)+(2\pi)^{-3}\widehat{V}\ast\gamma(p)}{T}\right)-1\right]^{-1}.
\end{equation}
The same argument as in \eqref{eq:lowerbound} implies that 
\[
\cF(\gamma,\alpha,\rho_0)\geq \cF(\gamma,0,0)+\rho_0\left((2\pi)^{-3}\int\widehat{V}(p)\gamma(p)dp+\rho_\gamma\widehat{V}(0)-\mu\right).
\]
Thus, if the minimizer has $\rho_0> 0$, then we need to have 
$$(2\pi)^{-3}\int\widehat{V}(p)\gamma(p)dp+\rho_\gamma\widehat{V}(0)-\mu\leq 0.$$
Using this in \eqref{eq:ELgammabound}, we obtain
\begin{equation}
\label{somest1209}
\gamma(p)\geq \left[ \exp\left(\frac{p^2+ C a\rho(\rho^{1/3} a)}{T}\right)-1\right]^{-1},
\end{equation}
where we also used the first statement in Lemma \ref{lemub} and Proposition \ref{lm:gvg1}. Given the claim we are trying to prove, we can assume $\rho\leq\rho_{\rm fc}$, so that, using a change of variables and the fact that $\rho_0>0$, we have 
\[
\rho_\gamma \geq T^{3/2}\int \left[ \exp\left(p^2+ C(\rho^{1/3} a)^{2}\right)-1\right]^{-1}dp \geq \rho_{\rm{fc}}(1-C(\rho^{1/3} a)),
\]
where we used \eqref{eq:densitychange}. We conclude that  there exists a constant $C_1$ such that $\rho_0= 0$ for any minimizing triple with $\rho<\rho_{\rm{fc}}-C_1\rho_{\rm{fc}}(\rho_{\rm{fc}}^{1/3} a)$ satisfying the extra assumption that there is a $\mu$ that will give the same minimizer of the grand canonical problem. This will, however, not be the case in general because the canonical energy may not be convex in $\rho$.

\textit{Step 2.}  Given a $\rho$, there are $\rho_\pm$ such that $\rho_-\leq\rho\leq \rho_+$ and such that the
convex hull of $F^{\rm can}$ is linear on the interval
$[\rho_-,\rho_+]$. 
To see this, we first use that $\rho_0=0$ for small $\rho$, as established in \cite{NapReuSol1-15}. Together with the fact that the canonical
functional with $\rho_0=0$ is strictly convex, this implies that the canonical energy is convex for small $\rho$. The simple lower bound
\[
F^{\rm can}(T,\rho)\geq -CT^{5/2}-\frac12\rho_0\int\widehat{V}+\frac12\widehat{V}(0)\rho^2
\]
then confirms the existence of $\rho_-$ and $\rho_+$.  

The assumption made in the previous step will hold  for $\rho_{\pm}$, i.e.\ $\rho_+$ and $\rho_-$ correspond to a minimum for the
grand canonical functional for some (shared) $\mu$ that is the slope of $F^{\rm can}$ on $[\rho_-,\rho_+]$, and the conclusion from step 1 above holds for these densities. Since $\rho_-\leq \rho$, this implies
that if we choose $C_0\geq C_1$ then $\rho_{0-}=0$ for the total
density $\rho_-$.

If the density $\rho_+$ also satisfies a corresponding upper bound, then $\rho_{0+}=0$ as well. In that case, as the canonical
functional with $\rho_0=0$ is strictly convex, we conclude that in the
interval $[\rho_-,\rho_+]$ we must have $\rho_0=0$ and hence
$\rho_-=\rho_+=\rho$. 

Let $\mu$ be the slope of the convex hull of $F^{\rm can}$ on $[\rho_-,\rho_+]$ (where it is linear). By the $\rho_0$-Euler--Lagrange equation for the grand canonical functional \eqref{freeenergyfunctional}, it follows that
\[
\mu\leq \rho_-\widehat{V}(0) +(2\pi)^{-3}\int\widehat{V}\gamma_- \leq
2\rho_-\widehat{V}(0) \leq 2\rho_{\rm fc}(1-C_0\rho_{\rm
  fc}^{1/3}a)\widehat{V}(0).
\]
The aim is to prove that $\rho_+<\rho_{\rm{fc}}-C_1\rho_{\rm{fc}}(\rho_{\rm{fc}}^{1/3} a)$ by proving an upper bound on any density
minimizing the grand canonical functional with $\mu$ satisfying the bound above. 
As the minimizing density increases with $\mu$, we can assume that 
\begin{equation}\label{eq:mu}
\mu=2\rho_{\rm fc}(1-C_0\rho_{\rm
  fc}^{1/3}a)\widehat{V}(0).
\end{equation}
Recall that $C_0$ is a constant that we will choose large enough to get the proof to work. Our choice for $C_0$ will be universal, so that we can make the a priori assumption that 
$C_0\rho_{\rm fc}^{1/3}a\leq1$.

\textit{Step 3.} Let $\mu$ be as in \eqref{eq:mu}. We will now first show the a-priori bound $\rho\leq C\rho_{\rm
  fc}$.  From \eqref{someequ}, the definition of $\cF^{\rm sim}$ \eqref{eq:simplifiedfunctional} and the definition of $\cF^{\rm S}$ \eqref{simplfunc}, we find that
\begin{eqnarray}
  \cF(\gamma,\alpha,\rho_0)&=& \cF^{\rm S}(\gamma,\alpha,\rho_0)-\mu\rho\nonumber\\
  &&+\frac12\widehat{V}(0)\rho_\gamma^2
  +\widehat{V}(0)\rho_0\rho_\gamma-(\rho_0+t_0)(2\pi)^{-3}\int\widehat{Vw}(p)\gamma(p)dp
  \nonumber\\&&+(2\pi)^{-3} \rho_0\int\widehat{V}(p)\gamma(p)dp-4\pi a(\rho_0+t_0)^2+8\pi a(\rho_0+t_0)\rho_0
  \nonumber\\
  &&+\frac12(2\pi)^{-6}\iint\gamma(p)\widehat{V}(p-q)\gamma(q)dpdq\label{eq:1}\\&&+
  \frac12(2\pi)^{-6}\iint(\alpha(p)-\alpha_0(p))\widehat{V}(p-q)(\alpha(p)-\alpha_0(p))dpdq.\nonumber
\end{eqnarray}
We now choose $0\geq t_0\geq -\rho_0$. If $8\pi a \rho_0\leq 4\rho_{\rm fc}\widehat{V}(0)$ 
we choose $t_0=0$. Note that in this case we already have an upper bound $\rho_0\leq C\rho_{\rm
  fc}$, and the argument below will give the desired result for $\rho_\gamma$. 
Otherwise we choose
$$
8\pi a(t_0+\rho_0)=4 \rho_{\rm fc} \widehat{V}(0)>2\mu
$$
by the assumption \eqref{eq:mu} on $\mu$. 
We now give a lower bound by ignoring the last two integrals, the
second term in the second line, and the first term in the third
line in \eqref{eq:1}. Finally we minimize $\cF^{\rm s}$ using Lemma 22 with $\delta=0$.

We first consider the last integral in the expression for the minimum of
$\cF^{\rm s}$. We know from the assumptions made at the start of Sections \ref{sec:previousresults} and \ref{mr} that
\begin{equation}\label{eq:2}
|\widehat{Vw}(p)|\leq \widehat{Vw}(0)=8\pi a,\quad \widehat{V w}(p)\geq 8\pi a-C a^3 p^2,
\end{equation}
where the first inequality follows since $Vw$ is positive. The only negative contribution to the 
last integral therefore comes from the region $|p|>C/a$. For such $p$ we have that 
\[
(\rho_0+t_0)|\widehat{Vw}(p)|/p^2\leq C\rho_{\rm fc}a^3\ll 1. 
\]
Hence the last integral can be estimated below by 
$$
-C(\rho_0+t_0)^3\int_{|p|>1/a}\frac{|\widehat{Vw}(p)|^3}{p^4}\geq -C\rho_{\rm fc}^3a^4=
-C\rho_{\rm fc}^2a (\rho_{\rm fc}a^3).
$$
This argument will again be used in the next step to bound this integral.

The first integral with $G$ can be bounded below by replacing $G$ with a lower bound. 
We use \eqref{eq:2} again:
\begin{eqnarray*}
  G=T^{-1}\sqrt{p^4+2(\rho_0+t_0)p^2\widehat{V w}(p)}\geq  T^{-1}p^2\sqrt{1-C\rho_{\rm fc}a^3}.
\end{eqnarray*}
Altogether, we arrive at a lower bound
\begin{eqnarray*}
\cF(\gamma,\alpha,\rho_0)&\geq& \cF_0(\gamma_{0})(1+C\rho_{\rm fc}a^3)
  -\mu\rho_\gamma
  -\mu\rho_0\\&&+\frac12\widehat{V}(0)\rho_\gamma^2
  -Ca\rho_{\rm fc}\rho_\gamma-Ca\rho_{\rm fc}^2+4\rho_{\rm fc}\widehat{V}(0)\rho_0-C\rho_{\rm fc}^3a^4\\&\geq&
  \cF_0(\gamma_{0})-C\rho_{\rm fc}^{8/3}a^3
  -2\rho_{\rm fc}\widehat{V}(0)\rho_\gamma+\frac12\widehat{V}(0)\rho_\gamma^2\\&&-Ca\rho_{\rm fc}\rho_\gamma
  +2\rho_{\rm fc}\widehat{V}(0)\rho_0-Ca\rho_{\rm fc}^2-C\rho_{\rm fc}^3a^4,
\end{eqnarray*}
where $\gamma_{0}$ is the minimizer of the free gas functional. 
By inserting $\gamma_{0}$  and 
$\alpha=\rho_0=0$ into $\cF$ we also get the upper bound
\[
    \inf\cF\leq\cF_0(\gamma_{0})-\mu\rho_{\rm fc}+\widehat{V}(0)\rho_{\rm fc}^2\leq
  \cF_0(\gamma_{0})-\widehat{V}(0)\rho_{\rm fc}^2+2D\rho_{\rm fc}^2\widehat{V}(0)(\rho_{\rm
    fc}^{1/3}a).
\]
Together these upper and lower bounds imply that minimizers $\rho_0,\rho_\gamma\leq C\rho_{\rm fc}$, which gives the desired a priori upper bound on $\rho$.

\textit{Step 4.} 
To finish the argument, we need to make more refined choices for both the upper and the lower bound. 
As an upper bound, we will use the minimum of the expression
$$
\cF_0(\gamma)-\mu\rho_\gamma+\widehat{V}(0)\rho_\gamma^2.
$$
The minimizer will be the free gas minimizer $\gamma_{\delta_0}$
corresponding to a positive chemical potential $\delta_0>0$,
determined such that $\rho_{\gamma_{\delta_0}}=(2\pi)^{-3}\int\gamma_{\delta_0}$
also minimizes
$$
-\mu\rho_\gamma-\delta_0\rho_\gamma+\widehat{V}(0)\rho_\gamma^2,
$$
i.e.
$$
\mu+\delta_0=2\widehat{V}(0)\rho_{\gamma_{\delta_0}}.
$$
Let us write 
$$
\delta_0=\kappa^2\rho_{\rm fc}^{4/3}a^2
$$
for some $\kappa$ that we will now determine. We know from \eqref{eq:densitychange}
that the free gas minimizer $\gamma_{\delta_0}$ will
have 
$$
\rho_{\gamma_{\delta_0}}=\rho_{\rm fc}(1-C_2\kappa(\rho_{\rm fc}^{1/3}a+o(\rho_{\rm fc}^{1/3}a)))
$$
for an appropriate constant $C_2>0$.
Hence, the equation for $\kappa$ is 
$$
-2C_0\widehat{V}(0)\rho_{\rm fc}(\rho_{\rm fc}^{1/3}a)+\kappa^2\rho_{\rm fc}^{4/3}a^2
=-2C_2\widehat{V}(0)\rho_{\rm fc}\kappa(\rho_{\rm fc}^{1/3}a+o(\rho_{\rm fc}^{1/3}a)),
$$
that is, 
\begin{equation}\label{eq:kappa}
  \kappa^2+2C_2\kappa -2 C_0=o(1),
\end{equation}
where $C_0,C_2>0$. 

We can use the a priori bounds in Proposition \ref{lm:gvg1}, and since we know that $\rho\leq
C\rho_{\rm fc}$, we can express the error terms with $\rho$ replaced by
$\rho_{\rm fc}$. We then go back to the expression \eqref{eq:1} to get
an improved lower bound.  We set $t_0=0$ and only ignore the last
double integral. We arrive at 
\begin{eqnarray*}
  \cF(\gamma,\alpha,\rho_{0})&\geq& \cF^{\rm S}(\gamma,\alpha,\rho_{0})
+ (2\widehat{V}(0)-8\pi a)\rho_{0}\rho_{\gamma}-\mu\rho\\&&
  +4\pi a \rho_{0}^2+\rho_{\gamma}^2\widehat{V}(0)-C\rho_{\rm fc}^2a(\rho_{\rm fc}^{1/3}a)^{3/2},
\end{eqnarray*}
and apply Lemma \ref{prop:simplfunctsol} with 
$$
\delta=\delta_0+(2\widehat{V}(0)-8\pi a)\rho_{0}.
$$
The expression for $G$ will then satisfy
\begin{eqnarray*}
  G&=&T^{-1}\sqrt{(p^2+\delta_0+
    (2\widehat{V}(0)-8\pi a+\widehat{Vw}(p))\rho_{0})^2-\rho_{0}^2\widehat{Vw}(p)^2}\\
  &\geq& 
  T^{-1}\sqrt{((1-C\rho_{\rm fc}a^3)p^2+\delta_0)+2\rho_{0}\widehat{V}(0))^2-(8\pi a \rho_{0})^2}\\
  &\geq&T^{-1}\sqrt{((1-C\rho_{\rm fc}a^3)p^2+\delta_0)^2+4((1-C\rho_{\rm fc}a^3)p^2+\delta_0)
    \rho_{0}\widehat{V}(0)}.
\end{eqnarray*}
If we insert into the lower bound of Lemma \ref{prop:simplfunctsol}
and bound the $G$-integral using Lemma~\ref{lm:intexp} below, we obtain
\begin{eqnarray*}
  \cF(\gamma,\alpha,\rho_{0})&\geq& \cF_0(\gamma_{\delta_0})+\delta_0\rho_{\gamma_{\delta_0}}+2\rho_{\gamma_{\delta_0}}\widehat{V}(0)\rho_{0}
  -C\rho_{\rm fc}^{2/3}(\rho_{0}a)^{3/2}\\&&-\delta_0\rho_{\gamma}
  -\mu\rho
  +4\pi a \rho_{0}^2+\rho_{\gamma}^2\widehat{V}(0)-C\rho_{\rm fc}^2a(\rho_{\rm fc}^{1/3}a)^{3/2}\\
  &\geq&\cF_0(\gamma_{\delta_0})-\mu\rho_{\gamma_{\delta_0}}+\widehat{V}(0)\rho_{\gamma_{\delta_0}}^2+
  (2\rho_{\gamma_{\delta_0}}\widehat{V}(0)-\mu)\rho_{0}
  +4\pi a \rho_{0}^2
  \\&&+\widehat{V}(0)(\rho_\gamma-\rho_{\gamma_{\delta_0}})^2-C\rho_{\rm fc}^{2/3}(\rho_{0}a)^{3/2}
  -C\rho_{\rm fc}^2a(\rho_{\rm fc}^{1/3}a)^{3/2}\\
  &=&\cF_0(\gamma_{\delta_0})-\mu\rho_{\gamma_{\delta_0}}+\widehat{V}(0)\rho_{\gamma_{\delta_0}}^2\\&&+
  2(C_0-C_2\kappa)\rho_{\rm fc}(\rho_{\rm fc}^{1/3}a)\rho_0\widehat{V}(0)+4\pi a \rho_{0}^2-C\rho_{\rm fc}^{2/3}(\rho_{0}a)^{3/2}
  \\&&+\widehat{V}(0)(\rho_\gamma-\rho_{\gamma_{\delta_0}})^2-C\rho_{\rm fc}^2a(\rho_{\rm fc}^{1/3}a)^{3/2} 
  \\ &\geq & \cF_0(\gamma_{\delta_0})-\mu\rho_{\gamma_{\delta_0}}+\widehat{V}(0)\rho_{\gamma_{\delta_0}}^2\\&&+
  2(C_0-C_2\kappa)\rho_{\rm fc}(\rho_{\rm fc}^{1/3}a)\rho_0\widehat{V}(0)+2\pi a \rho_{0}^2-C\rho_{\rm fc}^{2}(\rho_{\rm fc}^{1/3} a)^{2}
  \\&&+\widehat{V}(0)(\rho_\gamma-\rho_{\gamma_{\delta_0}})^2-C\rho_{\rm fc}^2a(\rho_{\rm fc}^{1/3}a)^{3/2}. 
\end{eqnarray*}
Thus we conclude, by choosing $C_0$ large enough (such that $C_0-C_2\kappa$ will be positive), that 
$$
\rho_\gamma\leq \rho_{\gamma_{\delta_0}}+C\rho_{\rm fc}(\rho_{\rm fc}^{1/3}a)^{3/4},\quad 
\rho_0\leq C\rho_{\rm fc}(\rho_{\rm fc}^{1/3}a)^{3/4}.
$$

We can now apply Proposition \ref{lm:gvg2} and also the bound \eqref{eq:psquaredgamma} to improve the last error term in the lines above. We consider
the terms with the Laplacian in \eqref{eq:2ndorderpq} and the second displayed estimate in Proposition \ref{lm:gvg2}
as error terms, which lead to an error of order $\rho_{\rm fc}^2a(\rho_{\rm fc}^{1/3}a)^2$.
We conclude that for $C_0$ large enough:
$$
\rho_\gamma\leq \rho_{\gamma_{\delta_0}}+C\rho_{\rm fc}(\rho_{\rm fc}^{1/3}a),\quad 
\rho_0\leq C\rho_{\rm fc}(\rho_{\rm fc}^{1/3}a).
$$
We therefore find that 
$$
\rho\leq \rho_{\rm fc}(1-C_0\kappa(\rho_{\rm fc}^{1/3}a+o(\rho_{\rm fc}^{1/3}a)))+C\rho_{\rm fc}(\rho_{\rm fc}^{1/3}a).
$$
By the expression for $\kappa$ it is therefore clear that by choosing $C_0$ large enough we obtain
that 
$$
\rho\leq \rho_{\rm fc}(1-C_1(\rho_{\rm fc}^{1/3}a))
$$
as desired. The result obtained in step 1 and the reasoning in step 2 then finish the proof for temperatures $T\leq D\rho^{2/3}$.

\textit{Step 5.} For $T>D\rho^{2/3}$, or equivalently, $\rho< D^{-3/2}T^{3/2}$, first note that the reasoning in step 1 without reference to Lemma \ref{lemub} and Proposition \ref{lm:gvg1} leads to an equivalent of \eqref{somest1209} and the conclusion that there exists a constant $C_1$ such that $\rho_0= 0$ for any minimizing triple with $\rho\leq\rho_{\rm{fc}}-C_1\rho_{\rm{fc}}(\rho_{\rm{fc}}^{1/3} a)^{1/2}$ satisfying the extra assumption that there is a $\mu$ that will give the same minimizer of the grand canonical problem. This is certainly sufficient for $\rho< D^{-3/2}T^{3/2}$, and we again try to employ the reasoning of step 2 to avoid the extra assumption. Luckily, it is immediately clear that $\rho_{0+}=0$: either $\rho_+\leq \rho_{\rm{fc}}-C_1\rho_{\rm{fc}}(\rho_{\rm{fc}}^{1/3} a)^{1/2}$, so that $\rho_{0+}=0$, or the interval $[\rho_-,\rho_+]$ contains a density $\tilde{\rho}> D^{-3/2}T^{3/2}$, in which case the steps above imply that $\tilde{\rho}_{0+}=\rho_{0+}=0$.

\end{proof}

We have used the following lemma, which is proved in Appendix \ref{app:intexp}.
\begin{lemma}
\label{lm:intexp}
For $0\leq \delta_0, b\leq 1$ there exists a constant $C>0$ such that
\[
\ba
&\left|\int\ln\left(1-e^{-\sqrt{(p^2+\delta_0)^2+2(p^2+\delta_0)b}}\right)dp\right.\\
&\left.-\int\ln(1-e^{-(p^2+\delta_0)}dp-b\int(e^{p^2+\delta_0}-1)^{-1}dp\ \right|\leq Cb^{3/2}.
\ea
\]
\end{lemma}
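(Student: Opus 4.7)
The plan is to recognise the left-hand side as the second-order Taylor remainder of the function
\[
I(t) := \int \ln\!\bigl(1 - e^{-E(t)}\bigr)\,dp, \qquad E(t) := \sqrt{E_0^{\,2} + 2E_0 t}, \qquad E_0 := p^{2} + \delta_0,
\]
expanded at $t = 0$. A direct differentiation under the integral gives $I'(t) = \int \frac{E_0}{E(t)\,(e^{E(t)} - 1)}\,dp$, so in particular $I'(0) = \int (e^{E_0} - 1)^{-1}\,dp$. The three integrals appearing in the statement of the lemma are thus exactly $I(b)$, $I(0)$ and $b\,I'(0)$, and by the integral form of Taylor's theorem it suffices to establish
\[
\bigl|I(b) - I(0) - b\,I'(0)\bigr| = \left|\int_0^b (b - t)\,I''(t)\,dt\right| \leq C b^{3/2},
\]
which will follow from the key pointwise-in-$t$ bound $|I''(t)| \leq C t^{-1/2}$ for $t \in (0, b]$.

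To prove this bound, differentiate once more under the integral sign to obtain
\[
I''(t) = -\int \frac{E_0^{\,2}\bigl[(e^{E} - 1) + E\, e^{E}\bigr]}{E^{3}\,(e^{E} - 1)^{2}}\,dp, \qquad E = E(t).
\]
The contribution from $\{E \geq 1\}$ is $O(1)$, and hence bounded by $Ct^{-1/2}$, since the integrand decays exponentially in $|p|$ there. For $E \leq 1$ the integrand behaves like $2 E_0^{\,2}/E^{4}$, and we split the region of integration according to whether $E_0 \geq t$ or $E_0 < t$. Using $E^{2} = E_0 (E_0 + 2t)$, in the first case $E_0 \leq E \leq \sqrt{3}\,E_0$ so the integrand is bounded by $C/E_0^{\,2}$, and the substitution $s = r^{2} + \delta_0$ in spherical coordinates yields
\[
\int_{t \leq E_0 \leq 1} \frac{dp}{E_0^{\,2}} \leq 2\pi \int_{\max(t,\delta_0)}^{1} \frac{\sqrt{s - \delta_0}}{s^{2}}\,ds \leq 2\pi \int_{t}^{\infty} s^{-3/2}\,ds = \frac{4\pi}{\sqrt{t}},
\]
uniformly in $\delta_0 \in [0,1]$. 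In the second case $E^{2} \geq 2 E_0 t$ forces $E_0^{\,2}/E^{4} \leq 1/(4 t^{2})$, while the set $\{E_0 < t\}$ has volume at most $\tfrac{4\pi}{3}\,t^{3/2}$, so its contribution is also $O(t^{-1/2})$.

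Combining these estimates gives $|I''(t)| \leq C t^{-1/2}$, and feeding this into the remainder formula produces
\[
|I(b) - I(0) - b\,I'(0)| \leq C \int_0^b (b - t)\, t^{-1/2}\,dt = \tfrac{4C}{3}\,b^{3/2},
\]
which is the claim. The main technical obstacle is the uniform control of $\int_{t \leq E_0 \leq 1} E_0^{-2}\,dp$ as $\delta_0$ and $t$ vary independently in $[0,1]$: the near-origin blow-up of $E_0^{-2}$ (when $\delta_0$ is small) is precisely tamed by the cut-off $E_0 \geq t$, but only after the $s = r^{2} + \delta_0$ substitution does the clean $t^{-1/2}$ dependence become transparent; everything else in the argument is a routine application of Taylor's theorem.
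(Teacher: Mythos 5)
Your proof is correct, and it takes a genuinely different route from the paper's. The paper first rescales $p\mapsto\sqrt{b}\,p$, which pulls out the overall factor $b^{3/2}$ and turns the problem into showing that a $b$- and $\delta_0/b$-dependent integral stays bounded; it then splits into two regimes. For $\delta_0/b\leq 1$ it appeals to the Monotone Convergence Theorem to identify the $b\to 0$ limit of the rescaled bracket, arguing uniformity of the convergence. For $\delta_0/b\geq 1$ it goes back to the original variables, applies the Lagrange form of Taylor's theorem, claims (without detailed proof) that $|I''(t)|$ is maximal at $t=0$, and computes $|I''(0)|\leq C\delta_0^{-1/2}$, so the remainder is $\leq Cb^2\delta_0^{-1/2}\leq Cb^{3/2}$. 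Your approach instead uses the integral form of the remainder and establishes the single uniform bound $|I''(t)|\leq Ct^{-1/2}$ for all $\delta_0\in[0,1]$, via the decomposition $\{E\geq 1\}$, $\{E\leq 1,\,E_0\geq t\}$, $\{E\leq 1,\,E_0<t\}$. This sidesteps both the rescaling/MCT step and the monotonicity-of-$|I''|$ claim, and handles the full range of $\delta_0$ in one stroke; the small price is that you need the spherical-shell computation $\int_{t\leq E_0\leq 1}E_0^{-2}\,dp\leq Ct^{-1/2}$, which you carry out correctly. (It is worth noting that your $I''(0)$ is exactly the integral $-\int\frac{(p^2+\delta_0+1)e^{p^2+\delta_0}-1}{(p^2+\delta_0)(e^{p^2+\delta_0}-1)^2}\,dp$ appearing in the paper, so the two arguments overlap in the regime $\delta_0\geq b$ but diverge in how they control the opposite regime.) Your version is arguably cleaner and a little more self-contained; the paper's version makes the origin of the $b^{3/2}$ scaling more visible through the change of variables.
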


\subsection{Preliminary approximations}
\label{sec:prelapprox} 
The previous sections have provided all the a priori knowledge we will need. In this section, we would like to approximate the integrals in Corollary \ref{energyexpressions} in different ways. The proof of all lemmas can be found in Appendix \ref{app:intexp}. 

We will be working with the general assumption \eqref{assumpt0} on $t_0$. We will also write $\delta=d\phi^2$, where $\phi$ will be chosen to be $\sqrt{\rho_0a}$ or $Ta$ in later sections. Note that the dilute limit corresponds to $\phi\to0$, so this is what we will assume throughout the section. To keep track of the different limits, we describe $\phi^2/T\ll 1$ as `moderate temperatures', and $\phi^2/T\geq O(1)$ as `low temperatures'. Also, a statement like `$\phi a\ll1$' means $\phi a\leq C$ for some constant small enough.

We start by analysing the first contribution to the density in \eqref{eq:rhogammasimplified}.

\begin{lemma}[$\rho_\gamma^{(1)}$  approximation]
\label{lem:rhogamma2contribution}
Let $\sigma_0\geq0$ and $d_0\geq0$ be fixed constants, and let $-1\leq\theta\leq0$, $0\leq d\leq d_0$, $0\leq\sigma\leq\sigma_0$, $\phi>0$ and $0\leq\rho_0\leq\rho$. Assume $\rho_0a/\phi^2=\sigma/8\pi$ and let $\delta=d\phi^2$ and $t_0=\theta\rho_0$. For $\phi a\ll1$, we have
\[
\rho_\gamma^{(1)}=\phi^3\frac{1}{2}I_3(d,\sigma,\theta)+o\left(\phi^3\right).
\]
The error is depends only on $\sigma_0$ and $d_0$.
\end{lemma}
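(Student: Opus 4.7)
The plan is a dominated-convergence argument after extracting the scaling factor $\phi^{3}/2$ from $\rho_\gamma^{(1)}$. Write $r(q):=\widehat{Vw}(q)/(8\pi a)$ and
\[
F(p;r):=\frac{p^{2}+d+(1+\theta)\sigma r}{\sqrt{(p^{2}+d)^{2}+2(p^{2}+d)(1+\theta)\sigma r}}-1,
\]
so that by Corollary \ref{energyexpressions} the claim reduces to showing
\[
J(\phi):=(2\pi)^{-3}\!\int\!\bigl[F(p;r(\phi p))-F(p;1)\bigr]\,dp\longrightarrow 0
\]
as $\phi a\to 0$, uniformly in $d\in[0,d_{0}]$, $\sigma\in[0,\sigma_{0}]$, $\theta\in[-1,0]$, since $I_{3}(d,\sigma,\theta)$ is exactly $(2\pi)^{-3}\!\int\! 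F(p;1)\,dp$.

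First I would establish pointwise convergence of the integrand. Because $V$ and $w$ are radial, so is $\widehat{Vw}$, hence $\nabla\widehat{Vw}(0)=0$; combined with $\widehat{Vw}(0)=8\pi a$ and the bound $\|\partial^{2}\widehat{Vw}\|_{\infty}\leq Ca^{3}$ coming from \eqref{sameest}, a second-order Taylor expansion yields
\[
|r(\phi p)-1|\leq \min\!\bigl(\,2,\;C(\phi|p|a)^{2}\bigr),
\]
which goes to $0$ for every fixed $p$ in the dilute limit. Since $F(p;\,\cdot\,)$ is smooth in $r\in[0,1]$ for $p$ in any compact set (with controllable singularity at $p=0$ when $d=0$), this gives $F(p;r(\phi p))\to F(p;1)$ pointwise.

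Next I would construct an integrable dominating function uniform in $(d,\sigma,\theta)$. Using the algebraic identity $(A+c)^{2}-A(A+2c)=c^{2}$ with $A=p^{2}+d\geq 0$ and $c=(1+\theta)\sigma r\in[0,\sigma_{0}]$, one obtains
\[
F(p;r)=\frac{c^{2}}{\bigl(A+c+\sqrt{A^{2}+2Ac}\bigr)\sqrt{A^{2}+2Ac}},
\]
and hence the two complementary bounds $|F(p;r)|\leq c^{2}/A^{2}$ (using $\sqrt{A^{2}+2Ac}\geq A$) and $|F(p;r)|\leq c/(2\sqrt{2}\,A)$ (using $A+c\geq 2\sqrt{Ac}$). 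Taking the minimum yields
\[
|F(p;r)|\leq g(p):=C\min\!\Bigl(\tfrac{\sigma_{0}^{2}}{(p^{2}+d)^{2}},\tfrac{\sigma_{0}}{p^{2}+d}\Bigr),
\]
and one checks that $\int_{\mathbb{R}^{3}}g\,dp$ is finite and bounded by a constant depending only on $\sigma_{0}$ (the worst case $d=0$ produces at most a $1/p^{2}$ singularity at the origin, which is integrable in three dimensions against $p^{2}\,dp$).

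Finally I would conclude by the standard split. Given $\varepsilon>0$, pick $R=R(\varepsilon,\sigma_{0},d_{0})$ with $\int_{|p|>R}g\,dp<\varepsilon$; then the tail contribution to $|J(\phi)|$ is at most $2\varepsilon$ uniformly in $\phi$ and in all parameters. On the ball $|p|\leq R$ the convergence in Step 1 is uniform, $\sup_{|p|\leq R}|r(\phi p)-1|\leq C(\phi R a)^{2}\to 0$, and $F$ is uniformly continuous in $r$ on $[0,1]$ away from $p=0$, so the inner integral vanishes as $\phi a\to 0$; near $p=0$, the same dominating function $g$ controls the contribution. The main technical obstacle is precisely this potential singularity at the origin when $d=0$: care is needed because $F(p;r)$ behaves like $|p|^{-1}$ there, but the three-dimensional volume factor $p^{2}\,dp$ redeems integrability, and the same trick makes the dominating function uniform across the entire parameter range.
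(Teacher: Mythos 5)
Your strategy—extract $\phi^3/2$, prove pointwise convergence of the integrand, and dominate uniformly to invoke dominated convergence—is the same strategy the paper uses, and the structure of your argument (tail cut at $R(\varepsilon,\sigma_0,d_0)$ plus uniform convergence on compacts) is a clean way to get the claimed uniformity in the parameters. However, there is one unjustified assumption that needs repair: you work throughout with $r(\phi p)=\widehat{Vw}(\phi p)/(8\pi a)\in[0,1]$, and both of your dominating bounds rely on $c=(1+\theta)\sigma r\geq 0$ (the step $\sqrt{A^2+2Ac}\geq A$ and the AM--GM step $A+c\geq 2\sqrt{Ac}$ both fail for $c<0$). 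But $\widehat{Vw}$ is \emph{not} known to be pointwise nonnegative; from $Vw\geq 0$ the paper deduces only $|\widehat{Vw}(p)|\leq \widehat{Vw}(0)=8\pi a$, so $r$ ranges over $[-1,1]$. When $c<0$ the denominator factor $\sqrt{A^2+2Ac}=\sqrt{A}\sqrt{A+2c}$ can degenerate, and $|F(p;r)|$ is \emph{not} bounded by $c^2/A^2$ (indeed $F\to\infty$ as $A+2c\to 0^+$), so your $g(p)$ is not a valid dominating function over the whole parameter range.

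The repair is exactly the split the paper uses. For $|p|\leq\sqrt{3(1+\theta)\sigma}$ (a $\phi$-independent ball contained in $|p|\leq\sqrt{3\sigma_0}$), the Taylor bound $r(\phi p)\geq 1-C(\phi|p|a)^2$ shows that $r(\phi p)>0$ once $\phi a$ is small enough; there you are back in the regime $c\geq 0$ and your bounds apply verbatim. For $|p|>\sqrt{3(1+\theta)\sigma}$ one has $A=p^2+d>3(1+\theta)\sigma\geq 3|c|$ for every $t\in[-1,1]$, whence $A+2c\geq A/3$ and $A+c\geq A/2$, and your algebraic identity then gives $|F(p;r)|\leq 2\sqrt{3}\,c^2/A^2\leq C\sigma_0^2/(p^2+d)^2$ uniformly over $r\in[-1,1]$. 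With this two-region dominating function—which is still integrable uniformly in $d\in[0,d_0]$, with the $p=0$ singularity controlled exactly as you observe—the rest of your argument goes through unchanged.
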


For the other contribution to $\rho_\gamma$, we need the following two results.

\begin{lemma}[$\rho_\gamma^{(2)}$ expansion for moderate temperatures]
\label{lem:rho_gamma1moderate}
Let $\sigma_0\geq0$ and $d_0\geq0$ be fixed constants, and let $-1\leq\theta\leq0$, $0\leq d\leq d_0$, $0\leq\sigma\leq\sigma_0$, $\phi>0$ and $0\leq\rho_0\leq\rho$. Assume $\rho_0a/\phi^2=\sigma/8\pi$ and let $\delta=d\phi^2$ and $t_0=\theta\rho_0$. For $\phi^2/T\ll1$, we have
\[
\ba 
\rho_\gamma^{(2)}&=T^{3/2}I_4(d,\sigma,\theta,\phi/\sqrt{T})+O\left(T^{5/2}a^2(\rho^{1/3}a)^{-3/8}\right)\\
&=\rho_{\rm{fc}}-\frac{1}{8\pi}\left(\frac{\phi^2}{T}\right)^{1/2}T^{3/2}\left(\sqrt{d+2(1+\theta)\sigma}+\sqrt{d}\right)\\
&\quad+o\left(T\phi\right)+O\left(T^{5/2}a^2(\rho^{1/3}a)^{-3/8}\right),
\ea
\]
The error in the first line only depends on $\sigma_0$, the one in the second line on $\sigma_0$ and $d_0$.
\end{lemma}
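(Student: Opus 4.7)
The statement consists of two separate assertions: the first equality rewrites $\rho_\gamma^{(2)}$ as $T^{3/2}I_4(d,\sigma,\theta,\phi/\sqrt{T})$ up to the scattering-length error, and the second equality Taylor-expands this $I_4$ in its last argument. I would treat them in turn.

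\emph{First equality.} Starting from the definition \eqref{eq:rhogammasimplified} of $\rho_\gamma^{(2)}$, apply the change of variables $p=(\sqrt{T}/\phi)\,q$. The Jacobian combines with the prefactor $\phi^3$ to produce $T^{3/2}\,dq$, and the factor $\phi^2/T$ that sits outside the square root in the exponential is absorbed after factoring $(\sqrt{T}/\phi)^2$ out of $(p^2+d)$ and $(p^2+d)(1+\theta)\sigma\widehat{Vw}(\phi p)/(8\pi a)$. The result is exactly the integrand defining $I_4(d,\sigma,\theta,s)$ with $s=\phi/\sqrt{T}$, \emph{except} that every $1$ multiplying $(1+\theta)\sigma s^2$ is replaced by $f(q):=\widehat{Vw}(\sqrt{T}q)/(8\pi a)$. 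The residual error is
\[
(2\pi)^{-3}T^{3/2}\int\bigl[\mathcal{I}(q,f(q))-\mathcal{I}(q,1)\bigr]\,dq,
\]
where $\mathcal{I}(q,\cdot)$ denotes the $I_4$-integrand viewed as a smooth function of its last argument. From \eqref{sameest}, $|f(q)-1|\leq Ca^2Tq^2$ for $|\sqrt{T}q|\leq 1/a$ and $|f(q)|\leq 1$ in general; bound $|\partial_f\mathcal{I}|$ uniformly in the bulk and use the Gaussian-type decay from the Bose factor in the tail. Splitting the integration at an intermediate radius $|q|=Q$ and balancing the bulk contribution (which grows with $Q$) against the tail (which decays) yields the $(\rho^{1/3}a)^{-3/8}$ factor and the stated $O(T^{5/2}a^2(\rho^{1/3}a)^{-3/8})$ error.

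\emph{Second equality.} At $s=0$ the integrand reduces to $(e^{p^2}-1)^{-1}$, so $I_4(d,\sigma,\theta,0)=n_{\mathrm{fc}}$ by \eqref{constantfcdensity} and $T^{3/2}I_4(d,\sigma,\theta,0)=\rho_{\mathrm{fc}}$. For the next term, set $u=ds^2$, $v=(1+\theta)\sigma s^2$, so that $E^2=(p^2+u)(p^2+u+2v)$, and use the partial-fraction identity
\[
\frac{p^2+u+v}{E^2}=\frac{1}{2}\Bigl[\frac{1}{p^2+u}+\frac{1}{p^2+u+2v}\Bigr].
\]
Write
\[
I_4-n_{\mathrm{fc}}=(2\pi)^{-3}\int\Bigl[\frac{p^2+u+v}{E(e^E-1)}-\frac{1}{e^{p^2}-1}\Bigr]\,dp,
\]
and, following the scheme of \eqref{eq:densitychange}, observe that the dominant contribution comes from the IR region $|p|\sim s$, where $E=s^2\sqrt{(q^2+d)(q^2+d+2(1+\theta)\sigma)}\ll 1$ after the rescaling $p=sq$, so that $1/(e^E-1)=1/E+O(1)$. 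The difference reduces to
\[
\frac{s}{(2\pi)^3}\int\Bigl[\frac{1}{2(q^2+d)}+\frac{1}{2(q^2+d+2(1+\theta)\sigma)}-\frac{1}{q^2}\Bigr]\,dq+o(s),
\]
with the $-q^{-2}$ arising from the free-gas subtraction. Each bracketed difference is integrable on $\mathbb{R}^3$, and a direct computation in spherical coordinates gives $\int_{\mathbb{R}^3}[(q^2+A)^{-1}-q^{-2}]\,dq=-2\pi^2\sqrt{A}$. Summing the two contributions yields $I_4-n_{\mathrm{fc}}=-\tfrac{s}{8\pi}(\sqrt{d}+\sqrt{d+2(1+\theta)\sigma})+o(s)$. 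Multiplying by $T^{3/2}$ and using $sT^{3/2}=\phi T=(\phi^2/T)^{1/2}T^{3/2}$ converts this into the stated form with error $o(T\phi)$; the UV/intermediate contributions are $O(s^2)=o(s)$ because the integrands differ by $O(s^2/p^4)$ there.

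\emph{Main obstacle.} The delicate step is the error analysis in Part 1: the exponent $-3/8$ in $(\rho^{1/3}a)^{-3/8}$ is not arbitrary but reflects the optimal balance between the polynomial bulk error coming from $|f(q)-1|\leq Ca^2Tq^2$ and the Gaussian tail of the Bose factor, once the cutoff $Q$ has been chosen. Tracking this requires controlling $\partial_f\mathcal{I}$ uniformly on the whole range $f\in[0,1]$ despite the square-root and Bose factors that could cause trouble near $q=0$ when $d=0$; here the assumption $d\leq d_0$ and the positivity $d\geq 0$, together with the estimate $p^2+2(\rho_0+t_0)\widehat{Vw}(p)\geq cp^2$ used already in the proof of Lemma \ref{prop:simplfunctsol}, ensure that $E$ does not degenerate and that the square roots remain smooth.
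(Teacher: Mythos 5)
Your plan follows the paper's proof closely: the same rescaling to produce the $I_4$-integrand, the same Mean Value Theorem comparison between $\widehat{Vw}(\sqrt{T}q)/(8\pi a)$ and $1$ with a bulk/tail split for the first equality, and IR dominance with an explicit leading-order integral for the second. The partial-fraction identity together with $\int_{\mathbb{R}^3}\bigl[(q^2+A)^{-1}-q^{-2}\bigr]\,dq=-2\pi^2\sqrt{A}$ is a welcome explicit route to the first-order coefficient, which the paper leaves as ``calculate the integral.''

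One step of Part 1 would fail as written: $|\partial_f\mathcal{I}|$ is \emph{not} uniformly bounded in the bulk. The paper computes the derivative (their $F_1+F_2$), and the $F_1$ piece---a squared Bose factor multiplied by $(1+\theta)\sigma s^2$---behaves like $1/|q|^2$ as $q\to0$, because the squared Bose factor is of size $1/(|q|^2\sigma s^2)$ there and the $\sigma s^2$'s cancel. Paired with $|f(q)-1|\leq Ca^2Tq^2$, the bulk integrand is therefore of order $Ca^2T$ (bounded, not $O(q^2)$), so the bulk contribution scales like the ball volume, $T^{3/2}\cdot Ca^2T\cdot Q^3$, and $Q=(\rho^{1/3}a)^{-1/8}$ gives precisely the stated $T^{5/2}a^2(\rho^{1/3}a)^{-3/8}$. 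A genuinely constant bound on $\partial_f\mathcal{I}$ would yield $T^{5/2}a^2Q^5$ and a different cutoff. Relatedly, the exponent $-3/8$ does not come from an optimal bulk/tail balance: the tail decays like $e^{-cQ^2}$, so any polynomially growing $Q$ defeats it; the operative constraint is instead that $\widehat{Vw}(\sqrt{T}q)/(8\pi a)\geq\tfrac12$ should hold on the bulk, which controls the $t$-supremum in the MVT and keeps the square root non-degenerate. These are repairable sketch-level gaps, but the $q$-dependence of the derivative bound is exactly what produces the exponent and must be tracked explicitly.
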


\begin{lemma}[$\rho_\gamma^{(2)}$ expansion for low temperatures]
\label{lem:rho_gamma1verylow}
Let $0\leq\rho_0\leq\rho$, $d\geq0$, $\sigma=8\pi$ and $t_0=\theta=0$. Let $\delta=d\rho_0 a=d\phi^2$. Then, for $\phi a=\sqrt{\rho_0a^3}\ll1$ while $\phi^2/T=\rho_0a/T\geq O(1)$, we have
\[
\rho_\gamma^{(2)}=T^{3/2}I_4(d,8\pi,0,\sqrt{\rho_0a/T})+o((\rho_0 a)^{3/2}). 
\]
The error is uniform in $d\geq0$ and $\rho_0$.
\end{lemma}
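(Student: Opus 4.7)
The plan is to recognize $\rho_\gamma^{(2)}$ and $T^{3/2}I_4(d,8\pi,0,\sqrt{\rho_0 a/T})$ as integrals of the same structural form, differing only by the replacement of $\widehat{Vw}(\phi p)/(8\pi a)$ with $\widehat{Vw}(0)/(8\pi a) = 1$, and then to control the pointwise difference of the integrands using the smoothness of $\widehat{Vw}$ near the origin. Concretely, performing the change of variable $p\mapsto p/s$ with $s=\sqrt{\rho_0 a/T}=\phi/\sqrt{T}$ inside $I_4(d,8\pi,0,s)$ and comparing with the expression for $\rho_\gamma^{(2)}$ in Corollary \ref{energyexpressions} (with $\sigma=8\pi$, $\theta=0$, $\phi=\sqrt{\rho_0 a}$) yields
\[
\rho_\gamma^{(2)}-T^{3/2}I_4(d,8\pi,0,s) = (2\pi)^{-3}\phi^3\int\bigl[\tilde F(p,\widehat{Vw}(\phi p)/a)-\tilde F(p,8\pi)\bigr]\,dp,
\]
where
\[
\tilde F(p,W) := \frac{p^2+d+W}{\sqrt{(p^2+d)^2+2(p^2+d)W}}\Bigl(e^{(\phi^2/T)\sqrt{(p^2+d)^2+2(p^2+d)W}}-1\Bigr)^{-1}.
\]
It therefore suffices to show that this integral is $o(1)$ uniformly in $d\ge 0$ and $\rho_0$.

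I would estimate the pointwise difference via the mean value theorem: $|\tilde F(p,W_1)-\tilde F(p,W_0)| \le |W_1-W_0|\sup_{W\in[0,8\pi]}|\partial_W\tilde F(p,W)|$. Since $\widehat{Vw}$ is radial and $C^3$ with $\widehat{Vw}(0)=8\pi a$, $\nabla\widehat{Vw}(0)=0$, and $\|\partial^2\widehat{Vw}\|_\infty\le Ca^3$ by \eqref{sameest}, Taylor's theorem gives $|\widehat{Vw}(\phi p)/a - 8\pi| \le Ca^2\phi^2 p^2$, while the trivial bound $|\widehat{Vw}(\phi p)/a-8\pi|\le 16\pi$ holds globally. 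Splitting $\mathbb{R}^3$ at $|p|=(a\phi)^{-1}$, the Taylor bound controls the low-momentum piece, yielding a contribution of order $Ca^2\phi^2\int p^2|\partial_W\tilde F|\,dp = O((a\phi)^2)=o(1)$ since $a\phi\ll 1$. In the high-momentum piece the hypothesis $\phi^2/T\ge O(1)$ ensures that the Bose--Einstein factor provides decay of order $e^{-cp^2}$, so that contribution is bounded by $Ce^{-c/(a\phi)^2}$, which is also $o(1)$.

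The main obstacle is establishing the uniform bound $\int p^2|\partial_W\tilde F(p,W)|\,dp \le C$ for all $W\in[0,8\pi]$, $d\ge 0$, and $\phi^2/T\ge c>0$. Differentiating $\tilde F$ in $W$ produces three terms, arising from the numerator, the denominator, and the Bose--Einstein factor, each with a potentially worse singularity at $p=0$ (especially when $d=0$, where $\tilde F$ already behaves like $1/(p\sqrt W)$ there) and a different pattern of decay at infinity. One needs to verify that the three-dimensional volume element $p^2\,dp$ together with the Bose--Einstein decay tame all of these contributions and give a bound independent of $d$, $\rho_0$, and $\phi^2/T\ge c$. Once this is established, the above splitting delivers the desired error $o(\phi^3)=o((\rho_0 a)^{3/2})$ uniformly in $d\ge 0$ and $\rho_0$.
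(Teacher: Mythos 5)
Your approach replaces the paper's Dominated Convergence argument with a quantitative mean-value-theorem estimate, closer in spirit to the paper's treatment of the moderate-temperature case (Lemma \ref{lem:rho_gamma1moderate}). The change of variables, the identification of the difference as the replacement of $\widehat{Vw}(\phi p)/(8\pi a)$ by $1$, the split at $|p|=(a\phi)^{-1}$, the Taylor bound $|\widehat{Vw}(\phi p)/a-8\pi|\leq Ca^2\phi^2 p^2$ on the inner region, and the Bose--Einstein decay $e^{-c p^2}$ (valid since $c=\phi^2/T\geq c_0>0$) on the outer region are all sound, and if completed this route would yield a somewhat more explicit rate than the paper's soft argument, which builds a dominating function as in Lemma \ref{lem:rhogamma2contribution} and appeals to continuity for uniformity.

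However, the proposal is not a proof. You explicitly flag the estimate $\int p^2|\partial_W\tilde F(p,W)|\,dp\leq C$, uniformly in $d\geq 0$, $W$, and $\phi^2/T\geq c_0$, as ``the main obstacle'' and then do not carry it out; in your own words, ``one needs to verify that'' the volume element and the Bose--Einstein decay tame the singular contributions. That verification is the substance of the lemma: one must differentiate $\tilde F$ in $W$, track the resulting three pieces through the $d=0$, small-$p$ regime (where each piece individually is as singular as $1/p^2$) and through the $d\to\infty$ and $c\to\infty$ limits, and confirm a $d$- and $c$-independent bound. A secondary imprecision: you take the MVT range as $W\in[0,8\pi]$, but $\widehat{Vw}(\phi p)/a$ is negative for $|\phi p|\gtrsim 1/a$, so the correct interval is between $\widehat{Vw}(\phi p)/a$ and $8\pi$; the quantity $p^2+d+2W$ does stay bounded below by $C(p^2+d)$ on all of that interval (using $\widehat{Vw}(q)\geq 8\pi a-Ca^3q^2$ as in the proof of Lemma \ref{prop:simplfunctsol}), but this needs to be stated rather than assumed. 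As written, the argument identifies a valid alternative route while conceding the decisive step.
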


A similar preliminary analysis can be done for the energy terms in Corollary \ref{energyexpressions}. 

\begin{lemma}[$F^{(1)}$ approximation]
\label{lem:lowerorderenergyterm}
Let $\sigma_0\geq0$ and $d_0\geq0$ be fixed constants, and let $-1\leq\theta\leq0$, $0\leq d\leq d_0$, $0\leq\sigma\leq\sigma_0$, $\phi>0$ and $0\leq\rho_0\leq\rho$. Assume $\rho_0a/\phi^2=\sigma/8\pi$ and let $\delta=d\phi^2$ and $t_0=\theta\rho_0$. For $\phi a\ll1$, we have
\[
F^{(1)}=\phi^{5}\frac{1}{2} I_1(d,\sigma,\theta)+o\left(\phi^5\right).
\]
The error is depends only on $\sigma_0$ and $d_0$.
\end{lemma}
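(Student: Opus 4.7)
Define $f_\phi(p) := \widehat{Vw}(\phi p)/(8\pi a)$, so Corollary~\ref{energyexpressions} reads
\[
F^{(1)} = \frac{\phi^5}{2(2\pi)^3}\int G_{d,\sigma,\theta}\!\bigl(p, f_\phi(p)\bigr)\, dp,\qquad
\tfrac12 I_1(d,\sigma,\theta) = \tfrac{1}{2(2\pi)^3}\int G_{d,\sigma,\theta}(p,1)\, dp,
\]
where
\[
G_{d,\sigma,\theta}(p,f) := \sqrt{(p^2+d)^2+2(p^2+d)(1+\theta)\sigma f}-\bigl(p^2+d+(1+\theta)\sigma f\bigr)+\frac{((1+\theta)\sigma f)^2}{2p^2}.
\]
Thus the claim reduces to showing $\int G_{d,\sigma,\theta}(p,f_\phi(p))\,dp \to \int G_{d,\sigma,\theta}(p,1)\,dp$ as $\phi a\to 0$, with the convergence uniform in $(d,\sigma,\theta)\in[0,d_0]\times[0,\sigma_0]\times[-1,0]$. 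I would carry this out by dominated convergence.

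\textbf{Pointwise convergence.} From $Vw\geq 0$ and $\widehat{Vw}(0)=8\pi a$ we have $0\leq f_\phi(p)\leq 1$. Since $\widehat{Vw}$ is radial with $\widehat{Vw}'(0)=0$ and $\|\widehat{Vw}''\|_\infty\leq Ca^3$ (from \eqref{sameest}), Taylor's theorem gives $|f_\phi(p)-1|\leq C a^2\phi^2 p^2$, so $f_\phi(p)\to 1$ pointwise as $\phi\to 0$. Continuity of $G$ in $f$ then gives $G_{d,\sigma,\theta}(p,f_\phi(p))\to G_{d,\sigma,\theta}(p,1)$ for each $p$.

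\textbf{Integrable envelope (the main point).} I would exploit the cancellation built into the definition of $G$ via the algebraic identity
\[
\sqrt{A^2+2AB}-(A+B) \;=\; -\frac{B^2}{\sqrt{A^2+2AB}+A+B},
\]
applied with $A=p^2+d$, $B=(1+\theta)\sigma f$, which recasts
\[
G_{d,\sigma,\theta}(p,f) \;=\; ((1+\theta)\sigma f)^2\!\left[\frac{1}{2p^2}-\frac{1}{\sqrt{(p^2+d)^2+2(p^2+d)(1+\theta)\sigma f}+p^2+d+(1+\theta)\sigma f}\right].
\]
Combining the two fractions, estimating the numerator by $\sqrt{A^2+2AB}\leq A+B$ and the denominator by $\sqrt{A^2+2AB}+A+B\geq 2A=2(p^2+d)$, yields
\[
|G_{d,\sigma,\theta}(p,f)| \;\leq\; \frac{((1+\theta)\sigma f)^2\,(d+(1+\theta)\sigma f)}{p^2(p^2+d)} \;\leq\; \frac{\sigma_0^2(d_0+\sigma_0)}{p^4},
\]
valid for $(d,\sigma,\theta,f)\in[0,d_0]\times[0,\sigma_0]\times[-1,0]\times[0,1]$; this controls the tail $|p|\geq 1$. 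Near the origin the triangle inequality together with $|\sqrt{A^2+2AB}-(A+B)|\leq B$ gives the cruder bound $|G_{d,\sigma,\theta}(p,f)|\leq \sigma_0+\sigma_0^2/(2p^2)$, which is integrable on $|p|\leq 1$ in $\mathbb{R}^3$. The two bounds together produce an $L^1(\mathbb{R}^3)$ dominating function depending only on $d_0,\sigma_0$.

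\textbf{Conclusion and uniformity.} Dominated convergence then delivers $\int G_{d,\sigma,\theta}(p,f_\phi(p))\,dp\to \int G_{d,\sigma,\theta}(p,1)\,dp$, i.e.\ $F^{(1)}=\tfrac{\phi^5}{2}I_1(d,\sigma,\theta)+o(\phi^5)$. Uniformity of the error in $(d,\sigma,\theta)$ over the compact parameter range follows because (i) the envelope is independent of these parameters, so the tail contribution outside any ball $|p|\leq R$ is uniformly small, and (ii) on the fixed compact set $|p|\leq R$, $G_{d,\sigma,\theta}(p,f)$ is uniformly continuous in $(p,d,\sigma,\theta,f)$ and $|f_\phi(p)-1|\leq Ca^2\phi^2R^2\to 0$. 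The only real subtlety is the envelope at infinity — without the cancellation the $((1+\theta)\sigma f)^2/(2p^2)$ term alone is not integrable at $\infty$ — so the identity above is the heart of the argument.
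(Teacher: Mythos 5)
Your overall strategy—rewriting $F^{(1)}$ as $\frac{\phi^5}{2(2\pi)^3}\int G_{d,\sigma,\theta}(p,f_\phi(p))\,dp$, establishing pointwise convergence $f_\phi\to 1$ via the Taylor bound from \eqref{sameest}, and invoking dominated convergence with uniformity on compact parameter ranges—is exactly the paper's approach. The algebraic identity you highlight, recasting $G$ as $B^2[\tfrac{1}{2p^2}-(\sqrt{A^2+2AB}+A+B)^{-1}]$, is indeed the cancellation that makes $I_1$ converge and is implicitly behind the paper's argument.

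However, there is a genuine error. You write ``From $Vw\geq 0$ and $\widehat{Vw}(0)=8\pi a$ we have $0\leq f_\phi(p)\leq 1$.'' This is false: nonnegativity of $Vw$ does \emph{not} imply nonnegativity of its Fourier transform (that would require $Vw$ to be positive definite, which is not assumed). The paper only establishes $|\widehat{Vw}(p)|\leq\widehat{Vw}(0)=8\pi a$, so $f_\phi\in[-1,1]$; indeed it relies on the much weaker pointwise bound $\widehat{Vw}(p)\geq 8\pi a-Ca^3p^2$ precisely because $\widehat{Vw}$ may go negative. This error propagates into your envelope: for $f<0$ you have $B=(1+\theta)\sigma f<0$, so the denominator estimate $\sqrt{A^2+2AB}+A+B\geq 2A$ fails (it only gives $\geq A/2$), the factor $d+(1+\theta)\sigma f$ in your claimed bound can be negative so it cannot dominate an absolute value, and, more seriously, for small $|p|$ the quantity $(p^2+d)^2+2(p^2+d)(1+\theta)\sigma f$ can itself be negative when $f$ is near $-1$, so $G(p,f)$ is not even defined over the full range you claim to maximize over.

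The fix is essentially the one the paper uses (and the proof of Lemma~\ref{lem:rhogamma2contribution} spells out): split into an inner region $|p|\leq R$ (with, say, $R\geq\sqrt{3\sigma_0}$) and an outer region. On the inner region, your own Taylor bound $|f_\phi(p)-1|\leq C(\phi a)^2p^2\leq C(\phi a)^2R^2\to 0$ shows $f_\phi(p)$ lies eventually in a neighbourhood of $1$ (say $[1/2,1]$) uniformly, where $G$ is well-defined, continuous and bounded on the compact $|p|\leq R$. On the outer region $|p|>R$, the square root is defined for all $t\in[-1,1]$ (since $p^2+d+2(1+\theta)\sigma t\geq R^2-2\sigma_0>0$), and the $O(p^{-4})$ decay of your envelope survives with corrected constants: using $A+B\geq A/2$ and $|\sqrt{A^2+2AB}-(A+B)|\leq 2B^2/A$ one gets $|G(p,f)|\leq C(d_0,\sigma_0)/p^4$ there. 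With that repair, your dominated-convergence argument and the compactness/uniform-continuity reasoning for uniformity both go through as intended.
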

For the second term we will need the following two lemmas.

\begin{lemma}[$F^{(2)}$ expansion for moderate temperatures]
\label{lem:leadingorderenergytermmoderate}
Let $\sigma_0\geq0$ and $d_0\geq0$ be fixed constants, and let $-1\leq\theta\leq0$, $0\leq d\leq d_0$, $0\leq\sigma\leq\sigma_0$, $\phi>0$ and $0\leq\rho_0\leq\rho$. Assume $\rho_0a/\phi^2=\sigma/8\pi$ and let $\delta=d\phi^2$ and $t_0=\theta\rho_0$. For $\phi^2/T\ll1$, we have
\[
\begin{aligned}
F^{(2)}&=T^{5/2}I_2(d,\sigma,\theta,\phi/\sqrt{T})+O\left(T^{5/2}\phi^2a^2(\rho^{1/3}a)^{-1/4}\right)\\
&=T^{5/2}f_{\rm{min}}+\left(\frac{\phi^2}{T}\right)T\rho_{\rm{fc}}\left(d+(1+\theta)\sigma\right)\\
&\quad-\frac{1}{12\pi}\left(\frac{\phi^2}{T}\right)^{3/2}T^{5/2}\left((d+2(1+\theta)\sigma)^{3/2}+d^{3/2}\right)\\
&\quad+o\left(T\phi^3\right)+O\left(T^{5/2}\phi^2a^2(\rho^{1/3}a)^{-1/4}\right).
\end{aligned}
\]
The error in the first line only depends on $\sigma_0$, the one in the second line on $\sigma_0$ and $d_0$.
\end{lemma}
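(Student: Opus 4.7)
The plan is to rescale $F^{(2)}$ so it becomes directly comparable with $T^{5/2}I_2(d,\sigma,\theta,\phi/\sqrt{T})$ and then expand $I_2$ in the small parameter $s=\phi/\sqrt{T}$. The substitution $q=\phi p/\sqrt{T}$ in the integral of Corollary \ref{energyexpressions} combines the Jacobian $T^{3/2}\phi^{-3}$ with the prefactor $T\phi^3$ to produce $T^{5/2}$, reducing the square-root argument to
\[
\sqrt{(q^2+ds^2)^2+2(q^2+ds^2)(1+\theta)\sigma s^2\,\widehat{Vw}(\sqrt{T}q)/(8\pi a)},
\]
which differs from the integrand of $T^{5/2}I_2$ only by the factor $\widehat{Vw}(\sqrt{T}q)/(8\pi a)$ in place of $1$. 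To establish the first line I bound the error caused by this replacement: combining the Taylor bound $|\widehat{Vw}(\sqrt{T}q)/(8\pi a)-1|\le Ca^2Tq^2$ (from $\widehat{Vw}(0)=8\pi a$, $\widehat{Vw}'(0)=0$ and $\|\widehat{Vw}''\|_\infty\le Ca^3$, see \eqref{sameest}) with the global bound $|\widehat{Vw}|\le 8\pi a$, and applying the mean value theorem to the integrand viewed as a function of $g=\widehat{Vw}(\sqrt{T}q)/(8\pi a)$, one splits the $q$-integration at a threshold $|\sqrt{T}q|\sim a^{-1}(\rho^{1/3}a)^{-\kappa}$ for a suitable $\kappa$; applying the Taylor bound inside and the global bound outside, and balancing the two contributions, produces the stated error $O(T^{5/2}\phi^2a^2(\rho^{1/3}a)^{-1/4})$.

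For the explicit second expansion I set $u=ds^2$ and $b=(1+\theta)\sigma s^2$ and apply Lemma \ref{lm:intexp} to $I_2$ to obtain
\[
I_2=(2\pi)^{-3}\!\int\ln(1-e^{-(p^2+u)})dp+b(2\pi)^{-3}\!\int(e^{p^2+u}-1)^{-1}dp+O(b^{3/2}).
\]
I then expand each integral in $u\to0$ using the free-gas asymptotics already derived in the proofs of \eqref{eq:densitychange} and \eqref{eq:e0nnnn}, namely $(2\pi)^{-3}\!\int(e^{p^2+u}-1)^{-1}dp=n_{\rm fc}-u^{1/2}/(4\pi)+o(u^{1/2})$ and $(2\pi)^{-3}\!\int\ln(1-e^{-(p^2+u)})dp=f_{\rm min}+un_{\rm fc}-u^{3/2}/(6\pi)+o(u^{3/2})$. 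Multiplying by $T^{5/2}$ and using $\rho_{\rm fc}=T^{3/2}n_{\rm fc}$ together with $s^2=\phi^2/T$ immediately recovers the terms $T^{5/2}f_{\rm min}$ and $(\phi^2/T)T\rho_{\rm fc}(d+(1+\theta)\sigma)$.

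The main obstacle is pinning down the precise cubic coefficient $-\frac{1}{12\pi}[(d+2(1+\theta)\sigma)^{3/2}+d^{3/2}]$ of $(\phi^2/T)^{3/2}T^{5/2}$, because Lemma \ref{lm:intexp} only bounds the remainder by $Cb^{3/2}$ without providing its $u$-dependent constant. The key observation is that with $f(\lambda)=(2\pi)^{-3}\!\int\ln(1-e^{-(p^2+\lambda)})dp$, the symmetric-in-shifts combination $\tfrac12[f(u)+f(u+2b)]$ reproduces, via the free-gas expansion, both the quadratic coefficient $(u+b)n_{\rm fc}$ and exactly the target cubic $-\frac{1}{12\pi}[u^{3/2}+(u+2b)^{3/2}]$ (with $u$ and $u+2b$ being the two roots of the quadratic under the square root). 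This strongly suggests the approximation
\[
(2\pi)^{-3}\!\int\ln(1-e^{-\sqrt{(p^2+u)(p^2+u+2b)}})dp=\tfrac12\bigl[f(u)+f(u+2b)\bigr]+o(b^{3/2}),
\]
which I plan to establish either by refining the proof of Lemma \ref{lm:intexp} in Appendix \ref{app:intexp} to extract the leading constant of the $b^{3/2}$ remainder explicitly as a function of $u$, or by a direct rescaling argument in the phonon region $p^2\lesssim b$ where the correction lives. Either route then completes the identification of all terms claimed in the lemma.
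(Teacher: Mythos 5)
Your first step (bounding the error from replacing $\widehat{Vw}(\sqrt{T}q)/(8\pi a)$ by $1$) is essentially the paper's argument: the paper also treats the integrand as a function of $t=\widehat{Vw}(\sqrt{T}p)/(8\pi a)$, computes $\partial_t$, applies the Mean Value Theorem, and splits the integration at $|p|=(\rho^{1/3}a)^{-1/8}$ to balance the Taylor bound $|t-1|\le CTa^2p^2$ against the global bound $|t-1|\le 2$. That part is sound.

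The genuine gap is in your second step. Routing through Lemma \ref{lm:intexp} cannot give the cubic coefficient: with $b=(1+\theta)\sigma s^2$, the lemma's remainder $Cb^{3/2}=C\bigl((1+\theta)\sigma\bigr)^{3/2}s^3$ is exactly the order of the term you want to identify, so the lemma pins down only $f_{\rm min}$ and the $O(s^2)$ piece while leaving the $s^3$ coefficient completely undetermined. You see this yourself. But what follows is not a repair: you verify that the combination $\tfrac12[f(u)+f(u+2b)]$ \emph{would} produce the claimed coefficient, and then posit the unproved identity
\[
(2\pi)^{-3}\int\ln\bigl(1-e^{-\sqrt{(p^2+u)(p^2+u+2b)}}\bigr)dp=\tfrac12\bigl[f(u)+f(u+2b)\bigr]+o(s^3),
\]
deferring its justification to one of two unexecuted plans. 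As written, the coefficient $-\frac1{12\pi}\bigl[(d+2(1+\theta)\sigma)^{3/2}+d^{3/2}\bigr]$ is asserted, not derived.

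The paper avoids Lemma \ref{lm:intexp} here entirely. After subtracting $T^{5/2}f_{\rm min}$ and $s^2T\rho_{\rm fc}(d+(1+\theta)\sigma)$ (which, after the substitution $p\mapsto sp$, are the contributions of the two correction terms $\ln(1-e^{-s^2p^2})$ and $s^2(d+(1+\theta)\sigma)(e^{s^2p^2}-1)^{-1}$), the leftover equals $(2\pi)^{-3}T^{5/2}s^3$ times a $p$-integral whose integrand is monotone in $s$; the Monotone Convergence Theorem then gives, as $s\to0$, the finite limit
\[
(2\pi)^{-3}\int\Bigl[\ln\Bigl(\frac{\sqrt{(p^2+d)^2+2(p^2+d)(1+\theta)\sigma}}{p^2}\Bigr)-\frac{d+(1+\theta)\sigma}{p^2}\Bigr]dp,
\]
which evaluates to the claimed constant. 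This is precisely the ``direct rescaling in the phonon region'' you mention as a fallback; carrying that out would close the gap, whereas the symmetric-combination detour merely replaces the unproved coefficient by an unproved identity.
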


\begin{lemma}[$F^{(2)}$ expansion for low temperatures]
\label{lem:leadingorderenergytermverylow}
Let $0\leq\rho_0\leq\rho$, $d\geq0$, $\sigma=8\pi$ and $t_0=\theta=0$. Let $\delta=d\rho_0 a=d\phi^2$. Then, for $\phi a=\sqrt{\rho_0a^3}\ll1$ while $\phi^2/T=\rho_0a/T\geq O(1)$, we have
\[
F^{(2)}=T^{5/2}I_2(d,8\pi,0,\sqrt{\rho_0a/T})+o((\rho_0 a)^{5/2}).
\]
The error is uniform in $d\geq0$ and $\rho_0$.
\end{lemma}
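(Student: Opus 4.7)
\smallskip

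The plan is to mimic the strategy used for Lemma \ref{lem:rho_gamma1verylow} (the density counterpart), with an appropriate change of variables that puts $F^{(2)}$ on the same scale as $T^{5/2}I_2$, and then to compare the two integrands pointwise. Starting from the explicit formula for $F^{(2)}$ in Corollary \ref{energyexpressions} with $\sigma=8\pi$ and $\theta=0$, I would perform the substitution $p=q\sqrt{T}/\phi$ (so $\phi p = q\sqrt T$). A direct computation, using $\phi^2/T=s^2$, gives
\[
F^{(2)}=(2\pi)^{-3}T^{5/2}\int\ln\bigl(1-e^{-A(q)}\bigr)\,dq,\qquad
A(q):=\sqrt{(q^2+ds^2)^2+16\pi s^2(q^2+ds^2)\,V_w(q\sqrt T)},
\]
where $V_w(k):=\widehat{Vw}(k)/(8\pi a)$ satisfies $V_w(0)=1$. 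The target quantity is
\[
T^{5/2}I_2(d,8\pi,0,s)=(2\pi)^{-3}T^{5/2}\int\ln\bigl(1-e^{-B(q)}\bigr)\,dq,
\]
with $B$ obtained from $A$ by replacing $V_w(q\sqrt T)$ by $1$. The task is thus to show
\[
T^{5/2}\int\bigl|\ln(1-e^{-A(q)})-\ln(1-e^{-B(q)})\bigr|\,dq=o(\phi^5)=o((\rho_0a)^{5/2}),
\]
uniformly in $d\ge0$ and $\rho_0$.

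The key ingredients are: (i) the smoothness bound from \eqref{sameest} combined with radial symmetry, giving $|V_w(k)-1|\le C(ak)^2$ for $|k|\le 1/a$ and the trivial $|V_w(k)-1|\le 2$ in general; and (ii) the positivity estimate used in Lemma \ref{prop:simplfunctsol}, which yields $A(q),B(q)\ge c\,q^2$ as well as $A(q),B(q)\gtrsim s\sqrt{(q^2+ds^2)(q^2+ds^2)+c(q^2+ds^2)s^2}$, i.e., a comparable lower bound on both $A$ and $B$. Split the integral at $|q|=1/(a\sqrt T)$. In the outer region $|q|>1/(a\sqrt T)$ one has $q^2>1/(Ta^2)\gg 1$ since $Ta^2\le \rho_0 a^3\ll 1$, so $B(q)\ge cq^2$ is large and $|\ln(1-e^{-A})|+|\ln(1-e^{-B})|\le Ce^{-cq^2}$; the resulting tail contribution is $O(e^{-c/(Ta^2)})$ and hence much smaller than any polynomial in $\phi$.

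In the inner region $|q|\le 1/(a\sqrt T)$, Taylor's theorem gives $|A(q)^2-B(q)^2|\le CTa^2 q^2\, s^2(q^2+ds^2)$, and combining with the pointwise lower bound $A+B\gtrsim s\sqrt{q^2+ds^2}\bigl(1+\sqrt{q^2+ds^2}/s\bigr)^{1/2}$ produces $|A-B|\lesssim Ta^2 q^2$ (the $d$-dependence cancels between numerator and denominator). Then by the mean value theorem
\[
\bigl|\ln(1-e^{-A})-\ln(1-e^{-B})\bigr|\le |A-B|\cdot\frac{1}{e^{\min(A,B)}-1}\lesssim \frac{Ta^2 q^2}{\min(A,B)},
\]
and $\min(A,B)\gtrsim s|q|$ in the low-temperature regime $s\ge c$, so the integrand is bounded by $CTa^2|q|/s$ up to the exponential cutoff produced by $e^{-B}$ once $s|q|\gtrsim 1$. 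Integrating over the effective region $|q|\lesssim 1/s$ and multiplying by $T^{5/2}$ one obtains an error bounded, up to constants, by $T^{5/2}\cdot Ta^2/s^5=T^{7/2}a^2\cdot (T/\phi^2)^{5/2}\cdot s^5/s^5$; after carefully tracking the powers using $\phi^2=\rho_0 a$ and $T\le\rho_0 a/c$, this reduces to $o(\phi^5)\cdot(\rho_0 a^3)^\alpha$ for some $\alpha>0$, which is the desired $o((\rho_0 a)^{5/2})$.

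The main obstacle is the uniformity in $d$ near $d=0$: when $d\to 0$, both $A$ and $B$ vanish linearly in $|q|$ near the origin, so the logarithmic singularity of $g(x)=\ln(1-e^{-x})$ produces a potentially large integrand. The resolution is that the singular behaviour of $g(A)$ and $g(B)$ at $q=0$ is identical to leading order (both behave like $\ln(c s|q|)$), so the difference $g(A)-g(B)$ equals $\frac12\ln V_w(q\sqrt T)+O(1)$ near the origin, which is integrable and tends to zero with the radius. Making this cancellation quantitative, rather than using the crude mean value bound, is what drives the argument uniformly in $d\ge 0$, and it parallels the treatment carried out in Appendix \ref{app:intexp} for Lemmas \ref{lm:intexp} and \ref{lem:rho_gamma1verylow}.
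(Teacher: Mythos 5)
Your argument takes a genuinely different and more quantitative route than the paper. The paper's proof of Lemma \ref{lem:leadingorderenergytermverylow} is a two-line dominated-convergence argument: regard $t=\widehat{Vw}(\phi p)/8\pi a\in[-1,1]$ as a parameter, replace it by $0$ on the small-$|p|$ region and by $-1$ on the complement to obtain an integrable majorant (using $s\geq O(1)$), and conclude pointwise convergence as $\phi^2 a\to0$; uniformity in $d$ is read off from exponential decay for $d\gg1$. Your approach instead makes the error explicit: after the (correct) change of variables $\phi p= q\sqrt T$, you split at $|q|=1/(a\sqrt T)$, handle the tail with the lower bound $A,B\geq cq^2$, and on the inner region combine $|A^2-B^2|\leq C(a^2Tq^2)\,s^2(q^2+ds^2)$ with the lower bound $A+B\geq B=\sqrt{q^2+ds^2}\sqrt{q^2+ds^2+16\pi s^2}$ and the mean value theorem. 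This works, and it buys an explicit, manifestly $d$- and $\rho_0$-uniform error of size $O\bigl((\rho_0a)^{5/2}\,\rho_0 a^3\bigr)$, which the paper's soft DCT argument does not display.

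Two small corrections and one simplification. First, the intermediate bound should read $|A-B|\lesssim a^2\phi^2 q^2 = T a^2 s^2 q^2$, not $Ta^2q^2$: dividing $|A^2-B^2|\leq Ca^2Tq^2\,s^2(q^2+ds^2)$ by $B$ gives $Ca^2Tq^2 s^2\sqrt{q^2+ds^2}/\sqrt{q^2+ds^2+16\pi s^2}\leq Ca^2Tq^2 s^2$ in both regimes $q^2+ds^2\lessgtr s^2$; the extra $s^2$ does not harm the final estimate. Second, the penultimate power count is hand-waved; written out, the inner-region contribution is
\[
T^{5/2}\int\frac{a^2\phi^2 q^2}{e^{c s|q|}-1}\,dq\ \lesssim\ T^{5/2}\,a^2\phi^2\,s^{-5}\ =\ \frac{T^5\rho_0a^3}{\phi^5}\ \leq\ C\,(\rho_0a)^{5/2}\,(\phi a)^2\ =\ o\bigl((\rho_0a)^{5/2}\bigr),
\]
where $T\leq \rho_0 a/c_0$ was used, and the constant is independent of $d$ and $\rho_0$. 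Third, your concern about the $d\to0$ logarithmic singularity is unfounded: the crude mean value bound is already uniform in $d\geq0$, because on the inner region $\min(A,B)\geq c\,s\sqrt{q^2+ds^2}\geq c\,s|q|$, so the integrand $q^2/(e^{cs|q|}-1)\sim |q|/(cs)$ is integrable at the origin uniformly in $d$; the finer cancellation of $\ln$ singularities you sketch is not needed. With these adjustments your proof is complete and correct.
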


We also prove two lemmas for the error terms \eqref{def:errorterms} and \eqref{def:E_5} for minimizers of the form stated in Lemma \ref{prop:simplfunctsol}.

\begin{lemma}[Error estimates for moderate temperatures]
\label{lem:errgammastar}
Let $\sigma_0\geq0$ and $d_0\geq0$ be fixed constants, and let $-1\leq\theta\leq0$, $0\leq d\leq d_0$, $0\leq\sigma\leq\sigma_0$, $\phi>0$ and $0\leq\rho_0\leq\rho$. Assume $\rho_0a/\phi^2=\sigma/8\pi$ and let $\delta=d\phi^2$ and $t_0=\theta\rho_0$. For $\phi a\ll1$, we have
\[
\left(E_2+E_3\right)(\gamma^{\rho_0,\delta},\alpha^{\rho_0,\delta},\rho_0)=O\left(Ta^3\rho\rho_0+a\rho_0\phi^3\right),
\]
and 
\[
E_4(\gamma^{\rho_0,\delta},\alpha^{\rho_0,\delta},\rho_0)=O(Ta^3\rho^2+a\rho\phi^3).
\]
The error depends only on $\sigma_0$ and $d_0$.
\end{lemma}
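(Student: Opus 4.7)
\emph{Proof plan.} The strategy adapts that of Proposition~\ref{lm:gvg1}, but applied to the explicit simplified minimizer of Lemma~\ref{prop:simplfunctsol}. Since $\widehat{V}$ is radial and smooth with $\nabla\widehat{V}(0)=0$, the second-order Taylor expansion combined with the uniform bounds in \eqref{assumptionsV} gives $|\widehat{V}(p)-\widehat{V}(0)|\leq C\min(p^2 a^3,a)$, and the same holds for $\widehat{Vw}$ by \eqref{sameest}. This reduces $E_2,E_3$ to
\[
E_2+E_3\leq C\rho_0\Bigl[a^3\!\!\int_{|p|\leq 1/a}\!p^2\gamma^{\rho_0,\delta}(p)\,dp+a\!\int_{|p|>1/a}\!\gamma^{\rho_0,\delta}(p)\,dp\Bigr],
\]
and, via $(p-q)^2\leq 2(p^2+q^2)$, to an analogous double-integral bound for $E_4$ with the prefactor $\rho_0$ replaced by $\rho_\gamma\leq\rho$.

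To control the kinetic energy and the high-momentum tail of $\gamma^{\rho_0,\delta}$, I decompose it as $A+B$ with $A=(e^{G}-1)^{-1}N/D$ the thermal part and $B=\tfrac12(N/D-1)$ the zero-temperature part, writing $N=p^2+\delta+(\rho_0+t_0)\widehat{Vw}$ and $D=\sqrt{(p^2+\delta)^2+2(p^2+\delta)(\rho_0+t_0)\widehat{Vw}}$. The inequality $D\geq c p^2$ from the proof of Lemma~\ref{prop:simplfunctsol} forces $G\geq c p^2/T$, so $A$ is dominated by a thermal Bose factor on momentum scale $\sqrt T$; simultaneously, the identity $N^2-D^2=((\rho_0+t_0)\widehat{Vw})^2$ gives $B\leq C((\rho_0+t_0)\widehat{Vw}(p))^2/(p^2+\delta)^2$ once $|p|\gtrsim\phi$. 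A direct computation, splitting the integrals at $|p|=\phi$ and $|p|=\sqrt T$, then yields $\int p^2 A\leq CT\rho+CT^{5/2}$ and $\int p^2 B\leq C\phi^4/a$; for the tails, $\int_{|p|>1/a}A\leq CT^{3/2}e^{-c/(a^2T)}$ is super-exponentially small under $\sqrt T a\ll 1$, and $\int_{|p|>1/a}B\leq C\rho_0^2 a^3$ follows from $\int|\widehat{Vw}|^2/p^2\leq Ca$ (established just before Lemma~\ref{prop:simplandfullfreeenergycomp}) together with $|p|^{-2}\leq a^2$ on the tail.

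Substituting these estimates back and using $\rho_0 a=\sigma\phi^2/(8\pi)$ together with the smallness of $\phi a$ and $\sqrt T a$ to absorb the algebraic error terms (in particular $\rho_0^3 a^4\leq Ca\rho_0\phi^3$ and $\rho_0 a^2\phi^4\leq Ca\rho_0\phi^3$, and similarly $a^4\rho\rho_0^2\leq Ca\rho\phi^3$ for $E_4$) gives the stated bounds. The main technical hurdle is the kinetic-energy bound on $A$: absorbing the residual $T^{5/2}$-contribution into $T\rho$ uses $T^{3/2}\leq C\rho$, which holds in the regime where the lemma is applied (and which is consistent with the a priori results of Subsection~\ref{aprioridilute}). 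A secondary delicate point is the small-$|p|$ behavior of the decomposition, where $A\sim T/(2p^2)$ and $B\sim\phi/|p|$; integrating against $p^2\,dp$ over $|p|\leq\phi$ contributes only $T\phi^3$ and $\phi^5$ respectively, both comfortably within the error budget.
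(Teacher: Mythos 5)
Your argument is correct, and it takes a genuinely different route from the paper's. The paper cuts the momentum integrals at $b=\sqrt T$, estimates the inner region with the crude bound $|\widehat{V}(p)-\widehat{V}(0)|\leq Ca^3b^2$ (giving $Ca^3T\rho_0\rho_\gamma$), and disposes of the outer region via $\|\widehat{V}\|_\infty\leq Ca$ together with the assertion $\int_{|p|>\sqrt T}\gamma^{\rho_0,\delta}(p)\,dp=O(\phi^3)$. You instead cut at $b=1/a$, keep the pointwise Taylor bound $|\widehat{V}(p)-\widehat{V}(0)|\leq Ca^3p^2$ across the whole inner ball, and control the resulting kinetic-energy integral by decomposing $\gamma^{\rho_0,\delta}=A+B$ into its thermal and zero-temperature pieces: $\int_{|p|\leq 1/a}p^2A=O(T^{5/2})$, $\int_{|p|\leq 1/a}p^2B=O(\phi^4/a)$, while the $|p|>1/a$ tail of $A$ is super-exponentially small under $a^2T\ll1$ and that of $B$ is $O(\rho_0^2a^3)$ via $\int|\widehat{Vw}|^2\leq C/a$. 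This route buys something real: the paper's tail claim holds for the zero-temperature piece ($\int_{|p|>\sqrt T}B=O(\phi^4/\sqrt T)=o(\phi^3)$), but for the thermal piece the Bose factor at $|p|\approx\sqrt T$ is only $O(1)$, so $\int_{|p|>\sqrt T}A$ is of order $T^{3/2}$, which in the moderate regime $\phi^2/T\ll1$ is far larger than $\phi^3$; the cutoff $\sqrt T$ is simply not far enough out to suppress the thermal tail, whereas $1/a$ is. Your computation does use $T^{3/2}\leq C\rho$ to absorb $a^3\rho_0T^{5/2}$ into $Ta^3\rho\rho_0$; that inequality is not among the lemma's stated hypotheses, but you flag this correctly, and it holds wherever the lemma is invoked under the standing assumption $T\leq D\rho^{2/3}$ from Subsections~\ref{aprioridilute}--\ref{sec:crittemp}.
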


\begin{lemma}[Error estimates for low temperatures]
\label{lem:errgammastar2}
Let $d_0$ be a fixed constants, and let $0\leq\rho_0\leq\rho$, $0\leq d\leq d_0$, $\sigma=8\pi$ and $t_0=\theta=0$.  Let $\delta=d\rho_0 a=d\phi^2$. Then, for $\phi a=\sqrt{\rho_0a^3}\ll1$ while $\rho a/T\geq O(1)$, we have
\[
\left(E_2+E_3+E_4\right)(\gamma^{\rho_0,\delta},\alpha^{\rho_0,\delta},\rho_0)=o((\rho a)^{5/2}).
\]
The error is uniform in $d\geq0$.
\end{lemma}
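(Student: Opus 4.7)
The strategy is to exploit the smoothness and radial symmetry of $\widehat{V}$ together with the explicit form of $\gamma^{\rho_0,\delta}$ from Lemma \ref{prop:simplfunctsol}, in close parallel with the moderate-temperature Lemma \ref{lem:errgammastar}. Since $\widehat{V}$ is radial, $\nabla\widehat{V}(0)=0$, so a second-order Taylor expansion combined with $\|\partial^2\widehat{V}\|_\infty\leq Ca^3$ from \eqref{assumptionsV} yields the global pointwise bound $|\widehat{V}(k)-\widehat{V}(0)|\leq Ca^3|k|^2$; by \eqref{sameest} the same bound holds for $\widehat{Vw}$. Writing $\gamma=\gamma^{\rho_0,\delta}$, this reduces each error term to a $p^2$-weighted moment of $\gamma$:
\[
E_2,\,E_3 \;\leq\; C\rho_0 a^3 \int p^2\gamma(p)\,dp, \qquad E_4 \;\leq\; C a^3 \iint\gamma(p)|p-q|^2\gamma(q)\,dp\,dq \;=\; 2Ca^3\,\rho_\gamma\!\int p^2\gamma(p)\,dp,
\]
where in the last step I use that $\gamma$ is radial, so the cross term $\int p\,\gamma\,dp$ vanishes.

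It remains to bound $\rho_\gamma$ and $\int p^2\gamma\,dp$. In the low-temperature regime $\rho_0 a/T\geq O(1)$, we have $T\leq\rho_0 a$, so Lemmas \ref{lem:rhogamma2contribution} and \ref{lem:rho_gamma1verylow} give $\rho_\gamma\leq C(\rho_0 a)^{3/2}$. For the $p^2$-moment I would use the algebraic identity, with $A=p^2+\delta$ and $B=\rho_0\widehat{Vw}(p)$, that the zero-temperature part of $\gamma$ satisfies
\[
\gamma_0(p)\;=\;\frac{B^2}{2\sqrt{A^2+2AB}\,\bigl(A+B+\sqrt{A^2+2AB}\bigr)}\;\leq\;\frac{B^2}{4A^2},
\]
so that $p^2\gamma_0(p)\leq \rho_0^2\widehat{Vw}(p)^2\,p^2/\bigl(4(p^2+\delta)^2\bigr)\leq \rho_0^2\widehat{Vw}(p)^2/(4p^2)$. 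The bounds $\|\widehat{Vw}\|_\infty\leq Ca$ and $\|\widehat{Vw}/p^2\|_1\leq C$ established in Subsection \ref{dersimpfunc} then give $\int p^2\gamma_0\,dp\leq C\rho_0^2 a$, uniformly in $\delta\geq 0$. The thermal part is controlled by $G(p)\geq T^{-1}p^2$ at large momenta and contributes at most $O(T^{5/2})\leq O((\rho_0 a)^{5/2})$ to $\int p^2\gamma\,dp$, which is subdominant relative to $\rho_0^2 a$ by a factor $(\rho_0 a^3)^{1/2}\to 0$.

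Combining these inputs gives $E_2+E_3\leq C\rho_0^3 a^4$ and $E_4\leq C\rho_0^{7/2}a^{11/2}$. Using $\rho_0\leq\rho$, both quantities divided by $(\rho a)^{5/2}$ are bounded by $\rho a^3=(\rho^{1/3}a)^3$, which vanishes in the dilute limit, yielding the claimed $o((\rho a)^{5/2})$ estimate. The main technical hurdle is maintaining uniformity in $d\geq 0$: this is achieved by the chain of inequalities $p^2/(p^2+\delta)^2\leq 1/(p^2+\delta)\leq 1/p^2$, which removes the $\delta$-dependence without losing decay and combines with the $L^1$ bound on $\widehat{Vw}^2/p^2$ to give a $\delta$-independent $p^2$-moment bound. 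A secondary point is to confirm that the thermal contribution remains dominated by the zero-temperature one across the full range $\rho_0 a/T\geq O(1)$, for which the lower bound on $G(p)$ from the proof of Lemma \ref{prop:simplfunctsol} combined with an argument analogous to Step 4 of Subsection \ref{furtherapriori} suffices.
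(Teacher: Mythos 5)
Your proof is correct but follows a genuinely different route from the paper's. The paper reuses the momentum-cutoff decomposition from Lemma \ref{lem:errgammastar} (the estimates \eqref{basicestm}: Taylor bound $Ca^3b^2$ inside $|p|\leq b$, the crude $\|\widehat{V}\|_\infty\leq Ca$ bound outside) with the choice $b=\rho^{1/3}$, and then proves the tail estimate $\int_{|p|>\rho^{1/3}}\gamma^{\rho_0,\delta}=o((\rho_0 a)^{3/2})$ by Dominated Convergence exactly as in Lemma \ref{lem:rhogamma2contribution}. You instead apply the global Taylor bound $|\widehat{V}(k)-\widehat{V}(0)|\leq Ca^3|k|^2$ and reduce all three errors to the single second moment $\int p^2\gamma^{\rho_0,\delta}\,dp$, which you control by the explicit algebraic identity $\gamma^{(1)}=\tfrac{B^2}{2\sqrt{A^2+2AB}(A+B+\sqrt{A^2+2AB})}$ together with the $L^1$ bound $\|\widehat{Vw}/p^2\|_1\leq C$ from Subsection \ref{dersimpfunc}, giving $\int p^2\gamma^{(1)}\leq C\rho_0^2a$ uniformly in $\delta$, plus a separate estimate on the thermal part. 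What your approach buys is a cleaner and more self-contained argument with no cutoff to optimise and no DCT step; what the paper's approach buys is a slightly sharper power (your route loses a factor $(\rho^{1/3}a)^{3/2}$ in the $E_2,E_3$ terms rather than $(\rho^{1/3}a)^2$, and is wasteful at large $|p|$ where $|\widehat V-\widehat V(0)|$ actually saturates at $Ca$ rather than growing like $a^3p^2$) — both are more than enough for the claimed $o((\rho a)^{5/2})$. Two minor caveats worth noting: first, the bound $\gamma_0\leq B^2/(4A^2)$ with the exact constant $1/4$ holds for $B\geq 0$; where $\widehat{Vw}(p)<0$ one has $|p|\gtrsim 1/a$ and hence $|B|\ll A$, so the inequality persists with a possibly different universal constant, which is all you need. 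Second, there is a small arithmetic slip at the end: $\rho_0^3a^4/(\rho a)^{5/2}\leq\rho^{1/2}a^{3/2}=(\rho a^3)^{1/2}$, not $\rho a^3$ as written; the conclusion is unaffected since both vanish in the dilute limit. The reference to Step 4 of Subsection \ref{furtherapriori} in your last sentence is also a bit off-target (that step concerns the critical density), but the surrounding argument for the thermal part of the moment is sound.
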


We also prove a final lemma which will later be used to treat the error term $E_1$. Note that the reason we consider the function $f$ below is that $\alpha^{\rho_0,\delta}-\alpha_0=-(2\pi)^{3}t_0\delta_0-f$.

\begin{lemma}[Preparation for estimates on $E_1$]
\label{prepE1}
Let $\sigma_0\geq0$ and $d_0\geq0$ be fixed constants, and let $-1\leq\theta\leq0$, $0\leq d\leq d_0$, $0\leq\sigma\leq\sigma_0$, $\phi>0$ and $0\leq\rho_0\leq\rho$. Assume $\rho_0a/\phi^2=\sigma/8\pi$, $\phi a\ll1$, and let $\delta=d\phi^2$, $t_0=\theta\rho_0$. We define
\begin{equation}
\label{reff}
\ba
f(p)&:=(\rho_0+t_0)\left(\frac{\beta(p)}{TG(p)}- \frac{1}{2 p^2}\right)\widehat{Vw}(p).\\
&=\frac12(\rho_0+t_0)\widehat{Vw}(p)\left[\frac{1}{TG}-\frac{1}{p^2}\right]+\frac{(\rho_0+t_0)\widehat{Vw}(p)}{TG(e^G-1)}.
\ea
\end{equation}
For $\phi^2/T\ll1$, we have
\[
\int f(p)dp=T\phi(1+\theta)\sigma\frac{2\pi^2}{\sqrt{d+2(1+\theta)\sigma}+\sqrt{d}}+o(T\phi),
\]
as well as
\[
\int|f(p)|dp\leq CT\phi \hspace{1cm}\text{and}\hspace{1cm} \int\limits_{|p|>\sqrt{T}} |f(p)|dp\leq C\phi^3.
\]
For $\phi^2/T\geq O(1)$, we have
\[
\int|f(p)|dp\leq C\phi^3.
\]
The errors above depend only on $d_0$.
\end{lemma}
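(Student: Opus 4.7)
The strategy is a rescaling $p=\phi q$ that exhibits $\phi$ as the natural small parameter. Under this substitution, $TG(\phi q)=\phi^2\tilde G(q)$ with $\tilde G(q)^2=(q^2+d)^2+2(q^2+d)(1+\theta)\sigma\widehat{Vw}(\phi q)/(8\pi a)$, and the bound $\|\partial\widehat{Vw}\|_\infty\leq Ca^2$ gives $\widehat{Vw}(\phi q)/(8\pi a)=1+O(a\phi|q|)$, so $\tilde G\to\tilde G_0:=\sqrt{(q^2+d)^2+2(q^2+d)(1+\theta)\sigma}$ uniformly on bounded $q$-sets. Since $\rho_0+t_0=(1+\theta)\sigma\phi^2/(8\pi a)$, I would rewrite
\[
\int f(p)\,dp=(1+\theta)\sigma\phi\int\frac{\widehat{Vw}(\phi q)}{8\pi a}\left[\frac{1}{\tilde G(q)(e^{\phi^2\tilde G(q)/T}-1)}+\frac{1}{2\tilde G(q)}-\frac{1}{2q^2}\right]\,dq,
\]
which reduces the whole lemma to analysing this bracket in the two regimes.

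For the leading asymptotic when $\phi^2/T\ll1$, I would expand the Bose factor $(e^x-1)^{-1}=1/x-1/2+O(x)$ on the region where $\phi^2\tilde G/T$ is small: the first term in the bracket becomes $T/(\phi^2\tilde G^2)-1/(2\tilde G)+O(\phi^2/T)$, and the bracket simplifies to $T/(\phi^2\tilde G^2)-1/(2q^2)+O(\phi^2/T)$. The dominant piece yields $T\phi(1+\theta)\sigma\int dq/\tilde G_0(q)^2+o(T\phi)$. The integral is computed in closed form by partial fractions $\tilde G_0^{-2}=(2(1+\theta)\sigma)^{-1}[(q^2+d)^{-1}-(q^2+d+2(1+\theta)\sigma)^{-1}]$ and the identity $\int_0^\infty dr/(r^2+b^2)=\pi/(2b)$, giving $\int dq/\tilde G_0(q)^2=2\pi^2/(\sqrt{d+2(1+\theta)\sigma}+\sqrt{d})$, matching the stated formula. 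Contributions from the region $|q|\gtrsim\sqrt{T}/\phi$ are exponentially small in $\phi^2\tilde G/T$, and the subleading $-1/(2q^2)$ piece together with the $O(\phi^2/T)$ Bose corrections contribute only $O(\phi^3)=o(T\phi)$.

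The $L^1$ bound $\int|f|\,dp\leq CT\phi$ follows from the same rescaling by placing absolute values inside and using the uniform bound $|\widehat{Vw}(\phi q)|\leq 8\pi a$ together with finiteness of $\int dq/\tilde G_0(q)^2$. For the tail $\int_{|p|>\sqrt{T}}|f|\,dp\leq C\phi^3$ in the moderate regime, I would split the two pieces of $f$: the zero-temperature part $(\rho_0+t_0)\widehat{Vw}(p)[1/TG-1/p^2]/2$ Taylor-expands to $O(\phi^4/p^4)$ for $|p|>\sqrt{T}$ (using $(TG)^2=p^4+O(\phi^2p^2+\phi^4)$), which integrates to $O(\phi^4/\sqrt{T})=o(\phi^3)$; the thermal part gains exponential decay from $1/(e^G-1)\leq 2e^{-p^2/T}$ when $G\geq 1$ and is controlled using the $L^1$ bound $\|\widehat{Vw}/p^2\|_1\leq C$ established after \eqref{someeq924}. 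Finally, in the low-temperature regime $\phi^2/T\geq O(1)$, $\phi^2\tilde G(q)/T$ is bounded below for $q$ in any bounded set so the Bose factor and $\beta$ are uniformly bounded; the cancellation $\beta/\tilde G-1/(2q^2)=O(1/q^4)$ for large $q$ (Taylor-expand the square root in $\tilde G$) makes the integrand in rescaled variables lie in $L^1(\mathbb{R}^3)$ uniformly in $d$, and combined with the Jacobian $\phi^3$ this yields $\int|f|\,dp\leq C\phi^3$.

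The main technical obstacle I anticipate is the moderate-regime tail estimate $\int_{|p|>\sqrt{T}}|f|\leq C\phi^3$. Bounding the thermal part only through $|\widehat{Vw}|\leq 8\pi a$ and $1/(e^{p^2/T}-1)\leq Ce^{-p^2/T}$ naively gives $O(\phi^2\sqrt{T})$, which exceeds $\phi^3$ whenever $T\gg\phi^2$; to reach $\phi^3$ one must combine the Bose-Einstein decay with the sharper $a$-independent integrability $\|\widehat{Vw}/p^2\|_1\leq C$ and carefully track cancellations between the zero-temperature and thermal pieces. Everything else is a direct, if tedious, application of the rescaling and of Taylor expansion of the square root defining $TG$.
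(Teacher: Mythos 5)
Your rescaling $p=\phi q$, Bose-factor expansion $(e^x-1)^{-1}=1/x-1/2+O(x)$, and the closed-form evaluation via partial fractions reproduce the paper's Step 2 exactly; the dominated-convergence argument for $\phi^2/T\gtrsim 1$ matches the paper's Step 3. The decomposition into the zero-temperature piece $\frac12(\rho_0+t_0)\widehat{Vw}[1/TG-1/p^2]$ and the thermal piece $(\rho_0+t_0)\widehat{Vw}/[TG(e^G-1)]$ is exactly the splitting in the second line of \eqref{reff} that the paper's Step 1 uses. (Your intermediate rewrite of $\int f\,dp$ drops the Jacobian factor --- the prefactor should be $(1+\theta)\sigma\phi^3$, not $(1+\theta)\sigma\phi$ --- but the dominant term $T/(\phi^2\tilde G^2)$ times $\phi^3$ does give $T\phi\int dq/\tilde G_0^2$, so you arrive at the right asymptotic.)

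The genuine issue is the tail estimate $\int_{|p|>\sqrt T}|f|\leq C\phi^3$, and you have correctly identified it. Your own analysis of the thermal piece gives $O(\phi^2\sqrt T)$, and this is in fact sharp: on the annulus $\sqrt T<|p|<2\sqrt T$ one has $TG\asymp T$, $G\asymp 1$ (so the Bose factor is $\Theta(1)$), $(\rho_0+t_0)\widehat{Vw}(p)\asymp(1+\theta)\sigma\phi^2$, and the volume is $\asymp T^{3/2}$, giving a contribution $\gtrsim\sigma\phi^2\sqrt T$, which strictly exceeds $\phi^3$ whenever $\phi\ll\sqrt T$. Your proposed remedies do not repair this: using $\|\widehat{Vw}/p^2\|_1\leq C$ together with the exponential cutoff only gives $O(\rho_0+t_0)=O(\phi^2/a)$ (worse, since $\phi a\ll1$), and there is no sign cancellation to exploit since on the annulus $|p|\ll a^{-1}$ the quantity $\widehat{Vw}(p)$ is close to $8\pi a>0$. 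The paper's own Step 1 ("Similar to \eqref{someobserv}, we then have \dots") is terse and appears to import the same issue from \eqref{someobserv} itself (where the tail of the thermal contribution to $\gamma^{\rho_0,\delta}$ over $|p|>\sqrt T$ is $O(T^{3/2})$, not $O(\phi^3)$). Fortunately this does not affect the paper's conclusions: in the downstream use (Step 2 of Lemma \ref{lem:alphaalpha_0small}), the tail is multiplied by $|\widehat V(p)|\leq Ca$ and by $|t_0|\lesssim T^2a$, and $|t_0|\cdot a\cdot\phi^2\sqrt T = O(T^{9/2}a^4)=o(T^4a^3)$, so the weaker bound $O(\phi^2\sqrt T)$ suffices there. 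So the correct move is not to chase $C\phi^3$ but to observe that $C\phi^2\sqrt T$ (or even the trivial $C T\phi$) is all that the applications require. As written, neither your proposal nor the paper's Step 1 establishes the bound $C\phi^3$ as stated, and that bound appears to be an overclaim for $\sigma>0$.
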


Note that some lemmas above assume that $d$ is bounded. In Subsections \ref{sec:crittemp} and \ref{subsectionenexp}, we will argue that this can be assumed. For Subsection \ref{subsectionenexp}, we will need the following lemma to do this.

\begin{lemma}
\label{fullintegralbehaviour}
Let $0\leq\rho_0\leq\rho$, $d\geq0$, $\sigma=8\pi$ and $t_0=\theta=0$. Let $\delta=d\rho_0 a=d\phi^2$. For $d\gg1$, we have
\[
F^{(1)}-d(\rho_0a)\rho^{(1)}_\gamma\geq C\min\{d^{1/2}(\rho_0a)^{5/2},a^{-1}(\rho_0a)^2\}.
\]
Also, $\rho^{(1)}_\gamma\leq C(\rho_0a)^{3/2}$ and $\rho^{(1)}_\gamma\to0$ as $d\to\infty$. 
\end{lemma}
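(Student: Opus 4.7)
The plan is to combine the integrands of $F^{(1)}$ and $\rho^{(1)}_\gamma$ from Corollary~\ref{energyexpressions} (with $\sigma=8\pi$, $\theta=0$, so that $\phi^2=\rho_0 a$) into a single integral whose integrand admits a clean pointwise lower bound. Writing $u=p^2+d$, $A=\widehat{Vw}(\phi p)/a$, and $E=\sqrt{u^2+2uA}$, the identity $E-u-A=-A^2/(E+u+A)$ reduces the combination to
\[
F^{(1)}-d(\rho_0 a)\rho^{(1)}_\gamma=\frac{\phi^5}{2(2\pi)^3}\int A^2\left[\frac{1}{2p^2}-\frac{E+d}{E(E+u+A)}\right]dp.
\]

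The main algebraic step is to prove the pointwise inequality $E(E+u+A)-2p^2(E+d)\ge d^2+2p^2 A$. Expanding gives $E^2+E(d+A-p^2)-2p^2 d$, and substituting $E^2=p^4+2p^2 d+d^2+2uA$ leaves $p^4+d^2+2(p^2+d)A+E(d+A-p^2)$. If $d+A\ge p^2$ the last term is nonnegative, whereas if $d+A<p^2$ the bound $E\le u+A$ gives $E(d+A-p^2)\ge(p^2+d+A)(d+A-p^2)=(d+A)^2-p^4$; in both cases the claim follows. Combined with the upper bound $E(E+u+A)\le 2(p^2+d+A)^2$ (also from $E\le u+A$), the bracket is bounded below by $(d^2+2p^2A)/[4p^2(p^2+d+A)^2]$.

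To extract the right order, I use that \eqref{sameest} implies $A\ge 4\pi$ on the ball $|p|\le c_1/(\phi a)$ for some small fixed $c_1>0$. For $d\ge 8\pi$ we also have $p^2+d+A\le 2(p^2+d)$ there, so the integrand is bounded below by $C d^2/(p^2(p^2+d)^2)$. A direct computation, restricting further to the subset where $p^2+d\le 2d$ (that is $|p|\le\min\{\sqrt d,c_1/(\phi a)\}$), shows
\[
\int_{|p|\le c_1/(\phi a)}\frac{d^2}{p^2(p^2+d)^2}dp\ge C\min\bigl\{\sqrt{d},\,1/(\phi a)\bigr\}.
\]
Multiplying by $\phi^5$ and using $\phi^5/(\phi a)=\phi^4/a=a\rho_0^2$ yields the stated bound $C\min\{d^{1/2}(\rho_0 a)^{5/2},a\rho_0^2\}$.

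The two auxiliary claims follow from the analogous rewriting $\rho^{(1)}_\gamma=\phi^3/(2(2\pi)^3)\int A^2/[E(E+u+A)]dp$. For $d\ge 1$, the bounds $E\ge u$ and $E+u+A\ge 2u$ give the pointwise bound $A^2/[E(E+u+A)]\le A^2/(2(p^2+d)^2)$, which integrates to $\rho^{(1)}_\gamma\le C\phi^3/\sqrt{d}$, proving both $\rho^{(1)}_\gamma\to 0$ as $d\to\infty$ and, together with Lemma~\ref{lem:rhogamma2contribution} applied with $d_0=1$ for the range $0\le d\le 1$, the uniform bound $\rho^{(1)}_\gamma\le C\phi^3=C(\rho_0 a)^{3/2}$. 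I expect the main obstacle to be the case-analysis bookkeeping in the algebraic inequality of the second paragraph, but all ingredients there are elementary polynomial manipulations.
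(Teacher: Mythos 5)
Your algebraic rewriting of the combined quantity is correct and is a genuinely different route from the paper. The paper Taylor-expands $F^{(1)}$ and $-d\phi^2\rho^{(1)}_\gamma$ separately for $d\gg1$, adds the two leading terms to get the asymptotic expression $(2\pi)^{-3}\tfrac{d^2\phi^5}{4}\int A^2 p^{-2}(p^2+d)^{-2}dp$, and then restricts to the inner ball. You instead write the \emph{exact} identity
\[
F^{(1)}-d\phi^2\rho^{(1)}_\gamma=\frac{\phi^5}{2(2\pi)^3}\int A^2\,\frac{E(E+u+A)-2p^2(E+d)}{2p^2\,E(E+u+A)}\,dp,
\]
which is cleaner and avoids discussing Taylor-remainder control. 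The auxiliary claims on $\rho^{(1)}_\gamma$ are essentially fine; your intermediate bounds $E\geq u$ and $E+u+A\geq 2u$ fail when $A<0$, but one can instead use $E+u+A\geq 2E$ together with $E^2=u(u+2A)\geq c\,u^2$ (which follows from $p^2+2A\geq cp^2$, established after \eqref{eq:simplifiedEL}) to get $A^2/[E(E+u+A)]\leq C A^2/(p^2+d)^2$ with a different constant, and the rest goes through.

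There is, however, a real gap in the main estimate. Because $A=\widehat{Vw}(\phi p)/a$ can be negative once $|p|\gtrsim 1/(\phi a)$ (only $|\widehat{Vw}|\leq 8\pi a$ is known, not $\widehat{Vw}\geq 0$), the numerator $E(E+u+A)-2p^2(E+d)$ — and hence the integrand — is \emph{not} pointwise nonnegative. For instance, $A=-8\pi$, $d=100$, $p^2=1000$ gives $E\approx 1074.6$, $E(E+u+A)\approx 2.31\times 10^6 < 2p^2(E+d)\approx 2.35\times 10^6$. Your lower bound is obtained by integrating only over $|p|\leq c_1/(\phi a)$, where $A\geq 4\pi$ makes the integrand positive, but that is not a lower bound on the full integral unless you also control the (possibly negative) contribution from $|p|> c_1/(\phi a)$. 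The paper sidesteps this because its asymptotic integrand $d^2 A^2/(p^2(p^2+d)^2)$ is manifestly nonnegative. The missing estimate is doable: your bound already gives $\mathrm{bracket}\geq (d^2+2p^2A)/[4p^2(p^2+d+A)^2]\geq -|A|/[2(p^2+d+A)^2]$, and combining $|A|\leq 8\pi$ with $\int A^2\,dp\leq C/(a^3\phi^3)$ (from $\|\widehat{Vw}\|_2^2\leq C/a$) shows the outer region contributes at worst $O(\phi^6 a)$, which is negligible against $C\phi^5\min\{\sqrt d, 1/(\phi a)\}$ in the regime $\phi a\ll1$, $d\gg1$ — but this needs to be said. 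A smaller related issue: in the case $d+A\geq p^2$ with $A<0$, dropping $E(d+A-p^2)$ leaves you needing $p^4+2dA\geq 0$, which is not automatic; you should retain a piece of the $E$-term (e.g.\ $E\geq u/\sqrt2$ for $d$ large) to close that case.
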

 
The proof of all lemmas stated above can be found in Appendix \ref{app:intexp}.

\subsection{Proof of Theorems \ref{thm:cancrittemp} and \ref{thm:grandcancrittemp}}
\label{sec:crittemp}
According to the a priori estimate in Proposition \ref{prop:criticalregion}, it suffices to zoom in on
\begin{equation}
\label{def:critreg}
|\rho-\rho_{\rm{fc}}|<C\rho(\rho^{1/3} a)
\end{equation}
within the region \eqref{dillim} to study the critical temperature: for larger $\rho$ there is a condensate, and for smaller $\rho$ there is none. Note that $\rho a/T\ll1$ in this region, which was described as `moderate temperatures' in the previous section. We actually have more a priori information: Lemma \ref{lemub} states that $\rho_0$ is of order $\rho(\rho^{1/3} a)$, i.e.\ of order $T^2a$.

For the non-interacting gas, the critical density is of order $T^{3/2}$. Since we are considering a weakly-interacting gas (through the dilute limit), one expects to again obtain an approximate critical density of order $T^{3/2}$.  We therefore write
\begin{equation}
\label{kdef}
\rho=\rho_{\rm{fc}}+\frac{k}{8\pi}T^2a 
\end{equation}
for a dimensionless parameter $k$ (which is bounded in the region \eqref{def:critreg}). Note that $T^2a = T^{3/2}(\sqrt{T}a)\ll  T^{3/2}$, and so $T^2a$ is indeed a lower order correction to $\rho_{\rm{fc}}$. We also consider
\begin{equation}
\label{def:rho0formmoderateT}
\rho_0=\frac{\sigma}{8\pi}T^2a \hspace{2cm} \phi=Ta \hspace{2cm} \delta=dT^2a^2
\end{equation}
for some dimensionless parameters $d\geq0$ and $0\leq\sigma\leq C$. It suffices to consider bounded $\sigma$ by Lemma \ref{lemub}. We will also show that the a priori estimates allow us to assume that $d$ is bounded. This gives access to the lemmas in the previous section since $\phi a=\phi^2/T=Ta^2\ll1$ in the dilute limit. 
Finally, we write
\begin{equation}
\label{def:deltat0formmoderateT2}
t_0=\frac{\tau}{8\pi}T^2a=\left(\frac{\tau}{\sigma}\right) \rho_0=\theta\rho_0,
\end{equation}
where $\tau=\theta\sigma\in\mathbb{R}$ is dimensionless.

We are free to choose $-\rho_0\leq t_0\leq0$ depending on $\rho_0$ and $\delta$, as this was simply a parameter entering in Lemma \ref{prop:simplandfullfreeenergycomp} and the definition of $\alpha_0$ (see \eqref{def:alpha_0}). To be able to prove that the error term $E_1$ is indeed small for the $\alpha^{\rho_0,\delta}$ from Lemma \ref{prop:simplfunctsol}, we will choose $t_0$ such that the self-consistent equation 
\begin{equation}
\label{def:t0choice}
\int (\alpha^{\rho_0,\delta}-\alpha_0)=0
\end{equation}
is satisfied. The following lemma confirms that this choice implies that the error $E_1$ is small. It also shows that the equation above leads to a concrete equation for $\tau$ in terms of $\sigma\geq0$ and $d\geq0$, which implies that $-\sigma\leq\tau\leq0$, i.e.\ $-1\leq\theta\leq0$.

\begin{lemma}[Self-consistent equation for $t_0$ and estimate on $E_1$]
\label{lem:alphaalpha_0small}
Under the assumptions introduced at the start of this subsection, in particular the self-consistent equation \eqref{def:t0choice} and $\sqrt{T}a\ll1$, we have 
\begin{equation} 
\label{eq:tauselfconseq}
\tau=-\frac{2(\sigma+\tau)}{\sqrt{d+2(\sigma+\tau)}+\sqrt{d}}+o(1).
\end{equation}
This equation has a unique solution for every $d\geq0$ and $\sigma\geq0$, and it satisfies $-\sigma\leq\tau\leq0$. We also have
\[
\iint(\alpha^{\rho_0,\delta}-\alpha_0)(p)\widehat{V}(p-q)(\alpha^{\rho_0,\delta}-\alpha_0)(q)dpdq= o(T^4a^3).
\]
The errors above holds uniformly in $\sigma$ and $d$ as long as they are bounded.
\end{lemma}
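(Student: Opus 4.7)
The plan is to convert the self-consistency constraint~\eqref{def:t0choice} into a scalar equation for the dimensionless parameter $\tau=\theta\sigma$, solve that equation by monotonicity, and then bound $E_1$ by exploiting the cancellation that the constraint produces. Combining Lemma~\ref{prop:simplfunctsol} with~\eqref{def:alpha_0} immediately gives
\[
\alpha^{\rho_0,\delta}(p)-\alpha_0(p)=-(2\pi)^3 t_0\,\delta_0(p)-f(p),
\]
with $f$ as in~\eqref{reff}, so~\eqref{def:t0choice} reduces to the scalar equation $\int f\,dp=-(2\pi)^3 t_0$. Inserting the scalings $\phi=Ta$, $t_0=(\tau/8\pi)T^2a$, $\rho_0=(\sigma/8\pi)T^2a$, $(1+\theta)\sigma=\sigma+\tau$ and the first expansion of Lemma~\ref{prepE1} for $\int f$, then dividing by $\pi^2 T^2 a$, yields exactly~\eqref{eq:tauselfconseq}.

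For the existence/uniqueness of $\tau$, I rationalise $\frac{2u}{\sqrt{d+2u}+\sqrt d}=\sqrt{d+2u}-\sqrt d$ with $u:=\sigma+\tau$, so that the leading-order equation becomes $u+\sqrt{d+2u}=\sigma+\sqrt d$. The left-hand side is strictly increasing and is a bijection of $[0,\infty)$ onto $[\sqrt d,\infty)$, so a unique $u\geq0$ exists for every $\sigma,d\geq0$, i.e.\ a unique $\tau\in[-\sigma,0]$. Since this map has derivative bounded below by $1$, the $o(1)$-perturbed equation retains a unique solution in $[-\sigma,0]$ once $\sqrt{T}a$ is sufficiently small, by a direct IVT/contraction argument; the degenerate case $\sigma=0$ forces $t_0=\tau=0$ and $f\equiv0$ trivially.

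The heart of the proof is the $E_1$ estimate, and the whole point of the choice~\eqref{def:t0choice} is that, setting $h:=\alpha^{\rho_0,\delta}-\alpha_0$, one has $\int h=0$, which annihilates the constant part of $\widehat V$:
\[
E_1=\iint h(p)\bigl[\widehat V(p-q)-\widehat V(0)\bigr]h(q)\,dp\,dq.
\]
Expanding $h=-(2\pi)^3 t_0\delta_0-f$, the delta-delta piece also vanishes (as $\widehat V(0)-\widehat V(0)=0$), leaving one cross term of the form $t_0\int f(p)(\widehat V(p)-\widehat V(0))\,dp$ and one pure $f$-$f$ double integral. Radiality forces $\nabla\widehat V(0)=0$, so the bound $\|\partial^2\widehat V\|_\infty\leq Ca^3$ from~\eqref{assumptionsV} yields $|\widehat V(p)-\widehat V(0)|\leq Ca^3|p|^2$ near the origin, while $|\widehat V|\leq Ca$ globally. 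Splitting the domain at $|p|=\sqrt{T}$ and combining with Lemma~\ref{prepE1}'s $L^1$-bounds $\int|f|\leq CT^2a$, $\int_{|p|>\sqrt{T}}|f|\leq CT^3 a^3$ and $|t_0|\leq CT^2a$, both remaining terms are $O(T^5a^5)=O(T^4a^3\cdot Ta^2)=o(T^4a^3)$, since $Ta^2\leq(\sqrt{T}a)^2\ll 1$ by~\eqref{eq:thermalwavelengthcondition}. The main obstacle is exactly this estimate: a naive bound on $(2\pi)^6 t_0^2\widehat V(0)$ alone would give only $O(T^4a^3)$, so the delicate cancellation forced by the self-consistent choice of $t_0$, rather than any convenient perturbation, is what drops the error from $O(T^4a^3)$ to $o(T^4a^3)$.
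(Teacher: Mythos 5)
Your proof is correct and takes essentially the same route as the paper: you derive \eqref{eq:tauselfconseq} by matching $-(2\pi)^3 t_0 = \int f$ against the expansion of $\int f$ in Lemma~\ref{prepE1}, and you bound $E_1$ by exploiting $\int h=0$ to replace $\widehat V(p-q)$ by $\widehat V(p-q)-\widehat V(0)$ in the quadratic form, then split at $|p|=\sqrt T$ using the $L^1$-bounds on $f$ and the second-order Taylor bound on $\widehat V$ to get $O(T^5a^5)=o(T^4a^3)$. Your rationalisation $\frac{2u}{\sqrt{d+2u}+\sqrt d}=\sqrt{d+2u}-\sqrt d$ giving a strictly monotone scalar equation is in fact a small improvement over the paper's intermediate-value argument, since it supplies the claimed uniqueness explicitly rather than only existence.
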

\begin{proof}
\textit{Step 1.}
The self-consistent equation \eqref{def:t0choice} says $(2\pi)^{3}t_0=-\int f$, with $f$ as in Lemma \ref{prepE1}, so by using that lemma and the assumptions introduced at the start of this subsection, we conclude that \eqref{eq:tauselfconseq} holds.
To see that it always has a solution in $[-\sigma,0]$, we rewrite the equation as 
\[
\tau\left(\sqrt{d+2(\sigma+\tau)}+\sqrt{d}\right)+2(\sigma+\tau)=0,
\]
and note that the left-hand side is a continuous function which goes from $-2\sigma\sqrt{d}\leq0$ to $2\sigma\geq0$ as $\tau$ goes from $-\sigma$ to $0$.\\
\textit{Step 2.} 
We use Lemma \ref{prepE1} again to conclude that 
\[
\int |f(p)|dp\leq CT\phi=CT^2a,\hspace{1cm} \int\limits_{|p|>\sqrt{T}} |f(p)|dp\leq C\phi^3=CT^3a^3.
\]
Since $\alpha^{\rho_0,\delta}-\alpha_0=-(2\pi)^{3}t_0\delta_0-f$ and $\int\alpha^{\rho_0,\delta}-\alpha_0=0$ by assumption, we have
\[
\begin{aligned}
&\left|\int(\alpha^{\rho_0,\delta}-\alpha_0)(p)\widehat{V}(p-q)(\alpha^{\rho_0,\delta}-\alpha_0)(q)dpdq\right|\\
&=\left|\int(\alpha^{\rho_0,\delta}-\alpha_0)(p)\widehat{V}(p-q)(\alpha^{\rho_0,\delta}-\alpha_0)(q)dpdq-\widehat{V}(0)\left(\int\alpha^{\rho_0,\delta}-\alpha_0\right)^2\right|\\
&\leq2(2\pi)^3|t_0|\int|\widehat{V}(p)-\widehat{V}(0)||f(p)|dp +\int |f(p)||\widehat{V}(p-q)-\widehat{V}(0)||f(q)|dpdq\\
&\leq C|t_0|T^3a^4+CT^5a^5=o(T^4a^3),
\end{aligned}
\]
where we have used the fact that $|\widehat{V}(p)-\widehat{V}(0)|\leq Ca^3T$ for $|p|\leq\sqrt{T}$ and $|\widehat{V}(p)|\leq Ca$ for all $p$.
\end{proof}

Before we prove the main theorem, we state a final error estimate. Its proof can be found in Appendix \ref{app:intexp}.
\begin{lemma}[Estimate on $E_5$]
\label{lem:errgammastar3}
Under the assumptions introduced at the start of this subsection, in particular $\sqrt{T}a\ll1$, we have 
\[
E_5(\gamma^{\rho_0,\delta},\alpha^{\rho_0,\delta},\rho_0)= o(T^4a^3).
\]
This holds uniformly in $d$ and $\sigma$ as long as they are bounded.
\end{lemma}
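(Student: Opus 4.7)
The plan is to follow the strategy used for the full minimizer in Proposition \ref{prop:impconvgamma}, but now applied to the explicit minimizer $\gamma^{\rho_{0},\delta}$ of Lemma \ref{prop:simplfunctsol}. The task is to sharpen the bound $E_{4}=O(T^{4}a^{3})$ of Lemma \ref{lem:errgammastar} by one order, identifying the precise coefficient of $T^{4}a^{3}$ hidden in the definition of $E_{5}$.

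\emph{Step 1 (truncation and Taylor expansion).} Following the method of Proposition \ref{lm:gvg2}, I would pick a cutoff $b=a^{-1}(\sqrt{T}a)^{s}$ with $s\in(2/3,1)$ and split $\iint\gamma^{\rho_{0},\delta}(p)[\widehat{V}(p-q)-\widehat{V}(0)]\gamma^{\rho_{0},\delta}(q)\,dp\,dq$ into $|p-q|\leq 2b$ and $|p-q|>2b$. The explicit form in Lemma \ref{prop:simplfunctsol}, combined with $|\widehat{Vw}|\leq 8\pi a$, yields $\gamma^{(2)}(p)\leq Ce^{-cp^{2}/T}$ for $p^{2}\gtrsim T$; together with the global bound $(2\pi)^{-3}\int\gamma^{(1)}=O(T^{3}a^{3})$ from Lemma \ref{lem:rhogamma2contribution}, this controls the tail $\iint_{|p-q|>2b}$ by $o(T^{4}a^{3})$, since $b^{2}/T=(Ta^{2})^{s-1}\to\infty$. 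On the bulk I would Taylor-expand $\widehat{V}(p-q)-\widehat{V}(0)=\frac{1}{6}\Delta\widehat{V}(0)|p-q|^{2}+R(p,q)$ with $|R|\leq Ca^{4}|p-q|^{3}$; the cubic remainder contributes at most $Ca^{4}b^{3}T^{3}=o(T^{4}a^{3})$. Radial symmetry of $\gamma^{\rho_{0},\delta}$ eliminates the cross term $2p\cdot q$ in $|p-q|^{2}$, reducing the quadratic piece to $\frac{\Delta\widehat{V}(0)}{3(2\pi)^{3}}\rho_{\gamma^{\rho_{0},\delta}}\int p^{2}\gamma^{\rho_{0},\delta}\,dp$.

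\emph{Step 2 (matching the leading coefficient).} By the identity \eqref{propertyofV}, the lemma reduces to
\[
\rho_{\gamma^{\rho_{0},\delta}}\!\int p^{2}\gamma^{\rho_{0},\delta}\,dp=\rho_{\rm fc}\!\int p^{2}\gamma_{0}\,dp+o(T^{4}).
\]
Writing the difference as $(\rho_{\gamma^{\rho_{0},\delta}}-\rho_{\rm fc})\int p^{2}\gamma_{0}+\rho_{\gamma^{\rho_{0},\delta}}\int p^{2}(\gamma^{\rho_{0},\delta}-\gamma_{0})$, the first summand is $O(T^{9/2}a)=o(T^{4})$ using $|\rho_{\gamma^{\rho_{0},\delta}}-\rho_{\rm fc}|\leq|\rho-\rho_{\rm fc}|+\rho_{0}=O(T^{2}a)$, which follows from \eqref{def:critreg} and Lemma \ref{lemub}. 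For the second summand I would prove $\int p^{2}(\gamma^{\rho_{0},\delta}-\gamma_{0})\,dp=o(T^{5/2})$ by splitting at $|p|=\sqrt{T}$: in the high-momentum region the explicit form gives $\gamma^{\rho_{0},\delta}(p)=\gamma_{0}(p)(1+O(Ta^{2}))+O(Ta^{2})e^{-cp^{2}/T}$, contributing $O(T^{7/2}a^{2})$; in the low-momentum region I would rescale $p=\sqrt{T}q$ and expand in the small parameter $\phi/\sqrt{T}=\sqrt{T}a$ along the lines of Lemma \ref{lem:rho_gamma1moderate}, the additional $p^{2}$ weight improving the infrared behaviour. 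The leading term recovers $\int p^{2}\gamma_{0}$ and the first correction is of order $\phi T^{2}=T^{3}a$, which is $o(T^{5/2})$.

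\emph{Main obstacle.} The technical heart is the kinetic-energy comparison of Step 2. The region $|p|\lesssim\sqrt{\rho_{0}a}\sim Ta$ is where $\gamma^{\rho_{0},\delta}$ exhibits the characteristic Bogoliubov linear dispersion: one computes $p^{2}\gamma^{\rho_{0},\delta}(p)\approx T/2$ as $p\to 0$, whereas $p^{2}\gamma_{0}(p)\approx T$, so the pointwise difference is of size $T$. A crude estimate then bounds this region's contribution by $T\cdot(\rho_{0}a)^{3/2}=O(T^{4}a^{3})=o(T^{5/2})$, which is acceptable but only barely so. The delicate point is the intermediate range $\sqrt{\rho_{0}a}\lesssim|p|\lesssim\sqrt{T}$, where neither the Bogoliubov nor the free-gas approximation is pointwise accurate; here the argument must use the rescaled asymptotic expansion of the explicit integrand with $p^{2}$ weight, chosen so that the leading term exactly produces $\int p^{2}\gamma_{0}$ and the correction is of order $\sqrt{T}a\cdot T^{5/2}$. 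Carrying out this expansion uniformly in the bounded parameters $d,\sigma$ is the analytic crux.
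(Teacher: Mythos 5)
Your proposal follows the same route as the paper's proof: truncate at a cutoff $b$ satisfying $\sqrt{T}\ll b$, control the tail via $\int_{|p|>b}\gamma^{\rho_0,\delta}=O(\phi^3)$, Taylor-expand $\widehat{V}$ to second order on the bulk (radial symmetry killing the cross term), and reduce the matter to comparing $\rho_{\gamma^{\rho_0,\delta}}\int p^2\gamma^{\rho_0,\delta}$ with $\rho_{\rm fc}\int p^2\gamma_0$. The only stylistic difference is that where the paper dispatches the kinetic-energy comparison by invoking Lemmas \ref{lem:rhogamma2contribution} and \ref{lem:rho_gamma1moderate} (and their proofs, for the pointwise/absolute-value control), you re-derive the needed estimates directly by rescaling; this is the same content, and your margin is actually comfortable rather than borderline, since the low-momentum contribution is $O(T^{5/2}(\sqrt{T}a)^3)=o(T^{5/2})$.
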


We are now ready to prove the first main theorem of the paper, which gives an expression for the critical temperature.
\begin{proof}[Proof of Theorem \ref{thm:cancrittemp}]
We will work with the notation introduced at the start of this section. We again refer to Proposition \ref{prop:criticalregion}, which contains the desired conclusion outside this region, so that we can restrict to the region \eqref{def:critreg}. We also recall Lemma \ref{lemub}, which implies $\rho_0\leq CT^2a$, so that we can consider $\sigma$ to be bounded. 

The proof will proceed as follows. In step 1, we will calculate the simplified minimal energy as a function of $\rho$ and $\rho_0$. In step 2, we discuss the precise relation between the minimization problem of the simplified and canonical functionals. In step 3, we prove the theorem by minimizing the simplified energy in $0\leq\rho_0\leq\rho$.\\

\textit{Step 1a.} We would like to calculate the simplified energy for $(\gamma^{\rho_0,\delta},\alpha^{\rho_0,\delta},\rho_0)$. We assume that $t_0(\delta,\rho_0)$ is defined as in Lemma \ref{lem:alphaalpha_0small}. Note that this means that $-1\leq\theta\leq0$ in \eqref{def:deltat0formmoderateT2}, so that we can apply the lemmas from the previous subsection (although we have yet to establish boundedness of $d$ to obtain uniform errors in all cases, which we will do in step 1c.). Corollary \ref{energyexpressions} and Lemmas \ref{lem:lowerorderenergyterm} and \ref{lem:leadingorderenergytermmoderate} together with \eqref{def:rho0formmoderateT} and \eqref{def:deltat0formmoderateT2} imply that for $\delta,\rho_0\geq0$:
\begin{equation}
\label{eq:simplfreeen1}
\begin{aligned}
&\mathcal{F}^{\rm{sim}}(\gamma^{\rho_0,\delta},\alpha^{\rho_0,\delta},\rho_0)=\left(\mathcal{F}^{\rm{s}}(\gamma^{\rho_0,\delta},\alpha^{\rho_0,\delta},\rho_0)+\delta\rho_{\gamma^{\rho_0,\delta}}\right)-\delta\rho_{\gamma^{\rho_0,\delta}}\\
&\quad +\widehat{V}(0)\rho^2+(12\pi a-\widehat{V}(0))\rho_0^2-8\pi a\rho\rho_0-4\pi a t_0^2-8\pi a t_0(\rho-\rho_0)\\
&=T^{5/2}f_{\rm{min}}- T^2 a^2(\rho-\rho_{\rm{fc}})(\sigma+\tau)+\widehat{V}(0)\rho^2\\
&\quad+T^4 a^3\Big[\frac{d}{8\pi}\left(\sqrt{d+2(\sigma+\tau)}+\sqrt{d}\right)-\frac{(d+2(\sigma+\tau))^{3/2}+d^{3/2}}{12\pi}\\ 
&\hspace{3cm}+\frac{\tau\sigma}{8\pi}-\frac{\tau^2}{16\pi}+(12\pi-\nu)\frac{\sigma^2}{64\pi^2}\Big]+ o\left(T^4 a^3 \right),
\end{aligned} 
\end{equation}
where we also used that according to Lemmas \ref{lem:rhogamma2contribution} and \ref{lem:rho_gamma1moderate}:
\begin{equation}
\label{rhogammasomeeq}
\rho_{\gamma^{\rho_0,\delta}}=\rho_{\rm{fc}}-\frac{T^2 a}{8\pi}\left(\sqrt{d+2(\sigma+\tau)}+\sqrt{d}\right)+o(T^2 a).
\end{equation}
The expressions above really only depend $d$ and $\sigma$, since $\tau$ satisfies \eqref{eq:tauselfconseq}. However, we are interested in rewriting the expression fully in terms of $\sigma$ and $k$. After all, we would like to investigate the nature of $\sigma$ (which defines $\rho_0$) for given $k$ (which defines $\rho$). First note that from the equation
\[
\rho=\rho_0+\rho_{\gamma^{\rho_0,\delta}}=\frac{\sigma}{8\pi}T^2 a+\rho_{\rm{fc}}-\frac{T^2 a}{8\pi}\left(\sqrt{d+2(\sigma+\tau)}+\sqrt{d}\right)+o(T^2 a),
\]
we obtain
\begin{equation}
\label{expression234}
\sqrt{d+2(\sigma+\tau)}+\sqrt{d}=\sigma+8\pi\frac{\rho_{\rm{fc}}-\rho}{T^2 a}=\sigma-k,
\end{equation}
where $k$ is defined in \eqref{kdef}.
This yields
\[
d=\left(\frac{(\sigma-k)^2-2(\sigma+\tau)}{2(\sigma-k)}\right)^2.
\]
We can also rewrite $\tau$ in terms of $\sigma$ and $k$ by using \eqref{eq:tauselfconseq} and \eqref{expression234}:
\bq
\label{eq:tausigmaksol}
 \tau=\frac{2\sigma}{k-\sigma-2}+o(1)\qquad \text{and}\qquad \sigma+\tau=\frac{\sigma(k-\sigma)}{k-\sigma-2}+o(1).
\eq
We plug these expressions into \eqref{eq:simplfreeen1} to obtain
\bq
\label{eq:simplfreeen3}
\ba
&\mathcal{F}^{\rm{sim}}(\gamma^{\rho,\rho_0},\alpha^{\rho,\rho_0},\rho_0)=T^{5/2}f_{\rm{min}}+\widehat{V}(0)\rho^2 
+T^4a^3\left[\frac{1}{8\pi}\left(\frac{(\sigma-k)^3}{12}\right.\right. \\& \left. \left. \qquad -\sigma^2\Big(\frac{1}{2} +\frac{1}{2+\sigma-k}\Big)\right)   -(\nu-8\pi)\frac{\sigma^2}{(8\pi)^2}\right]+o(T^4 a^3),
\ea 
\eq
where we now write $\gamma^{\rho_0,\rho}$ for the $\gamma^{\rho_0,\delta}$ that satisfies $\rho_{\gamma^{\rho_0,\delta}}+\rho_0=\rho$.
This can only be done for certain $\sigma$ and $k$: it was only for $\delta\geq0$ that we were able to obtain minimizers of this form. 

\textit{Step 1b.} We now determine for which $\sigma$ and $k$ \eqref{eq:simplfreeen3} holds.
Using \eqref{expression234} and the equation for $\tau$ \eqref{eq:tauselfconseq}, we know that, given a $\rho_0=\sigma \frac{T^2 a}{8\pi}$, minimizing the functional for some $d\geq 0$ leads to a minimizer with 
\[
\rho=\rho_{\rm{fc}}+\frac{T^2 a}{8\pi}\left(1+\sigma-\sqrt{d}-\sqrt{1+2\sigma+d+2\sqrt{d}}\right)+o(T^2 a),
\]
The above expression is maximal for $d=0$. Its value at this point is significant: fixing some $\rho_0$, we know that this is the maximal $\rho$ for which we will be able to find a minimizer to the simplified functional. This maximal $\rho$ is
\[
\rho_{\rm{max}}(\sigma)=\rho_{\rm{fc}}+\frac{T^2 a}{8\pi}k_{\rm{max}}(\sigma)+o(T^2 a),
\]
where we defined 
\[
k_{\rm{max}}(\sigma)=1+\sigma-\sqrt{1+2\sigma}. 
\]
Fixing some $k$, and considering all $\sigma\geq 0$, we can find out that \eqref{eq:simplfreeen3} holds whenever
\begin{equation}
\label{interval}
\sigma\in I(k):=\left\{
	\begin{array}{ll}
		[0,\infty)  & \mbox{if } k\leq 0 \\
		\left[k+\sqrt{2k},\infty\right) & \mbox{if } k>0
	\end{array}
\right..
\end{equation}
Summarizing, it is for these $\sigma$ and $k$ that there exists a $(\gamma^{\rho_0,\rho},\alpha^{\rho_0,\rho})$.

\textit{Step 1c.} We will be interested in using \eqref{eq:simplfreeen3} as a lower bound for the energy, where the error is uniform in $\sigma$ and $k$. We would now like to show that $d$ is bounded, so that we obtain uniform errors in \eqref{eq:simplfreeen1}, \eqref{rhogammasomeeq} and consequently \eqref{eq:simplfreeen3}.

As noted at the start of the proof, it suffices to consider $\rho_0\leq CT^2a$. Combined with \eqref{def:critreg}, this tells us that $\rho_\gamma\geq\rho-C_0T^2a$ for some constant $C_0$. We claim that it suffices to restrict to $d\leq d_0$, which is chosen such that
\[
\frac{2\sqrt{d_0}}{8\pi}\geq 2C_0.
\]
To see this, consider $d>d_0$. Because $\rho_{\gamma^{\rho_0,\delta}}$ is decreasing in $\delta$ by the structure of the minimization problem in Lemma \ref{prop:simplfunctsol}, we know that 
\[
\ba
\rho_{\gamma^{\rho_0,d}}\leq\rho_{\gamma^{\rho_0,d_0}}&=\rho_{\rm{fc}}-\frac{T^2 a}{8\pi}\left(\sqrt{d_0+2(\sigma+\tau)}+\sqrt{d_0}\right)+o(T^2 a)\\
&\leq \rho_{\rm fc}-2C_0T^2a+o(T^2 a),
\ea
\]
where the error only depends on $d_0$ since we have a priori restricted to bounded $\sigma$. This violates the a priori restriction, confirming that we can restrict to $d\leq d_0$. We have obtained the important conclusion that we can think of the error in \eqref{eq:simplfreeen3} as uniform. \\

\textit{Step 2.}
Our strategy will be to connect \eqref{eq:simplfreeen3} to $\cF^{\rm can}$ using Corollary \ref{lem:differencecriticalregion}. For convenience, we will first assume $k\leq0$, so that all $0\leq\sigma\in I(k)$. 

On the one hand, any potential minimizer $(\gamma,\alpha,\rho_0)$ with $\rho_\gamma+\rho_0=\rho$ will have to satisfy the a priori estimates in Propositions \ref{lm:gvg2} and \ref{prop:impconvgamma}. This means that
\begin{equation}
\label{ineqF1}
\ba
&\cF^{\rm can}(\gamma,\alpha,\rho_0)\\
&\geq\cF^{\rm sim}(\gamma,\alpha,\rho_0)+\frac{\zeta(3/2)\zeta(5/2)}{256\pi^3}\Delta\widehat{V}(0)T^4-(E_2+E_3+E_5)(\gamma,\alpha,\rho_0)\\
&\geq\cF^{\rm sim}(\gamma^{\rho_0,\rho},\alpha^{\rho_0,\rho},\rho_0)+\frac{\zeta(3/2)\zeta(5/2)}{256\pi^3}\Delta\widehat{V}(0)T^4-o(T^4a^3).
\ea
\end{equation}
On the other hand, we have for any $\rho_0$:
\begin{equation}
\label{ineqF2}
\ba
&\inf_{(\gamma,\alpha),\ \rho_0=\rho-\rho_\gamma}\cF^{\rm can}(\gamma,\alpha,\rho_0)\leq \cF^{\rm can}(\gamma^{\rho,\rho_0},\alpha^{\rho,\rho_0},\rho_0)\\
&\leq\cF^{\rm sim}(\gamma^{\rho_0,\rho},\alpha^{\rho_0,\rho},\rho_0)+\frac{\zeta(3/2)\zeta(5/2)}{256\pi^3}\Delta\widehat{V}(0)T^4\\
&\hspace{4cm}+(E_1+E_2+E_3+E_5)(\gamma^{\rho_0,\rho},\alpha^{\rho_0,\rho},\rho_0)\\
&\leq\cF^{\rm sim}(\gamma^{\rho_0,\rho},\alpha^{\rho_0,\rho},\rho_0)+\frac{\zeta(3/2)\zeta(5/2)}{256\pi^3}\Delta\widehat{V}(0)T^4+o(T^4a^3),
\ea
\end{equation}
where we have used Lemmas \ref{lem:errgammastar} and \ref{lem:alphaalpha_0small}. The errors are uniform since we assume $d$ and $\sigma$ to be bounded.

We conclude that the energy of any potential minimizer matches \eqref{eq:simplfreeen3} (up to the constant term and a small error). However, for any $\rho_0$ the expression \eqref{eq:simplfreeen3} also provides an upper bound. Therefore, if we find that the minimizing $\sigma$ of \eqref{eq:simplfreeen3} is non-zero, then the same should hold for the real minimizer. If the approximate minimizer is zero, we can only conclude that the real minimizer is approximately zero because of the small error. We will therefore need an extra step in this case\\

\textit{Step 3a.} 
We now analyse \eqref{eq:simplfreeen3} for given $k\leq0$ and $\nu$ and find out whether its minimum $\sigma_{\rm min}$ is zero or not.

An analysis of \eqref{eq:simplfreeen3} shows that there always is a single $k\leq0$ where the character of the minimizer of changes (for given $\nu$)\footnote{Because \eqref{eq:simplfreeen3} depends on $\nu$ in an easy way, and is independent from $\nu$ for $\sigma=0$, we can see that for every $k\leq0$ there is a $\nu_0(k)\in[8\pi,\infty)$ such that $\sigma_{\rm min}>0$ for $\nu>\nu_0(k)$. Moreover, $\nu_0(k)$ is continuous, monotone decreasing, and equal to $8\pi$ for $k=0$. To reach the desired conclusion, we have to combine this with the following: for every $\nu\geq8\pi$, there exists a $k$ negative enough such that $\sigma_{\rm min}=0$. This can be seen by noting that the derivative in $\sigma$ it is positive for all $\sigma$ when $k$ is negative enough.}, which implies that a function $h_1(\nu)$ exists. We can also see that the critical $k$ decreases with $\nu$. For the limit $\nu\to8\pi$, we numerically verify that the minimizing $\sigma_{\rm{min}}$ approximately satisfies
\begin{equation}
\label{criticalsigm}
\sigma_{\rm{min}}=\left\{
	\begin{array}{ll}
		0 & \mbox{if } k< -1.28 \\
		>0 & \mbox{if } k> -1.28
	\end{array}
\right..
\end{equation}
This is illustrated by Figure \ref{plots} below, which shows \eqref{eq:simplfreeen3} for three values of $k$.
\begin{figure}
\includegraphics[scale=0.42]{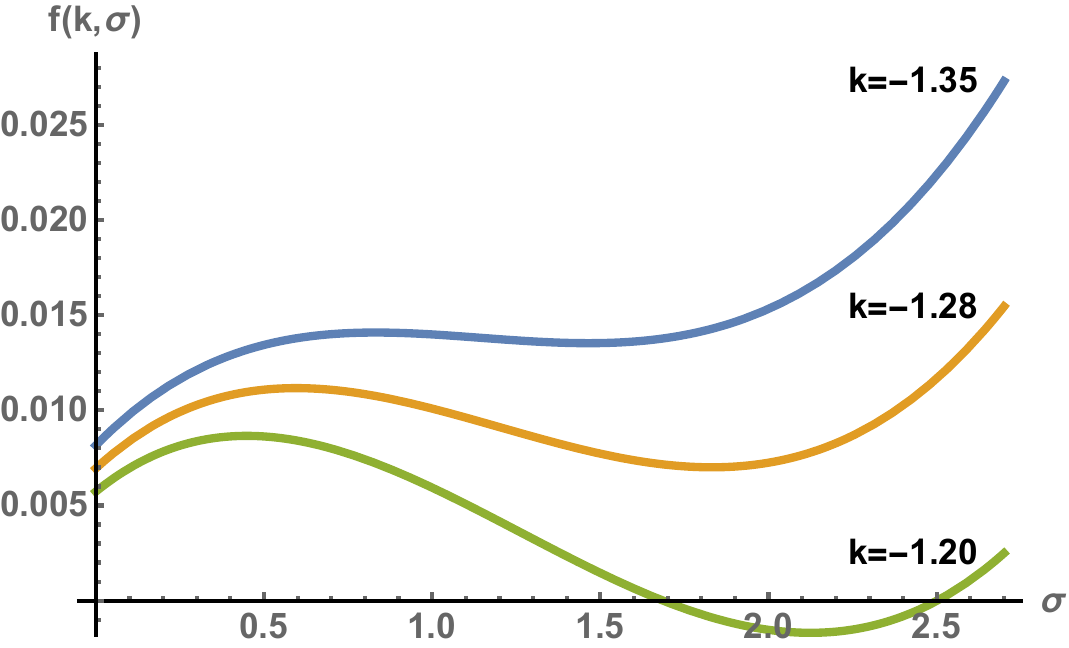}
\caption{Plots of the part of the free energy that depends on $k$ and $\sigma$ (i.e.\ between the square brackets in \eqref{eq:simplfreeen3}, denoted by $f(k,\sigma)$ in the plot) for three values of $k$. For $k=-1.35$, $\sigma=\rho_0=0$ gives the lowest energy: no BEC. For $k=-1.20$, the minimum occurs at some $\rho_0>0$: BEC. The critical value is $k_c=-1.28$, where both $\sigma=0$ and $\sigma=1.83$ are minimizers.}
\label{plots}
\end{figure}

Using the definition of $k$ \eqref{kdef}, we conclude that the point where the nature of the minimizer changes is 
\[
\ba
\rho_{\rm c}=&\rho_{\rm fc}\left(1-\frac{1.28}{8\pi}\left(\frac{\zeta(3/2)}{8\pi^{3/2}}\right)^{-4/3}\rho_{\rm fc}^{1/3}a+o(\rho_{\rm fc}^{1/3}a)\right)\\
&\hspace{6cm}=\rho_{\rm fc}\left(1-2.24\rho_{\rm fc}^{1/3}a+o(\rho_{\rm fc}^{1/3}a)\right).
\ea
\]
We can also turn this into a criterion for the critical temperature. Given $\rho$ we know that the critical temperature $T_{\rm c}$
satisfies the equation above where $\rho_{\rm fc}=n_{\rm fc}T_{\rm c}^{3/2}$,
where we calculated the constant $n_{\rm fc}$ in \eqref{constantfcdensity} (although it plays no role here). The free critical temperature would satisfy $\rho_{\rm c}=n_{\rm fc}T_{\rm fc}^{3/2}$. Hence, we have
\[
n_{\rm fc}T_{\rm fc}^{3/2}=n_{\rm fc}T_{\rm c}^{3/2}\left(1-2.24(\rho^{1/3}a)+o(\rho^{1/3}a)\right),
\]
since we can write $\rho$ instead of $\rho_{\rm c}$ to leading order. In conclusion,
\[
\ba
T_{\rm c}&=T_{\rm fc}\left(1-2.24(\rho^{1/3}a)+o(\rho^{1/3}a)\right)^{-2/3}\\
&=T_{\rm fc}\left(1+1.49(\rho^{1/3}a)+o(\rho^{1/3}a)\right).
\ea
\]

\textit{Step 3b.} For those values of $\rho$ where the minimizer of the approximate functional has $\rho_0=0$, we can only conclude that the exact minimizing $\rho_0$ is approximately zero. Because our energy approximation is accurate up to orders $T^4a^3$, we can only conclude $\rho_0=o(T^2a)$. We will need an extra argument to show that the energy increases for smaller $\rho_0$, which would then imply that the exact minimizer really is $\rho_0=0$. 

Fixing $\rho$, first define
\[
F_\rho(\rho_0)=\inf_{\int\gamma=\rho-\rho_0}\cF^{\rm can}(\gamma,\alpha,\rho_0).
\]
Note that it suffices to show there exists a $c_0>0$ such that 
\begin{equation}
\label{lb1}
F_\rho(\rho_0)\geq F_\rho(0)+\frac12c_0\rho_0T^2a^2(1-o(1))-2\rho_0^2\widehat{V}(0).
\end{equation}

To prove this lower bound, we first minimize the terms in $\alpha$, and use Proposition \ref{lm:gvg1}:
\begin{equation}
\label{somestep}
\begin{aligned}
\mathcal{F}^{\rm can}(\gamma,\alpha,\rho_0)&\geq \mathcal{F}^{\rm can}(\gamma,0,0)+2\widehat{V}(0)\rho_0\rho\\
&\quad-2\widehat{V}(0)\rho^2_0-c\rho_0T^2a^2(\sqrt{T}a)^{1/2}.
\end{aligned}
\end{equation}

To prove \eqref{lb1} from \eqref{somestep}, we need to study
\[
f_\rho(\rho_0)=\inf_{\int\gamma=\rho-\rho_0} \mathcal{F}^{\rm can}(\gamma,0,0)+2\rho_0\rho\widehat{V}(0),
\]
which is convex in $\rho_0$. 

We now use \eqref{ineqF1} and \eqref{ineqF2} to approximate the functional by $\mathcal{F}^{\rm sim}$ and go back again, denoting the constant term as $C^{\rm sim}T^4a^3$ and keeping in mind that the minimizer approximately has $\rho_0=0$ so that \eqref{ineqF1} does hold. We also apply \eqref{eq:e0nn2}, noting that $\rho_{\rm fc}-\rho\geq 1.28T^2a$. For $\epsilon>0$, we find
\[
\begin{aligned}
f_\rho(-\epsilon T^2a)&\leq\inf_{\int\gamma=\rho+\epsilon T^2a}\mathcal{F}^{\rm sim}(\gamma,0,0)-2\epsilon T^2a\rho\widehat{V}(0)+C^{\rm sim}T^4a^3+o(T^4a^3)\\
&\leq \inf_{\int\gamma=\rho+\epsilon T^2a}\mathcal{F}_0(\gamma)+\widehat{V}(0)\rho^2+c\epsilon^2T^4a^3+C^{\rm sim}T^4a^3+o(T^4a^3)\\
&\leq \inf_{\int\gamma=\rho}\mathcal{F}_0(\gamma)-(c_0\epsilon-c\epsilon^2)T^4a^3+\widehat{V}(0)\rho^2+C^{\rm sim}T^4a^3+o(T^4a^3)\\
&\leq f_\rho(0)-(c_0\epsilon-c\epsilon^2)T^4a^3+o(T^4a^3).
\end{aligned}
\]
We therefore conclude that there exists an $\epsilon_0$ small enough such that 
\[
f_\rho(-\epsilon_0 T^2a)\leq  f_\rho(0)+\frac12c_0(-\epsilon_0 T^2a)T^2a^2.
\]
Convexity of $f_\rho$ now implies that for $\rho_0\geq0$
\[
f_\rho(\rho_0)\geq f_\rho(0)+\frac12c_0\rho_0T^2a^2.
\]
This, as well as taking the infimum over $\gamma$ with $\int\gamma=\rho-\rho_0$ in \eqref{somestep}, now gives the desired lower bound \eqref{lb1}.

\textit{Step 3c.}
The theorem is still not quite proved, as we still have to show that the minimizing $\sigma$ is strictly positive for $k>0$, which corresponds to $\rho>\rho_{\rm fc}$. For $\sigma=0$, we cannot use the simplified energy \eqref{eq:simplfreeen3} because of the problem discussed in step 1b, but we can still use the first step in the lower bound \eqref{ineqF1}: if the minimum occurs at $\rho_0=0$, we know that
\[
\ba
&\inf_{(\gamma,\alpha),\ \rho_\gamma=\rho}\cF^{\rm can}(\gamma,\alpha,0)-\frac{\zeta(3/2)\zeta(5/2)}{256\pi^3}\Delta\widehat{V}(0)T^4\\
&\quad\geq\inf_{(\gamma,\alpha),\ \rho_\gamma=\rho}\cF^{\rm sim}(\gamma,\alpha,0)+o(T^4a^3)=T^{5/2}f_{\rm min}+\widehat{V}(0)\rho^2-o(T^4a^3),
\ea
\]
where $t_0=0$ for $\rho_0=0$ (which is consistent with \eqref{eq:tausigmaksol}).
Since \eqref{eq:simplfreeen3} holds at $\sigma=k+\sqrt{2k}\in I(k)$, we can see that it has a simplified energy of
\[
T^{5/2}f_{\rm{min}}+\widehat{V}(0)\rho^2 
+T^4a^3\left[-\frac{k \left(3 \sqrt{2} k^{3/2}+20 k+23 \sqrt{2} \sqrt{k}+18\right)}{24 \pi  \left(\sqrt{2} \sqrt{k}+2\right)}\right],
\]
which is lower than the value at $\sigma=0$. Using the upper bound \eqref{ineqF2}, we conclude that the minimizer cannot have $\rho_0=0$ when $\rho>\rho_{\rm fc}$. 
\end{proof}

\begin{proof}[Proof of Theorem \ref{thm:grandcancrittemp}]
\textit{Step 1.}
We now turn to the grand-canonical problem. That means that we should analyse the structure of minimizers of
\begin{equation}
\label{canminimiz}
\inf_{\rho\geq0}\left[\inf_{(\gamma,\alpha,\rho_0),\ \rho_\gamma+\rho_0=\rho}\mathcal{F}^{\rm{can}}(\gamma,\alpha,\rho_0)-\mu\rho\right]
\end{equation}
for given $\mu\in\mathbb{R}$. This requires that we calculate the canonical free energy for any given $\rho$, but we note that it again suffices to only calculate it for \eqref{def:critreg}, i.e.\ $|\rho-\rho_{\rm{fc}}|<C\rho(\rho^{1/3} a)$. By the a priori result from Proposition \ref{prop:criticalregion}, we know that if minimizer has a smaller $\rho$, it has $\rho_0=0$, and if it has a bigger $\rho$, it has $\rho_0>0$. Since the minimizing $\rho$ increases with $\mu$, this fits with the statement of the theorem.

In the region around the critical temperature, it seems natural to use the bounds \eqref{ineqF1} and \eqref{ineqF2} and simply minimize \eqref{eq:simplfreeen3}, but we only have these bounds for $\sigma\in I(k)$ (see \eqref{eq:tausigmaksol} and \eqref{interval}). In fact, the simplified functional has so far only been defined in this region as we have only made a choice for $t_0$ for $\delta,\rho_0\geq0$. To solve this problem, we now define
\[
\tau(k,\sigma)=1-\sqrt{1+2\sigma}
\]
for $\sigma\in [0,\infty)\backslash I(k)$, which is chosen because it is the value obtained for $\delta=0$. In the spirit of \eqref{ineqF1}, we know that any potential minimizer should satisfy
\begin{equation*}
\begin{aligned}
&\cF^{\rm can}(\gamma,\alpha,\rho_0)-\frac{\zeta(3/2)\zeta(5/2)}{256\pi^3}\Delta\widehat{V}(0)T^4\\
&\geq\cF^{\rm sim}(\gamma,\alpha,\rho_0)-(E_2+E_3+E_5)(\gamma,\alpha,\rho_0)\\
&\geq\inf_{(\gamma,\alpha)}\mathcal{F}^{\rm s}(\gamma,\alpha,\rho_0)+\widehat{V}(0)\rho^2-o(T^4a^3)\\
&\hspace{3cm}+(12\pi a-\widehat{V}(0))\rho_0^2-8\pi a\rho\rho_0-4\pi a t_0^2-8\pi at_0(\rho-\rho_0)\\
&=T^{5/2}f_{\rm min}+\widehat{V}(0)\rho^2+T^4a^3\Big[-(\sigma+\tau)\frac{k}{8\pi}-\frac1{12\pi}(2\sigma+2\tau)^{3/2}\\
&\hspace{3cm}+(12\pi-\nu)\left(\frac{\sigma}{8\pi}\right)^2-4\pi\left(\frac{\tau}{8\pi}\right)^2+\frac{\tau\sigma}{8\pi}\Big]-o(T^4a^3),
\end{aligned}
\end{equation*}
where we have used that the infimum of $\mathcal{F}^{\rm s}$ is attained at $(\gamma^{\rho_0,\delta=0},\alpha^{\rho_0,\delta=0})$, with an energy given by \eqref{eq:simplfreeen1}.
Minimizing this lower bound over $[0,\infty)\backslash I(k)$, we find that the infimum is attained at the boundary, i.e.\ at $\sigma=k+\sqrt{2k}$.
Since the lower bound matches \eqref{eq:simplfreeen3} at this point, we conclude that the minimizer of the canonical free energy has $\sigma\in I(k)$, so that it suffices to minimize \eqref{canminimiz} over $I(k)$ by the upper and lower bounds \eqref{ineqF1} and \eqref{ineqF2}.\\

\textit{Step 2.}
Making the result of the previous step explicit, we now know that for $|\rho-\rho_{\rm{fc}}|<C\rho(\rho^{1/3} a)$:
\[
\begin{aligned}
&\inf_{(\gamma,\alpha,\rho_0),\ \rho_\gamma+\rho_0=\rho}\mathcal{F}^{\rm{can}}(\gamma,\alpha,\rho_0)-\mu\rho\\
&=T^{5/2}f_{\rm{min}}+\widehat{V}(0)\rho_{\rm{fc}}^2 -\mu\rho_{\rm fc} +T^2a^2\left(2\left(\frac\nu{8\pi}\right)\rho_{\rm{fc}}-\frac{\mu}{8\pi a}\right)k\\
&\quad+T^4a^3\inf_{\sigma\in I(k)}\Big[\frac{1}{8\pi}\left(\frac{(\sigma-k)^3}{12}-\sigma^2\Big(\frac{1}{2} +\frac{1}{2+\sigma-k}\Big)\right)\\
& \hspace{7cm}-(\nu-8\pi)\frac{\sigma^2}{(8\pi)^2}+\nu\frac{k^2}{(8\pi)^2}\Big]\\
&\quad+\frac{\zeta(3/2)\zeta(5/2)}{256\pi^3}\Delta\widehat{V}(0)T^4+o\left(T^4a^3\right).
\end{aligned}
\]
To consider the case $\nu\to8\pi$, we show a plot of the function
\begin{equation}
\begin{aligned}
\label{somefunc1}
  g(k)&=\inf_{\sigma\in I(k)}\Big[\frac{1}{8\pi}\left(\frac{(\sigma-k)^3}{12}-\sigma^2\Big(\frac{1}{2} +\frac{1}{2+\sigma-k}\Big)\right)+\frac{k^2}{8\pi}\Big]+0.226k.
\end{aligned}
\end{equation}
in Figure \ref{energypl}.
Here, the value $0.226$ was chosen such that the convex hull is obtained by replacing the
curve between two minima by a constant function. 

\begin{figure}
\includegraphics[scale=0.14]{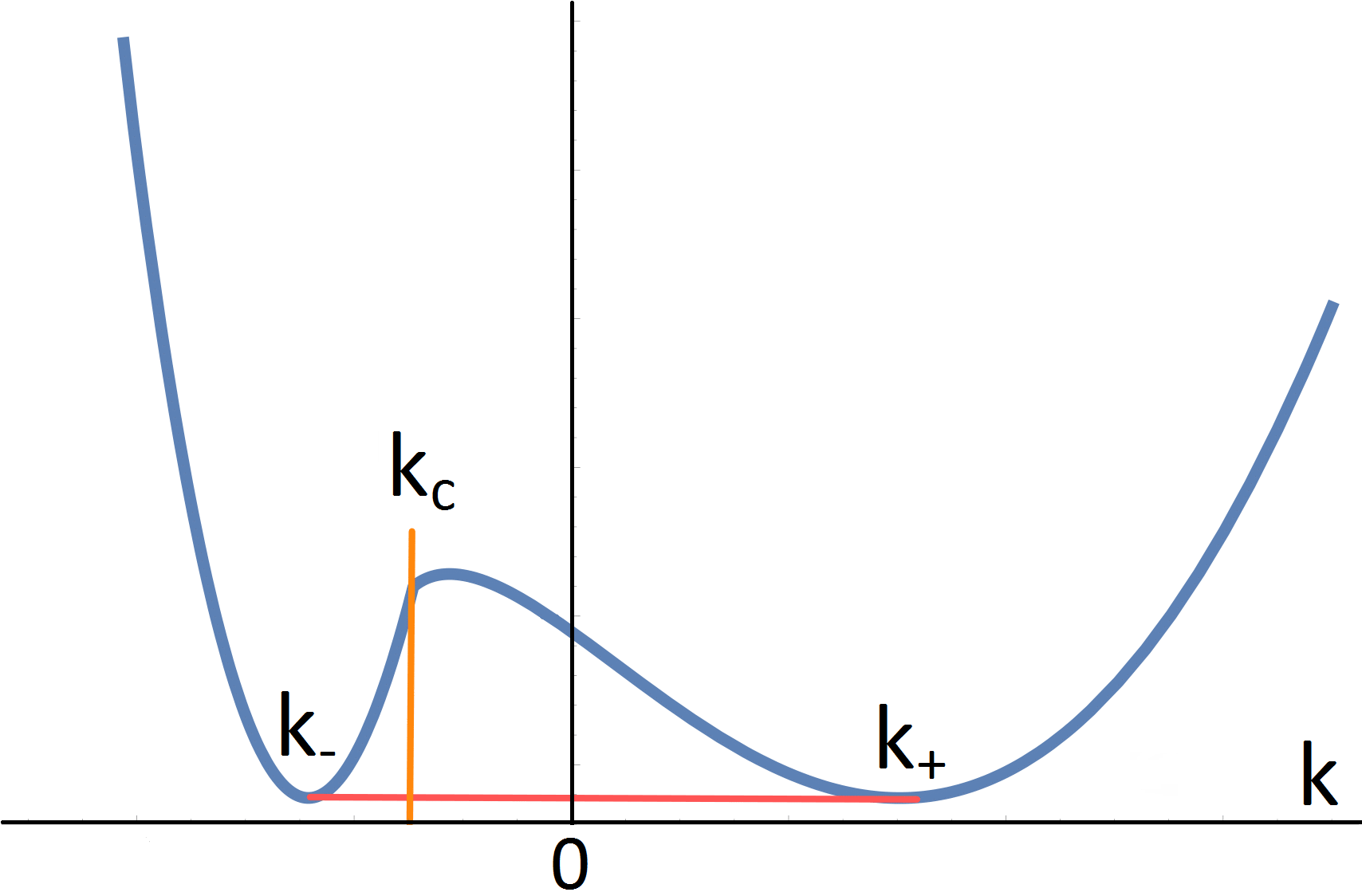}
\caption{The curve shows the energy function $g(k)$ 
in \eqref{somefunc1}. The two minima are \mbox{$k_-=-2.23$} and \mbox{$k_+=3.04$}, and the critical value (shown in orange) is $k_{\rm c}=-1.28$,  which corresponds to the value of $k$ where $\sigma$ jumps to a positive value (see \eqref{criticalsigm}). The derivative has a discontinuity at this point. The energy curve is not convex; the red line indicates the convex hull of the curve.}
\label{energypl}
\end{figure}

The two minima are 
$$
k_-=-2.23 ,\quad k_+=3.04
$$
and the value here is $g(k_\pm)=-0.27$. Hence we have a first-order phase transition where the density jumps between the critical values corresponding to $k_\pm$. This conclusion is unaltered by the fact that we can only determine the energy curve up to a small error.

Note that the minimizer changes from $\rho_0=0$ to $\rho_0>0$ at the jump since $k_-\leq-1.28\leq k_+$. We conclude that the critical chemical potential in the limit $\nu\to8\pi$ is given by
\begin{align*}
\frac{\mu_{\rm c}}{8\pi}&=2\rho_{\rm fc}a-0.226T^2a^2+o(T^2a^2)\\
&=\frac{1}{8\pi}\frac{2\zeta(3/2)}{\sqrt{\pi}}T^{3/2}a\left(1-0.226\cdot8\pi\frac{\sqrt{\pi}}{2\zeta(3/2)}\sqrt{T}a+o(\sqrt{T}a)\right).
\end{align*}
This can also be inverted to yield the critical temperature for $\mu>0$:
\begin{align*}
T_{\rm{c}}&=\left(\frac{\sqrt{\pi}}{2\zeta(3/2)}\right)^{2/3}\left(\frac{\mu}{a}\right)^{2/3}+\frac23\cdot 0.226\cdot 8\pi \left(\frac{\sqrt{\pi}}{2\zeta(3/2)}\right)^{2}\mu+o\left(\mu\right)\\
&=\left(\frac{\sqrt{\pi}}{2\zeta(3/2)}\right)^{2/3}\left(\frac{\mu}{a}\right)^{2/3}+0.44\mu+o(\mu),
\end{align*}
where the expansion is correct for $\mu\geq0$ corresponding to $\rho^{1/3}a\ll 1$. 
An analysis for general $\nu$ (in which case the leading term of $\mu_{\rm c}$ has an extra factor $\nu/8\pi$), combined with the existence of the function $h_1(\nu)$ from the previous theorem, allows the reader to deduce the existence of $h_2(\nu)$.
\end{proof}

\begin{remark}
\label{nonuniqueminimizer}
Note that the existence of two minima shows that the grand canonical functional in general will not have a unique minimizer. As for the canonical case: we have coexistence of the two minimizers (one with $\rho_0=0$ and one with $\rho_0>0$) for $\rho$ between the two values defined by $k_\pm$. This means that at least part of the gas has a condensate for any $k\in[k_-,k_+]$. Hence one could say that (part of) the system is in a condensed phase from $k_-$ onwards.
\end{remark}

\subsection{Proof of Theorems \ref{thm:canfreeenexp} and \ref{thm:expint}}
\label{subsectionenexp}
In this section, we simply set $t_0=0$. We will write $\delta=d\rho_0a=d\phi^2$, with $d\geq0$. Note that this implies that $\sigma=8\pi$ in the lemmas of Subsection \ref{sec:prelapprox}.

\begin{remark}[Properties of the integrals]
\label{remarkpropint}
We will use the following properties of the integrals \eqref{eq:integrals} with $d,s\geq0$:
\begin{itemize}
\item $I_1(d,8\pi,0)-dI_3(d,8\pi,0)$ monotonically increases to infinity in $d$.
\item $I_2(d,8\pi,0,s)-ds^2I_4(d,8\pi,0,s)$ monotonically increases to 0 in both $d$ and $s$ and it is bounded.
\item $I_2(d,8\pi,0,s)$ monotonically increases to 0 in both $d$ and $s$ and it is bounded.
\item $I_4(d,8\pi,0,s)$ monotonically decreases to zero in both $d$ and $s$ and it is bounded.
\end{itemize}
\end{remark}

\begin{proof}[Proof of Theorems \ref{thm:canfreeenexp} and \ref{thm:expint}]
Throughout the proof, we will distinguish between the regions $\rho a/T\ll1$ (`moderate temperatures') and $\rho a/T\geq O(1)$ (`low temperatures'). For simplicity, we aim to write statements with a uniform error $o(T(\rho a)^{3/2}+(\rho a)^{5/2})$, i.e.\ $o(T(\rho a)^{3/2})$ in the first region, and $o((\rho a)^{5/2})$ in the second. Note that an error of $O((\rho a)^{5/2})$ satisfies this for $\rho a/T\ll1$.\\

\textit{Step 1a.} 
As in Subsection \ref{sec:crittemp}, we consider upper and lower bounds.
First assume that $\delta\geq0$ and $\rho_0\geq0$ are such that
\begin{equation}
\label{eqnforrho_0d}
\rho=\rho_0+\rho_{\gamma^{\rho_0,\delta}}.
\end{equation}
Similar to before, this may not always have a solution for given $\rho$ and $\rho_0$.
By Lemma \ref{prop:simplandfullfreeenergycomp} and the a priori estimates in Proposition \ref{lm:gvg1}, we then know that any potential minimizer has to satisfy
\begin{equation}
\label{ineqF3}
\ba
\cF^{\rm can}(\gamma,\alpha,\rho_0)&\geq\cF^{\rm sim}(\gamma,\alpha,\rho_0)-(E_2+E_3+E_4)(\gamma,\alpha,\rho_0)\\
&\geq\cF^{\rm sim}(\gamma^{\rho_0,\delta},\alpha^{\rho_0,\delta},\rho_0)-O((\rho a)^{5/2}).
\ea
\end{equation}
Using Lemma \ref{lem:errgammastar} for $\rho a/T\ll1$, Lemma \ref{lem:errgammastar2} for $\rho a/T\geq O(1)$, and Lemma \ref{prepE1} for both, we find that\footnote{To obtain the estimate on $E_1$, we use $|\int(\alpha-\alpha_0)\widehat{V}(\alpha-\alpha_0)|\leq\widehat{V}(0)(\int|\alpha-\alpha_0|)^2$ and the fact that $|\alpha^{\rho_0,\delta}-\alpha_0|$ is equal to the $|f|$ in the statement of Lemma \ref{prepE1} since $t_0=0$.}
\begin{equation}
\label{ineqF4}
\ba
&\inf_{(\gamma,\alpha),\ \rho_0=\rho-\rho_\gamma}\cF^{\rm can}(\gamma,\alpha,\rho_0)\leq \cF^{\rm can}(\gamma^{\rho_0,\delta},\alpha^{\rho_0,\delta},\rho_0)\\
&\leq\cF^{\rm sim}(\gamma^{\rho_0,\delta},\alpha^{\rho_0,\delta},\rho_0)+(E_1+E_2+E_3+E_4)(\gamma^{\rho_0,\delta},\alpha^{\rho_0,\delta},\rho_0)\\
&\leq\cF^{\rm sim}(\gamma^{\rho_0,\delta},\alpha^{\rho_0,\delta},\rho_0)+o(T(\rho a)^{3/2}+(\rho a)^{5/2}).
\ea
\end{equation}
It is important to realize that we have yet to establish uniformity of the error in the upper bound, whereas the error in the lower bound is uniform.\\

\textit{Step 1b.}
The first line of the lower bound \eqref{ineqF3} allows us to prove the desired conclusion for $T>T_{\rm fc}\left(1+h_1(\nu)\rho^{1/3}a+o(\rho^{1/3}a)\right)$. After all, Theorem \ref{thm:cancrittemp} tells us that the minimizer has $\rho_0=0$ in this region, so that we find that
\[
F^{\rm can}(T,\rho)\geq \inf_{(\gamma,\alpha),\ \rho=\rho_\gamma}\cF^{\rm sim}(\gamma,\alpha,0)=F_0(T,\rho)+\widehat{V}(0)\rho^2-O((\rho a)^{5/2}),
\]
where $F_0(T,\rho)$ is the free energy \eqref{eq:freefreeenergy} of the non-interacting gas. We now note that
\[
\inf_{(\gamma,\alpha)}\cF^{\rm can}(\gamma,\alpha,0)=\inf_{\gamma}\cF^{\rm can}(\gamma,0,0)\leq\inf_{\gamma}\cF^{\rm sim}(\gamma,0,0),
\]
which proves the result in this region.\\

\textit{Step 2.}
Using Lemma \ref{lem:leadingorderenergytermverylow} and the first line of Lemma \ref{lem:leadingorderenergytermmoderate}, we have
\begin{equation}
\label{energyexpression0}
\begin{aligned}
\cF^{\rm sim}(\gamma^{\rho_0,\delta},\alpha^{\rho_0,\delta},\rho_0)&=F^{(1)}+T^{5/2}I_2(d,8\pi,0,\sqrt{\rho_0 a/T})\\
&\quad-d\rho_0a(\rho-\rho_0)\\
&\quad+\widehat{V}(0)\rho^2-8\pi a\rho_0\rho+\rho_0^2(12\pi a-\widehat{V}(0))\\
&\quad+o\left(T(\rho a)^{3/2}+(\rho a)^{5/2}\right),
\end{aligned}
\end{equation}
together with
\begin{equation}
\label{rho0eqn}
\ba
\rho_{\gamma^{\rho_0,\delta}}&=\rho^{(1)}_\gamma+T^{3/2}I_4(d,8\pi,0,\sqrt{\rho_0 a/T})+o\left(T(\rho a)^{1/2}+\left(\rho a\right)^{3/2}\right).
\ea
\end{equation}
In the last line, we have used  Lemma \ref{lem:rho_gamma1verylow} and the first line of Lemma \ref{lem:rho_gamma1moderate}. To use these lemmas, we have distinguished two cases: $\rho a/T\ll 1$, which implies $\rho_0a/T=\phi^2/T\ll1$; and $\rho a/T\geq O(1)$, which implies $\rho_0a/T\geq O(1)$ by the a priori estimate \eqref{apriorestrho0}. Note that the errors in the two equations above are uniform in $d$.

We know that $\rho_{\gamma^{\rho_0,\delta}}$ is decreasing in $\delta=d\rho_0a$ by the structure of the minimization problem in Lemma \ref{prop:simplfunctsol}. In fact, Lemma \ref{fullintegralbehaviour} and the fourth property in Remark \ref{remarkpropint} show that $\rho_{\gamma^{\rho_0,\delta}}$ decreases to 0 as $d\to\infty$. We therefore have that the equation \eqref{eqnforrho_0d} has a solution for every $\rho_0$ and $\rho$ such that
\begin{equation}
\label{rhoneceqn}
\rho-\rho_{\gamma^{\rho_0,\delta=0}}\leq\rho_0\leq\rho, 
\end{equation}
or, denoting the solution to \eqref{rho0eqn} for given $\rho$ and $d\geq0$ by $\rho_0(d)$, for every $\rho_0(d=0)\leq\rho_0\leq\rho$. Our assumption \eqref{eqnforrho_0d} amounts to plugging $\rho_0(d)$ into the simplified energy \eqref{energyexpression0}. In the next step, we do this for the different regions.\\

\textit{Step 3: $\rho a/T\ll 1$.}
In this step, we prove Theorem \ref{thm:expint}.

\textit{Step 3a.}
In order to be able to use more of the lemmas from Subsection \ref{sec:prelapprox}, we need to show that we can assume that $d$ is bounded.
We use \eqref{ineqF3}, \eqref{energyexpression0} and Lemma \ref{fullintegralbehaviour} to see that for $d\gg1$ and $s=\sqrt{\rho_0a/T}$:
\[
\begin{aligned}
\cF^{\rm sim}(\gamma^{\rho_0,\delta},\alpha^{\rho_0,\delta},\rho_0)&\geq 4\pi a\rho^2+T^{5/2}(I_2(d,8\pi,0,s)-ds^2I_4(d,8\pi,0,s))\\
&+2(\nu-8\pi)\rho aT^{3/2}I_4(d,8\pi,0,s)\\
&+(12\pi-\nu)a(T^{3/2}I_4(d,8\pi,0,s))^2-o(T(\rho a)^{3/2})
\end{aligned}
\]
with errors uniform in $d$. As $d$ increases, $s$ increases, and this expression gets exponentially close to $4\pi a\rho^2$ as $d\to\infty$, and thus it is higher than the value provided by the upper bound \ref{ineqF4} for $d=0$ (see \eqref{enexp00} for a calculation). We can therefore restrict to bounded $d$.

\textit{Step 3b.}
We conclude that the upper bound \eqref{ineqF4} has an error uniform in $d$. We can also apply Lemmas \ref{lem:rhogamma2contribution}, \ref{lem:rho_gamma1moderate},  \ref{lem:lowerorderenergyterm} and \ref{lem:leadingorderenergytermmoderate} to \eqref{energyexpression0} to obtain  
\begin{equation*}
\begin{aligned}
&\cF^{\rm sim}(\gamma^{\rho_0,\delta},\alpha^{\rho_0,\delta},\rho_0)=T^{5/2}f_{\rm{min}}+\left(\frac{\rho_0(d)a}{T}\right)T\rho_{\rm{fc}}(d+8\pi)\\
&\quad\quad-\frac{1}{12\pi}\left(\frac{\rho_0(d)a}{T}\right)^{3/2}T^{5/2}\left((d+16\pi)^{3/2}+d^{3/2}\right)\\
&\quad\quad-d\rho_0(d)a(\rho-\rho_0(d))+\widehat{V}(0)\rho^2-8\pi a\rho_0(d)\rho+\rho_0(d)^2(12\pi a-\widehat{V}(0))\\
&\quad\quad+o(T(\rho a)^{3/2}),
\end{aligned}
\end{equation*}
where
\begin{equation}
\begin{aligned}
\label{rho0expmodtemp}
\rho_0(d)&=\rho-\rho_{\rm{fc}}+\frac{1}{8\pi}\left(\frac{\rho_0(d)a}{T}\right)^{1/2}T^{3/2}\left(\sqrt{d+16\pi}+\sqrt{d}\right)+o(T(\rho a)^{1/2}),
\end{aligned}
\end{equation}
and the errors are uniform in $d$. We conclude that $\rho_0=\rho-\rho_{\rm fc}=:\Delta\rho$ to leading order. Rewriting the expansion in the small parameter $\Delta\rho a/T$, we obtain
\begin{equation}
\label{enexp00}
\begin{aligned}
&\cF^{\rm sim}(\gamma^{\rho_0(d),\delta},\alpha^{\rho_0(d),\delta},\rho_0(d))\\
&=T^{5/2}f_{\rm{min}}+4\pi a \rho^2+(\widehat{V}(0)-4\pi a)\rho_{\rm fc}(2\rho-\rho_{\rm fc})\\
&\quad+T(\Delta\rho a)^{3/2}\left(\frac{1}{24\pi}\right)\left[(\sqrt{d+16\pi}+\sqrt{d})(d+6(8\pi-\nu))-32\pi\sqrt{d+16\pi}\right]\\
&\quad+o\left(T(\rho a)^{3/2}\right).
\end{aligned}
\end{equation}
This can explicitly be minimized in $d\geq0$. The minimum is obtained for $d=2(\nu-8\pi)$, which leads to the expression stated in Theorem \ref{thm:expint}.

\textit{Step 3c.}
The proof is unfinished since the upper and lower bounds \eqref{ineqF3} and \eqref{ineqF4} only hold for $\rho_0$ satisfying \eqref{rhoneceqn}, i.e.\ $\rho_0(d=0)\leq\rho_0\leq\rho$. We need to deal with all other $\rho_0$ as we did in step 1 of the proof of Theorem \ref{thm:grandcancrittemp}: by revising the lower bound \eqref{ineqF3}. We know that any potential minimizer should satisfy
\begin{equation}
\label{lowerbound}
\begin{aligned}
&\cF^{\rm can}(\gamma,\alpha,\rho_0)\\
&\geq\cF^{\rm sim}(\gamma,\alpha,\rho_0)-(E_2+E_3+E_4)(\gamma,\alpha,\rho_0)\\
&\geq\inf_{(\gamma,\alpha)}\mathcal{F}^{\rm s}(\gamma,\alpha,\rho_0)+\widehat{V}(0)\rho^2-8\pi a\rho\rho_0+(12\pi a-\widehat{V}(0))\rho_0^2-O((\rho a)^{5/2})\\
&=T^{5/2}f_{\rm{min}}+8\pi a\rho_0\rho_{\rm fc}-\left(\frac{\rho_0a}{T}\right)^{3/2}T^{5/2}\frac43\sqrt{16\pi}\\
&\hspace{2cm}+\widehat{V}(0)\rho^2-8\pi a\rho_0\rho+\rho_0^2(12\pi a-\widehat{V}(0))-o(T(\rho a)^{3/2}),
\end{aligned}
\end{equation}
where we have used the energy expansion \eqref{energyexpression0} and Lemma \ref{lem:leadingorderenergytermmoderate} for the unrestricted minimizer $(\gamma^{\rho_0,\delta=0},\alpha^{\rho_0,\delta=0})$.
Using $\nu\geq8\pi$, we see that lower bound has a negative derivative for
\[
0\leq\rho_0\leq\Delta\rho+\left(\frac{\Delta\rho a}{T}\right)^{1/2}T^{3/2}\frac{1}{\sqrt{\pi}}+o(T(\rho a)^{1/2}),
\]
which is indeed bigger than
\[
\rho_0(d=0)=\Delta\rho+\left(\frac{\Delta\rho a}{T}\right)^{1/2}T^{3/2}\frac{1}{2\sqrt{\pi}}+o(T(\rho a)^{1/2}).
\] 
Since this lower bound matches our earlier lower bound \eqref{enexp00} at this point and the upper bound \eqref{ineqF4} also holds at this point, we can conclude that it suffices to consider the infimum over $\rho_0(d=0)\leq\rho_0\leq\rho$, which yielded the desired result in step 3b, and proves Theorem \ref{thm:expint}.\\

\textit{Step 4: $\rho a/T\geq O(1)$.}
In this step, we make further preparations for the proof of Theorem \ref{thm:canfreeenexp}.

\textit{Step 4a.}
To prove that we can assume that $d$ is bounded, we use \eqref{ineqF3}, \eqref{energyexpression0}, boundedness of $I_2$ and $I_4$, and Lemma \ref{fullintegralbehaviour} to see that for $d\gg 1$:
\[
\begin{aligned}
\cF^{\rm sim}(\gamma^{\rho_0,\delta},\alpha^{\rho_0,\delta},\rho_0)&\geq 4\pi a\rho^2+F^{(1)}-d(\rho_0a)\rho^{(1)}_\gamma-O((\rho a)^{5/2})\\
&\geq 4\pi a \rho^2+C\min\{(\rho_0a)^{5/2}d^{1/2},a^{-1}(\rho_0a)^2\}-O((\rho a)^{5/2}).
\end{aligned}
\]
with errors uniform in $d$. For $d\gg1$, this is of higher order than $4\pi a\rho^2+O((\rho a)^{5/2})$, which is the value provided by the upper bound \eqref{ineqF4} at $d=0$. We can therefore restrict to bounded $d$.

\textit{Step 4b.}
We again need to establish that it suffices to minimize $\eqref{energyexpression0}$ over $d\geq0$, i.e.\ to exclude $0\leq\rho_0\leq\rho_0(d=0)$ as potential minimizers. To do this, we repeat the lower bound \eqref{lowerbound}. For $\rho a/T\geq O(1)$, the infimum of $\cF^{\rm s}$ is $O((\rho a)^{5/2})$ by Lemmas \ref{lem:rhogamma2contribution} and \ref{lem:lowerorderenergyterm} and boundedness of $I_2$ and $I_4$. We obtain
\[
\cF^{\rm can}(\gamma,\alpha,\rho_0)\geq\widehat{V}(0)\rho^2+(12\pi a-\widehat{V}(0))\rho_0^2-8\pi a\rho\rho_0-O((\rho a)^{5/2}).
\]
Using $\nu\geq8\pi$, we see that this has a negative derivative throughout the region, and as such the minimum can be found at the boundary (up to a lower order error), where it matches the lower bound \eqref{ineqF3} and the upper bound \eqref{ineqF4}, and so we conclude that it suffices to consider the infimum over $\rho_0(d=0)\leq\rho_0\leq\rho$.

\textit{Step 4c.} We would now like to show that the errors in the lower bound \eqref{ineqF3} are $o((\rho a)^{5/2})$, rather than $O((\rho a)^{5/2})$. The above conclusion, Lemma \ref{lem:rhogamma2contribution} and the lower and upper bounds \eqref{ineqF3} and \eqref{ineqF4} imply that any potential minimizer has to satisfy
\[
\rho_\gamma\leq\rho-\rho_0(d=0)=O((\rho a)^{3/2}).
\]
We can now use this to improve the a priori bounds in Proposition \ref{lm:gvg1}, and hence lower the error in the lower bound \eqref{ineqF3} to $o((\rho a)^{5/2})$: we simply repeat the estimates \eqref{estm1}, \eqref{estm2} and \eqref{estimongammV}, and notice that we are able to pick a better $b$ because we know that $\rho_{\gamma}$ is small.\\

\textit{Step 5.}
Combining the steps above, we conclude
\begin{equation*}
\begin{aligned}
F^{\rm{can}}(T,\rho)=\inf_{0\leq d\leq d_0}&[\frac12(\rho_0(d) a)^{5/2}I_1(d,8\pi,0)+T^{5/2}I_2(d,8\pi,0,\sqrt{\rho_0(d) a/T})\\
&-d\rho_0(d)a(\rho-\rho_0(d))\\
&+\widehat{V}(0)\rho^2-8\pi a\rho_0(d)\rho+\rho_0(d)^2(12\pi a-\widehat{V}(0))]\\
&+o\left(\left(\rho a\right)^{5/2}+T(\rho a)^{3/2}\right),
\end{aligned}
\end{equation*}
where
\[
\rho_0(d):=\rho-\frac12(\rho_0(d) a)^{3/2}I_3(d,8\pi,0)-T^{3/2}I_4(d,8\pi,0,\sqrt{\rho_0(d) a/T}).
\]
Here, the errors in $\rho_0(d)$ have been dropped compared to \eqref{rho0eqn} since they can be absorbed in the errors in the energy expression.

To finish the proof of Theorem \ref{thm:canfreeenexp}, we just have to make a few replacements in the minimization problem above. These are

\begin{itemize}
\item replacing $(\rho_0a)^{5/2}I_1(d,8\pi,0)$ by $(\rho a)^{5/2}I_1(d,8\pi,0)$. The error made is $O((\rho a)^{5/2})$ for $\rho a/T\ll 1$, which is acceptable. For $\rho a/T\geq O(1)$, we have $\rho_0(d)=\rho$ to leading order, so we make an error of $o((\rho a)^{5/2})$.
\item replacing the similar term in $\rho_0(d)$.  This is done in a similar way. We absorb the error in the energy expansion.
\item replacing $T^{3/2}I_4(d,8\pi,0,\sqrt{\rho_0(d)a/T})$ by $T^{3/2}I_4(d,8\pi,0,\sqrt{\Delta\rho a/T})$.
 For $\rho a/T\ll 1$, we use \eqref{rho0expmodtemp} to see that this leads to an error that can be absorbed in the energy expansion. For $\rho a/T\geq O(1)$, this term is $O((\rho a)^{3/2})$ and $\rho_0(d)=\rho$ to leading order, so that the error is of lower order and the replacement is justified. 
\end{itemize}
\end{proof}

\begin{proof}[Comment about Corollary \ref{corollary112}]
To obtain the expansions for $\nu\to8\pi$, we use the first two properties in Remark \ref{remarkpropint} and the fact that we can think of the errors as uniform in $d$ to conclude that all relevant contributions to the energy are increasing in $d$. Hence, the minimum is attained at $d=0$ in the limit $\nu\to8\pi$. We also note that only $I_1$ and $I_3$ contribute, and a calculation of the integrals then yields the Lee--Huang--Yang constant.
\end{proof}

\appendix \section{Approximations to integrals}
\label{app:intexp}

\begin{proof}[Proof of Lemma \ref{lm:intexp}]
We make a change of variables to obtain
\bq
\label{someexp93}
\ba
b^{3/2}\int&\ln\left(1-e^{-b\sqrt{(p^2+\delta_0/b)^2+2(p^2+\delta_0/b)}}\right)dp\\
&-\int\ln(1-e^{-b(p^2+\delta_0/b)})dp-b\int(e^{b(p^2+\delta_0/b)}-1)^{-1}dp.
\ea
\eq
Regard $\delta_0/b$ as a fixed parameter and note that the integral has a limit as $b\to0$ by the Monotone Convergence Theorem, which is 
\[
\ba
b^{3/2}\int\left[\frac12\ln\left(1+\frac{2}{p^2+\delta_0/b}\right)-\frac{1}{p^2+\delta_0/b}\right]dp\leq Cb^{3/2}.
\ea
\]
Of course $\delta_0/b$ is not fixed, but the error term in this convergence is uniform in $\delta_0/b$ as long as that quantity is bounded (smaller than 1, say). For $\delta_0/b\geq1$, we expand the logarithm in the first line of \eqref{someexp93} as a Taylor series around $b=0$. Using the Mean Value Theorem and noting that the absolute value of the second derivative attains its maximum at 0, we conclude that the quantity of interest is bounded by
\[
b^2\int\frac{(p^2+\delta_0+1)e^{p^2+\delta_0}-1}{(p^2+\delta_0)(e^{p^2+\delta_0}-1)^2}dp\leq C\delta_0^{-1/2}b^2\leq Cb^{3/2}.
\]
\end{proof}

\begin{proof}[Proof of Lemma \ref{lem:rhogamma2contribution}]
Recall $|\widehat{Vw}(p)|\leq \widehat{Vw}(0)=8\pi a$, so that the integral converges pointwise to the desired expression as $\phi\to0$. We would like to apply the Dominated Convergence Theorem, which leads us to analyse
\[
f(t):=\frac{x+tA}{\sqrt{(x+tA)^2-t^2A^2}}-1,
\]
where $x=p^2+d$, $A=(1+\theta)\sigma$ and $t\in[-1,1]$. This function has the property that $|f(t)|\leq f(1)$ for $t\in[0,1]$, and $|f(t)|\leq f(-1)$ for $t\in[-1,1]$ as long as $x>2A$. 
We therefore dominate the function by replacing $\widehat{Vw}(\phi p)/8\pi a$ by $1$ for $|p|\leq \sqrt{3(1+\theta)\sigma}$, and by $-1$ elsewhere. This function is integrable, and so the Dominated Convergence Theorem gives the desired result. To obtain uniformity, we use continuity in the different parameters.
\end{proof}

\begin{proof}[Proof of Lemma \ref{lem:rho_gamma1moderate}]
\textit{Step 1: first line in statement.\\}
We will write $s=\phi/\sqrt{T}\ll1$. After a change of variables, we need to show that
\[
\begin{aligned}
&T^{3/2}\int\left(e^{\sqrt{(p^2+ds^2)^2+2(p^2+ds^2)(1+\theta)\sigma s^2\frac{\widehat{Vw}(\sqrt{T}p)}{8\pi a}}}-1\right)^{-1} \\
&\qquad \qquad \times \frac{p^2+ds^2+(1+\theta)\sigma s^2\frac{\widehat{Vw}(\sqrt{T}p)}{8\pi a}}{\sqrt{(p^2+ds^2)^2+2(p^2+ds^2)(1+\theta)\sigma s^2\frac{\widehat{Vw}(\sqrt{T}p)}{8\pi a}}}dp\\
&=T^{3/2}\int\left(e^{\sqrt{(p^2+ds^2)^2+2(p^2+ds^2)(1+\theta)\sigma s^2}}-1\right)^{-1} \\
&\qquad \qquad \times \frac{p^2+ds^2+(1+\theta)\sigma s^2}{\sqrt{(p^2+ds^2)^2+2(p^2+ds^2)(1+\theta)\sigma s^2}}dp+o(T^{5/2}a^2(\rho^{1/3}a)^{-3/8}).
\end{aligned}
\]
We define
\[
\ba
f(p,t)&=\left(e^{\sqrt{(p^2+ds^2)^2+2(p^2+ds^2)(1+\theta)\sigma s^2 t}}-1\right)^{-1}\\
&\qquad \qquad \times\frac{p^2+ds^2+(1+\theta)\sigma s^2 t}{\sqrt{(p^2+ds^2)^2+2(p^2+ds^2)(1+\theta)\sigma s^2 t}},
\ea
\]
and calculate its derivative in $t$:
\bq \nn
\begin{aligned}
&\partial_t f(p,t)=
\\&=-\frac14 \sinh^{-2}\left(\frac12\sqrt{(p^2+ds^2)^2+2(p^2+ds^2)(1+\theta)\sigma s^2 t}\right)\\
&\hspace{5cm}\times\frac{(1+\theta)\sigma s^2\left(p^2+ds^2+(1+\theta)\sigma s^2t\right)}{p^2+ds^2+2(1+\theta)\sigma s^2t}\\
&\quad+\left(e^{\sqrt{(p^2+ds^2)^2+2(p^2+ds^2)(1+\theta)\sigma s^2t}}-1\right)^{-1}\\
&\hspace{5cm}\times\frac{(p^2+ds^2)(1+\theta)^2\sigma^2 s^4 t}{\left((p^2+ds^2)^2+2(p^2+ds^2)(1+\theta)\sigma s^2t\right)^{3/2}}
\\&=: F_1(p,t)+F_2(p,t).
\end{aligned}
\eq
We use the Mean Value Theorem to estimate
\begin{equation}
\label{mwtapplication}
\ba 
&|f(p, \widehat{Vw}(\sqrt{T}p)/8\pi a)-f(p,1)| \\
&\hspace{3cm}\leq\left(\sup_{t\in[\widehat{Vw}(\sqrt{T}p)/8\pi a),1]}\left|\partial_t f(p,t)\right|\right)\left|\widehat{Vw}(\sqrt{T}p)/8\pi a-1\right|.
\ea
\end{equation}
Before we estimate this, we make the following two observations:
\begin{enumerate}
\item For $|p|\leq  (\rho^{1/3}a)^{-1/8}$ we have 
\[
|\widehat{Vw}(\sqrt{T}p)/8\pi a-1|\leq C a^{-1}\|\widehat{Vw}''\|_\infty Tp^2\leq CTa^2(\rho^{1/3}a)^{-1/4}.
\]
This also means that $t(p)=\widehat{Vw}(\sqrt{T}p)/8\pi a\geq1/2$ in this region.
\item For $|p|\geq (\rho^{1/3}a)^{-1/8}$, we first note that in general
\[
|\widehat{Vw}(\sqrt{T}p)/8\pi a)-1|\leq 2.
\]
We also have $|p|\geq (\rho^{1/3}a)^{-1/8} \gg 1 \gg 2\sqrt{\sigma} s$, so that 
\begin{align*}
(p^2+ds^2)^2+2(p^2+ds^2)(1+\theta)\sigma s^2 t&\geq \frac12(p^2+ds^2)^2+p^2\left(\frac12p^2-2(1+\theta)\sigma s^2\right)\\
&\quad+ds^2\left(p^2-2 (1+\theta)\sigma s^2\right)\\
&\geq\frac12(p^2+ds^2)^2\geq\frac12p^4.
\end{align*}
\end{enumerate}

Using these estimates, and the fact that $\sinh(x)^{-1}\leq 2(e^x-1)^{-1}$ for $x>0$, we estimate the contribution of $F_1$ to \eqref{mwtapplication} by
\[
\ba
\left\{
	\begin{array}{ll}
		CTa^2(\rho^{1/3}a)^{-1/4}(\frac{1}{e^{|p|\sqrt{(1+\theta)\sigma s^2}/2}-1})^2(1+\theta)\sigma s^2 & \mbox{if } |p|\leq (\rho^{1/3}a)^{-1/8} \\
		C\frac{1}{e^{p^2/(2\sqrt{2})}-1} & \mbox{if } |p|\geq (\rho^{1/3}a)^{-1/8}
	\end{array}
\right.,
\ea
\]
and the contribution from $F_2$ by
\[
\ba
\left\{
	\begin{array}{ll}
		CTa^2(\rho^{1/3}a)^{-1/4}\frac{1}{e^{|p|\sqrt{(1+\theta)\sigma s^2}}-1}&\frac{1}{|p|}\sqrt{(1+\theta)\sigma s^2} \\
&\mbox{if } |p|\leq (\rho^{1/3}a)^{-1/8} \\
		C\frac{1}{e^{p^2/\sqrt{2}}-1} \\
&\mbox{if } |p|\geq (\rho^{1/3}a)^{-1/8}
	\end{array}
\right..
\ea
\]
Integrating \eqref{mwtapplication} amounts to integrating the above contributions, which gives the desired result (this can be seen after a change of variables by noting that the outer integrals decay exponentially fast), and the error is independent of $d$.\\

\textit{Step 2: second line in statement.}
We again write $s=\phi/\sqrt{T}$ to obtain
\begin{equation}
\label{someeqn003}
\ba
T^{3/2}I_4(d,\sigma,\theta,s)-\rho_{\rm fc}&=(2\pi)^{-3}T^{3/2}s^{3}\Big[\int\left(e^{s^2\sqrt{(p^2+d)^2+2(p^2+d)(1+\theta)\sigma}}-1\right)^{-1}\\
&\times\frac{p^2+d+(1+\theta)\sigma}{\sqrt{(p^2+d)^2+2(p^2+d)(1+\theta)\sigma}}-\left(e^{s^2p^2}-1\right)^{-1}\Big]dp.
\ea
\end{equation}
If we can show that this equals
\begin{equation}
\label{someeqn004}
(2\pi)^{-3}T^{3/2}s\int\Big[\frac{p^2+d+(1+\theta)\sigma}{(p^2+d)^2+2(p^2+d)(1+\theta)\sigma}-\frac{1}{p^2}\Big]dp+o\left(T^{3/2}s\right),
\end{equation}
we would obtain the desired result by calculating the integral.

We therefore consider the difference of these two terms, and consider the regions $|p|\leq B$ and $|p|>B$ separately, where $B\gg1$ is chosen in such a way that the integrals over $|p|>B$ of \eqref{someeqn003} and \eqref{someeqn004} are $o(T^{3/2}s)$. Since the latter is a convergent integral, it is clear that this can be done. We will show the same for \eqref{someeqn003} in a moment. 

For $|p|\leq B$, we first apply the Monotone Convergence Theorem to the two terms in \eqref{someeqn003} separately. This shows convergence to the corresponding part of the integral \eqref{someeqn004}.

Employing another change of variables, and writing $b=2(d+(1+\theta)\sigma)$ and $c=d(d+2(1+\theta)\sigma)$, it remains to show that we can pick $B$ such that
\[
\ba
&\int_{|p|>Bs}\Big|\Big[(e^{p^2\sqrt{1+bs^2/p^2+cs^4/p^4}}-1)^{-1}\\
&\hspace{2cm}\times\frac{1+bs^2/2}{\sqrt{1+bs^2/p^2+cs^4/p^4}}-(e^{p^2}-1)^{-1}\Big]\Big|dp=o(s).
\ea
\]
To show this, we apply Taylor's theorem to $bs^2/p^2+cs^4/p^4\ll1$, so that the above expression is bounded by
\[
\ba
&\frac{bs^2}{2}\int\limits_{|p|>Bs}\frac{1}{e^{p^2}-1}dp\\
&\hspace{1cm}+C\int\limits_{|p|>Bs}\frac{e^{\sqrt{2}p^2}(\sqrt{2}p^2+1)-1}{(e^{p^2}-1)^2}\left(b\frac{s^2}{p^2}+c\frac{s^4}{p^4}\right)\left(1+b\frac{s^2}{2}\right)dp,
\ea
\]
where the first term comes from the zeroth-order term, and the other from the derivative.
Seeing that the main contribution from these integrals comes from $p=0$, we conclude that this is bounded by $C(b/B+c/B^3)s$, which indicates that we can indeed pick $B$ large enough to obtain $o(s)$.
\end{proof}

\begin{proof}[Proof of Lemma \ref{lem:rho_gamma1verylow}]
We would like to apply the Dominated Convergence Theorem to the limit $\phi^2=\rho_0a\to0$. We have shown how to bound the fraction in Lemma \ref{lem:rhogamma2contribution} above. The exponential can be bounded in a similar way (i.e.\ by considering $|p|\leq \sqrt{3(1+\theta)\sigma}$ and $|p|> \sqrt{3(1+\theta)\sigma}$ separately) since $\phi^2/T=\rho_0a/T\geq O(1)$ by our assumptions. Uniformity follows by continuity in the different parameters. Another change of variables gives the result stated in the lemma. We obtain uniformity of the error in $d\geq0$ since both sides of the statement are exponentially decaying in $d\gg1$.

\end{proof}

\begin{proof}[Proof of Lemma \ref{lem:lowerorderenergyterm}]
As in the proof of Lemma \ref{lem:rhogamma2contribution}, we regard $t=\widehat{Vw}(\phi p)/8\pi a$ as a parameter taking values in $[-1,1]$, and replace it by 1 for $|p|\leq \sqrt{3(1+\theta)\sigma}$. For other $p$, the function is continuous in $t\in[-1,1]$, and we can maximize it for every $p$. This way, we again obtain a dominating function which is still integrable, so that we can apply the Dominated Convergence Theorem.
\end{proof}

\begin{proof}[Proof of Lemma \ref{lem:leadingorderenergytermmoderate}]
\textit{Step 1: first line in statement.\\}
We will write $s=\phi/\sqrt{T}\ll1$. After a change of variables, our goal is to show that 
\begin{align*} 
&T^{5/2}\int\ln\left(1-e^{-\sqrt{(p^2+ds^2)^2+2(p^2+ds^2)(1+\theta)\sigma s^2\frac{\widehat{Vw}(\sqrt{T}p)}{8\pi a}}}\right)dp\\
&\hspace{1cm}=T^{5/2}\int\ln\left(1-e^{-\sqrt{(p^2+ds^2)^2+2(p^2+ds^2)(1+\theta)\sigma s^2}}\right)dp\\
&\hspace{6cm}+O\left(T^{5/2}\phi^2a^2(\rho^{1/3}a)^{-1/4}\right).
\end{align*}
To this end, we define
\[
f(p,t)=\ln\left(1-e^{-\sqrt{(p^2+ds^2)^2+2(p^2+ds^2)(1+\theta)\sigma s^2 t}}\right).
\]
This function is continuously differentiable in $t$:
\[
\ba
\partial_t f(p,t)&=\left(e^{\sqrt{(p^2+ds^2)^2+2(p^2+ds^2)(1+\theta)\sigma s^2 t}}-1\right)^{-1}\\
&\hspace{3cm}\times\frac{(p^2+ds^2)(1+\theta)\sigma s^2}{\sqrt{(p^2+ds^2)^2+2(p^2+ds^2)(1+\theta)\sigma s^2t}}.
\ea
\]
We use the Mean Value Theorem to estimate this, followed by the two estimates discussed below \eqref{mwtapplication}:
\begin{align*}
&|f(p,\widehat{Vw}(\sqrt{T}p)/8\pi a)-f(p,1)|\\
&\quad\leq\left(\sup_{t\in[\widehat{Vw}(\sqrt{T}p)/8\pi a,1]}\left|\partial_t f(p,s,t)\right|\right)\left|\widehat{Vw}(\sqrt{T}p)/8\pi a-1\right|\\
&\quad\leq\left\{
	\begin{array}{ll}
		CTa^2(\rho^{1/3}a)^{-1/4}\frac{1}{e^{p^2}-1}(1+\theta)\sigma s^2 & \mbox{if } |p|\leq (\rho^{1/3}a)^{-1/8} \\
		C \frac{1}{e^{p^2/\sqrt{2}}-1} & \mbox{if } |p|\geq (\rho^{1/3}a)^{-1/8}
	\end{array}.
\right.
\end{align*}
Integrating over $p$ gives the desired result (note that the outer integral decays exponentially fast), and the error is independent of $d$.\\

\textit{Step 2: second line in statement.\\}
We again write $s=\phi/\sqrt{T}$ to obtain
\begin{equation}
\label{someeqn001}
\ba
&T^{5/2}I_2(d,\sigma,\theta,s)-T^{5/2}f_{\rm{min}}-s^2T\rho_{\rm{fc}}(d+(1+\theta)\sigma)\\
&=(2\pi)^{-3}T^{5/2}s^{3}\int\Big[\ln\left(1-e^{-s^2\sqrt{(p^2+d)^2+2(p^2+d)(1+\theta)\sigma}}\right)\\
&\quad-\ln\left(1-e^{-s^2p^2}\right)-(e^{s^2p^2}-1)^{-1}(d+(1+\theta)\sigma)s^2\Big]dp.
\ea
\end{equation}
Since this expression divided by $T^{5/2}s^3$ is monotone in $s$, we obtain by the Monotone Convergence Theorem that
\begin{equation}
\label{someeqn002}
\ba
&(2\pi)^{-3}T^{5/2}s^{3}\int\Big[\ln\left(\frac{\sqrt{(p^2+d)^2+2(p^2+d)(1+\theta)\sigma}}{p^2}\right)-\frac{d+(1+\theta)\sigma}{p^2}\Big]dp\\
&\qquad +o(T^{5/2}s^3),
\ea
\end{equation}
which gives the desired result.

\end{proof}

\begin{proof}[Proof of Lemma \ref{lem:leadingorderenergytermverylow}]
We want apply the Dominated Convergence Theorem to the limit $\phi^2a=\rho_0a^2\to0$. As in Lemma \ref{lem:rhogamma2contribution}, we regard $t=\widehat{Vw}(\phi p)/8\pi a\in[-1,1]$ as a parameter, which we replace by 0 for $|p|\leq \sqrt{3(1+\theta)\sigma}$. For $|p|> \sqrt{3(1+\theta)\sigma}$, we replace it by $-1$ to obtain a dominating function (also using $s\geq O(1)$). Uniformity in the different parameters follows from continuity in these parameters. Another change of variables gives the desired result. We obtain uniformity of the error in $d\geq0$ since both sides of the statement are exponentially decaying in $d\gg1$. 
\end{proof}

\begin{proof}[Proof of Lemma \ref{lem:errgammastar}]
The basic estimates we will use are:
\begin{equation}
\label{basicestm}
\begin{aligned}
&\left|\rho_0\int\limits_{|p|\leq b}\gamma(p)\widehat{V}(p)dp-\widehat{V}(0)\rho_0\int\limits_{|p|\leq b}\gamma(p)dp\right|\leq Ca^3b^2\rho_0\rho_\gamma\\
&\left|\rho_0\int\limits_{|p|>b}\gamma(p)\widehat{V}(p)dp-\widehat{V}(0)\rho_0\int\limits_{|p|>b}\gamma(p)dp\right|\leq Ca\rho_0\int_{p>b}\gamma(p)dp\\
&\left|\ \iint\limits_{|p|,|q|\leq b}\gamma(p)\widehat{V}(p-q)\gamma(q)dpdq- \widehat{V}(0)\left(\ \int\limits_{|p|\leq b}\gamma(p)dp\right)^2\right|\leq Ca^3b^2\rho^2_\gamma\\
&\left|\ \iint\limits_{|p| \text{or} |q|>b}\gamma(p)\widehat{V}(p-q)\gamma(q)dpdq- \iint\limits_{|p| \text{or} |q|>b}\gamma(p)\widehat{V}(0)\gamma(q)dpdq\right|\\
&\hspace{7cm}\leq Ca\rho_\gamma\int_{|p|>b}\gamma(p)dp, 
\end{aligned}
\end{equation}
which follow from the fact that $\|\widehat{V}\|_\infty\leq 8\pi a$, $\widehat{V}'(0)=0$ and $\|\widehat{V}''\|_\infty\leq Ca^3$. We also need identical versions of the first two estimates for $\widehat{Vw}$, which hold for the same reasons.

We set $b=\sqrt{T}$. By Lemma \ref{lem:rhogamma2contribution}, we have
\begin{equation}
\label{someobserv}
\int_{|p|>\sqrt{T}} \gamma^{\rho_0,\delta}(p)dp=O(\phi^3),
\end{equation}
since the density becomes \eqref{eq:rhogammasimplified} after a change of variables and both terms are of this order (the exponent of the exponential in the second contribution is at least of order 1). This suffices to prove the statement.
\end{proof}

\begin{proof}[Proof of Lemma \ref{lem:errgammastar2}]
Using the estimates in the previous proof, the reader can check that $b=\rho^{1/3}$ suffices, since
\[
\int_{|p|>\rho^{1/3}} \gamma^{\rho_0,\delta}=o((\rho_0a)^{3/2}).
\]
This follows from an application of the Dominated Convergence Theorem to 
\[
(\rho_0a)^{3/2}\int_{|p|>\frac{\rho^{1/3}}{\sqrt{\rho_0a}}}\left(\frac{p^2+d+\frac{\widehat{Vw}(\sqrt{\rho_0a}p)}{a}}{\sqrt{(p^2+d)^2+2(p^2+d)\frac{\widehat{Vw}(\sqrt{\rho_0a}p)}{ a}}}-1\right)dp
\]
as in Lemma \ref{lem:rhogamma2contribution}, and the fact that the other contribution in \eqref{eq:rhogammasimplified} is exponentially small in this region (since $\rho^{1/3}\gg\sqrt{T}$).
\end{proof}

\begin{proof}[Proof of Lemma \ref{prepE1}]
\textit{Step 1.} We start by looking at the first term in \eqref{reff}, which does not involve $\phi^2/T$. After adding absolute values within the integral sign, we employ similar reasoning to Lemma \ref{lem:rhogamma2contribution} to conclude that it is $O(\phi^3)$ as $\phi\to0$. Similar to \eqref{someobserv}, we then have
\[
\int\limits_{|p|>\sqrt{T}} |f(p)|dp\leq C\phi^3,
\]
which was one of our goals.\\

\textit{Step 2.} We now restrict to the case $\phi^2/T\ll1$ and consider the full integral of $f$. Again using that the first term in \eqref{reff} only contributes $O(\phi^3)$, we have that
\begin{equation}
\label{falphaintegral}
\ba
\int f(p)dp&=\phi^3\int\frac{(1+\theta)\sigma\frac{\widehat{Vw}(\phi p)}{8\pi a}}{\sqrt{(p^2+d)^2+2(p^2+d)(1+\theta)\sigma\frac{\widehat{Vw}(\phi p)}{8\pi a}}} \\
&\hspace{4cm}\times\frac{1}{e^{\frac{\phi^2}{T}\sqrt{(p^2+d)^2+2(p^2+d)(1+\theta)\sigma\frac{\widehat{Vw}(\phi p)}{8\pi a}}}-1}dp\\
&\quad+O(\phi^3)\\
&=T \phi \int\frac{(1+\theta)\sigma}{(p^2+d)^2+2(p^2+d)(1+\theta)\sigma}dp+o(T\phi)\\
&=T\phi(1+\theta)\sigma\frac{2\pi^2}{\sqrt{d+2(1+\theta)\sigma}+\sqrt{d}}+o(T\phi).
\ea
\end{equation}
The step before the last requires reasoning similar to Lemma \ref{lem:lowerorderenergyterm}, where the application of the Dominated Convergence Theorem is facilitated by the fact that $(e^x-1)^{-1}\leq x^{-1}$.

An identical argument leads to the estimate that $\int|f|\leq CT\phi$.\\

\textit{Step 3.} For the case $\phi^2/T\geq O(1)$, the second line in \eqref{falphaintegral} combined with the Dominated Convergence Theorem applied as in Lemma \ref{lem:rho_gamma1verylow} leads to the desired conclusion.
\end{proof}

\begin{proof}[Proof of Lemma \ref{fullintegralbehaviour}]
We first analyse the asymptotic behaviour of $F^{(1)}$ as $d\to\infty$. Writing $A(p)=\widehat{Vw}(\phi p)/a$, we expand for $d\gg A=O(1)$:
\[
(p^2+d)\sqrt{1+\frac{2A}{p^2+d}}=p^2+d+A-\frac12\frac{A^2}{p^2+d}+o(A/d).
\]
This tells us that the asymptotic behaviour of $F^{(1)}$ is
\[
(2\pi)^{-3}d\phi^5\frac14\int A^2(p)\frac{1}{p^2(p^2+d)}dp.
\]
Similarly, we can see that the asymptotic behaviour of $-d\phi^2\rho^{(1)}_\gamma$ is
\begin{equation}
\label{rhogamm1beh}
-(2\pi)^{-3}d\phi^5\frac14\int A^2(p)\frac{1}{(p^2+d)^2}dp.
\end{equation}
By our assumptions on the derivative of the potential, there exists a $c$ such that $|\widehat{Vw}(p)|\geq4\pi a$ for $|p|\leq c/a$. Hence, for $d^{1/2}\phi a\leq C$, the two sum of the two contributions above is bounded below by
\[
Cd^{1/2}\phi^5\int\limits_{|p|\leq c(d^{1/2}\phi a)^{-1}}\frac{\widehat{Vw}^2(d^{1/2}\phi p)a^{-2}}{p^2(p^2+1)^2}dp\geq Cd^{1/2}\phi^5,
\]
whereas for $d^{1/2}\phi a\geq C$, it is bounded below by
\[
d^2\phi^5(\phi a)^3\int_{|p|\leq c}\frac{\widehat{Vw}^2(p/a)a^{-2}}{p^2(p^2+d\phi^2a^2)^2}dp\geq Ca^{-1}\phi^4.
\]

To prove the claims about $\rho^{(1)}_\gamma$ we first consider $d\gg1$ and use \eqref{rhogamm1beh} (divided by $d\phi^2$). On the remaining compact $0\leq d\leq C$, we can apply Lemma \ref{lem:rhogamma2contribution}.
\end{proof}

\begin{proof}[Proof of Lemma \ref{lem:errgammastar3}]
Let $\sqrt{T}\ll b\ll \sqrt{T}(\sqrt{T}a)^{-1/8}$. Using \eqref{someobserv}, we first notice that  
\[
\iint_{|p| \text{or} |q|>b}\gamma^{\rho_0,\delta}(p)\widehat{V}(p-q)\gamma^{\rho_0,\delta}(q)dpdq\leq Ca\rho\phi^3=o(T^4a^3).
\]
The same holds for the similar contribution to $E_5$ involving $\widehat{V}(0)$. Using \eqref{propertyofV} and \eqref{gammazero}, we see that the final contribution to the outer region is also $o(T^4a^3)$ since
\[
\int_{|p|>b} \gamma_0=o(T^{3/2}),\hspace{2cm} \int_{|p|>b} p^2\gamma_0=o(T^{5/2}).
\]
We again use \eqref{propertyofV} and estimate the contribution from the inner region by
\[
\begin{aligned}
&C\Big|\ \iint\limits_{|p|,|q|\leq b}\gamma^{\rho_0,\delta}(p)\left(\widehat{V}(p-q)-\widehat{V}(0)-\frac{\Delta\widehat{V}(0)|p-q|^2}{6}\right)\gamma^{\rho_0,\delta}(q)dpdq\Big|\\
&\quad+C\Delta\widehat{V}(0)\rho_{\gamma^{\rho_0,\delta}}\ \int\limits_{|p|\leq b} p^2|\gamma^{\rho_0,\delta}-\gamma_0|(p)dp\\
&\quad+C\Delta\widehat{V}(0)\left(\ \int\limits_{|p|\leq b} |\gamma^{\rho_0,\delta}-\gamma_0|(p)dp\right)\left(\ \int\limits_{|p|\leq b} p^2\gamma_0(p)dp\right).
\end{aligned}
\]
Lemmas \ref{lem:rhogamma2contribution} and \ref{lem:rho_gamma1moderate} (where also the proof of Lemma \ref{lem:rho_gamma1moderate} is important to deal with the absolute value for the middle term) together with the properties of $\gamma_0$ and the properties of the potential pointed out below \eqref{basicestm} imply that this is indeed $o(T^4a^3)$.
\end{proof}

\bibliographystyle{siam}

\begin{thebibliography}{10}

\bibitem{Arnold}
{\sc P. Arnold and G. Moore}, {\em BEC transition temperature of a dilute homogeneous imperfect Bose gas}, Phys. Rev. Lett., 87 (2001), p.~120401.

\bibitem{Andersen-04}
{\sc J.O. Andersen}, {\em Theory of the weakly interacting {B}ose gas}, Rev.
  Mod. Phys., 76 (2004), p.~599.

\bibitem{Cornell-95}
{\sc M.H. Anderson, J.R. Ensher, M.R. Matthews, C.E. Wieman and E.A.
  Cornell}, {\em Observation of {B}ose--{E}instein condensation in a dilute
  atomic vapor}, Science, 269 (1995), p.~198--201.


\bibitem{Baymetal-01}
{\sc G.~Baym, J.-P. Blaizot, M.~Holzmann, F.~Lalo\"e and D.~Vautherin}, {\em
  Bose--{E}instein transition in a dilute interacting gas}, Eur. Phys. J. B, 24
  (2001), p.~107--124.

\bibitem{BijSto-96}
{\sc M.~Bijlsma and H.T.C. Stoof}, {\em Renormalization group theory of the
  three-dimensional dilute {B}ose gas}, Phys. Rev. A, 54 (1996), p.~5085.

\bibitem{Bogoliubov-47b}
{\sc N.N. Bogoliubov}, {\em On the theory of superfluidity}, J. Phys. (USSR),
  11 (1947), p.~23.

\bibitem{CriSol-76}
{\sc R.H. Critchley and A.~Solomon}, {\em A {Variational} {A}pproach to
  {S}uperfluidity}, J. Stat. Phys., 14 (1976), p.~381--393.

\bibitem{Ketterle-95}
{\sc K.B. Davis, M.O. Mewes, M.R. Andrews, N.J. van Druten, D.S. Durfee,
  D.M. Kurn and W.Ketterle}, {\em {B}ose-{E}instein {C}ondensation in a
  {G}as of {S}odium {A}toms}, Phys. Rev. Lett., 75 (1995), p.~3969--3973.


\bibitem{Ensh}
{\sc J.R. Ensher et al.}, {\em Bose-Einstein condensation in a dilute gas: Measurement of energy and ground-state occupation}, Phys. Rev. Lett., 77 (1996), pp.~4984.

\bibitem{ErdSchYau-08}
{\sc L.~Erd{\"{o}}s, B.~Schlein and H.-T. Yau}, {\em {Ground-state energy of a
  low-density Bose gas: A second-order upper bound}}, Phys. Rev. A, 78 (2008),
  p.~053627.

\bibitem{Feynman-53}
{\sc R.P. Feynman}, {\em Atomic {T}heory of the $\lambda$ {T}ransition in
  {H}elium}, Phys. Rev., 91 (1953), p.~1291--1301.

\bibitem{Feynman-53.2}
{\sc R.P. Feynman}, {\em Atomic {T}heory of {L}uquid {H}elium {N}ear {A}bsolute {Z}ero}, Phys. Rev., 91 (1953), p.~1301--1308.

\bibitem{Gaunt}
{\sc A.L. Gaunt et al.}, {\em Bose-Einstein condensation of atoms in a uniform potential}, Phys. Rev. Lett., 110 (2013), p.~200406.

\bibitem{Gerb}
{\sc F. Gerbier et al.}, {\em Critical temperature of a trapped, weakly interacting Bose gas}, Phys. Rev. Lett. 92 (2004), p.~030405.

\bibitem{GiuSei-09}
{\sc A.~Giuliani and R.~Seiringer}, {\em The ground state energy of the weakly
  interacting {B}ose gas at high density}, J. Stat. Phys., 135 (2009),
  p.~915--934.

\bibitem{GKW}
{\sc A.E. Glassgold, A.N. Kaufman and K.M. Watson}, {\em Statistical Mechanics for the Nonideal Bose Gas}, Phys. Rev., 120 (1960), p.~660.

\bibitem{Hua1}
{\sc K. Huang}, {\em Studies in Statistical Mechanics Vol. II}, J. deBoer and G. Uhlenbeck, Eds., North-Holland, 1964.

\bibitem{Hua2}
{\sc K. Huang}, {\em Transition temperature of a uniform imperfect Bose gas}, Phys. Rev. Lett., 83 (1999), p.~3770.

\bibitem{HuaYan-57}
{\sc K.~Huang and C.N. Yang}, {\em Quantum-{M}echanical {M}any-{B}ody {P}roblem with {H}ard-{S}phere {I}nteraction}, Phys.
  Rev., 105 (1957), p.~767--775.

\bibitem{Kash}
{\sc V.A. Kashurnikov, N.V. Prokof'ev and B.V. Svistunov}, {\em Critical temperature shift in weakly interacting Bose gas}, Phys. Rev. Lett., 87 (2001), p.~120402.

\bibitem{LeeYan-58}
{\sc T.~Lee and C.N. Yang}, {\em Low-{T}emperature {B}ehavior of a {D}ilute
  {B}ose {S}ystem of {H}ard {S}pheres i. {E}quilibrium {P}roperties}, Phys.
  Rev., 112 (1958), p.~1419--1429.

\bibitem{LeeHuaYan-57}
{\sc T.D. Lee, K.~Huang and C.N. Yang}, {\em Eigenvalues and
  {E}igenfunctions of a {B}ose {S}ystem of {H}ard {S}pheres and its
  {L}ow-{T}emperature {P}roperties}, Phys. Rev., 106 (1957), p.~1135--1145.

%
%

\bibitem{LieSeiSolYng-05}
{\sc E.H. Lieb, R.~Seiringer, J.P. Solovej and J.Yngvason}, {\em The
  mathematics of the {B}ose gas and its condensation}, Oberwolfach {S}eminars,
  Birkh{\"a}user, 2005.


\bibitem{LieSeiYng-05}
{\sc E.H. Lieb, R.~Seiringer and J.~Yngvason}, {\em {Justification of
  $c$-Number Substitutions in Bosonic Hamiltonians}}, Phys. Rev. Lett., 94
  (2005), p.~080401.


\bibitem{NapReuSol1-15}
{\sc M.~Napi\'orkowski, R.~Reuvers and J.P. Solovej}, {\em Bogoliubov free
  energy functional {I}. {E}xistence of minimizers and phase diagram}, Arch. Ration. Mech. Anal. (in press), ArXiv:1511.05935 (2015).

\bibitem{NapReuSol-17}
{\sc M.~Napi\'orkowski, R.~Reuvers and J.P. Solovej}, {\em Calculation of the Critical Temperature of a Dilute Bose Gas in the Bogoliubov Approximation}, arXiv:1706.01822 (2017).

\bibitem{NhoLan-04}
{\sc K.~Nho and D.P. Landau}, {\em Bose--{E}instein {C}ondensation
  {T}emperature of a {H}omogeneous {W}eakly {I}nteracting {B}ose {G}as: {PIMC}
  study}, Phys. Rev. A, 70 (2004), p.~053614.

\bibitem{Sei-08}
{\sc R.~Seiringer}, {\em Free {E}nergy of a {D}ilute {B}ose {G}as: {L}ower
  {B}ound}, Commun. Math. Phys., 279 (2008), p.~595--636.

\bibitem{SeiUel-09}
{\sc R.~Seiringer and D.~Ueltschi}, {\em Rigorous upper bound on the critical
  temperature of dilute bose gases}, Phys. Rev. B, 80 (2009), p.~014502.

\bibitem{Smith}
{\sc R.P. Smith}, {\em Effects of Interactions on Bose-Einstein Condensation}, Universal Themes of Bose-Einstein Condensation, eds. N.P. Proukasis, D.W. Snoke and P.B. Littlewood, Cambridge University Press, 2017.

\bibitem{Sm2}
{\sc R.P. Smith et al.}, {\em Effects of interactions on the critical temperature of a trapped Bose gas}, Phys. Rev. Lett., 106 (2011), p.~250403.

\bibitem{Solovej-06}
{\sc J.P. Solovej}, {\em {Upper bounds to the ground state energies of the
  one- and two-component charged Bose gases}}, Commun. Math. Phys., 266 (2006),
  p.~797--818.

\bibitem{Toyoda-82}
{\sc T.~Toyoda}, {\em A microscopic theory of the lambda transition}, Ann.
  Phys., 141 (1982), pp.~154--178.

\bibitem{YauYin-09}
{\sc H.-T. Yau and J.~Yin}, {\em The second order upper bound for the ground
  energy of a {B}ose gas}, J. Stat. Phys., 136 (2009), p.~453--503.

\bibitem{Yin-10}
{\sc J.~Yin}, {\em Free {E}nergies of {D}ilute {B}ose {G}ases: {U}pper
  {B}ound}, J. Stat. Phys., 141 (2010), p.~683--726.

\end{thebibliography}

\end{document}